\theoremstyle{definition}
\newtheorem{definition}{Definition}[section]
\newtheorem{example}[definition]{Example}
\theoremstyle{plain}
\newtheorem{theorem}[definition]{Theorem}
\newtheorem{lemma}[definition]{Lemma}
\newtheorem{corollary}[definition]{Corollary}
\newcommand{\RKet}[1]{\ensuremath{\left \vert #1 \right )}}
\newcommand{\Ket}[1]{\ensuremath{\left \vert #1 \right \rangle}}
\newcommand{\Bra}[1]{\ensuremath{\left \langle #1 \right \vert}}
\newcommand{\BraKet}[2]{\ensuremath{\left \langle #1 \middle \vert #2
\right \rangle}}
\newcommand{\KetBra}[2]{\ensuremath{\left \vert #1 \middle \rangle
\middle \langle #2 \right \vert}}
\newcommand{\Proj}[1]{\ensuremath{\left [ #1 \right ]}}
\newcommand{\Hilb}[1][]{\ensuremath{\mathcal{H}_{#1}}}
\newcommand{\Tr}[2][]{\ensuremath{\text{Tr}_{#1} \left ( #2 \right )}}
\newcommand{\initial}[1]{%
\lettrine[lines=3,lhang=0.3,nindent=0em]{
\color[HTML]{31ADF3}
{\textsf{#1}}}{}}
\newcommand{\HorRule}{\color[HTML]{31ADF3}
\rule{\linewidth}{1pt}%
}
\title{Is the Quantum State Real?\\An Extended Review of $\psi$-ontology Theorems}					
\author{Matthew Saul Leifer\\[8pt]}											
\scriptsize\usefont{OT1}{phv}{m}{n} \color[HTML]{31ADF3}{\textbf{Editors: \emph{Klaas Landsman}, \emph{Roger Colbeck} \& \emph{Terry Rudolph}} }\\[5pt]
\par\end{flushleft}\HorRule}
\date{}																				
\begin{document}
\maketitle
\thispagestyle{fancy} 			
\initial{T}\textbf{owards the end of 2011, Pusey, Barrett and Rudolph derived a theorem that aimed to show that the quantum state must be ontic (a state of reality) in a broad class of realist approaches to quantum theory. This result attracted a lot of attention and controversy. The aim of this review article is to review the background to the Pusey--Barrett--Rudolph Theorem, to provide a clear presentation of the theorem itself, and to review related work that has appeared since the publication of the Pusey--Barrett--Rudolph paper. In particular, this review:
Explains what it means for the quantum state to be ontic or epistemic (a state of knowledge);
Reviews arguments for and against an ontic interpretation of the quantum state as they existed prior to the Pusey--Barrett--Rudolph Theorem;
Explains why proving the reality of the quantum state is a very strong constraint on realist theories in that it would imply many of the known no-go theorems, such as Bell's Theorem and the need for an exponentially large ontic state space;
Provides a comprehensive presentation of the Pusey--Barrett--Rudolph Theorem itself, along with subsequent improvements and criticisms of its assumptions;
Reviews two other arguments for the reality of the quantum state: the first due to Hardy and the second due to Colbeck and Renner, and explains why their assumptions are less compelling than those of the Pusey--Barrett--Rudolph Theorem;
Reviews subsequent work aimed at ruling out stronger notions of what it means for the quantum state to be epistemic and points out open questions in this area. The overall aim is not only to provide the background needed for the novice in this area to understand the current status, but also to discuss often overlooked subtleties that should be of interest to the experts.\\ Quanta 2014; 3: 67--155.}
\begin{figure}[b!]
\rule{245 pt}{0.5 pt}\\[3pt]
\raisebox{-0.2\height}{\includegraphics[width=5mm]{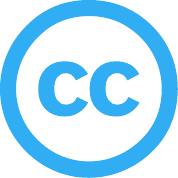}}\raisebox{-0.2\height}{\includegraphics[width=5mm]{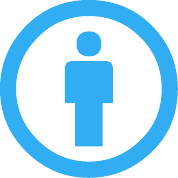}}
\footnotesize{This is an open access article distributed under the terms of the Creative Commons Attribution License \href{http://creativecommons.org/licenses/by/3.0/}{CC-BY-3.0}, which permits unrestricted use, distribution, and reproduction in any medium, provided the original author and source are credited.}
\end{figure}
\section{Introduction}

\label{Intro}

In 1964, John Bell fundamentally changed the way that we think about
quantum theory \cite{Bell1964}.  Abner Shimony famously referred to tests of
Bell's Theorem as ``experimental metaphysics'' \cite{Shimony1984}, but
I disagree with this characterization.  What Bell's Theorem really
shows us is that the foundations of quantum theory is a bona fide
field of physics, in which questions are to be resolved by rigorous
argument and experiment, rather than remaining the subject of
open-ended debate.  In other words, it is a mistake to crudely divide
quantum theory into its practical part and its interpretation, and to
think of the latter as metaphysics, experimental or otherwise.

In the wake of Bell's Theorem, the study of entanglement and
nonlocality has become a mainstream field of physics, particularly in
light of its practical applications in quantum information science,
but Bell's broader lesson---that the interpretation of quantum theory
should be approached as a rigorous science---has rather been missed.
This is nowhere more evident than in debates about the status of the
quantum state.  The question of just what type of thing the quantum
state, or wavefunction, represents, has been with us since the
beginnings of quantum theory.  The likes of de Broglie and
Schr{\"o}dinger initially wanted to view the wavefunction as a real
physical wave, just like the waves of classical field theory, with
perhaps some additional structure to account for particle-like or
``quantum'' properties \cite{Bacciagaluppi2009}.  In contrast,
following Born's introduction of the probability rule \cite{Born1983},
the Copenhagen interpretation advocated by Bohr, Heisenberg, Pauli
et.\ al.\ came to view the wavefunction as a ``probability wave'' and
denied the need for a more fundamental reality to underlie it
\cite{Faye2008}.  In modern terms, most realist interpretations of
quantum theory; such as many-worlds \cite{Everett1957, DeWitt1973,
Wallace2012}, de Broglie--Bohm theory \cite{Broglie2009, Bohm1952,
Bohm1952a, Duerr2009}, spontaneous collapse theories
\cite{Ghirardi1986, Bassi2013}, and modal interpretations
\cite{Lombardi2013}; view the wavefunction as part of reality, whereas
those that follow more Copenhagenish lines \cite{Fuchs2010,
Fuchs2010a, Fuchs2013, Pitowsky2006, Bub2010, Healey2012,
Rovelli1996, Muynck2002, Mermin2013, Peres1995, Brukner2003} tend to
view it as a representation of knowledge, information, or belief.  The
big advantage of the latter view is that the notorious collapse of the
wavefunction can be explained as the effect of acquiring new
information, no more serious than the updating of classical
probabilities in the light of new data, rather than as an anomalous
physical process that needs to be eliminated or explained away.

The question then is whether this is a necessary dichotomy.  Is the
only way to avoid having this weird multidimensional object as part of
reality to give up on reality altogether, or can we reach a compromise
in which there is a well-founded reality, but one in which the
wavefunction only represents knowledge?  This seems like a question
that is ripe for attacking with the kind of conceptual rigor that Bell
brought to nonlocality, and indeed Pusey, Barrett and Rudolph have
recently proven a theorem to the effect that the wavefunction must be
ontic (i.e.\ a state of reality), as opposed to epistemic (i.e.\ a
state of knowledge) in a broad class of realist approaches to quantum
theory \cite{Pusey2012}.

Since then, there has been much discussion and criticism of the
Pusey--Barrett--Rudolph Theorem in both formal \cite{Bub2012,
  Bub2012a, Drezet2012, Drezet2012a, Emerson2013, Hall2011,
  Hofmann2011, Mansfield2013, Miller2013, Schlosshauer2013,
  Schlosshauer2012, Wallden2013} and informal venues
\cite{Aaronson2011, Griffiths2012, Leifer2011, Leifer2012a,
  Leifer2011a, Motl2011, Reich2012, Reich2011, Spekkens2012,
  Wallace2011}, as well as a couple of attempts to derive the same
conclusion as Pusey--Barrett--Rudolph from different assumptions
\cite{Colbeck2012, Hardy2013}.  The Pusey--Barrett--Rudolph Theorem
and its successors all employ auxiliary assumptions of varying degrees
of reasonableness.  Without these assumptions, it has been shown that
the wavefunction may be epistemic \cite{Lewis2012}.  Therefore, there
has also been subsequent work aimed at ruling out stronger notions of
what it means for the wavefunction to be epistemic, without using such
auxiliary assumptions \cite{Aaronson2013, Barrett2013, Branciard2014,
  Leifer2013c, Leifer2014, Maroney2012, Maroney2012a, Patra2013a}.
The aim of this review article is to provide the background necessary
for understanding these results, to provide a comprehensive
presentation and criticism of them, and to explain their implications.

One of the most intriguing things about proving that the wavefunction
must be ontic is that it would imply a large number of existing no-go
results, including Bell's Theorem \cite{Bell1964} and excess baggage
theorems \cite{Hardy2004, Montina2008, Montina2011}(i.e.\ showing that
the size of the ontic state space must be infinite and must scale
exponentially with the number of systems).  Therefore, even apart from
its foundational significance, proving the reality of the wavefunction
could potentially provide a powerful unification of the known no-go
theorems, and may have applications in quantum information theory.

My aim is that this review should be accessible to as wide an audience
as possible, but I have made three decisions about how to present the
material that make my treatment somewhat more involved than those
found elsewhere in the literature.  Firstly, I adopt rigorous measure
theoretic probability theory.  It is common in the literature to
specialize to finite sample spaces or to adopt a less rigorous
approach to continuous spaces, which basically involves proving all
results as if you were dealing with smooth and continuous probability
densities and then hoping everything still works when you throw in a
bunch of Dirac delta functions.  Although a measure theoretic approach
may reduce accessibility, there are important reasons for adopting it.
It would be odd to attempt to prove the reality of the wavefunction
within a framework that does not admit a model in which the
wavefunction is real in the first place.  Since the wavefunction
involves continuous parameters, this means that there is no option of
specializing to finite sample spaces.  Furthermore, there are several
special cases of interest for which the optimistic non-rigorous
approach simply does not work, including the case where the
wavefunction, and only the wavefunction, is the state of reality.
Therefore, in order to cover all the cases of interest, there is
really no option other than taking a measure theoretic approach.  As
an aid to accessibility, I outline how the main definitions and
arguments specialize to the case of a finite sample space, which
should be sufficient for those who do not wish to get embroiled in the
technical details.

Secondly, it is common in the literature to assume that we are
interested in modeling all pure states and all projective
measurements on some finite dimensional Hilbert space, and to
specialize results to that context.  However, some results apply
equally to subsets of states and measurements, which I call
\emph{fragments} of quantum theory.  In addition, it is known that
some fragments of quantum theory, have natural models in which the
wavefunction is epistemic \cite{Spekkens2007, Schreiber,
Bartlett2012}.  Therefore, I think it is important to emphasize the
minimal structures in which the various results can be proved, rather
than just assuming that we are trying to model all states and
measurements on some Hilbert space.

The third presentation decision concerns my treatment of
\emph{preparation contextuality} (see \S\ref{PC} for the formal
definition).  The main issue we are interested in is whether pure
quantum states must be ontic, since it is uncontroversial that mixed
states can at least sometimes represent lack of knowledge about which
of a set of pure states was prepared.  It is common in the literature
to assume that each pure quantum state is represented by a unique
probability measure over the possible states of reality, but I do not
make this assumption.  In a preparation contextual model, different
methods of preparing a quantum state may lead to different probability
measures.  In fact, this must occur for mixed states
\cite{Spekkens2005}, so it seems sensible to allow for the possibility
that it might occur for pure states as well.  In addition, some of the
intermediate results to be discussed hold equally well for mixed
states, but this can only be established by adopting definitions that
are broad enough to encompass mixed states, which are necessarily
preparation contextual.

These three presentation decisions mean that the definitions,
statements of results, and proofs that appear in this review often
differ from those in the existing literature.  Generally, this is just
a matter of making a few obvious generalizations, without
substantively changing the ideas.  For this reason, I do not
explicitly point out where such generalizations have been made.

The review is divided into three parts.  \hyperref[OED]{Part I} is a general
review of the distinction between ontic and epistemic interpretations
of the quantum state.  It discusses the arguments that had been given
for ontic and epistemic interpretations of the wavefunction prior to
the discovery of the Pusey--Barrett--Rudolph Theorem.  My aim in this part is to convince
you that there is some merit to the epistemic interpretation and that
previous arguments for the reality of the quantum state are
unconvincing.  In this part, I also give a formal definition of the
class of models to which the Pusey--Barrett--Rudolph Theorem and related results apply,
and define what it means for the quantum state to be ontic or
epistemic within this class of models.  Following this, I give a
detailed discussion of the other no-go theorems that would follow as
corollaries of proving the reality of the wavefunction.

\hyperref[SPON]{Part II} reviews the three theorems that attempt to prove the
reality of the wavefunction: the Pusey--Barrett--Rudolph Theorem, Hardy's Theorem, and the
Colbeck--Renner Theorem.  The treatment of the Pusey--Barrett--Rudolph Theorem is the most
detailed of the three, since it has attracted the largest literature
and has been subject to the largest amount of confusion and criticism.
In my view, it makes the strongest case of the three theorems for the
reality of the wavefunction, although it is still not bulletproof, so
I go to some lengths to sort the silly criticisms from the substantive
ones.  The assumptions behind the Hardy and Colbeck--Renner Theorems
receive a more critical treatment, but the theorems are still
presented in detail because they are interestingly related to other
arguments about realist interpretations of quantum theory.

\hyperref[Beyond]{Part III} deals with attempts to go beyond the rigid
distinction between epistemic and ontic interpretations of the
wavefunction by positing stronger constraints on epistemic
interpretations.  One of the aims of doing this is to remove the
problematic auxiliary assumptions needed to prove the three main
theorems, whilst still arriving at a conclusion that is morally
similar.  This part is shorter than the other two and mostly just
summarizes the known results without proof.  The reason for this is
that many of the results are only preliminary and will likely be
superseded by the time this review is published.  My main aim in this
part is to point out the most promising directions for future
research.

\section*{Part I. The $\psi$-ontic/epistemic distinction\label{OED}}

The results reviewed in this paper aim to show that the quantum state
must be \emph{ontic} (a state of reality) rather than \emph{epistemic}
(a state of knowledge).  What does this mean and why is it important?
The word ``ontology'' derives from the Greek word for ``being'' and
refers to the branch of metaphysics that concerns the character of
things that exist.  In the present context, an \emph{ontic state}
refers to something that objectively exists in the world,
independently of any observer or agent.  In other words, ontic states
are the things that would still exist if all intelligent beings were
suddenly wiped out from the universe.  On the other hand,
``epistemology'' is the branch of philosophy that studies of the
nature and scope of knowledge.  An \emph{epistemic state} is therefore
a description of what an observer currently knows about a physical
system.  It is something that exists in the mind of the observer
rather than in the external physical world.
\begin{figure}[t!]
\centering
\subfloat[\color{blue} An ontic state is a point in phase space.]{\includegraphics[width=75mm]{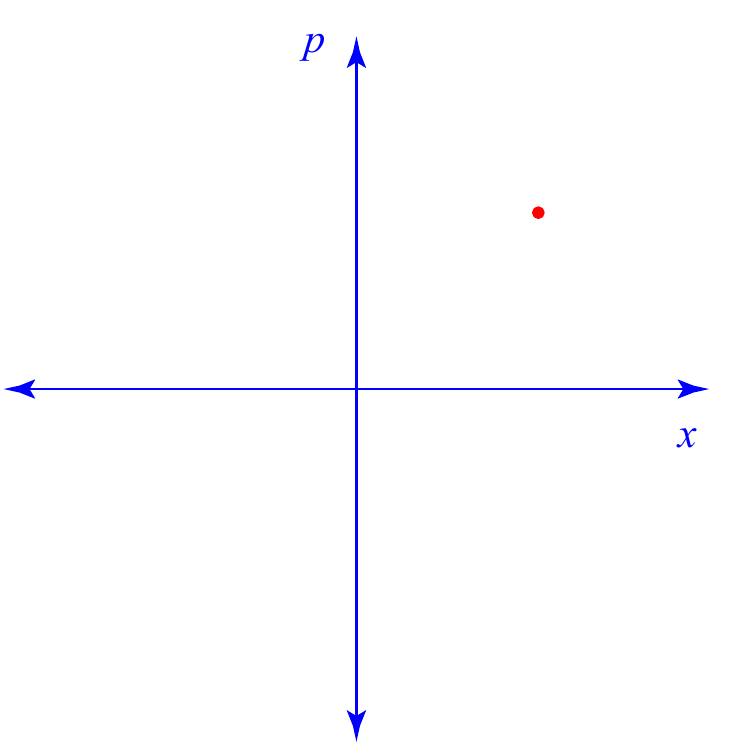}}\\
\subfloat[\color{blue} An epistemic state is a probability density on phase space. Contours indicate lines of equal probability density.]{\includegraphics[width=75mm]{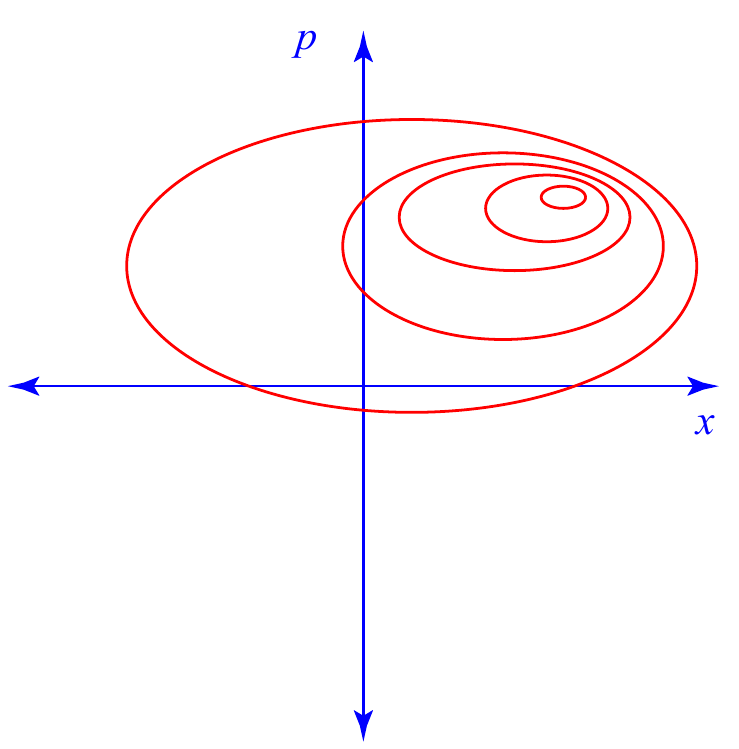}}\\
\subfloat[\color{blue} An ontic state (cross) is deemed possible in more than one epistemic state ($f_1$ and $f_2$).  Phase space has been schematically collapsed down to one dimension for illustrative purposes.]{\includegraphics[width=75mm]{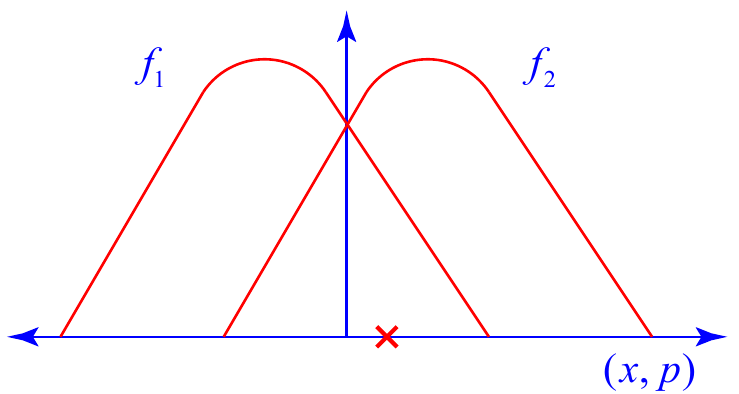}}
\caption{\color[HTML]{0000FF}{\label{fig:OEQ:Ontic-Epistemic}The distinction between ontic and epistemic states in
single particle classical mechanics.}}
\end{figure}

In classical mechanics, the distinction between ontic and epistemic
states is fairly clear.  A single Newtonian particle in one dimension
has a position $x$ and a momentum $p$ and these are objective
properties of the particle that exist independently of us.  All other
objective properties of the particle are functions of $x$ and $p$.
The ontic state of the particle is therefore the phase space point
$(x,p)$.  This evolves according to Hamilton's equations
\begin{align}
\frac{d p}{d t} & = -\frac{\partial H}{\partial x} & \frac{d x}{d t}
& = \frac{\partial H}{\partial p},
\end{align}
where $H$ is the Hamiltonian.  On the other hand, if we do not know
the exact position and momentum of the particle then our knowledge
about its ontic state is represented by a probability density $f(x,p)$
over phase space.  By applying Hamilton's equations to the individual
phase space points on which $f(x,p)$ is supported, it can be shown
that $f(x,p)$ evolves according to Liouville's equation
\begin{equation}
\frac{\partial f}{\partial t} = \frac{\partial H}{\partial
x}\frac{\partial f}{\partial p} - \frac{\partial H}{\partial
p}\frac{\partial f}{\partial x}.
\end{equation}
The probability density $f(x,p)$ is our epistemic state.  See
Fig.~\ref{fig:OEQ:Ontic-Epistemic} for an illustration of the
distinction between classical ontic and epistemic states.  For other
types of classical system the situation is analogous, the only
difference being the dimension of the phase space, e.g. $6N$
dimensions for $N$ particles in $3$ dimensional space or a continuum
for field systems.  The phase space point is still the ontic state and
a probability density over phase space is the epistemic state.

Note that calling a probability density ``epistemic'' is controversial
in some circles.  It presupposes a broadly Bayesian interpretation of
probability theory in which probabilities represent an agent's
knowledge, information, or beliefs.  Fortunately, the issue at stake
does not really depend on this as it also appears in other
interpretations of probability under a different name.  What is
important is that the states dubbed ``epistemic'' only have
probabilistic import so they cannot be regarded as intrinsic
properties of individual physical systems.  The key property that this
implies is that a given ontic state is deemed possible in more than
one epistemic state.

On the Bayesian reading, this is due to the fact
that different agents may have different knowledge about one and the
same physical system.  For example, perhaps Alice knows the position
of a classical particle exactly but nothing about its momentum, whilst
Bob knows the momentum precisely but nothing about its position.
Alice and Bob would then assign different probability distributions to
the system, with the crucial property that they would overlap on the
ontic state actually occupied by the system.

Other interpretations of probability exhibit the same property in a
different way.  For example, on a frequentist account of probability,
probabilities represent the relative frequencies of occurrence of some
property in an ensemble of independently and identically prepared
systems.  In this context, we would talk about a state being
``statistical'' rather than ``epistemic''.  The statistical state of a
system depends upon the choice of ensemble that the individual system
is regarded as being a part of.  For example, suppose a classical
particle occupies the phase space point $(1~m,1~kgms^{-1})$.  If we
regard it as part of an ensemble of particles that all have positive
position, but some have negative momentum, then it will be assigned a
different probability distribution than if we regard it as part of an
ensemble of particles all of which have positive momentum, but some
have negative position.  In the former case, the probability
distribution will have support on negative momentum phase space points
and in the latter case it will have support on negative position phase
space points.  The point is that ensembles consist of more than one
individual system and the same ontic state may occur as a part of many
different ensembles.  A frequentist will not be lead astray by
substituting the word ``statistical'' for every occurrence of the word
``epistemic'' in this article, but the latter terminology is used here
because it has become standard.

Interpretations of probability that involve single-case objective
chances present more of a challenge for the ontic/epistemic
distinction, since they imply that probabilities can at least
sometimes be ontic.  Nevertheless, I believe that an appropriate
distinction can still be made in most of these theories.  This
discussion is deferred to Appendix~\S\ref{App:Chance} since it is
mainly of interest to those concerned with the philosophy of
probability.  However, it is worth mentioning that many of those who
have felt the need to introduce objective chances have been motivated
in part by the role that probability plays in physics, and in quantum
theory in particular.  Since quantum probabilities are functions of
the wavefunction, they only present a novel issue for the
interpretation of probability if the wavefunction itself is ontic
because only then would quantum probabilities need to have a more
objective status than they do in classical physics.  Since the status
of the wavefunction is precisely the question at issue, it is perhaps
wise to defer judgment on the necessity of objective chances until the
reality of the wavefunction is decided.

What is at stake then is the following question: When a quantum state
$\Ket{\psi}$ is assigned to a physical system, does this mean that
there is some independently existing property of the individual system
that is in one-to-one correspondence with $\Ket{\psi}$ (up to a global
phase), or is $\Ket{\psi}$ simply a mathematical tool for determining
probabilities, existing only in the minds and calculations of quantum
theorists?  This is perhaps the most hotly debated issue in all of
quantum foundations.  I refer to it as \emph{the
$\psi$-ontic/epistemic distinction} and use the terms
$\psi$-ontic/$\psi$-epistemic to describe interpretations that adopt
an ontic/epistemic view of the quantum state.  Holders of the
$\psi$-ontic view have been dubbed $\psi$-ontologists by Christopher
Granade (a masters student in Rob Spekkens' quantum foundations course
at Perimeter Institute in 2010) and, continuing in this vein, I refer
to the reality of the quantum state as $\psi$-ontology and to theorems
that attempt to establish this view as $\psi$-ontology theorems.

To avoid misunderstanding, note that the $\psi$-ontic/epistemic
distinction is not about whether quantum states are ontic
independently of whether quantum theory is exactly true.  It is not
about whether the ultimate final theory of physics, if indeed such a
thing exists, will feature quantum states as part of its ontology.  We
have little idea of what such a final theory might look like and
consequently we have little idea of what reality is actually made of
at the most fundamental level.  Nevertheless, we can still ask what
quantum theory itself \emph{says} about reality.  In other words, we
are imagining a hypothetical world in which quantum theory is in fact
a completely correct theory of physics, and asking whether quantum
states would have to be ontological in that world.  That world is very
unlikely to be our actual world, so the question is really about the
internal structure of quantum theory.  More specifically, it is about
what kinds of explanation are compatible with quantum theory.  For
example, a $\psi$-ontic view implies that we should draw analogies
between quantum states and phase space points when comparing quantum
and classical physics, and between the Schr{\"o}dinger equation and
Hamilton's equations, whereas a $\psi$-epistemic view says that the
appropriate analogies are between quantum states and probability
distributions, and between the Schr{\"o}dinger equation and
Liouville's equation.  If nothing else, this strongly impacts how we
are to understand the classical limit of quantum theory (e.g.\ see
\cite{Ballentine1994, Emerson2001, Emerson2001a, Emerson2001b}).  So,
whilst the ontic/epistemic question may at first sight seem abstract
and philosophical, it does in fact have concrete implications for
physics.

The remainder of this part is structured as follows.  \S\ref{PsiEp}
discusses arguments in favor of the $\psi$-epistemic view, with the
aim of convincing you that $\psi$-ontology theorems are telling us
something deep and surprising.  For those that remain unconvinced,
\S\ref{Ontic} reviews the main arguments for the reality of quantum
states that were given prior to the discovery of $\psi$-ontology
theorems.  In my view, none of these are particularly compelling, so
even someone who is already convinced of the reality of quantum states
needs something like a $\psi$-ontology theorem if they aspire to
defend their position with the same sort of conceptual force with
which Bell derived nonlocality.  Following this, \S\ref{Form}
introduces the framework of ontological models, in which
$\psi$-ontology theorems are proven, and gives the rigorous definition
of the $\psi$-ontic/$\psi$-epistemic distinction.  Finally,
\S\ref{Imp} discusses the implications of proving $\psi$-ontology, by
showing that several existing no-go theorems can be derived from it.

\section{Arguments for a $\psi$-epistemic interpretation}

\label{PsiEp}

Before getting into the details of $\psi$-epistemic explanations, it
is important to distinguish two kinds of $\psi$-epistemic
interpretation.  The most popular type are those variously described
as anti-realist, instrumentalist, or positivist.  Since these labels
are often intended as terms of abuse, I prefer to call these
approaches \emph{neo-Copenhagen} in order to avoid implications for
the philosophy of science that go way beyond how we choose to
understand quantum theory.  All such interpretations bear a family
resemblance to the Copenhagen interpretation in that they are both
$\psi$-epistemic and they deny the need for any deeper description of
reality beyond quantum theory.  Here, by ``Copenhagen'' I mean the
views of Bohr, Heisenberg, Pauli et.\ al.\ (see e.g. \cite{Faye2008}),
which are clearly $\psi$-epistemic, rather than the view often found
under this name in textbooks, which is actually due to Dirac
\cite{Dirac1958} and von Neumann \cite{Neumann1955} and is more
ambiguous about whether the wavefunction is real.  If asked what
quantum states represent knowledge about, neo-Copenhagenists are
likely to answer that they represent knowledge about the outcomes of
future measurements, rather than knowledge of some underlying
observer-independent reality.  Modern neo-Copenhagen views include the
Quantum Bayesianism of Caves, Fuchs and Schack \cite{Fuchs2010,
Fuchs2010a, Fuchs2013}, the views of of Bub and Pitowsky
\cite{Pitowsky2006}, the quantum pragmatism of Healey
\cite{Healey2012}, the relational quantum mechanics of Rovelli
\cite{Rovelli1996}, the empiricist interpretation of W. M. de Muynck
\cite{Muynck2002}, as well as the views of David Mermin
\cite{Mermin2013}, Asher Peres \cite{Peres1995}, and Brukner and
Zeilinger \cite{Brukner2003}.  Some may quibble about whether all
these interpretations resemble Copenhagen enough to be called
neo-Copenhagen, but for present purposes all that matters is that
these authors do not view the quantum state as an intrinsic property
of an individual system and they do not believe that a deeper reality
is required to make sense of quantum theory.

The second type of $\psi$-epistemic interpretation are those that are
realist, in the sense that they do posit some underlying ontology.
They just deny that the wavefunction is part of that ontology.
Instead, the wavefunction is to be understood as representing our
knowledge of the underlying reality, in the same way that a
probability distribution on phase space represents our knowledge of
the true phase space point occupied by a classical particle.  There is
evidence that Einstein's view was of this type \cite{Harrigan2010}.
Ballentine's statistical interpretation \cite{Ballentine1970} is also
compatible with this view in that he leaves open the possibility that
hidden variables exist and only insists that, if they do exist, the
wavefunction remains statistical (as a frequentist, Ballentine uses
the term ``statistical'' rather than ``epistemic'').  More recently,
Spekkens has been a strong advocate of this point of view
\cite{Spekkens2007}.

Neo-Copenhagen and realist $\psi$-epistemic interpretations share much
of the same explanatory structure, since they both view probability
measures as the correct classical analogy for the wavefunction.  Many
of the arguments for adopting a $\psi$-epistemic interpretation apply
equally to both of them.  On the other hand, $\psi$-ontology theorems
only apply to realist interpretations.  This is to be expected as it
would be difficult to prove that the wavefunction must be ontic in a
framework that does not admit the existence of ontic states in the
first place.  Because of this, $\psi$-epistemicists always have the
option of becoming neo-Copenhagen in the face of $\psi$-ontology
theorems.

Realist $\psi$-epistemic interpretations are already strongly
constrained by existing no-go theorems, such as Bell's Theorem
\cite{Bell1964} and the Kochen--Specker Theorem \cite{Kochen1967},
which go some way to explaining why not many concrete $\psi$-epistemic
models have been proposed.  However, there is no reason to view these
results as decisive against realist $\psi$-epistemic interpretations
any more than they are decisive against realist $\psi$-ontic
interpretations.  For example, Bohmian mechanics and spontaneous
collapse theories still attract considerable support despite
displaying nonlocality and contextuality, as the existing no-go
theorems imply they must.  Thus, we would be guilty of a
double-standard if we ruled out realist $\psi$-epistemic
interpretations on the basis of these results but still admitted the
possibility of $\psi$-ontic ones.  What is needed is a theorem that
explicitly addresses the $\psi$-ontic/epistemic distinction, and this
is the gap that $\psi$-ontology theorems are intended to fill.

In the remainder of this section, the main arguments in favor of
$\psi$-epistemic interpretations are reviewed.  Because we do not have
a fully worked out realist $\psi$-epistemic model that covers the
whole of quantum theory, it is helpful to introduce toy models that
are similar to quantum theory in some respects, but in which the
analogous notion to the quantum state is clearly epistemic.  These are
intended to demonstrate the kinds of explanation that are possible in
$\psi$-epistemic theories.  Spekkens' toy theory \cite{Spekkens2007},
which reinvigorated interest in realist $\psi$-epistemic models in
recent years, is reviewed in \S\ref{Spek}.  There are also
$\psi$-epistemic models that cover fragments of quantum theory,
e.g.\ just pure state preparations and projective measurements of a
single qubit or just continuous variable systems when restricted to
Gaussian states and operations.  These are reviewed in \S\ref{Frag}.
Finally, I review three further arguments for the $\psi$-epistemic
view based on the fact that quantum theory can be viewed as a
generalization of classical probability theory in \S\ref{GenProb}, on
the collapse of the wavefunction in \S\ref{Collapse}, and on the size
of the quantum state space in \S\ref{Excess}.

\subsection{Spekkens toy bit}

\label{Spek}

Spekkens introduced a toy theory \cite{Spekkens2007} that
qualitatively reproduces the physics of spin-$1/2$ particles (or any
other instantiation of qubits) when they are prepared and measured in
the $x$, $y$ and $z$ bases.  The full version of Spekkens theory
incorporates dynamics and composite systems, including reproducing
some of the phenomena associated with entangled states but, for
illustrative purposes, we restrict attention to the simplest case of a
single \emph{toy bit}.  The toy theory is meant to demonstrate the
explanatory power of $\psi$-epistemic interpretations by providing
natural explanations of many quantum phenomena that are puzzling if
the quantum state is ontic.
\begin{figure}[t!]
\centering
\includegraphics[width=79mm]{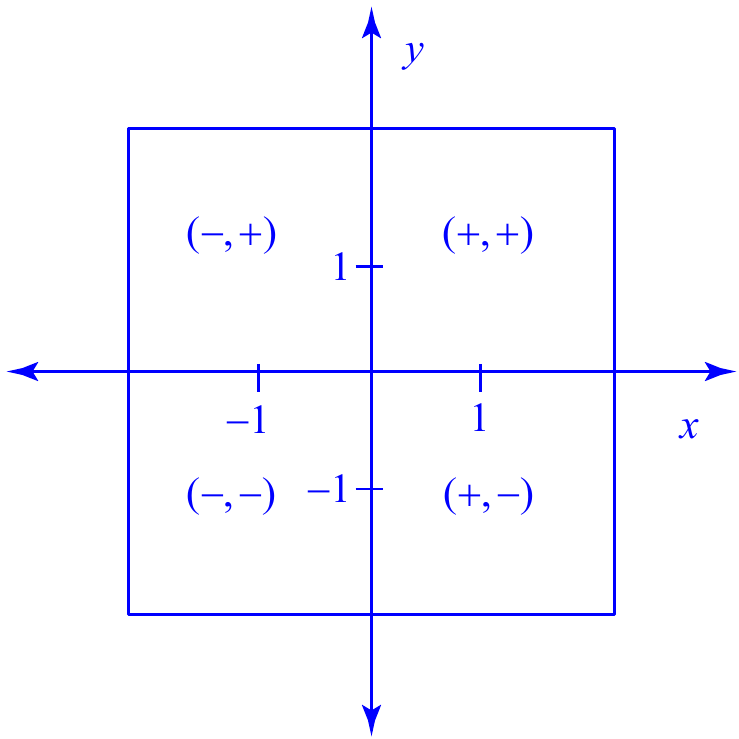}
\caption{\color[HTML]{0000FF}{\label{fig:toy-ontic}The ontic state space of Spekkens
toy-bit. The ontic states are labeled $(x,y)$ where $x$ and $y$
take values $\pm 1$, which are abbreviated to $\pm$ for
compactness.}}
\end{figure}
A toy bit consists of a system that can be in one of four states,
labeled $(-,-), (-,+), (+,-)$ and $(+,+)$.  In
Fig.~\ref{fig:toy-ontic}, these are depicted laid out as a grid in the
$x-y$ plane, with the origin lying at the center of the grid.  The
ontic states can then be thought of as representing the coordinates of
the centers of the grid cells, given by $(x,y)$, with $\pm$ short for
$\pm 1$.  For concreteness, one can imagine that the cells of the grid
represent four boxes and that the system is a ball that can be in one
of them.  The ontic state $(x,y)$ then represents the state of affairs
in which the ball is in the box centered on the coordinates $(x,y)$.

The most fine grained description of the toy bit is always its ontic
state, but we might not know exactly which of the ontic states is
occupied.  In general, our knowledge of the system is described by a
probability distribution over the four ontic states, and this
probability distribution is our epistemic state.  Spekkens imagines
that there is a restriction on the set of epistemic states that may be
assigned to the system, called the \emph{knowledge-balance principle},
which is in some ways analogous to the uncertainty principle.  Roughly
speaking, the knowledge balance principle states that at most half of
the information needed to specify the ontic state can be known at any
given time.  This means, for example, that if we know the
$x$-coordinate with certainty then we cannot know anything about the
$y$-coordinate.  Given this restriction, there are six possible states
of maximal knowledge, termed \emph{pure states}, as shown in the left
hand side of Fig.~\ref{fig:toy-state-meas}.  The pure states are
denoted $\RKet{x \pm}, \RKet{y \pm}, \RKet{z \pm}$ in analogy to the
quantum states $\Ket{x \pm}, \Ket{y \pm}, \Ket{z \pm}$ of a spin-$1/2$
particle.
\begin{figure}[t!]
\centering
\includegraphics[width=80mm]{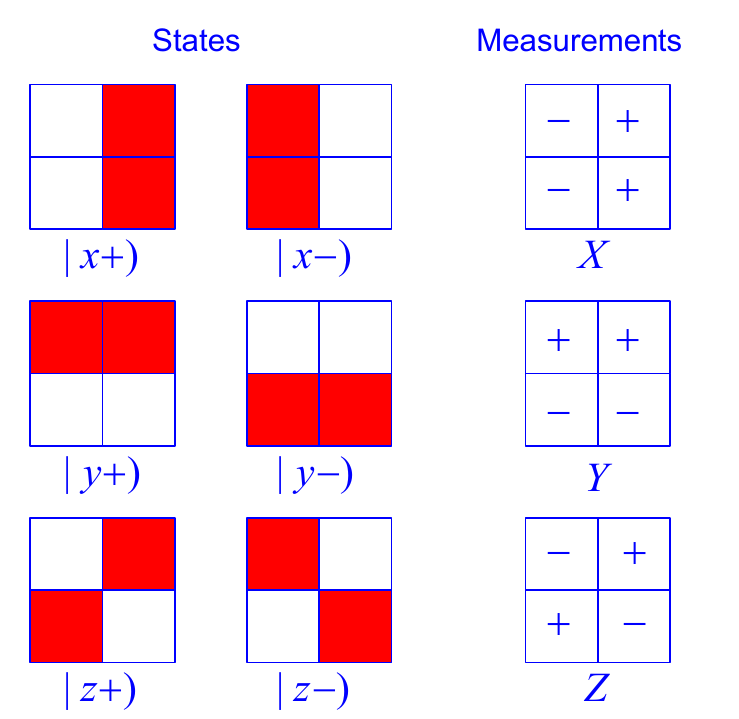}
\caption{\color[HTML]{0000FF}{\label{fig:toy-state-meas}The allowed states and
measurements of Spekkens' toy bit.  For the states, a red square
indicates that the corresponding ontic state has probability $1/2$
and a white square indicates probability $0$.  For the
measurements, a square labeled $+$ gives the $+1$ outcome with
certainty and a square labeled $-$ gives the $-1$ outcome with
certainty.}}
\end{figure}
Note that the epistemic states $\RKet{z \pm}$ are not states with a
definite value of the $z$-coordinate.  The system is two dimensional
so it does not have a third coordinate.  Instead, $\RKet{z+}$ is the
state in which we know only that the $x$ and $y$ coordinates are equal
and $\RKet{z-}$ is the state in which we know only that they are
different.  Defining $z=xy$, this is equivalent to saying that
$\RKet{z \pm}$ is the state in which we know only that $z = \pm 1$.

Although the knowledge balance principle has been imposed by hand, it
is easy to imagine that it could arise from a lack of fine-grained
control over the system.  For example, imagine a preparation device
that pushes the ball to the left along the $x$-axis, but that same
device also causes a random disturbance to the $y$-coordinate, such
that the best we can do after operating the device is to assign the
state $\RKet{x-}$.

Having described the epistemic states of the theory, the next task is
to describe the measurements.  Spekkens requires that measurements be
\emph{repeatable}, which means that if a measurement is repeated twice
in succession then it should yield the same outcome both times.  Also,
the measurement should respect the knowledge balance principle, so
that our epistemic state after the measurement contains at most half
of the information required to specify the ontic state.  In order to
satisfy this second requirement, the measurement must necessarily
cause a disturbance to the ontic state, since otherwise we could end
up in a situation in which we know the ontic state exactly.  For
example, if a measurement of the $x$-coordinate could be implemented
without disturbance then measuring the $x$-coordinate followed by
measuring the $y$-coordinate would tell us the exact ontic state of
the system.

There are three nontrivial measurements that can be implemented in
such a way that they satisfy the two requirements: the $X$ measurement
reveals the $x$ coordinate, the $Y$ measurement reveals the $y$
coordinate, and the $Z$ measurement reveals the value of $z=xy$.
These are illustrated on the right hand side of
Fig.~\ref{fig:toy-state-meas}.  Each of these measurements causes a
random exchange between the pairs of ontic states that give the same
outcome in the measurement.  For example, if we perform an $X$
measurement and get the $+1$ outcome, then with probability $1/2$
nothing happens and with probability $1/2$ the states $(+,-)$ and
$(+,+)$ are exchanged.  This ensures that we always end up in an
epistemic state that satisfies the knowledge-balance principle at the
end of the measurement, in this case $\RKet{x+}$.  It is easy to see
that this is the only type of disturbance that is compatible with both
repeatability and the knowledge-balance principle.  For example, for
an $X$-measurement the random disturbance cannot exchange ontic states
that have different values of the $x$-coordinate, e.g. $(+,+)$ and
$(-,+)$, since this would violate repeatability.

The theory described so far makes exactly the same predictions as
quantum theory for sequences of measurements in the $x$, $y$ and $z$
directions of spin-$1/2$ particles prepared in one of the states
$\Ket{x \pm}$, $\Ket{y \pm}$ and $\Ket{z \pm}$ if we identify these
six states with $\RKet{x \pm}$, $\RKet{y \pm}$ and $\RKet{z \pm}$ and
the Pauli observables $\sigma_x$, $\sigma_y$ and $\sigma_z$ with $X$,
$Y$ and $Z$.  It can thus be regarded as a hidden variable theory for
this kind of experiment.  Further, the quantum states are epistemic in
this representation, as they are each represented by probability
distributions that have support on two ontic states and nonorthogonal
states overlap, e.g. $\RKet{x+}$ and $\RKet{y+}$ both assign
probability $1/2$ to the ontic state $(+,+)$.

Several features of quantum theory that are puzzling on the
$\psi$-ontic view are present in this theory and have very natural
explanations.  Firstly, consider the fact that nonorthogonal pure
states cannot be perfectly distinguished by a measurement, e.g.\ if
either the state $\Ket{x+}$ or the state $\Ket{y+}$ is prepared, and
you do not know which, then there is no measurement that will enable
you to deduce this information with certainty.  If quantum states are
ontic then the two preparations correspond to distinct states of
reality and it is puzzling that we cannot detect this difference.  On
the other hand, the toy theory states $\RKet{x+}$ and $\RKet{y+}$
overlap on the ontic state $(+,+)$ and this will be occupied by the
system $50\%$ of the time whenever $\RKet{x+}$ or $\RKet{y+}$ is
prepared.  When this does happen, there is nothing about the ontic
state of the system that could possibly tell you whether $\RKet{x+}$
or $\RKet{y+}$ was prepared.  Therefore, we must fail to distinguish
the two preparations at least $50\%$ of the time.  The overlap of the
two epistemic states accounts for their indistinguishability.

Another feature of quantum theory that is easily accounted for in
Spekkens' model is the no-cloning theorem.  In quantum theory, there
is no transformation that copies both of two nonorthogonal states.
For example, there is no device that operates with certainty and
outputs both $\Ket{x+}\otimes\Ket{x+}$ when $\Ket{x+}$ is input and
$\Ket{y+}\otimes\Ket{y+}$ when $\Ket{y+}$ is input.  On the
$\psi$-ontic view this is puzzling because the two states represent
distinct states of reality, so one might expect that this distinctness
could be detected and then copied over to another system.  Again, this
is easily explained in Spekkens' model in terms of the overlap between
the epistemic states $\RKet{x+}$ and $\RKet{y+}$.
Fig.~\ref{fig:Spek:Clone} shows the inputs and outputs of the
hypothetical toy-theory cloning machine.  The two input states overlap
on the ontic state $(+,+)$ and this occurs $50\%$ of the time
regardless of which input state is prepared.  Since the cloning
machine only has access to the ontic state, it must do the same thing
to the state $(+,+)$, regardless of whether it occurs because
$\RKet{x+}$ was prepared or because $\RKet{y+}$ was prepared.
Therefore, $50\%$ of the time, the input must get mapped to the same
set of ontic states, with the same probabilities, regardless of which
state was prepared, so there must be at least a $50\%$ overlap of the
output states of any physically possible device for these two input
states.  In contrast, the output states of the hypothetical cloning
machine only overlap on the ontic state $((+,+),(+,+))$ and this must
only occur $25\%$ of the time at the output for either input state.
Therefore, the cloning machine is impossible.
\begin{figure}[t!]
\centering
\includegraphics[width=85mm]{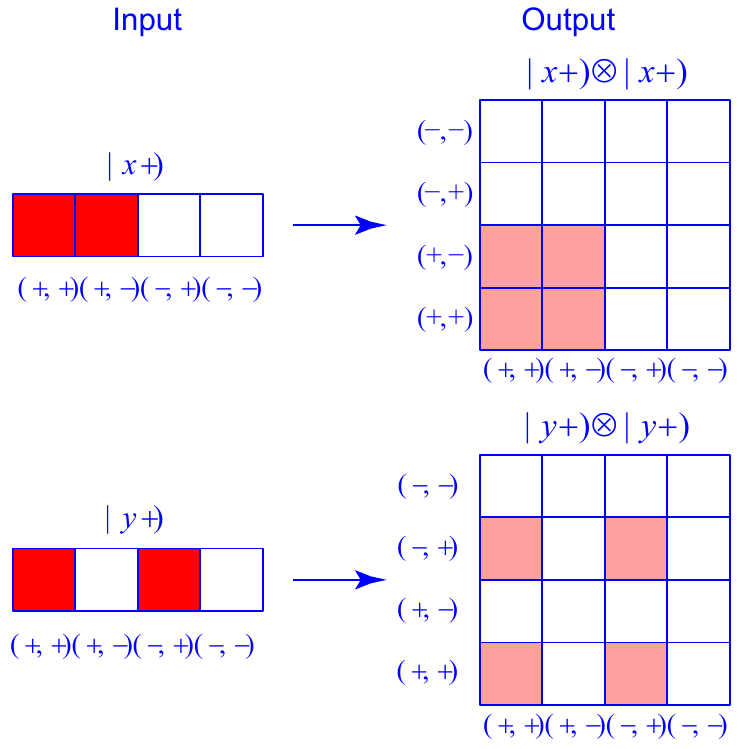}
\caption{\color[HTML]{0000FF}{\label{fig:Spek:Clone}Inputs and outputs of a hypothetical toy bit cloning machine.  In order to represent a two toy bit state, the ontic state space of a single toy bit is represented along one dimension.  For the outputs, the horizontal axis represents the first toy bit and the vertical axis represents the second toy bit.  Dark red represents ontic states occupied with $50\%$ probability and light red represents those occupied with $25\%$ probability. The inputs overlap on an ontic state that they both assign $50\%$ probability, but the outputs only overlap on an ontic state that they both assign $25\%$ probability.}}
\end{figure}
In Spekkens' toy theory, both indistinguishability and no-cloning
follow from the more general fact that a stochastic map cannot
decrease the overlap of two probability distributions. In quantum
theory, there is a similar result that no transformation that can be
implemented with certainty can decrease the inner product between two
pure states \cite{Chefles1998}.  This suggests that the inner product
of two quantum states is analogous to the overlap between two
probability distributions, and this analogy would be most easily
explained if quantum states with nonzero inner product were literally
represented by overlapping probability distributions on some ontic
state space, i.e.\ by a realist $\psi$-epistemic interpretation.

Finally, consider the fact that mixed states in quantum theory have
more than one decomposition into a convex sum of pure states.  For
example, the maximally mixed state of a spin-$1/2$ particle is $I/2$,
where $I$ is the identity operator, and this can be written
alternatively as
\begin{align}
\frac{I}{2} & = \frac{1}{2} \Ket{x+}\Bra{x+} + \frac{1}{2}
\Ket{x-}\Bra{x-} \\
& = \frac{1}{2} \Ket{y+}\Bra{y+} + \frac{1}{2} \Ket{y-}\Bra{y-} \\
& = \frac{1}{2} \Ket{z+}\Bra{z+} + \frac{1}{2} \Ket{z-}\Bra{z-}.
\end{align}
Physically speaking, this means that if we prepare a spin-$1/2$
particle in the $\Ket{x+}$ state with probability $1/2$ and in the
$\Ket{x-}$ state with probability $1/2$ then no experiment can tell
the difference between this ensemble and that formed by preparing it
in the $\Ket{y+}$ state with probability $1/2$ and in the $\Ket{y-}$
state with probability $1/2$ (and similarly for $\Ket{z \pm}$).  On a
$\psi$-ontic view this is puzzling because the $\Ket{x \pm}$ states
are ontologically distinct from the $\Ket{y \pm}$ (and $\Ket{z \pm}$)
states so this difference should be detectable.  However, in Spekkens'
theory this non-uniqueness of decomposition is easily explained because
preparing $\RKet{x+}$ with probability $1/2$ and $\RKet{x-}$ with
probability $1/2$ leads to exactly the same distribution over ontic
states as preparing $\RKet{y+}$ with probability $1/2$ and $\RKet{y-}$
with probability $1/2$ (and similarly for $\RKet{z \pm}$).  This is
illustrated in Fig.~\ref{fig:toy-multi-decomp}.  Note that this is
only possible because the distributions corresponding to nonorthogonal
quantum states overlap.
\begin{figure}[t!]
\centering
\includegraphics[width=70mm]{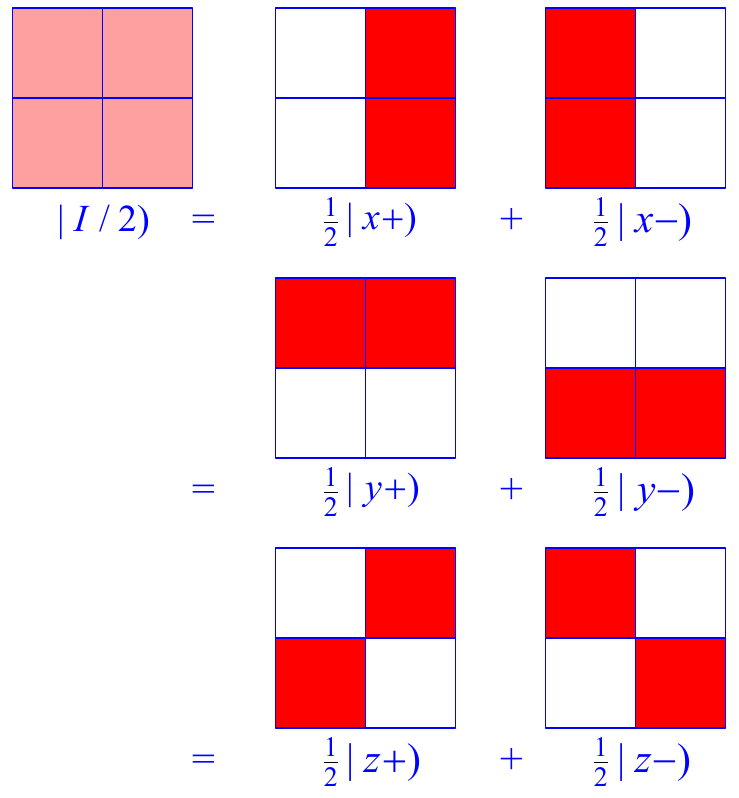}
\caption{\color[HTML]{0000FF}{\label{fig:toy-multi-decomp}Multiple pure-state
decompositions of a mixed state in Spekkens toy theory.  The
maximally mixed state $\RKet{I/2}$ can be written as a $50/50$
mixture in three different ways. Light red indicates a
probability of $1/4$.}}
\end{figure}

What has been presented in this section is just a small fraction of
the quantum phenomena that are accounted for in Spekkens' toy model.
Many more can be found in \cite{Spekkens2007}, but I hope the present
discussion has conveyed a flavor of the type of explanation that is
possible in a realist $\psi$-epistemic theory.

\subsection{Models for fragments of quantum theory}

\label{Frag}

Spekkens' toy model is qualitatively similar to the stabilizer
fragment of quantum theory, which consists of the set of states that
are joint eigenstates of maximal commutative subgroups of the Pauli
group (i.e.\ the group generated by tensor products of the identity and
the three Pauli operators) and has dynamics given by unitaries that
map the Pauli group to itself (see \cite{Gottesman1999} for details of
the stabilizer formalism and \cite{Pusey2012a} for a presentation of
Spekkens' toy theory that closely resembles it).  The stabilizer
fragment is important in quantum information theory as it contains all
the states and operations needed for quantum error correction, as well
as a number of other quantum protocols.  Spekkens' model does not
exactly reproduce the stabilizer fragment when dynamics and
entanglement are taken into account, but other models have been
proposed that do reproduce fragments of quantum theory exactly in a
$\psi$-epistemic manner.

First of all, Spekkens' toy theory has been generalized to larger
dimensions \cite{Schreiber} and to continuous variable systems
\cite{Bartlett2012}.  It turns out that for odd dimensional Hilbert
spaces, Spekkens' model reproduces the stabilizer fragment of quantum
theory exactly.  For continuous variable systems, Spekkens' model
reproduces the Gaussian fragment of quantum theory, in which all
states are Gaussian and the transformations and measurements preserve
the Gaussian nature of the states.  A Gaussian state is one that has a
Gaussian Wigner function.  For a single particle, the Wigner function
is defined in terms of the density operator $\rho$ as $W(x,p) =
\int_{-\infty}^{+\infty} ds e^{ips/\hbar} \Bra{x - \frac{s}{2}} \rho
\Ket{x + \frac{s}{2}}$ and is a pseudo-probability distribution, i.e.\
it is normalized to $1$ but it does not have to be positive.  Gaussian
functions are in fact positive so in this case $W(x,p)$ can be
regarded as a probability distribution and unsurprisingly these are
the epistemic states in Spekkens' continuous variable theory, with the
ontic states being the phase space points $(x,p)$.

Kochen and Specker gave a model for a single qubit that is
$\psi$-epistemic \cite{Kochen1967}.  They were not actually trying to
generate a $\psi$-epistemic model, but rather to provide a
counterexample to their eponymous theorem in $2$-dimensions in order
to show that the theorem requires a Hilbert space of $\geq 3$
dimensions for its proof.  Nevertheless, their model is a paradigmatic
example of a $\psi$-epistemic theory.  The details of this model are
presented in \S\ref{EOM} after we have introduced the formalism for
realist $\psi$-epistemic models more rigorously.  Along similar lines,
Lewis et.\ al.\ \cite{Lewis2012} and Aaronson et.\ al.\
\cite{Aaronson2013} have constructed $\psi$-epistemic models that work
for all finite dimensional systems.  These models were developed as
technical counterexamples to certain conjectures about $\psi$-ontology
theorems and as such they are not very elegant or plausible.  They are
discussed in context in \S\ref{NPIP}.

\subsection{Generalized probability theory}

\label{GenProb}

Apart from specific models, there are also qualitative arguments in
favor of the $\psi$-epistemic view.  The first of these is that
quantum theory can be viewed as a noncommutative generalization of
classical probability theory.  Classically, consider the algebra
$\mathcal{A}$ of random variables on a sample space under pointwise
addition and multiplication.  A probability distribution can then be
regarded as a positive functional $\mu:\mathcal{A} \rightarrow
\mathbb{R}$ that assigns to each random variable its expectation
value.  The quantum generalization of this is to replace the
commutative algebra $\mathcal{A}$ by the noncommutative algebra
$\mathfrak{B} \left ( \Hilb \right )$ of bounded operators on a
Hilbert space $\Hilb$.  A quantum state $\rho$ is isomorphic to a
positive functional $f_{\rho}$ on $\mathfrak{B}(\Hilb)$ given by
$f_{\rho}(M) = \Tr{M \rho}$.  In fact, by a theorem of von Neumann
\cite{Neumann1955}, all positive linear functionals on $\mathfrak{B}
\left ( \Hilb \right )$ that are normalized such that $f(I) = 1$,
where $I$ is the identity operator, are of this form.

Both $\mathcal{A}$ and $\mathfrak{B} \left ( \Hilb \right )$ are
examples of von Neumann algebras, and a generalization of classical
measure theoretic probability can be developed by defining generalized
probability distributions to be positive normalized functionals on
such algebras \cite{Redei2007, Petz2008}.  This generalized theory has
both classical probability theory and quantum theory as special cases.
In this theory, quantum states are playing the same role in the
quantum case that probability measures play in the classical case, and
so it is natural to interpret quantum states and classical
probabilities as the same kind of entity.  Since classical
probabilities are usually interpreted epistemically, it is natural to
interpret quantum states in the same way.

This line of argument would not be too convincing if noncommutative
probability theory were just a formal mathematical generalization with
no practical applications.  However, the theory has a rich array of
applications in quantum statistical mechanics, and especially in
quantum information theory.  The full machinery of von Neumann
algebras is not often needed in quantum information, as we are usually
dealing with finite dimensional systems.  Nevertheless, whenever the
analogy is made between classical probability distributions and
density operators, and between stochastic maps and quantum operations,
generalized probability is at play in the background.  For example, in
quantum compression theory \cite{Schumacher1995}, a density operator
on a finite Hilbert space is viewed as the correct generalization of a
classical information source with finite alphabet, which would be
described by a classical probability distribution.  Similarly, a
quantum channel is described by a quantum operation, and this is
viewed as generalizing a classical channel, which would be modeled as
a stochastic map.

In fact, it is difficult to find any area of quantum information and
computing in which probabilities are not viewed as the correct
classical analogs of quantum states, and this includes areas that
concern themselves exclusively with pure states and unitary
transformations.  For example, the standard circuit model of quantum
computing \cite{Nielsen2000} only employs pure states and unitaries,
but quantum computational complexity classes are most often defined as
generalizations of classical probabilistic complexity classes (see
\cite{Zoo} for definitions of the complexity classes mentioned in this
section.).  The class BQP, usually thought of as the set of problems
that can be solved efficiently on a quantum computer, is sometimes
loosely described as the quantum version of P, the class of problems
that can be solved in polynomial time on a deterministic classical
computer, but in fact it is the generalization of BPP, the set of
problems that can be solved in polynomial time on a probabilistic
classical computer with probability $> 2/3$.  All over quantum
computing theory we find the analogy made to classical probabilistic
computing, and not to classical deterministic computing.

It seems then, that if we take quantum information and computing
seriously, we must take generalized probability theory seriously as
well.  On these and other grounds, I have argued elsewhere
\cite{Leifer2013a, Leifer2013b} that quantum theory is indeed best
viewed as a generalization of probability theory.  The details of this
would take us too far afield, but suffice to say there are good
reasons for viewing quantum states as analogous to probability
distributions and, if we do that, we should try to interpret them both
in the same sort of way.

\subsection{The collapse of the wavefunction}

\label{Collapse}

A straightforward resolution of the collapse of the wavefunction, the
measurement problem, Schr{\"o}dinger's cat and friends is one of the
main advantages of $\psi$-epistemic interpretations.  Recall that the
measurement problem stems from the fact that there are two different
ways of propagating a quantum state forward in time.  When the system
is isolated and not being observed, the quantum state is evolved
smoothly and continuously according to the Schr{\"o}dinger equation.
On the other hand, when a measurement is made on the system, the
quantum state must be updated according to the projection postulate,
leading to the instantaneous and discontinuous collapse of the
wavefunction.  Since a measurement is presumably just some type of
physical interaction between system and apparatus, this poses the
problem of why it is not also modeled by Schr{\"o}dinger evolution.
However, doing so leads to seemingly absurd situations, such as
Schr{\"o}dinger's eponymous cat ending up in a superposition of being
alive and dead at the same time.

The measurement problem is not so much resolved by $\psi$-epistemic
interpretations as it is dissolved by them.  It is revealed as a
pseudo-problem that we were wrong to have placed so much emphasis on
in the first place.  This is because the measurement problem is only
well-posed if we have already established that the quantum state is
ontic, i.e.\ that it is a direct representation of reality.  Only then
does a superposition of dead and alive cats necessarily represent a
distinct physical state of affairs from a definitely alive or
definitely dead cat.  On the other hand, if the quantum state only
represents what we know about reality then the cat may perfectly well
be definitely dead or alive before we look, and the fact that we
describe it by a superposition may simply reflect the fact that we do
not know which possibility has occurred yet.

\subsection{Excess baggage}

\label{Excess}

According to the $\psi$-ontologist, a single qubit contains an
infinite amount of information because a pure state of a qubit is
specified by two continuous complex parameters (ignoring
normalization).  For example, Alice could encode an arbitrarily long
bit string as the decimal expansion of the amplitude of the $\Ket{0}$
state.  However, according to the Holevo bound \cite{Holevo1973}, only
a single bit of classical information can be encoded in a qubit in
such a way that it can be reliably retrieved.  If the quantum state
truly exists in reality, it is puzzling that we cannot detect all of
this extra information.  Hardy has coined the term ``ontological
excess baggage'' to refer to this phenomenon \cite{Hardy2004}.  It
seems that $\psi$-ontologists are attributing a lot more information
to the state of reality than required to explain our observations.

The $\psi$-epistemic response to this is to note that a classical
probability distribution is also specified by continuous parameters.
A probability distribution over a single classical bit requires two
real parameters (again ignoring normalization).  If probabilities were
intrinsic properties of individual systems then this would present a
similar puzzle as there would be an infinite amount of information in
a single bit.  However, classical bits are in fact always either in
the state zero or one and the probabilities simply represent our
knowledge about that value.  In reality, there is just as much
information in a classical bit as we can extract from it, namely one
bit.  If the quantum state is epistemic, then the same resolution is
available to the problem of excess baggage.  The continuous parameters
required to specify the state of a qubit simply represent our
knowledge about it, and the actual ontic state of the qubit, whatever
that may be, might only contain a finite amount of information.

The excess baggage problem is exacerbated by considering how the state
space scales with the number of qubits.  A pure state of $n$ qubits is
specified by $2^n$ complex parameters, but only $n$ bits can be
reliably encoded according to the Holevo bound.  However, the number
of parameters required to specify a probability distribution over $n$
bits also scales exponentially, so the $\psi$-epistemic resolution of
the problem is still available.

In response to this, $\psi$-ontologists might be inclined to point out
that the number of bits that can be reliably encoded in $n$ qubits
depends on how exactly the communication task is defined.  If Alice
and Bob have pre-shared entanglement then Alice can send $2n$ bits to
Bob in $n$ qubits via superdense coding \cite{Bennett1992}.
Similarly, qubits perform better than classical bits in random access
coding \cite{Ambainis1999}, wherein Bob is not required to reliably
retrieve all of the bits that Alice sends, but only a limited number
of them of his choice.  However, the amount of information that Alice
can send to Bob does not scale exponentially with the number of qubits
in any of these protocols, so there is still an excess baggage
problem.

\section{Arguments for a $\psi$-ontic interpretation}

\label{Ontic}

Having reviewed the arguments in favor of $\psi$-epistemic
interpretations, we now look at those that had been put forward in
favor of the reality of quantum states prior to the discovery of
$\psi$-ontology theorems.  Despite receiving a good deal of support, I
hope to convince you that they are far from compelling.  Thus, even
those who are already convinced of the reality of the quantum state
should be interested in establishing their claim rigorously via
$\psi$-ontology theorems.

A big difficulty in extracting arguments for $\psi$-ontology from the
literature is that the majority of authors neglect the possibility of
\emph{realist} $\psi$-epistemic theories.  Instead, they seem to think
that either the wavefunction must be real, or else we must adopt some
kind of neo-Copenhagen approach.  Thus, many purported arguments for
the reality of the wavefunction are really just arguments for the
reality of something, regardless of whether that thing is the
wavefunction.  Since realist $\psi$-epistemic interpretations already
accept the need for an objective reality, such arguments can be
dismissed in the present context.  From amongst these arguments, I
have attempted to sift out those that say something more substantive
about the wavefunction specifically.  I have found four broad classes
of argument, each of which is discussed in turn in this section.
\S\ref{Int} discusses the argument from interference, \S\ref{EEL}
discusses the argument from the eigenvalue-eigenstate link,
\S\ref{ERI} discusses the argument from existing realist
interpretations of quantum theory, and finally \S\ref{QC} discusses
the argument from quantum computation.

\subsection{Interference}

\label{Int}

\begin{quote}
We choose to examine a phenomenon which is impossible,
\emph{absolutely} impossible, to explain in any classical way, and
which has in it the heart of quantum mechanics.  In reality, it
contains the \emph{only} mystery. --- R. P. Feynman
\cite{Feynman2010} [Emphasis in original]
\end{quote}

Following Feynman, single particle interference phenomena, such as the
double slit experiment, are often viewed as containing the essential
mystery of quantum theory.  The problem of explaining the double slit
experiment is usually presented as a dichotomy between explaining it
in terms of a classical wave that spreads out and travels through both
slits or in terms of a classical particle that travels along a
definite trajectory that goes through only one slit.  Neither of these
explanations can account for both the interference pattern and the
fact that it is built out of discrete localized detection events.  A
wave would not produce discrete detection events and a classical
particle would not be affected by whether or not the other slit is
open.  This is taken as evidence that no classical description can
work, and that something more Copenhagen-like must be at work.

Of course, the dichotomy between either classical waves or particles
is a false one.  If we allow the state of reality to be something more
general, i.e.\ some sort of quantum stuff that we do not necessarily
understand yet, then many additional explanations of the experiment
become available.  For example, there is the Bohmian picture in which
both a wave and a particle exist, and the motion of the particle is
guided by the wave.  The wave then explains the interference fringes,
whilst the particle explains the discrete detection events.  This is
by no means the only possibility, but it does highlight the gap in the
usual argument.  Nevertheless, in a realist picture, it seems that
something wavelike needs to exist in order to explain the interference
fringes, and the obvious candidate is the wavefunction.

However, in order to arrive at the conclusion that the wavefunction
must be real, greater leeway has been given in determining what the
ontic state might be like compared to the original argument, which
intended to rule out both particles and waves.  Given this, we should
be careful to rule out other possibilities rigorously, rather than
jumping to the conclusion that the wavefunction must be real.  In this
broader context, the only thing that the double slit experiment
definitively establishes is that there must be some sort influence
that travels through both slits in order to generate the interference
pattern.  It does not establish that this influence must be a
wavefunction.

In fact, interference phenomena occur in some of the previously
discussed $\psi$-epistemic models, so the inference from interference
to the reality of the wavefunction is incorrect.  In Spekkens' toy
theory, a notion of coherent superposition can be introduced such
that, for example, $\RKet{y+}$ is a coherent superposition of
$\RKet{x+}$ and $\RKet{x-}$.  Such superpositions are preserved under
dynamical evolution, so there is a superposition principle in the
theory (see \cite{Spekkens2007} for details).  Further, since all
two-dimensional Hilbert spaces are created equal, there is nothing
special about the interpretation of the toy bit in terms of a
spin-$1/2$ particle.  It could equally well be a model of any other
two-dimensional system.  For example, consider the two dimensional
subspace of an optical mode spanned by the vacuum state $\Ket{0}$ and
the state $\Ket{1}$ where it contains one photon.  The toy bit state
$\RKet{x+}$ can be reinterpreted as $\Ket{0}$ and $\RKet{x-}$ as
$\Ket{1}$, and by doing so a whole host of Mach-Zehnder interferometry
experiments can be qualitatively reproduced by the theory
\cite{Martin2014}.  This includes not only basic interferometry, but
also such seemingly paradoxical effects as the delayed choice
experiment \cite{Wheeler1978} and the Elitzur-Vaidman bomb test
\cite{Elitzur1993}.  In this theory, there is always a fact of the
matter about which arm of the interferometer the photon travels along,
but it does not fall afoul of the standard waves vs.\ particles
argument because the vacuum state has structure.  For example, the
situation in which the photon travels along the left arm of a
Mach-Zehnder interferometer would be represented in quantum theory by
$\Ket{1}_L \otimes \Ket{0}_R$.  The $\Ket{0}_R$ factor would be
represented by the epistemic state $\RKet{x+}$ in the toy theory,
which is compatible with two possible ontic states $(+,+)$ and
$(+,-)$, and these ontic states travel along the right arm of the
interferometer.  Hence, when a photon is in the left arm of an
interferometer, and no photon is in the right arm, there is still a
bit of information traveling along the right arm of the
interferometer, corresponding to whether the ontic state is $(+,+)$ or
$(+,-)$, that can be used to convey information about whether or not
its path was blocked.  There is an influence that travels through both
arms, but that influence is not a wavefunction.

Interference phenomena also occur in all of the models discussed in
\S\ref{Frag} simply because they reproduce fragments of quantum theory
exactly and those fragments contain coherent superpositions.  It is
arguable whether the mechanisms explaining interference in all these
models are plausible, but the main point is that the direct inference
from interference to the reality of the wavefunction is blocked by
them.  If there is an argument from interference to be made then it
will need to employ further assumptions.  Hardy's $\psi$-ontology
theorem, discussed in \S\ref{Hardy}, can be viewed as an attempt at
doing this, but, in light of the way that interference is modeled in
Spekkens' toy theory, its assumptions do not seem all that plausible.

Ultimately, the intuition behind the argument from interference stems
from an analogy with classical fields.  Because wavefunctions can be
superposed, they exhibit interference.  Prior to the discovery of
quantum theory, the only entities in physics that obeyed a
superposition principle and exhibited interference were classical
fields, and these were definitely intended to be taken as real.  For
example, the value of the electromagnetic field at some point in
space-time is an objective property that can be measured by observing
the motion of test charges.  The interference of wavefunctions is then
taken as evidence that they should be interpreted as something similar
to classical fields.

However, the analogy between wavefunctions and fields is only exact
for a single spinless particle, for which the wavefunction is
essentially just a field on ordinary three dimensional space.  This
breaks down for more than one particle, due to the possibility of
entanglement.  The size of the quantum state space scales
exponentially with the number of systems, leading to the previously
discussed excess baggage problem.  The wavefunction can no longer be
viewed as field on ordinary three-dimensional space, so the analogy
with a classical field should be viewed with skepticism.  In
combination with the fact that interference phenomena can be modeled
$\psi$-epistemically, the argument from interference is far from
compelling.

\subsection{The eigenvalue-eigenstate link}

\label{EEL}

The eigenvalue-eigenstate link refers to the tenet of orthodox
quantum theory that when a system is in an eigenstate $\Ket{m}$ of an
observable $M$ with eigenvalue $m$ then $M$ is a property of the
system that has value $m$.  Conversely, when the state is not an
eigenstate of $M$ then $M$ is not a property of the system.  In other
words, the properties of a system consist of all the observables of
which the quantum state is an eigenstate and nothing else.  These
properties are taken to be objectively real, independently of the
observer.

This leads to an argument for the reality of the wavefunction because
the quantum state of a system is determined uniquely by the set of
observables of which it is an eigenstate.  Indeed, it is determined
uniquely by just a single observable, since $\Ket{\psi}$ is an
eigenstate of the projector $\KetBra{\psi}{\psi}$ with eigenvalue $1$
and (up to a global phase) it is the only state in the $+1$ eigenspace
of $\KetBra{\psi}{\psi}$.  The argument is then that, if a system has
a set of definite properties, and those properties uniquely determine
the wavefunction, then the wavefunction itself must be real.

Roger Penrose is perhaps the most prominent advocate of this argument,
so here it is in his own words.

\begin{quotation}
One of the most powerful reasons for rejecting such a subjective
viewpoint concerning the reality of $\Ket{\psi}$ comes from the fact
that whatever $\Ket{\psi}$ might be, there is always---in principle,
at least--a primitive measurement whose \textbf{YES} space consists
of the Hilbert space ray determined by $\Ket{\psi}$. The point is
that the physical state $\Ket{\psi}$ (determined by the ray of
complex multiples of $\Ket{\psi}$) is uniquely determined by the
fact that the outcome \textbf{YES}, for this state, is certain. No
other physical state has this property. For any other state, there
would merely be some probability, short of certainty, that the
outcome will be \textbf{YES}, and an outcome of \textbf{NO} might
occur. Thus, although there is no measurement which will tell us
what $\Ket{\psi}$ actually is, the physical state $\Ket{\psi}$ is
uniquely determined by what it asserts must be the result of a
measurement that might be performed on it\ldots

To put the point a little more forcefully, imagine that a quantum
system has been set up in a known state, say $\Ket{\phi}$, and it is
computed that after a time $t$ the state will have evolved, under
the action of $U$, into another state $\Ket{\psi}$. For example,
$\Ket{\phi}$ might represent the state `spin up' ($\Ket{\phi} =
\Ket{\uparrow}$) of an atom of spin $\frac{1}{2}$, and we can
suppose that it has been put in that state by the action of some
previous measurement. Let us assume that our atom has a magnetic
moment aligned with its spin (i.e.\ it is a little magnet pointing
to the spin direction).  When the atom is placed in a magnetic
field, the spin direction will precess in a well-defined way, that
can be accurately computed as the action of $U$, to give some new
state, say $\Ket{\psi} = \Ket{\rightarrow}$, after a time $t$. Is
this computed state to be taken seriously as part of physical
reality? It is hard to see how this can be denied. For $\Ket{\psi}$
has to be prepared for the possibility that we might choose to
measure it with the primitive measurement referred to above, namely
that whose \textbf{YES} space consists precisely of the multiples of
$\Ket{\psi}$. Here, this is the spin measurement in the direction
$\rightarrow$. The system has to know to give the answer
\textbf{YES}, with certainty for that measurement, whereas no spin
state of the atom other than $\Ket{\psi} = \Ket{\rightarrow}$ could
guarantee this. --- Roger Penrose, quoted in \cite{Fuchs2011}.
\end{quotation}

This argument can be easily countered using any of the existing
$\psi$-epistemic models, to which the same reasoning would apply.  For
example, consider the epistemic state $\RKet{x+}$ in Spekkens' toy
theory.  This state assigns the definite value $+1$ to the $X$
measurement and indeed it is the only allowed epistemic state in the
theory that does this.  In fact, all of the pure states of a toy bit
$\RKet{x \pm}, \RKet{y \pm}$ and $\RKet{z \pm}$ are uniquely
determined by the definite value that they assign to one of the
measurements.  Following Penrose's reasoning, we would then conclude
that these states are objective properties of the system.  However,
this is not the case since the objective properties of the system are
those that are determined by the ontic state, and each ontic state is
compatible with more than one epistemic state.  For example, the ontic
state $(+,+)$ is compatible with $\RKet{x+}$, $\RKet{y+}$ and
$\RKet{z+}$.  Given the complete specification of reality, the
epistemic state is underdetermined.

One way of exposing the error in the eigenvalue-eigenstate argument is
to note that, in the toy theory, the fact that observables uniquely
determine epistemic states is a consequence of the knowledge-balance
principle and not a fundamental fact about reality.  For example,
without the knowledge-balance principle, it would be permissible to
have an epistemic state that assigns probability $2/3$ to $(+,+)$ and
$1/3$ to $(+,-)$.  Just like $\RKet{x+}$, this state assigns
probability $1$ to the $+1$ outcome of the $X$ measurement and this is
the only measurement that is assigned a definite value.  If this state
were allowed then it would no longer be possible to mistake
$\RKet{x+}$ for an objective property of the system.  Penrose has
mistaken the set of states that it is possible to prepare with current
experiments for the set of all logically possible states.

Another way of exposing the error is to look at the restrictions on
measurements in the toy theory.  The measurements only reveal
coarse-grained information about the ontic state.  Without the
knowledge-balance principle it would be permissible to conceive of a
more fine-grained measurement that reveals the ontic state exactly.
This measurement reveals a definite property of the system because it
is determined uniquely by the ontic state.  However, specifying this
observable no longer uniquely determines the epistemic state.  For
example, if we learn that the ontic state is $(+,+)$ then this is
compatible with $\RKet{x+}$, $\RKet{y+}$ and $\RKet{z+}$.

In conclusion, the mistake in the eigenvalue-eigenstate argument is to
assume that the observables that we can actually measure in
experiments form the sum total of all the properties of the system and
to assume that the set of states that we can actually prepare are the
sum total of all logically conceivable states.  Without these
assumptions, the argument is simply false.

\subsection{Existing realist interpretations}

\label{ERI}

There are a handful of fully worked out realist interpretations of
quantum theory, including many-worlds \cite{Everett1957, DeWitt1973,
Wallace2012}, de Broglie--Bohm theory \cite{Broglie2009, Bohm1952,
Bohm1952a, Duerr2009}, spontaneous collapse theories
\cite{Ghirardi1986, Bassi2013} and modal interpretations
\cite{Lombardi2013}.  In each of these interpretations the
wavefunction is part of the ontic state, so there is an argument from
lack of imagination to be made: since all the interpretations of
quantum theory that we have managed to come up with that are
uncontroversially realist have a real wavefunction, then the
wavefunction must be real.

I admit that it behooves the realist $\psi$-epistemicist to try to
construct a fully worked out interpretation.  However, absence of
evidence is not the same thing as evidence of absence.  Nevertheless,
I have frequently heard this argument made in private conversations.
Some people seem to think that since we have a bunch of well worked
out interpretations, we ought to simply pick one of them and not
bother thinking about other possibilities.  Ever since the inception
of quantum theory we have been beset by the problem of quantum jumps,
by which I mean that quantum theorists are liable to jump to
conclusions.

Despite the obvious weakness of this argument, there is a more subtle
point to be made.  John Bell was motivated to work on his eponymous
theorem by noting that de Broglie--Bohm theory exhibited nonlocality.
He wanted to know if this was just a quirk of de Broglie--Bohm theory
or an inescapable property of any realist interpretation of quantum
theory.  With this motivation, he ended up proving the latter.  The
lesson of this is that if we find that all realist interpretations of
quantum theory share a property that some find objectionable then we
ought to determine whether or not this is a necessary property.
However, this is a motivation for developing $\psi$-ontology theorems,
rather than regarding the matter as settled a priori.

\subsection{Quantum computation}

\label{QC}

The final argument I want to consider is due to David Deutsch, who put
it forward as an argument in favor of the many-worlds interpretation.
However, I think the argument can be adapted, more generally, into an
argument for the reality of the wavefunction.  Here is the argument in
Deutsch's own words.

\begin{quote}
To predict that future quantum computers, made to a given
specification, will work in the ways I have described, one need only
solve a few uncontroversial equations. But to explain exactly
\emph{how} they will work, some form of multiple-universe language
is unavoidable. Thus quantum computers provide irresistible evidence
that the multiverse is real. One especially convincing argument is
provided by quantum algorithms [\ldots] which calculate more
intermediate results in the course of a single computation than
there are atoms in the visible universe. When a quantum computer
delivers the output of such a computation, we shall know that those
intermediate results must have been computed somewhere, because they
were needed to produce the right answer. So I issue this challenge
to those who still cling to a single-universe world view: \emph{if
the universe we see around us is all there is, where are quantum
computations performed?} I have yet to receive a plausible
reply. --- David Deutsch \cite{Deutsch1998} [Emphasis in original].
\end{quote}

Quantum algorithms that offer exponential improvement over existing
classical algorithms, such as Shor's factoring algorithm
\cite{Shor1997}, start by putting a quantum system in a superposition
of all possible input strings.  Then, some computation is done on each
of the strings before using interference effects between them to
elicit the answer to the computation.  If each of the branches of the
wavefunction were not individually real, whether or not they are
interpreted in a many-worlds sense, then where does the computation
get done?

This is not exactly an argument for the reality of the wavefunction,
but it is at least an argument that the size of ontic state space
should scale exponentially with the number of qubits, and that the
ontic state should contain pieces that look like the branches of a
wavefunction.  However, whilst I agree with Deutsch that an
interpretation of quantum theory should offer an explanation of how
quantum computations work, it is not at all obvious that the
explanation must be a direct translation of what happens to the
wavefunction.  The argument would perhaps be more compelling if there
were known exponential speedups for problems where we think that the
best we can do classically is to just search through an exponentially
large set of solutions, since we could then argue that a quantum
computer must be doing just that.  This would be the case if we had
such a speedup for the traveling salesman problem, or any other NP
complete problem.  The sort of problems for which we do have
exponential speedup, such as factoring, are more subtle than this.
They lie in NP, but are not NP complete.  If we were to find an
efficient classical algorithm for these problems then it would not
cause the whole structure of computational complexity theory to come
crashing down.  If such an algorithm exists, then whatever deeper
theory underlies quantum theory may be exploiting this same structure
to perform the quantum computation.

Even if such a scenario does not play out, Deutsch's argument is not
decisive against realist $\psi$-epistemic interpretations.  Since we
have not yet constructed a viable interpretation of this sort that
covers the whole of quantum theory, who knows what explanations such a
theory might provide?  Therefore, as Deutsch says, explaining quantum
computation ought to be viewed as a challenge for the $\psi$-epistemic
program rather than an argument against it.

\section{Formalizing the $\psi$-ontic/epistemic distinction}

\label{Form}

Hopefully, by this point I have convinced you that it is worth trying
to settle the question of the reality of the wavefunction rigorously.
The aim of this section is to provide a formal definition of what it
means for the quantum state to be ontic or epistemic within a realist
model of quantum theory.  This is usually done within the framework of
\emph{ontological models}.  This is really no different from the
framework that Bell used to prove his eponymous theorem, and an
ontological model is sometimes alternatively known as a hidden
variable theory.  However, I prefer the term ``ontological model''
because there is a lot of confusion about the meaning of the term
``hidden variables''.  Following the example of Bohmian mechanics, a
hidden variable theory is often thought to be a theory in which some
additional variables are posited alongside the wavefunction, which is
itself conceived of as ontic from the start.  Since the reality of the
wavefunction is precisely the point at issue, we definitely want to
include models in which it is not assumed to be real within our
framework.  In addition, in order to cover orthodox quantum theory, we
want our framework to include models in which the wavefunction is the
\emph{only} thing that is real, i.e.\ there are no additional hidden
variables.  A further confusion is the commonly held view that a
hidden variable theory must restore determinism, whereas we want to
allow for the possibility that nature might be genuinely stochastic.
For these reasons, I prefer to use the term ``ontological model''.  It
is either the same thing as a hidden variable theory or more general,
depending on how general you thought hidden variable theories were in
the first place.

Whilst the Hardy and Colbeck--Renner Theorems involve assumptions about
how dynamics are represented in an ontological model, the Pusey--Barrett--Rudolph Theorem
only involves prepare-and-measure experiments, i.e.\ a system is
prepared in some quantum state and is then immediately measured and
discarded.  Therefore, we deal with prepare-and-measure experiments
first and defer discussion of dynamics until it is needed.  A
$\psi$-ontology theorem aims at proving that any ontological model
that reproduces quantum theory must have ontic quantum states.  This
does not apply to arbitrary fragments of quantum theory, since we have
seen in \S\ref{Frag} that there are fragments that can be modeled
with epistemic quantum states.  In order to understand both cases, we
need to define ontological models for fragments of quantum theory
rather than just for quantum theory as a whole.  The formal definition
of a prepare-and-measure fragment of quantum theory is given in
\S\ref{PME} and then \S\ref{OM} explains how these are represented in
ontological models, with examples given in \S\ref{EOM}.  Based on
this, \S\ref{POEM} gives the formal definition of what it means for
the quantum state to be ontic or epistemic.

\subsection{Prepare and measure experiments}

\label{PME}

In a prepare-and-measure experiment, the experimenter performs a
preparation of some physical system and then immediately measures it,
records the outcome, and discards the system.  The experimenter can
repeat the whole process of preparing and measuring as many times as
she likes in order to build up frequency statistics for comparison
with the probabilities predicted by some physical theory.  Each run of
the experiment is assumed to be statistically independent of the
others and it is assumed that the experimenter can choose which
measurement to perform independently of the choice of preparation.

For completeness, we consider the most general type of quantum
state---a density operator---and the most general type of observable
---a Positive Operator Valued Measure (POVM), although we restrict
attention to POVMs with a finite number of outcomes.  Readers
unfamiliar with these concepts should consult a standard textbook,
such as \cite{Nielsen2000} or \cite{Heinosaari2011}.

\begin{definition}
\label{def:Form:PM}
A \emph{prepare-and-measure (PM) fragment} $\mathfrak{F} = \langle
\Hilb, \mathcal{P}, \mathcal{M} \rangle$ of quantum theory consists
of a Hilbert space $\mathcal{H}$, a set $\mathcal{P}$ of density
operators on $\mathcal{H}$, and a set $\mathcal{M}$ of POVMs on
$\mathcal{H}$.  The probability of obtaining the outcome
corresponding to a POVM element $E \in M$ when performing a
measurement $M \in \mathcal{M}$ on a system prepared in the state
$\rho \in \mathcal{P}$ is given by the Born rule
\begin{equation}
\label{eq:Form:qprob}
\text{Prob}(E|\rho, M) = \Tr{E \rho}.
\end{equation}
\end{definition}

For many of the results reviewed here, the PM fragment under
consideration is the one in which $\mathcal{P}$ is the set of all pure
states on $\Hilb$ and $\mathcal{M}$ consists of measurements in a set
of complete orthonormal bases.  The formalism of PM fragments allows
the sets of states and measurements required to make $\psi$-ontology
theorems go through to be made explicit.  Additionally, many of the
intermediate results used in proving $\psi$-ontology theorems apply to
PM fragments in general, including those that feature mixed states and
POVMs, so it is worth introducing fragments at this level of
generality.

When only a single PM fragment is under consideration, it is assumed
to be denoted $\langle \Hilb, \mathcal{P}, \mathcal{M} \rangle$ so
that the notations $\rho \in \mathcal{P}$ and $M \in \mathcal{M}$ can
be used without first explicitly writing down the triple.

\subsection{Ontological models}

\label{OM}

The idea of an ontological model of a PM fragment is that there is
some set $\Lambda$ of \emph{ontic states} that give a complete
specification of the properties of the physical system as they exist
in reality.  When a quantum system is prepared in a state $\rho \in
\mathcal{P}$, what really happens is that the system occupies one of
the ontic states $\lambda \in \Lambda$.  However, the preparation
procedure may not completely control the ontic state, so our knowledge
of the ontic state is described by a probability measure $\mu$ over
$\Lambda$.  This means that $\Lambda$ needs to be a measurable space,
with a $\sigma$-algebra $\Sigma$, and that $\mu:\Sigma \rightarrow
[0,1]$ is a $\sigma$-additive function satisfying $\mu(\Lambda) = 1$.
For those unfamiliar with measure-theoretic probability, for a finite
space $\Sigma$ would just be the set of all subsets of $\Lambda$ and
$\sigma$-additivity reduces to $\mu(\Omega_1\cup\Omega_2) =
\mu(\Omega_1) + \mu(\Omega_2)$ for all disjoint subsets $\Omega_1$ and
$\Omega_2$ of $\Lambda$.

In general, different methods of preparing the same quantum state may
result in different probability measures over the ontic states.  This
is especially true of mixed states, since they do not have unique
decompositions into a convex mixture of pure states.  If one prepares
a mixed state by choosing randomly from one of the pure states in such
a decomposition, then one can prove that the probability measure must
in general depend on the choice of decomposition.  This is known as
\emph{preparation contextuality}, and is discussed in more detail in
\S\ref{PC}.  For this reason, a quantum state $\rho$ is associated
with a set $\Delta_{\rho}$ of probability measures rather than just a
single unique measure.  Note that it is possible to find models in
which pure states correspond to unique measures, and much of the
literature implicitly assumes this type of model.  However, it turns
out that this assumption is not necessary, so we allow for the
possibility that even a pure state is represented by a set of
probability measures for the sake of generality.

Turning now to measurements, the outcome of a measurement $M \in
\mathcal{M}$ might not reveal $\lambda$ exactly but depend on it only
probabilistically.  This could be because nature is fundamentally
stochastic, but it could also arise in a deterministic theory if the
response of the measuring device depends not only on $\lambda$ but
also on degrees of freedom within the measuring device that are not
under the experimenter's control (see \cite{Harrigan2007} for a
discussion of this).  To account for this, each POVM $M$ is
represented by a conditional probability distribution
$\text{Pr}(E|M,\lambda)$ over $M$.  In a bit more detail, when a
measurement $M \in \mathcal{M}$ is performed, each $\lambda$ must give
rise to a well defined probability distribution over the outcomes, so,
for any fixed $\lambda$, we must have $\text{Pr}(E|M,\lambda) \geq 0$
for all $E \in M$ and $\sum_{E \in M} \text{Pr}(E|M,\lambda) = 1$.
Further, in order to calculate the probabilities that the model
predicts we will observe, we are going to have to average over our
ignorance about $\lambda$ so, for any fixed $E$,
$\text{Pr}(E|M,\lambda)$ must be a measurable function of $\lambda$.
See Fig.~\ref{fig:OM:response} for an illustration of the conditional
probability distribution corresponding to a measurement.
\begin{figure}[t!]
\centering
\includegraphics[width=85mm]{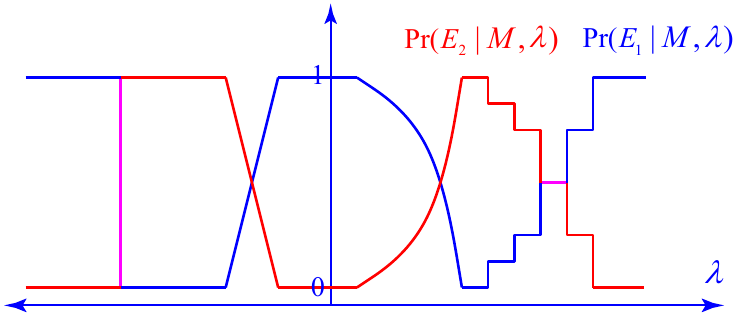}
\caption{\color[HTML]{0000FF}{\label{fig:OM:response}A possible conditional probability
distribution $\text{Pr}(E_j|M,\lambda)$ for a two outcome
measurement $M = \{E_1,E_2\}$ in an ontological model.
$\text{Pr}(E_j|M,\lambda)$ is the conditional probability of
obtaining the outcome $E_j$ when $M$ is measured and the ontic
state is $\lambda$.  As a result, it must satisfy
$\text{Pr}(E_1|M,\lambda) + \text{Pr}(E_2|M,\lambda) = 1$ for all
values of $\lambda$.  For illustrative purposes, the ontic state
space is represented as a $1$-dimensional line, but it may
actually be an arbitrary measurable space.}}
\end{figure}

Again, different methods of implementing the same measurement may
result in different conditional probability distributions. This is
known as \emph{measurement contextuality} and it must occur in certain
types of model due to the Kochen--Specker theorem \cite{Kochen1967}.
Measurement contextuality is discussed further in
Appendix~\ref{App:KS}.  For this reason, a measurement $M$ is
associated with a set $\Xi_M$ of conditional probability distributions
rather than just a single one.

In order to compute the probabilities that the ontological model
predicts for the observable outcomes of measurements, the conditional
probabilities $\text{Pr}(E|M,\lambda)$ have to be averaged over our
ignorance of the true ontic state as specified by $\mu$.  This
gives,
\begin{equation}
\text{Pr}(E|\rho,M) = \int_{\Lambda} \text{Pr}(E|M,\lambda)
d\mu(\lambda).
\end{equation}
Note that, the notation $\text{Pr}$ is used for the probabilities in
an ontological model to distinguish them from the probabilities
$\text{Prob}$ predicted by quantum theory.

Finally, if the ontological model is to reproduce the predictions of
quantum theory, then; for each $\rho \in \mathcal{P}$, $M \in
\mathcal{M}$, each $\mu \in \Delta_{\rho}$, $\text{Pr} \in \Xi_M$ must
satisfy
\begin{equation}
\label{eq:Form:Rep}
\text{Pr}(E|\rho,M) = \text{Prob}(E|\rho,M),
\end{equation}
or in other words
\begin{equation}
\label{eq:Form:RepQ}
\int_{\Lambda} \text{Pr}(E|M,\lambda) d\mu(\lambda) = \Tr{E \rho}.
\end{equation}

Summarizing, we have
\begin{definition}
\label{def:Form:OM}
An \emph{ontological model} of a PM fragment is a
quadruple $(\Lambda, \Sigma, \Delta, \Xi)$ that consists of:
\begin{itemize}
\item A measurable space $(\Lambda, \Sigma)$, where $\Lambda$ is
called the \emph{ontic state space}.
\item A function $\Delta$ that maps each quantum state $\rho \in
\mathcal{P}$ to a set of probability measures $\Delta[\rho] =
\Delta_{\rho}$ on $(\Lambda,\Sigma)$.
\item A function $\Xi$ that maps each POVM $M \in \mathcal{M}$ to a
set of conditional probability distributions over $M$, $\Xi[M] =
\Xi_M$, i.e.\ each $\text{Pr} \in \Xi_M$ is a function from $M
\times \Lambda$ to $\mathbb{R}$ that is measurable as a function
of $\lambda \in \Lambda$ and satisfies, for all $\lambda \in
\Lambda$,
\begin{equation}
\forall E \in M, \,\,\,\, \text{Pr}(E|M,\lambda) \geq 0
\end{equation}
and
\begin{equation}
\sum_{E \in M} \text{Pr}(E|M,\lambda) = 1.
\end{equation}
\end{itemize}

The ontological model \emph{reproduces the quantum predictions} if,
for all $\rho \in \mathcal{P}$ and $M \in \mathcal{M}$, each $\mu
\in \Delta_{\rho}$ and $\text{Pr} \in \Xi_M$ satisfies
\begin{equation}
\label{eq:Form:Rep2}
\forall E \in M, \,\,\,\, \int_{\Lambda} \text{Pr}(E|M,\lambda) d
\mu(\lambda) = \Tr{E \rho}.
\end{equation}
\end{definition}

In what follows, it is convenient to use projectors
$\KetBra{\psi}{\psi}$ to represent pure states instead of vectors
$\Ket{\psi}$ to avoid the global phase ambiguity.  The set of
projectors onto the pure states of $\mathcal{H}$ is known as the
\emph{projective Hilbert space} of $\mathcal{H}$.  The notation
$\Proj{\psi} = \KetBra{\psi}{\psi}$ and $\Delta_{\psi} =
\Delta_{\Proj{\psi}}$ is used to reduce clutter.  Similarly, if a
measurement $M$ consists of projectors onto pure states $\Proj{\phi}$
then we use the notation $\text{Pr}(\phi|M,\lambda)$ as shorthand for
$\text{Pr}(\Proj{\phi}|M,\lambda)$.

\subsection{Examples of Ontological Models}

\label{EOM}

\begin{example}[Spekkens' toy bit]
\label{exa:EOM:spek}
Spekkens' toy bit can be recast as an ontological model of the
PM fragment $\langle \mathbb{C}^2, \mathcal{P},
\mathcal{M} \rangle$ where
\begin{equation}
\mathcal{P} = \left \{\Proj{x+},\Proj{x-}, \Proj{y+},\Proj{y-},
\Proj{z+},\Proj{z-},I/2 \right \},
\end{equation}
and
\begin{equation}
\mathcal{M} = \left \{ \left \{ \Proj{x+},\Proj{x-} \right \},
\left \{\Proj{y+},\Proj{y-} \right \}, \left \{
\Proj{z+},\Proj{z-} \right \} \right \}.
\end{equation}
The ontic state space is $\Lambda = \{(+,+),(+,-),(-,+),(-,-)\}$.
Since $\Lambda$ is a finite set, the integral in
Eq.~\eqref{eq:Form:Rep2} is just a sum
\begin{equation}
\sum_{\lambda \in \Lambda} \text{Pr}(E|M,\lambda)\mu(\lambda) =
\Tr{E \rho},
\end{equation}
where, abusing notation slightly, we write $\mu(\lambda)$ for
$\mu(\{\lambda\})$.

In this model, each quantum state is represented by a unique
probability measure.  The six probability functions $\RKet{x \pm},
\RKet{y \pm}, \RKet{z \pm}$ illustrated in
Fig.~\ref{fig:toy-state-meas} represent the states $\Proj{x \pm},
\Proj{y \pm}, \Proj{z \pm}$, and $\RKet{I/2}$ from
Fig.~\ref{fig:toy-multi-decomp} represents the maximally mixed state
$I/2$.  Each measurement is associated with a unique conditional
probability distribution as described in
Fig.~\ref{fig:toy-state-meas}.  For example, for the $X =
\{\Proj{x+},\Proj{x-}\}$ measurement we have
\begin{equation}
\begin{array}{lllll}
\text{Pr}(x+|X,(+,+)) & = & \text{Pr}(x+|X,(+,-)) & = & 1 \\
\text{Pr}(x+|X,(-,+)) & = & \text{Pr}(x+|X,(-,-)) & = & 0 \\
\text{Pr}(x-|X,(+,+)) & = & \text{Pr}(x-|X,(+,-)) & = & 0 \\
\text{Pr}(x-|X,(-,+)) & = & \text{Pr}(x-|X,(-,-)) & = & 1.
\end{array}
\end{equation}
It is easy to check that this reproduces the quantum predictions for
the fragment.
\end{example}

\begin{example}[The Beltrametti--Bugajski model \cite{Beltrametti1995}]
This model is essentially a translation of the orthodox
interpretation of quantum theory into the language of ontological
models.  The PM fragment is $\langle \mathbb{C}^d, \mathcal{P},
\mathcal{M}\rangle$ where $\mathcal{P}$ contains all the pure states
on $\mathbb{C}^d$ and $\mathcal{M}$ consists of all POVMs on
$\mathbb{C}^d$.  The idea of the model is that the quantum state,
and only the quantum state, represents reality.  Therefore, the
ontic state space $\Lambda$ is just the set of pure states
$\Proj{\lambda}$ for $\Ket{\lambda} \in \mathbb{C}^d$, i.e.\ states
differing by a global phase are identified so $\Lambda$ is the
projective Hilbert space of $\mathbb{C}^d$.  This space carries a
natural topology induced by the inner product, and we take $\Sigma$
to be the Borel $\sigma$-algebra of this topology (see
\cite{Stulpe2007} for details).

A pure quantum state $\Proj{\psi}$ is then represented by the point
measure
\begin{equation}
\mu(\Omega) = \delta_{\psi}(\Omega) = \begin{cases} 1 &
\text{if}\,\, \Proj{\psi} \in \Omega \\
0 & \text{if}\,\, \Proj{\psi} \notin \Omega. \end{cases}
\end{equation}

Each POVM is represented by a unique conditional probability
distribution and, since the quantum state is the ontic state in this
model, it should come as no surprise that the response functions
simply specify the quantum probabilities, i.e.\ if a POVM $M \in
\mathcal{M}$ contains the operator $E$ then
\begin{equation}
\text{Pr}(E|M,\Proj{\lambda}) = \Tr{E \Proj{\lambda}}.
\end{equation}

We then trivially have
\begin{eqnarray}
\int_{\Lambda} \text{Pr}(E|M,\Proj{\lambda}) d\mu(\Proj{\lambda})  &=&
\int_{\Lambda} \Tr{E \Proj{\lambda}} d
\delta_{\psi}(\Proj{\lambda})\nonumber\\
&=& \Tr{E \Proj{\psi}},
\end{eqnarray}
so the model reproduces the quantum predictions.

The model can be extended to mixed states by writing them as convex
combinations of pure states.  For example, the maximally mixed state
of a spin-$1/2$ particle can be written as
\begin{equation}
\frac{I}{2} = \frac{1}{2} \Proj{x+} + \frac{1}{2} \Proj{x-}.
\end{equation}
This means that $I/2$ can be prepared by flipping a coin, preparing
$\Proj{x+}$ if the outcome is heads or $\Proj{x-}$ if the outcome is
tails, and then forgetting or erasing the result of the coin flip.
Let's call this preparation procedure $P$.  It follows that the
predictions of $I/2$ can be reproduced by the measure
\begin{equation}
\mu = \frac{1}{2}\delta_{x+} + \frac{1}{2}\delta_{x-}.
\end{equation}
Note however that this is a preparation contextual model because the
maximally mixed state can also be prepared by mixing a different set
of pure states, e.g.\ the decomposition
\begin{equation}
\frac{I}{2} = \frac{1}{2} \Proj{y+} + \frac{1}{2} \Proj{y-},
\end{equation}
yields the measure
\begin{equation}
\mu = \frac{1}{2}\delta_{y+} + \frac{1}{2}\delta_{y-},
\end{equation}
which reproduces the quantum predictions just as well.  In general,
$\Delta_{\rho}$ consists of one convex combination of point measures
for each of the different ways of writing $\rho$ as a mixture of
pure states.
\end{example}

\begin{example}[The Bell Model \cite{Bell1966}]
\label{exa:EOM:Bell}
In his review of no-go theorems for hidden variable theories
\cite{Bell1966}, Bell introduced an ontological model for
measurements in orthonormal bases on systems with a two-dimensional
Hilbert space.  The obvious generalization to arbitrary finite
dimensional systems is presented here, as it is needed in
\S\ref{NPIP}.  Bell intended his model as a pedagogical device to
point out the flaws in previous no-go theorems, and he was primarily
interested in whether a deterministic theory could reproduce the
quantum predictions.  In modern terms, the Bell model can be thought
of as a minimal modification of the Beltrametti--Bugajski model,
intended to make it deterministic.

The PM fragment of interest is $\mathfrak{F} = \langle \mathbb{C}^d,
\mathcal{P}, \mathcal{M} \rangle$, where $\mathcal{P}$ consists of
all pure states and $\mathcal{M}$ consists of all measurements of
the form $\mathcal{M} = \left \{ \Proj{\phi_j} \right \}_{j=0}^{d-1}$,
where $\left \{ \Ket{\phi_j} \right \}_{j=0}^{d-1}$ is an orthonormal
basis for $\mathbb{C}^d$.

The generalized Bell model employs an ontic state space $\Lambda =
\Lambda_1 \times \Lambda_2$ that is the Cartesian product of two
state spaces.  As in Beltrametti--Bugajski, $\Lambda_1$ is the
projective Hilbert space of $\mathbb{C}^d$, with Borel
$\sigma$-algebra $\Sigma_1$.  $\Lambda_2$ is the unit interval
$[0,1]$, with Borel $\sigma$-algebra $\Sigma_2$, representing an
additional hidden variable.  The measurable space is then
$(\Lambda_1 \times \Lambda_2, \Sigma_1 \otimes \Sigma_2)$, where the
tensor product $\sigma$-algebra $\Sigma_1 \otimes \Sigma_2$ is the
$\sigma$-algebra generated by sets of the form $\Omega_1 \times
\Omega_2$ with $\Omega_1 \in \Sigma_1$, $\Omega_2 \in \Sigma_2$.

The quantum state $\Proj{\psi}$ is represented by a product measure
\begin{equation}
\mu(\Omega) = \int_{\Lambda_2} \mu_1(\Omega_{\lambda_2})
d\mu_2(\lambda_2),
\end{equation}
where $\mu_1$ and $\mu_2$ are probability measures on $\Lambda_1$
and $\Lambda_2$ respectively, $\Omega \in \Sigma_1 \otimes
\Sigma_2$, and
\begin{equation}
\Omega_{\lambda_2} = \left \{ \Proj{\lambda_1} \in
\Lambda_1 \middle | \left ( \Proj{\lambda_1}, \lambda_2 \right )
\in \Omega \right \}.
\end{equation}
As in the Beltrametti--Bugajski model, $\Proj{\lambda_1}$ represents
the quantum state so $\mu_1 = \delta_{\psi}$.  The other variable
$\lambda_2$ is uniformly distributed so $\mu_2$ is just the uniform
measure on $[0,1]$.

The outcome of a measurement $M = \{\Proj{\phi_j}\}_{j=0}^{d-1}$ is
determined as follows.  For each $\Proj{\lambda_1}$, the unit
interval is divided into $d$ subsets of length
$\Tr{\Proj{\phi_j}\Proj{\lambda_1}}$.  If $\lambda_2$ is in the
subset corresponding to $\Proj{\phi_j}$ then the $\Proj{\phi_j}$
outcome occurs with certainty.  It does not matter how we choose the
subsets so long as they are disjoint and of the right length.  One
way of doing it is to pick the $j$th set to be the interval
\begin{equation}
\sum_{k=0}^{j-1} \Tr{\Proj{\phi_k}\Proj{\lambda_1}} \leq \lambda_2
<  \sum_{k=0}^j \Tr{\Proj{\phi_k}\Proj{\lambda_1}},
\end{equation}
as illustrated in Fig.~\ref{fig:EOM:Bell}.
\begin{figure}[t!]
\centering
\includegraphics[width=86mm]{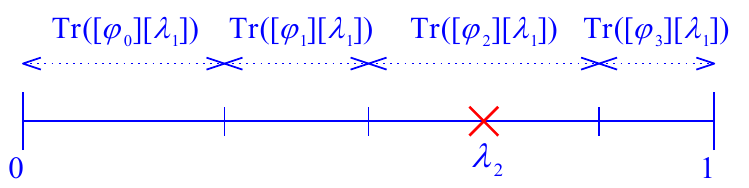}
\caption{\color[HTML]{0000FF}{\label{fig:EOM:Bell}Example of the conditional
probabilities in the Bell model for a measurement in
$\mathbb{C}^4$.  The unit interval is divided into $4$
subintervals of length $\Tr{\Proj{\phi_j}\Proj{\lambda_1}}$.  If
$\lambda_2$ is in the $j$th interval then the outcome will be
$j$.  In the case depicted, the outcome is $\Proj{\phi_2}$.}}
\end{figure}

This corresponds to the conditional probabilities
\begin{equation}
\label{eq:EOM:Bellresponse}
\text{Pr}(\phi_j|M,\Proj{\lambda_1}, \lambda_2)
= \begin{cases} 1 & \text{if} \,\, \begin{array}{c}
\sum_{k=0}^{j-1}\limits \Tr{\Proj{\phi_k}\Proj{\lambda_1}} \leq \lambda_2 ;\\
\lambda_2 < \sum_{k=0}^j\limits \Tr{\Proj{\phi_k}\Proj{\lambda_1}}
\end{array}\\
0 & \text{otherwise}.\end{cases}
\end{equation}
Note that, in this model, measurement-outcome pairs represented by
the same projector generally correspond to different response
functions.  This is because the way that the unit interval is
divided up depends on which other projectors are present in
$\{\Proj{\phi_j}\}_{j=0}^{d-1}$.  This is an example of measurement
contextuality.

It is straightforward to see that this model reproduces the quantum
predictions.  If the state prepared is $\Proj{\psi}$ then the point
measure $\mu_1$ implies that $\Proj{\lambda_1} = \Proj{\psi}$.
Therefore, the length of the $j$th subset of the unit interval will
be $\Tr{\Proj{\phi_j}\Proj{\psi}}$.  Since $\mu_2$ is the
uniform measure on $[0,1]$, the probability of $\lambda_2$ being in
a subset of $[0,1]$ is just the total length of the subset, so the
observed probability of obtaining outcome $j$ is
$\Tr{\Proj{\phi_j}\Proj{\psi}}$, as required.
\end{example}

\begin{example}[The Kochen--Specker model \cite{Kochen1967}]
\label{exa:EOM:KS}
Kochen and Specker introduced an ontological model for measurements
in orthonormal bases on a two-dimensional system.  The
PM fragment is $\langle \mathbb{C}^2, \mathcal{P},
\mathcal{M} \rangle$ where $\mathcal{P}$ consists of all pure states
on $\mathbb{C}^2$ and $\mathcal{M}$ consists of all measurements of
the form $M = \{\Proj{\phi},\Proj{\phi^{\perp}}\}$ with
$\{\Ket{\phi},\Ket{\phi^{\perp}}\}$ an orthonormal basis.

A pure state $\Proj{\psi}$ in a two dimensional Hilbert space can be
represented as a point $\vec{\psi} = (\sin \vartheta \cos \varphi,
\sin \vartheta \sin \varphi, \allowbreak \cos \vartheta)$ on the
surface of a unit $2$-sphere $S_2$ by choosing a representative
vector $\Ket{\psi}$ via
\begin{equation}
\label{eq:EOM:Bloch}
\Ket{\psi} = \cos(\frac{\vartheta}{2}) \Ket{z+} +
e^{\imath \varphi}\sin(\frac{\vartheta}{2}) \Ket{z-},
\end{equation}
where $0 \leq \vartheta < \pi$ and $-\pi < \varphi \leq \pi$.  This
is known as the Bloch sphere representation (see
Fig.~\ref{fig:EOM:bloch}).
\begin{figure}[t!]
\centering
\includegraphics[width=85mm]{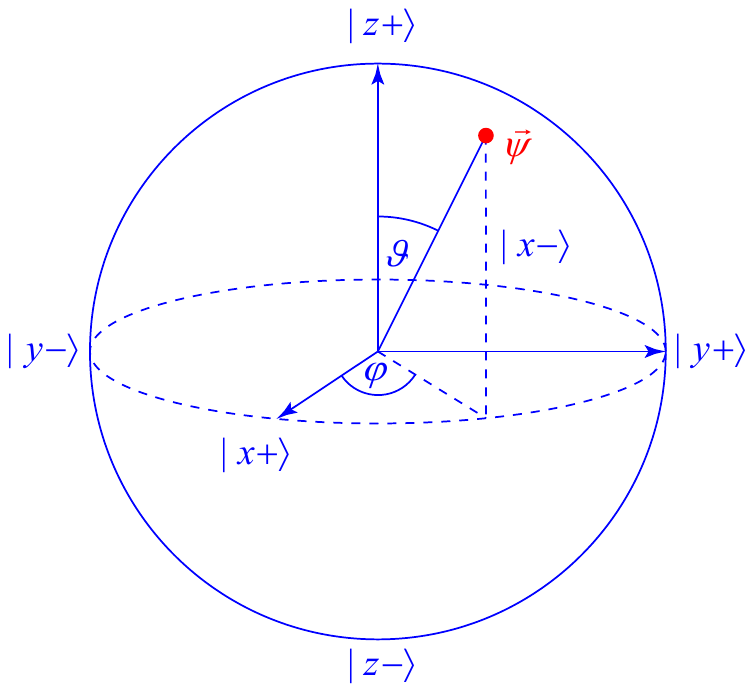}
\caption{\color[HTML]{0000FF}{\label{fig:EOM:bloch}The Bloch sphere representation of a
qubit. The quantum state is represented as a point $\vec{\psi}$ on the surface of a unit 2-sphere.}}
\end{figure}

In the Kochen--Specker model, the ontic state space is the unit
sphere $\Lambda = S_2$, which can be thought of as the Bloch sphere.
The quantum state $\Proj{\psi}$ corresponds to a measure $\mu$ that
can be written as
\begin{equation}
\mu(\Omega) = \int_{\Omega} p(\vec{\lambda}) \sin \vartheta d
\vartheta d \varphi,
\end{equation}
where $\vec{\lambda} = (\sin \vartheta \cos \varphi, \sin \vartheta
\sin \varphi, \cos \vartheta)$ and the density $p$ is given by
\begin{equation}
p(\vec{\lambda}) = \frac{1}{\pi} H(\vec{\psi} \cdot
\vec{\lambda}) \vec{\psi} \cdot \vec{\lambda},
\end{equation}
where $H$ is the Heaviside step function
\begin{equation}
H(x) = \begin{cases}1 & \text{if} \,\, x>0 \\ 0 & \text{if}
\,\, x\leq 0. \end{cases}
\end{equation}

This density is nonzero on the hemisphere defined by all vectors
that subtend an angle of less than $\pi/2$ with $\vec{\psi}$ and it
takes values proportional to the cosine of the angle between
$\vec{\lambda}$ and $\vec{\psi}$.

For a measurement $M = \left \{\Proj{\phi},\Proj{\phi^{\perp}}\right
\}$, the outcome $\Proj{\phi}$ is obtained if the angle between
$\vec{\lambda}$ and $\vec{\phi}$ is smaller than the angle between
$\vec{\lambda}$ and $\vec{\phi^{\perp}}$.  Otherwise the outcome
$\Proj{\phi^{\perp}}$ is obtained.  This corresponds to the
conditional probabilities
\begin{align}
\text{Pr}(\phi|M,\vec{\lambda}) & = H (\vec{\phi} \cdot
\vec{\lambda}) \\
\text{Pr}(\phi^{\perp}|M,\lambda) & = 1 -
\text{Pr}(\phi|M,\vec{\lambda}).
\end{align}

A proof that this reproduces the quantum predictions is given in
Appendix~\ref{App:BMKS}.
\end{example}

\subsection{Defining $\psi$-ontic/epistemic models}

\label{POEM}

We are now in a position to formally define what it means for the
quantum state to be real within an ontological model.  The definition
presented here is the one used by Pusey--Barrett--Rudolph \cite{Pusey2012}, which is a
slightly more rigorous version of a definition originally introduced
by Harrigan and Spekkens \cite{Harrigan2010}.  It is uncontroversial
that mixed states at least sometimes represent knowledge about which
of a set of pure states was prepared, so they are at least partially
epistemic.  For this reason, $\psi$-ontology theorems are only
concerned with proving the reality of pure quantum states.  A
$\psi$-ontic model is then one in which, if the pure state
$\Proj{\psi}$ is prepared, then $\Proj{\psi}$ is part of the ontic
state of the system.  In other words, the ontic state space can be
thought of as being composed of the set of pure quantum states along
with possibly some extra hidden variables.  This will be the case if
the measures corresponding to distinct pure state preparations do not
overlap with one another (see Fig.~\ref{fig:POEM:relabel}).
Conversely, the $\psi$-epistemic explanations of quantum phenomena
discussed in \S\ref{Spek} depend crucially on having overlap between
the measures representing different quantum states, so whether or not
there is overlap is the key issue.

\begin{figure}[t!]
\centering
\subfloat[\color{blue} Ontic model with nonoverlapping quantum
states.]{\includegraphics[width=75mm]{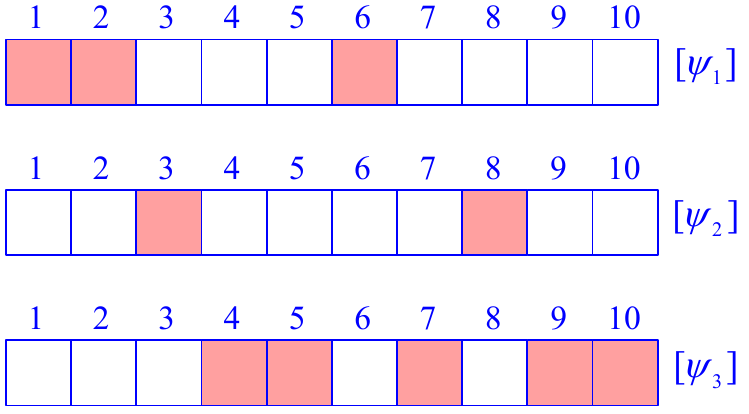}\label{fig:POEM:relabel:start}}\\
\subfloat[\color{blue} Relabeling the ontic
states.]{\label{fig:POEM:relabel:relabel}
\raisebox{15ex}{
\color{blue} \begin{tabular}{|r|r|}
\hline
Old label & New label \\
\hline
$1$ & $(\psi_1,1)$ \\
$2$ & $(\psi_1,2)$ \\
$3$ & $(\psi_2,1)$ \\
$4$ & $(\psi_3,1)$ \\
$5$ & $(\psi_3,2)$ \\
$6$ & $(\psi_1,3)$ \\
$7$ & $(\psi_3,3)$ \\
$8$ & $(\psi_2,2)$ \\
$9$ & $(\psi_3,4)$ \\
$10$ & $(\psi_3,5)$ \\
\hline
\end{tabular}}}\\
\subfloat[\color{blue} Equivalent model in which the quantum state is explicitly
part of the ontology.]{\label{fig:POEM:relabel:end}\includegraphics[width=75mm]{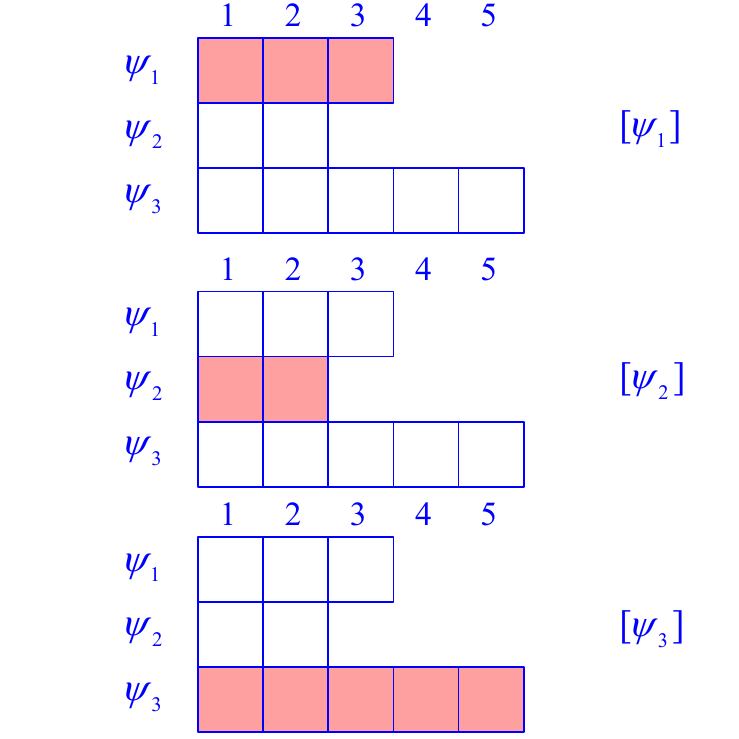}}
\caption{\color[HTML]{0000FF}{\label{fig:POEM:relabel}In an ontological model, if the
probability measures corresponding to distinct quantum states do
not overlap, then the ontic states can be relabeled such that the
quantum state is explicitly part of the ontology.
Panel~\protect\subref{fig:POEM:relabel:start} depicts the measures
corresponding to three quantum states on a discrete ontic state
space consisting of the integers from $1$ to $10$.  Red boxes
indicate the ontic states on which the measures have support.
Panel~\protect\subref{fig:POEM:relabel:relabel} shows a one-to-one
map to a new ontic state space in which the quantum states are an
explicit part of the ontology.  The result is shown in
Panel~\protect\subref{fig:POEM:relabel:end}.}}
\end{figure}

The notion of when two probability measures overlap can be formalized
using the variational distance.
\begin{definition}
\label{def:POEM:vd}
The \emph{variational distance} between two probability measures
$\mu$ and $\nu$ on a measurable space $(\Lambda, \Sigma)$ is
\begin{equation}
\label{eq:POEM:vd}
D(\mu,\nu) = \sup_{\Omega \in \Sigma} \left | \mu(\Omega) -
\nu(\Omega) \right |.
\end{equation}
\end{definition}

Note that taking the absolute value in Eq.~\eqref{eq:POEM:vd} is
optional, since if $\nu(\Omega) > \mu(\Omega)$ for some $\Omega \in
\Sigma$ then
\begin{equation}
\mu(\Lambda \backslash \Omega) - \nu(\Lambda \backslash \Omega) =
1 - \mu(\Omega) - 1 + \nu(\Omega) = \nu(\Omega) - \mu(\Omega),
\end{equation}
so there is always another measurable set for which the difference of
the measures is the same, but $\mu$ is larger than $\nu$.  Thus, the
variational distance can equivalently be defined as
\begin{equation}
D(\mu,\nu) = \sup_{\Omega \in \Sigma} \left ( \mu(\Omega) -
\nu(\Omega) \right ),
\end{equation}
or indeed
\begin{equation}
D(\mu,\nu) = \sup_{\Omega \in \Sigma} \left ( \nu(\Omega) -
\mu(\Omega) \right ).
\end{equation}

The variational distance is a metric on the set of probability
measures and it has the following operational interpretation.  Suppose
that a system is prepared according to one of two preparation
procedures, $P_1$ or $P_2$, where $P_1$ corresponds to the measure
$\mu$ and the $P_2$ to the measure $\nu$, each case having an equal a
priori probability.  You are then told the actual value of $\lambda$
and you wish to make the best possible guess as to which of the two
preparation procedures was used.  Your best strategy will be to choose
a set $\Omega \subseteq \Lambda$ and guess $P_1$ if $\lambda \in
\Omega$ and $P_2$ if $\lambda \notin \Omega$.  More generally, you
could use a probabilistic procedure, but convexity implies that this
cannot increase your probability of success.  Your probability of
success is then
\begin{align}
&&\text{Prob}(P_1)\text{Prob}(\Omega|P_1) + \text{Prob}(P_2)
\text{Prob}(\Lambda \backslash \Omega|P_2)\nonumber\\
&&=\frac{1}{2}\mu(\Omega) + \frac{1}{2}\nu(\Lambda \backslash \Omega)
=\frac{1}{2} \left ( 1 + \mu(\Omega) - \nu(\Omega) \right ),
\end{align}
so the maximum probability of success over all such strategies is
$\frac{1}{2} \left ( 1 + D(\mu,\nu) \right )$.  Success can occur with
unit probability only when $D(\mu,\nu) = 1$, so in this case $\lambda$
effectively determines which probability density was prepared
uniquely.

More rigorously, $D(\mu,\nu) = 1$ is equivalent to the existence of a
measurable set $\Omega \in \Sigma$ such that $\mu(\Omega) = 1$ and
$\nu(\Omega) = 0$.  An optimal guessing strategy then consists of
guessing $\mu$ if $\lambda \in \Omega$ and $\nu$ if $\lambda \notin
\Omega$.

If $\mu$ and $\nu$ are dominated by a third measure $m$, i.e.
\begin{align}
\mu(\Omega) & = \int_{\Omega} p(\lambda)dm(\lambda) \\
\nu(\Omega) & = \int_{\Omega} q(\lambda) dm(\lambda),
\end{align}
for some $m$-measurable densities $p$ and $q$, then
\begin{equation}
\label{eq:POEM:intvar}
D(\mu, \nu) = \frac{1}{2} \int_{\Lambda} \left | p(\lambda) -
q(\lambda) \right | dm(\lambda),
\end{equation}
which is often the most convenient form for computation.

For those unfamiliar with measure-theoretic probability, a measure $m$
dominates a measure $\mu$ if, whenever $m(\Omega) = 0$ then
$\mu(\Omega) = 0$.  If this holds then $\mu$ can be written as a
density with respect to $m$.  The measure $m = \frac{1}{2} \left (\mu
+ \nu \right )$ dominates both $\mu$ and $\nu$, and a similarly for
any finite set of probability measures.  For an uncountable set of
measures there need not exist a measure that dominates all of them.
For example, there is no measure that dominates all of the point
measures $\delta_{\psi}$ occurring in the Beltrametti--Bugajski model,
which is one of the reasons why we work with probability measures
instead of probability densities in the present treatment.

Based on the variational distance, we can define what it means for two
quantum states to have no overlap in an ontic model.
\begin{definition}
A pair of quantum states $\rho, \sigma \in \mathcal{P}$ are
\emph{ontologically distinct} in an ontological model if, for all
$\mu \in \Delta_{\rho}, \nu \in \Delta_{\sigma}$,
\begin{equation}
D(\mu,\nu) = 1,
\end{equation}
otherwise they are \emph{ontologically indistinct}.
\end{definition}
The idea here is that, if $\rho$ and $\sigma$ are ontologically
distinct then, regardless of how they are prepared, they can be
perfectly distinguished given knowledge of the exact ontic state.  In
order to take preparation contextuality into account, all possible
pairs of preparations that are represented by $\rho$ and $\sigma$ must
correspond to measures that have zero overlap.

If a set of preparations is operationally distinguishable, i.e.\ there
exists a measurement that perfectly distinguishes them, then they must
also be ontologically distinct.  The intuition behind this is
straightforward.  Consider the case of a finite ontic state space.  If
the measures corresponding to two operationally distinguishable
preparations did have nontrivial overlap then there would be a finite
probability of an ontic state occurring that is assigned a finite
probability according to both of them, and if this happens then the
ontic state cannot be used to deduce which preparation was performed
with certainty.  Hence, such a model could not reproduce the
statistics of the distinguishing measurement, which does allow this
deduction.  More formally,
\begin{definition}
A finite set of quantum states $\left \{ \rho_j \right \} \subseteq
\mathcal{P}$ is \emph{operationally distinguishable} if there exists
a POVM $\{E_j\} \in \mathcal{M}$ such that
\begin{equation}
\Tr{E_j \rho_k} = \delta_{jk}.
\end{equation}
\end{definition}
\begin{theorem}
\label{prop:POEM:distinguish}
Let $\mathcal{D} \subseteq \mathcal{P}$ be an operationally
distinguishable set of states.  Then, every pair $\rho, \sigma \in
\mathcal{D}$, $\rho \neq \sigma$, is ontologically distinct in any
ontological model that reproduces the operational predictions.
\end{theorem}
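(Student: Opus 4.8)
The plan is to prove the contrapositive of the key implication: if $\rho, \sigma \in \mathcal{D}$ are operationally distinguishable but \emph{not} ontologically distinct, we derive a contradiction with the fact that the model reproduces the quantum predictions. Let $\{E_j\} \in \mathcal{M}$ be the POVM witnessing operational distinguishability, so $\Tr{E_j \rho_k} = \delta_{jk}$. Fix a pair $\rho = \rho_j \neq \rho_k = \sigma$. Since they are assumed not ontologically distinct, there exist measures $\mu \in \Delta_\rho$ and $\nu \in \Delta_\sigma$ with $D(\mu,\nu) < 1$. I would then pick any conditional probability distribution $\text{Pr} \in \Xi_{\{E_i\}}$ for the distinguishing measurement and use the reproduction condition \eqref{eq:Form:Rep2} to pin down the behavior of the response functions $\text{Pr}(E_j|M,\lambda)$ and $\text{Pr}(E_k|M,\lambda)$.

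The crucial intermediate step: from $\int_\Lambda \text{Pr}(E_j|M,\lambda)\, d\mu(\lambda) = \Tr{E_j \rho_j} = 1$ together with $0 \le \text{Pr}(E_j|M,\lambda) \le 1$, conclude that $\text{Pr}(E_j|M,\lambda) = 1$ for $\mu$-almost-every $\lambda$; hence there is a measurable set $\Omega_j$ with $\mu(\Omega_j) = 1$ on which $\text{Pr}(E_j|M,\lambda) = 1$. Symmetrically, from $\int_\Lambda \text{Pr}(E_j|M,\lambda)\, d\nu(\lambda) = \Tr{E_j \rho_k} = 0$ and nonnegativity, $\text{Pr}(E_j|M,\lambda) = 0$ for $\nu$-almost-every $\lambda$; hence there is a measurable set $\Omega_j'$ with $\nu(\Omega_j') = 1$ on which $\text{Pr}(E_j|M,\lambda) = 0$. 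These two sets are disjoint (they carry incompatible values of $\text{Pr}(E_j|M,\cdot)$), so $\mu(\Omega_j) = 1$ and $\nu(\Omega_j) \le \nu(\Lambda \setminus \Omega_j') = 0$. By the characterization noted just before this theorem in the excerpt --- that $D(\mu,\nu) = 1$ is equivalent to the existence of a measurable set $\Omega$ with $\mu(\Omega) = 1$ and $\nu(\Omega) = 0$ --- this forces $D(\mu,\nu) = 1$, contradicting $D(\mu,\nu) < 1$. Since $\mu \in \Delta_\rho$, $\nu \in \Delta_\sigma$ were arbitrary, every such pair of measures has variational distance $1$, i.e.\ $\rho$ and $\sigma$ are ontologically distinct.

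The main obstacle is purely measure-theoretic bookkeeping rather than conceptual: one must be careful that the "almost-everywhere" statements yield genuine measurable sets $\Omega_j, \Omega_j'$ of full measure (this is where measurability of $\text{Pr}(E_j|M,\cdot)$ as a function of $\lambda$, built into Definition~\ref{def:Form:OM}, is used), and that the disjointness argument is airtight --- on $\Omega_j \cap \Omega_j'$ one would need $\text{Pr}(E_j|M,\lambda)$ to equal both $1$ and $0$, which is impossible, so the intersection is empty. A secondary subtlety is the quantifier structure over preparation-contextual measures: the conclusion must hold for \emph{all} $\mu \in \Delta_\rho$ and $\nu \in \Delta_\sigma$ simultaneously, but this is automatic because the reproduction condition \eqref{eq:Form:Rep2} is imposed on every member of $\Delta_\rho$ and $\Delta_\sigma$ independently, so the argument above applies verbatim to each pair. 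Everything else --- nonnegativity, normalization, the integral characterization of $D$ --- is already available from the definitions and remarks in the excerpt.
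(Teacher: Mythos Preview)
Your proposal is correct and follows essentially the same route as the paper: assume ontological indistinctness, use the reproduction condition for the distinguishing POVM element $E$ to find a $\mu$-measure-one set on which $\text{Pr}(E|M,\lambda)=1$, and derive a contradiction with $\int_\Lambda \text{Pr}(E|M,\lambda)\,d\nu(\lambda)=0$. The only cosmetic difference is that you explicitly build the second set $\Omega_j'$ (where $\text{Pr}=0$) and use disjointness to conclude $\nu(\Omega_j)=0$, whereas the paper skips this and directly integrates over $\Omega$ to obtain $0\ge \nu(\Omega)>0$; both are the same argument.
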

\begin{proof}
Let $M \in \mathcal{M}$ be a measurement that distinguishes the
states in $\mathcal{D}$.  Then, for $\rho, \sigma \in \mathcal{D}$,
$\rho \neq \sigma$, there exists a POVM element $E \in M$ such that
\begin{equation}
\Tr{E\rho} = 1 \qquad \text{and} \qquad \Tr{E\sigma} = 0.
\end{equation}
In an ontological model that reproduces the quantum predictions,
this implies that
\begin{align}
\int_{\Lambda} \text{Pr}(E|M,\lambda) d\mu(\lambda) & =
1 \label{eq:POEM:Distinct}\\
\int_{\Lambda} \text{Pr}(E|M,\lambda)d\nu(\lambda) & =
0, \label{eq:POEM:Distinct2}
\end{align}
for all $\mu \in \Delta_{\rho}$, $\nu \in \Delta_{\sigma}$, and for
any $\text{Pr} \in \Xi_M$.

Assume that $\rho$ and $\sigma$ are ontologically indistinct.  Then,
there exist $\mu \in \Delta_{\rho}$, $\nu \in \Delta_{\sigma}$ such
that $D(\mu,\nu) < 1$.  This is equivalent to saying that all sets
that are of measure $1$ according to $\mu$ are of nonzero measure
according to to $\nu$.  In order to satisfy
Eq.~\eqref{eq:POEM:Distinct}, $\text{Pr}(E|M,\lambda)$ must be equal
to $1$ on a set $\Omega$ that is of measure one according to $\mu$,
and thus by ontological indistinctness $\nu(\Omega) > 0$.  However,
from Eq.~\eqref{eq:POEM:Distinct2} we have
\begin{align}
0 & = \int_{\Lambda} \text{Pr}(E|M,\lambda) d\nu(\lambda) \\
& \geq \int_{\Omega} \text{Pr}(E|M,\lambda) d\nu(\lambda)\\
& = \int_{\Omega} d\nu(\lambda) = \nu(\Omega) > 0,
\end{align}
which is a contradiction.
\end{proof}

Although operationally distinguishable quantum states are always
ontologically distinct, this argument does not imply that arbitrary
pairs of states are.  If all pairs of pure states are nonetheless
ontologically distinct then the ontological model is $\psi$-ontic.
\begin{definition}
\label{def:POEM:PON}
An ontological model is $\psi$\emph{-ontic} if all pairs of pure
quantum states $\Proj{\psi}, \Proj{\phi} \in \mathcal{P}$,
$\Proj{\psi} \neq \Proj{\phi}$, are ontologically distinct.
Otherwise the model is $\psi$\emph{-epistemic}.
\end{definition}

This definition captures the idea that quantum states are real in an
ontological model if the probability measures corresponding to
distinct pure states do not overlap.  Defining a $\psi$-epistemic
model to be the negation of this is extremely permissive.  For
example, a model in which only a single pair of pure states have any
overlap, all other pairs being ontologically distinct, would be
$\psi$-epistemic according to this definition.  The $\psi$-epistemic
explanations of quantum phenomena discussed in \S\ref{Spek} would not
apply to such a model.  The $\psi$-ontology theorems discussed in
\hyperref[SPON]{Part II} aim to rule out even this permissive notion of a
$\psi$-epistemic model, so they obviously rule out any less permissive
definition as well.  Nevertheless, because $\psi$-ontology theorems
require questionable auxiliary assumptions, it is still interesting to
consider stronger notions of what it means for the quantum state to be
epistemic.  This is done in in \S\ref{ME} and \hyperref[Beyond]{Part III}.

Note that a $\psi$-ontic model is not necessarily
\emph{$\psi$-complete}, where the latter means that the wavefunction,
and only the wavefunction, is the ontic state.  $\psi$-complete models
are obviously $\psi$-ontic, but in general $\psi$-ontic models may
involve other variables in addition to the wavefunction.  A lot of
confusion may be avoided by clearly separating the notions of
$\psi$-ontic and $\psi$-complete models, particularly since much of
the literature on hidden variable theories focuses on the question of
whether quantum theory is complete, i.e.\ it is assumed that the
wavefunction is real a priori and the only question is whether
anything else needs to be added to it.  Obviously, such a framework is
not suited to discussing the question of whether the wavefunction must
be real in the first place.

Of the examples considered in \S\ref{EOM}, Spekkens' toy theory and
the Kochen--Specker model are $\psi$-epistemic, and the
Beltrametti--Bugajski and Bell models are $\psi$-ontic.  In Spekkens'
theory, the ontic state space is finite, so the integral in
Eq.~\eqref{eq:POEM:intvar} can be performed with respect to the
counting measure, which yields the sum
\begin{equation}
D(\mu,\nu) = \frac{1}{2} \sum_{\lambda} |\mu(\lambda) - \nu(\lambda)
|.
\end{equation}
For toy bit, $\lambda$ takes the values $(+,+), (+,-), (-,+)$ and
$(-,-)$.  If $\mu$ is the $\RKet{x+}$ measure, corresponding to the
quantum state $\Proj{x+}$, and $\nu$ is $\RKet{y+}$, corresponding to
$\Proj{y+}$ then we have
\begin{equation}
D(\mu,\nu) = \frac{1}{2} \left ( \left | \frac{1}{2}-\frac{1}{2}
\right | + \left | \frac{1}{2}-0 \right | +
\left | 0-\frac{1}{2} \right | + \left | 0-0 \right | \right ) =
\frac{1}{2}.
\end{equation}
Thus, the model is $\psi$-epistemic because $D(\mu,\nu) < 1$, as we
expect for overlapping measures.  Appendix~\ref{App:BMKS} provides a
proof that the Kochen--Specker model is $\psi$-epistemic by showing
that it satisfies the stronger notion of being maximally
$\psi$-epistemic to be discussed in \S\ref{ME}.

The Beltrametti--Bugajski model is in fact $\psi$-complete, because its
ontic state space consists of the set of pure states and preparing a
given pure state causes the corresponding ontic state to be occupied
with certainty.  It follows that the model is $\psi$-ontic because
each pure state corresponds to a distinct point measure.  Thus, for
$\Proj{\psi} \neq \Proj{\phi}$ the corresponding measures are
$\delta_{\psi}$ and $\delta_{\phi}$.  The set $\left
\{\Proj{\psi} \right \}$ is measure one according to
$\delta_{\psi}$ and measure zero according to
$\delta_{\phi}$.  Similarly, the Bell model is $\psi$-ontic,
where now the set $\left \{\Proj{\psi} \right \} \times [0,1]$ is
measure one for the measure corresponding to $\Proj{\psi}$ and measure
zero for the measure corresponding to $\Proj{\phi}$.  It is, however,
not $\psi$-complete because of the additional component $[0,1]$ of the
ontic state space.

\section{Implications of $\psi$-ontology}

\label{Imp}

Before discussing $\psi$-ontology theorems, it is worth pausing to
consider some of their implications.  One of the most interesting
things about establishing the reality of the quantum state within the
ontological models framework is that it would imply a lot of existing
no-go results as simple consequences.  Thus, even if you are a
neo-Copenhagenist who rejects the ontological models framework
outright, you should still be interested in $\psi$-ontology theorems
as potentially the most powerful class of no-go results that we
currently have.  Additionally, the ontological models framework can be
thought of as an attempt to simulate quantum theory using classical
resources.  Just as Bell's Theorem has implications for the difference
between quantum and classical communication complexity, and for device
independent quantum cryptography \cite{Brunner2013}, $\psi$-ontology
theorems might become a useful tool in quantum information theory.

The main implications of $\psi$-ontology are illustrated in
Fig.~\ref{fig:Imp:implications}, and each of them is discussed in this
section.  There are two main strands of implications, one based on the
size of the ontic state space and one based on results related to
contextuality.  The results of the contextuality strand can
alternatively be derived as consequences of the Kochen--Specker
Theorem.  Since $\psi$-ontology is our main concern here, the
connection to the Kochen--Specker theorem is discussed in
Appendix~\ref{App:KS}.

\begin{figure}[t!]
\centering
\includegraphics[width=75mm]{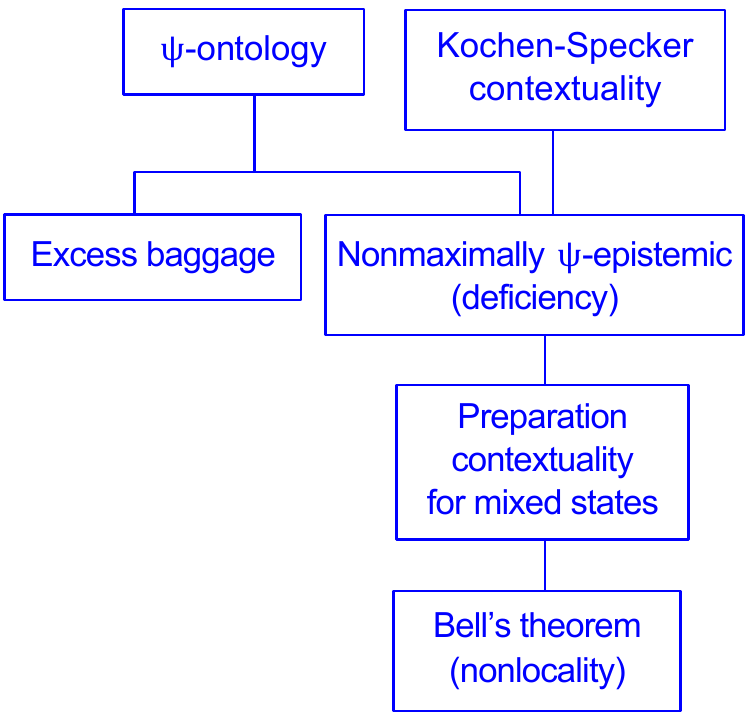}
\caption{\color[HTML]{0000FF}{\label{fig:Imp:implications}Hierarchy of constraints on ontological models of quantum theory.  Properties that appear higher in the diagram imply those that they are connected to lower down. Non maximal $\psi$-epistemicity can be derived from either $\psi$-ontology or Kochen--Specker contextuality.}}
\end{figure}

\subsection{Excess baggage}

\label{EBT}

The idea of excess baggage has already been discussed in
\S\ref{Excess} as the tension between the infinite amount of
information required to specify the quantum state of a qubit and the
fact that it can only be used to reliably transmit a single bit.
Similarly, the number of parameters required to specify a quantum
state scales exponentially with the number of systems, but the amount
of information that can be transmitted scales only linearly.

Excess baggage theorems show that this tension cannot be avoided in an
ontological model by providing lower bounds on the size of the ontic
state space required to reproduce quantum theory.  The first such
result, from which the ``excess baggage'' terminology originates, was
due to Hardy, who proved that an infinite number of ontic states are
required to reproduce the predictions of any quantum system, even just
a qubit \cite{Hardy2004}.  Subsequently, Montina showed that the
number of parameters required to specify an ontic state must scale
exponentially with the number of systems \cite{Montina2008,
Montina2011}.

If it can be proved that an ontological model of quantum theory must
be $\psi$-ontic then it follows immediately that the ontic state space
must be uncountably infinite and that the number of parameters
required to specify an ontic state must scale exponentially with the
number of systems.  This is because, in a $\psi$-ontic model, there
must be at least as many ontic states as there are quantum states.

\subsection{Maximally $\psi$-epistemic models}

\label{ME}

The concept of a maximally $\psi$-epistemic model was introduced by
Maroney as a stronger notion of what it means for an ontological model
to be $\psi$-epistemic \cite{Maroney2012}.  An equivalent concept was
introduced earlier by Harrigan and Rudolph \cite{Harrigan2007}, under
the terminology ``non deficient model''.  As you might imagine from
the name, a $\psi$-ontic model cannot be maximally $\psi$-epistemic,
but, even without $\psi$-ontology, maximally $\psi$-epistemic models
are ruled out by the Kochen--Specker theorem, as discussed in
Appendix~\ref{App:KS}.

The real interest in the concept of a maximally $\psi$-epistemic model
is that it enables one to devise measures of the extent to which a
model is $\psi$-epistemic, and hence to go beyond the sharp dichotomy
between $\psi$-epistemic and $\psi$-ontic implied by the definitions
adopted so far \cite{Maroney2012, Barrett2013, Leifer2014,
Branciard2014}.  This aspect is discussed in \S\ref{Lips}, but the
concept of a maximally $\psi$-epistemic model is also useful as a
stepping stone between $\psi$-ontology and the next step in the
contextuality strand, which is preparation contextuality.

The basic idea is that, in order to justify the $\psi$-epistemic
explanation of indistinguishability, we need more than just that the
probability measures corresponding to two different quantum states,
$\Proj{\psi}$ and $\Proj{\phi}$, should have nonzero overlap.
Ideally, when measuring a system prepared in the state $\Proj{\psi}$,
all of the probability of obtaining the outcome corresponding to the
projector $\Proj{\phi}$ should be accounted for by the overlap region
between the probability measures corresponding to the two states.
Since we are comparing the overlap of $\Proj{\psi}$ and $\Proj{\phi}$,
considered as states, with the probability of obtaining $\Proj{\phi}$
as a measurement outcome, this only makes sense if we are dealing with
a PM fragment for which, for every pure state $\Proj{\phi} \in
\mathcal{P}$, there exists a measurement $M \in \mathcal{M}$ such that
$\Proj{\phi} \in \mathcal{M}$.  Therefore, this is a standing
assumption for the remainder of this section.

\begin{definition}
\label{def:ME:ME}
An ontological model of a PM fragment is \emph{maximally
$\psi$-epistemic} if, for every pair of pure states $\Proj{\psi},
\Proj{\phi} \in \mathcal{P}$, for all $\mu \in \Delta_{\psi}$, $\nu
\in \Delta_{\phi}$, $M \in \mathcal{M}$ with $\Proj{\phi} \in M$,
and $\text{Pr} \in \Xi_M$,
\begin{equation}
\label{eq:ME:ME}
\int_{\Omega} \text{Pr}(\phi|M,\lambda) d\mu(\lambda) =
\int_{\Lambda} \text{Pr}(\phi|M,\lambda) d\mu(\lambda),
\end{equation}
for all sets $\Omega \in \Sigma$ such that $\nu(\Omega) = 1$.
\end{definition}

To unpack this definition a little, let $\mu \in \Delta_{\psi}$, let
$\nu \in \Delta_{\phi}$, let $M$ be a measurement that contains
$\Proj{\phi}$, and let $\text{Pr} \in \Xi_M$.  Suppose $\Omega$ is a
set of measure one according to $\nu$ and $\Omega^{\prime}$ is a set
of measure one according to $\mu$.  Then, Eq.~\eqref{eq:ME:ME} is
equivalent to
\begin{equation}
\label{eq:ME:ME2}
\int_{\Omega \cap \Omega^{\prime}} \text{Pr}(\phi|M,\lambda) d\mu(\lambda) =
\int_{\Lambda} \text{Pr}(\phi|M,\lambda) d\mu(\lambda).
\end{equation}
Whatever one might mean by the overlap region between $\mu$ and $\nu$,
it should be the intersection of a set that is measure one according
to $\mu$ with a set that is measure one according to $\nu$, and since
Eq.~\eqref{eq:ME:ME2} must hold for all such sets, it guarantees that
the probability of obtaining outcome $\Proj{\Phi}$ when measuring a
system prepared in the state $\Proj{\psi}$ is entirely accounted for
by any such region.

Note that, instead of Eq.~\eqref{eq:ME:ME}, previous works
\cite{Harrigan2007, Maroney2012, Maroney2012a, Leifer2013c} imposed
the requirement that
\begin{equation}
\label{eq:ME:MEold}
\int_{\Lambda_{q}} \text{Pr}(\phi|M,\lambda)p(\lambda) d\lambda =
\int_{\Lambda} \text{Pr}(\phi|M,\lambda)p(\lambda) d\lambda,
\end{equation}
where $d\lambda$ is a measure that dominates $\mu$ and $\nu$, $p$ and
$q$ are densities that represent them, and $\Lambda_{q} = \{\lambda
\in \Lambda | q(\lambda) > 0\}$ is the support of the probability
density $q$.  However, the probability measure $\nu$ does not
correspond to a unique density $q$, since densities that differ on a
set of measure zero according to $d \lambda$ represent the same
probability measure.  This means that $\Lambda_q$ is not uniquely
specified by $\nu$, since setting the values of $q$ to zero on a
measure zero set would change $\Lambda_q$ but not $\nu$.  Therefore,
to use this definition, one has to imagine that the ontological model
specifies a particular density representation for each measure, rather
than just the measure itself.  On its own, this older definition is
also not strong enough to entail that a $\psi$-ontic model cannot be
maximally $\psi$-epistemic.  To remedy this, one can adopt a standing
assumption that all probability measures appearing in the model are
dominated by, i.e.\ absolutely continuous with respect to, a canonical
measure $d\lambda$, as was done explicitly in \cite{Leifer2013c}, and
this must be regarded as an implicit assumption in previous works
\cite{Harrigan2007, Maroney2012, Maroney2012a}.  However, we do not
really want to make this assumption, because some models of interest
do not satisfy it.  For example, an uncountable set of point measures
on distinct points is not dominated by any measure, and these occur in
the Beltrametti--Bugajski and Bell models.

It is straightforward to verify that Spekkens' toy theory is maximally
$\psi$-epistemic and a proof that the Kochen--Specker model is
maximally $\psi$-epistemic is given in Appendix~\ref{App:BMKS}.  On
the other hand, a $\psi$-ontic model that reproduces the quantum
predictions cannot be maximally $\psi$-epistemic, so the
Beltrametti--Bugajski and Bell models are not maximally
$\psi$-epistemic.

\begin{theorem}
\label{thm:ME:MEPE}
Consider a PM fragment that contains at least one pair of
nonorthogonal pure states.  If an ontological model of this fragment
that reproduces the quantum predictions is maximally
$\psi$-epistemic then it is $\psi$-epistemic.
\end{theorem}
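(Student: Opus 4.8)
The plan is to argue by contrapositive: assume the model is \emph{not} $\psi$-epistemic, i.e.\ it is $\psi$-ontic, and show it cannot be maximally $\psi$-epistemic when nonorthogonal states exist. So let $\Proj{\psi}, \Proj{\phi} \in \mathcal{P}$ be a pair of nonorthogonal pure states, which means $\Tr{\Proj{\phi}\Proj{\psi}} = |\BraKet{\phi}{\psi}|^2 > 0$. By the standing assumption of \S\ref{ME}, there is a measurement $M \in \mathcal{M}$ with $\Proj{\phi} \in M$, and pick any $\text{Pr} \in \Xi_M$, any $\mu \in \Delta_{\psi}$ and any $\nu \in \Delta_{\phi}$.

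First I would evaluate the right-hand side of the maximally-$\psi$-epistemic condition \eqref{eq:ME:ME}. Since the model reproduces the quantum predictions, Definition~\ref{def:Form:OM} gives
\begin{equation}
\int_{\Lambda} \text{Pr}(\phi|M,\lambda)\, d\mu(\lambda) = \Tr{\Proj{\phi}\Proj{\psi}} > 0.
\end{equation}
Next, because the model is $\psi$-ontic, $\Proj{\psi}$ and $\Proj{\phi}$ are ontologically distinct, so $D(\mu,\nu) = 1$; by the characterization of variational distance $1$ noted just after Definition~\ref{def:POEM:vd}, there is a measurable set $\Omega \in \Sigma$ with $\mu(\Omega) = 0$ and $\nu(\Omega) = 1$. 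This $\Omega$ is exactly the kind of set that appears on the left of \eqref{eq:ME:ME}. Evaluating that left-hand side, since $\text{Pr}(\phi|M,\lambda) \in [0,1]$ and $\mu(\Omega) = 0$,
\begin{equation}
0 \le \int_{\Omega} \text{Pr}(\phi|M,\lambda)\, d\mu(\lambda) \le \mu(\Omega) = 0,
\end{equation}
so the left-hand side is $0$ while the right-hand side is strictly positive. This contradicts \eqref{eq:ME:ME}, so the model is not maximally $\psi$-epistemic, completing the contrapositive.

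I don't expect a serious obstacle here; the argument is essentially a two-line consequence of unwinding the definitions. The only point requiring a little care is the step where $\psi$-onticity of the pair $(\Proj{\psi},\Proj{\phi})$ is converted into a single witnessing set $\Omega$ — one must invoke the ``more rigorously'' remark after Definition~\ref{def:POEM:vd} that $D(\mu,\nu) = 1$ is equivalent to the existence of $\Omega$ with $\mu(\Omega)=1$, $\nu(\Omega)=0$ (equivalently, working with the complement, $\mu(\Omega)=0$, $\nu(\Omega)=1$), rather than merely using the supremum definition directly. One should also note that it suffices to exhibit a \emph{single} nonorthogonal pair and a \emph{single} failing instance of \eqref{eq:ME:ME}, since Definition~\ref{def:ME:ME} requires the identity to hold for \emph{all} such pairs, measurements, measures, and sets $\Omega$.
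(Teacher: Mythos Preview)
Your proof is correct and is essentially identical to the paper's own argument: both proceed by contrapositive, use $\psi$-onticity to extract a set $\Omega$ with $\nu(\Omega)=1$ and $\mu(\Omega)=0$, and then observe that the left side of Eq.~\eqref{eq:ME:ME} vanishes while the right side equals $\Tr{\Proj{\phi}\Proj{\psi}}>0$. The only cosmetic difference is that the paper phrases the contradiction as ``$\mu(\Omega)\geq\Tr{\Proj{\phi}\Proj{\psi}}>0$'' rather than ``LHS $=0\neq$ RHS,'' but this is the same inequality chain read in the opposite direction.
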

\begin{proof}
Assume the ontological model is $\psi$-ontic.  Then, for every pair
of nonorthogonal pure states $\Proj{\psi}, \Proj{\phi} \in
\mathcal{P}$, every pair $\mu \in \Delta_{\psi}$, $\nu \in
\Delta_{\phi}$ satisfies $D(\mu,\nu) = 1$, which is equivalent to
saying that there exists a set $\Omega$ that is measure $0$
according to $\mu$ and measure $1$ according to $\nu$.  However, if
the ontological model is maximally $\psi$-epistemic then, for all
$\mu \in \Delta_{\psi}$, $\nu \in \Delta_{\phi}$, $M \in
\mathcal{M}$ containing $\Proj{\phi}$, and $\text{Pr} \in \Xi_M$,
\begin{equation}
\int_{\Omega} \text{Pr}(\phi|M,\lambda) d\mu(\lambda) =
\int_{\Lambda} \text{Pr}(\phi|M,\lambda) d\mu(\lambda).
\end{equation}
This implies that,
\begin{align}
\mu(\Omega) & = \int_{\Omega} d\mu(\lambda) \\
& \geq \int_{\Omega}
\text{Pr}(\phi|M,\lambda) d \mu(\lambda) \\
& = \int_{\Lambda} \text{Pr}(\phi|M,\lambda) d \mu(\lambda) \\
& = \Tr{\Proj{\phi}\Proj{\psi}} > 0,
\end{align}
where the second line follows from the fact that
$\text{Pr}(\phi|M,\lambda) \in [0,1]$ and the fourth from the fact
that the ontological model reproduces the quantum predictions.
Thus, $\Omega$ cannot be of measure zero according to $\mu$, which
contradicts the assertion that the model is $\psi$-ontic.
\end{proof}

\subsection{Preparation contextuality}

\label{PC}

In \cite{Spekkens2005}, Spekkens introduced an operational approach to
contextuality.  In addition to generalizing the usual Kochen--Specker
approach to contextuality of measurements, as discussed in
Appendix~\ref{App:KS}, Spekkens introduced a notion of contextuality
for preparations.  In Spekkens' approach, the fundamental idea of
noncontextuality is that things that are operationally equivalent,
i.e.\ always lead to the same observable probabilities, should be
represented in the same way in an ontological model.  Preparation
noncontextuality is just this idea applied to states rather than
measurements.  This section shows that, when extended to mixed states,
non maximally $\psi$-epistemic models must be preparation contextual,
a result that was first pointed out in \cite{Leifer2013c}.

\begin{definition}
In an ontological model of a PM fragment, a state $\rho \in
\mathcal{P}$ is \emph{preparation noncontextual} if $\Delta_{\rho}$
only contains a single measure.  Otherwise $\rho$ is
\emph{preparation contextual}.  Similarly, the ontological model
itself is preparation noncontextual if every $\rho \in \mathcal{P}$
is preparation noncontextual, and otherwise it is preparation
contextual.
\end{definition}

Whenever two preparation procedures result in the same quantum state
$\rho$, there is no measurement that can distinguish between them
because, according to quantum theory, all of the outcome probabilities
for every measurement are exactly the same.  Thus, the general
principle of noncontextuality implies that all methods of preparing
$\rho$ should result in the same probability measure.

Note that much of the literature on $\psi$-ontology implicitly assumes
preparation noncontextuality for pure states by associating a unique
measure $\mu_{\psi}$ with each pure state $\Proj{\psi}$ under
consideration.  This assumption is fairly harmless for a couple of
reasons.  Firstly, the necessity of preparation contextuality has so
far only been established for mixed states, so assuming a unique
measure for pure states is not ruled out by any existing no-go
theorem.  Secondly, adapting results to preparation contextual models
is usually just a matter of modifying the definitions in a fairly
obvious way that does not require proofs to be modified substantively.
Nevertheless, I prefer to take the possibility of preparation
contextuality into account explicitly because it is oddly asymmetric
to allow for contextual measurements but not contextual states, and
because it allows results to be proved under weaker assumptions.

In order to prove preparation contextuality for mixed states, a
further assumption is required about how convex combinations of
quantum states should be represented in ontological models.  To
understand this, let $\rho = \sum_j p_j \sigma_j$ be a convex
decomposition of a mixed state $\rho$ into other (pure or mixed)
states $\sigma_j$, i.e. $0 \leq p_j \leq 1$ and $\sum_j p_j = 1$.  One
method of preparing $\rho$ is to generate a classical random variable
that takes value $j$ with probability $p_j$ (e.g.\ by flipping coins,
throwing dice, or any other suitable method), prepare the state
$\sigma_j$ if the value of the classical variable is $j$, and then
discard and forget the value of the classical variable.

In an ontological model, it is reasonable to assume that, in the above
mixing procedure, if $\mu_j$ is the probability measure corresponding
to the method used to prepare $\sigma_j$, then $\sum_j p_j \mu_j$ is
the measure corresponding to preparing $\rho$ by this mixing
procedure.  This is because the randomness used to generate the
classical variable could come from a source that is completely
independent of the system under investigation, in which case we would
expect that the ontic state of the system only depends on it via its
effect on which of the $\sigma_j$ is prepared.  More formally,

\begin{definition}
An ontological model \emph{respects convex decompositions} of a
state $\rho \in \mathcal{P}$ if, for every set of states
$\{\sigma_j\} \subseteq \mathcal{P}$ such that $\rho = \sum_j p_j
\sigma_j \in \mathcal{P}$ for some coefficients $p_j$ satisfying $0
\leq p_j \leq 1$ and $\sum_j p_j = 1$, for all possible choices of
$\mu_j \in \Delta_{\sigma_j}$, the measure $\sum_j p_j \mu_j$ is in
$\Delta_{\rho}$.  The ontological model \emph{respects convexity} if
it respects convex decompositions of every $\rho \in \mathcal{P}$.
\end{definition}

For the purpose of connecting non maximally $\psi$-epistemic models
with preparation contextuality, recall that an ontological model is
non maximally $\psi$-epistemic iff there are two nonorthogonal pure
states $\Proj{\psi}, \Proj{\phi} \in \mathcal{P}$, such that, for some
$\mu \in \Delta_{\psi}$, $\nu \in \Delta_{\phi}$, $M \in \mathcal{M}$
containing $\Proj{\phi}$, and $\text{Pr} \in \Xi_M$ there exists an
$\Omega \in \Sigma$ such that $\nu(\Omega) = 1$ but
\begin{equation}
\label{eq:PC:nME}
\int_{\Omega} \text{Pr}(\phi|M,\lambda)d\mu(\lambda) < \int_{\Lambda}
\text{Pr}(\phi|M,\lambda)d\mu(\lambda).
\end{equation}

Preparation contextuality can now be derived from non maximal
$\psi$-epistemicity via the following theorem.

\begin{theorem}
\label{prop:PC:PC}
Consider an ontological model of a PM fragment that reproduces the
quantum predictions, respects convexity and is non maximally
$\psi$-epistemic.

Assume that, for some pair of pure states $\Proj{\psi}, \Proj{\phi}
\in \mathcal{P}$ that satisfy Eq.~\eqref{eq:PC:nME}, the states
$\sigma^{\psi^{\perp}} = \frac{1}{d-1} \left ( I - \Proj{\psi}
\right )$, $\sigma^{\phi^{\perp}} = \frac{1}{d-1} \left ( I -
\Proj{\phi} \right )$ and the maximally mixed state $I/d$ are also
in $\mathcal{P}$.  Then, $I/d$ is preparation contextual in the
ontological model.
\end{theorem}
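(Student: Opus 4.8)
The plan is to exploit the two decompositions of $I/d$, namely $I/d = \frac{1}{d}\Proj{\psi} + \frac{d-1}{d}\sigma^{\psi^\perp}$ and $I/d = \frac{1}{d}\Proj{\phi} + \frac{d-1}{d}\sigma^{\phi^\perp}$, and to show that the measures produced from these two mixing procedures (via ``respects convexity'') cannot coincide, by testing them against the measurement $M$ containing $\Proj{\phi}$ that witnesses non-maximal $\psi$-epistemicity. First I would pick $\mu\in\Delta_\psi$, $\nu\in\Delta_\phi$, $M$ with $\Proj{\phi}\in M$, $\text{Pr}\in\Xi_M$, and $\Omega\in\Sigma$ with $\nu(\Omega)=1$ satisfying Eq.~\eqref{eq:PC:nME}. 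Choose arbitrary $\mu^{\psi^\perp}\in\Delta_{\sigma^{\psi^\perp}}$ and $\mu^{\phi^\perp}\in\Delta_{\sigma^{\phi^\perp}}$. By respect for convexity, the measures
\begin{equation}
\alpha = \tfrac{1}{d}\mu + \tfrac{d-1}{d}\mu^{\psi^\perp}, \qquad
\beta = \tfrac{1}{d}\nu + \tfrac{d-1}{d}\mu^{\phi^\perp}
\end{equation}
both lie in $\Delta_{I/d}$. The strategy is to show $\alpha(\Omega') \neq \beta(\Omega')$ for a suitable $\Omega'$, so $\Delta_{I/d}$ contains two distinct measures and $I/d$ is preparation contextual.

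The natural set to probe with is one built from $\Omega$ together with the response function $\text{Pr}(\phi|M,\cdot)$. The key quantitative input is that quantum predictions force $\int_\Lambda \text{Pr}(\phi|M,\lambda)\,d\nu(\lambda) = \Tr{\Proj{\phi}\Proj{\phi}} = 1$, so since $\nu(\Omega)=1$ and $\text{Pr}(\phi|M,\cdot)\in[0,1]$, the response function must equal $1$ $\nu$-almost everywhere on $\Omega$; likewise $\int_\Lambda \text{Pr}(\phi|M,\lambda)\,d\mu^{\phi^\perp}(\lambda) = \Tr{\Proj{\phi}\sigma^{\phi^\perp}} = 0$. Meanwhile, the defining inequality Eq.~\eqref{eq:PC:nME} says that $\mu$ assigns strictly less $\text{Pr}(\phi|M,\cdot)$-weighted mass to $\Omega$ than to all of $\Lambda$, i.e.\ there is a chunk of $\text{Pr}(\phi|M,\cdot)$-weight that $\mu$ places outside $\Omega$. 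I would then compute $\int_\Lambda \text{Pr}(\phi|M,\lambda)\,d\alpha(\lambda)$ and $\int_\Lambda \text{Pr}(\phi|M,\lambda)\,d\beta(\lambda)$ directly. Both must equal $\Tr{\Proj{\phi}\cdot I/d} = 1/d$ since both $\alpha,\beta\in\Delta_{I/d}$ and the model reproduces quantum predictions — so integrating against the full space gives no contradiction. The contradiction has to come from integrating over $\Omega$ (or a refinement): compute $\int_\Omega \text{Pr}(\phi|M,\lambda)\,d\alpha(\lambda)$ versus $\int_\Omega \text{Pr}(\phi|M,\lambda)\,d\beta(\lambda)$. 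For $\beta$, the $\nu$-part contributes $\tfrac1d\cdot 1 = \tfrac1d$ (full weight, since $\text{Pr}(\phi|M,\cdot)=1$ $\nu$-a.e.\ on $\Omega$ and $\nu(\Omega)=1$), and the $\mu^{\phi^\perp}$-part contributes at most $\tfrac{d-1}{d}\int_\Omega\text{Pr}(\phi|M,\lambda)\,d\mu^{\phi^\perp}(\lambda) = 0$ since that integral over all of $\Lambda$ is already $0$. So $\int_\Omega \text{Pr}(\phi|M,\lambda)\,d\beta = \tfrac1d$. For $\alpha$, the $\mu$-part contributes $\tfrac1d\int_\Omega\text{Pr}(\phi|M,\lambda)\,d\mu(\lambda)$, which by Eq.~\eqref{eq:PC:nME} is strictly less than $\tfrac1d\int_\Lambda\text{Pr}(\phi|M,\lambda)\,d\mu(\lambda) = \tfrac1d\Tr{\Proj{\phi}\Proj{\psi}} = \tfrac1d|\BraKet{\phi}{\psi}|^2 \leq \tfrac1d$, and the $\mu^{\psi^\perp}$-part adds something nonnegative. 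The hard part is controlling this $\mu^{\psi^\perp}$-contribution: we need the total to stay strictly below $1/d$. Note $\int_\Lambda\text{Pr}(\phi|M,\lambda)\,d\mu^{\psi^\perp}(\lambda) = \Tr{\Proj{\phi}\sigma^{\psi^\perp}} = \tfrac{1}{d-1}(1 - |\BraKet{\phi}{\psi}|^2)$, so $\tfrac{d-1}{d}$ times this is $\tfrac1d(1-|\BraKet{\phi}{\psi}|^2)$; combined with the upper bound $\tfrac1d|\BraKet{\phi}{\psi}|^2$ on the $\mu$-part over all of $\Lambda$ this sums to exactly $1/d$, consistent with $\alpha\in\Delta_{I/d}$ — so the \emph{strict} drop on $\Omega$ from Eq.~\eqref{eq:PC:nME} propagates: $\int_\Omega\text{Pr}(\phi|M,\lambda)\,d\alpha(\lambda) < \tfrac1d|\BraKet{\phi}{\psi}|^2 + \tfrac1d(1-|\BraKet{\phi}{\psi}|^2) = \tfrac1d = \int_\Omega\text{Pr}(\phi|M,\lambda)\,d\beta(\lambda)$.

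Hence $\alpha \neq \beta$, since they give different values to the measurable quantity $\int_\Omega \text{Pr}(\phi|M,\lambda)\,d(\cdot)(\lambda)$; equivalently, if one prefers a single measurable set rather than a weighted integral, one refines $\Omega$ to $\Omega' = \Omega \cap \{\lambda : \text{Pr}(\phi|M,\lambda) > 0\}$ and argues $\alpha(\Omega')$, $\beta(\Omega')$ differ, or simply observes that two measures agreeing on all $\Sigma$-sets would force equality of all such integrals. Either way, $\Delta_{I/d}$ contains at least two distinct probability measures, so $I/d$ is preparation contextual. The main obstacle I anticipate is purely bookkeeping: making sure the ``$\leq$'' bounds on the $\sigma^{\psi^\perp}$ and $\sigma^{\phi^\perp}$ contributions over $\Omega$ combine correctly with the strict inequality from Eq.~\eqref{eq:PC:nME}, and handling the fact that $\text{Pr}(\phi|M,\cdot) = 1$ only $\nu$-almost-everywhere on $\Omega$ rather than everywhere — which is harmless since we only ever integrate against $\nu$ there.
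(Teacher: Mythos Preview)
Your proof is correct and follows essentially the same approach as the paper: both use the two convex decompositions $I/d = \tfrac{1}{d}\Proj{\psi} + \tfrac{d-1}{d}\sigma^{\psi^\perp} = \tfrac{1}{d}\Proj{\phi} + \tfrac{d-1}{d}\sigma^{\phi^\perp}$, invoke $\Tr{\Proj{\phi}\sigma^{\phi^\perp}}=0$, and then use the strict inequality from non-maximal $\psi$-epistemicity to show the two resulting measures in $\Delta_{I/d}$ cannot coincide. The only cosmetic difference is in how the discrepancy is exhibited: the paper constructs a set $\tilde\Omega \subseteq \Lambda\setminus\Omega$ on which $\text{Pr}(\phi|M,\cdot)>0$ and compares the two mixed measures on $\tilde\Omega$ directly, whereas you compare the $\text{Pr}(\phi|M,\cdot)$-weighted integrals of the two measures over $\Omega$.
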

\begin{proof}
The maximally mixed state has convex decompositions
\begin{align}
\frac{I}{d} & = \frac{1}{d} \Proj{\psi} + \frac{d-1}{d}
\sigma^{\psi^{\perp}} \\
& = \frac{1}{d} \Proj{\phi} + \frac{d-1}{d} \sigma^{\phi^{\perp}}.
\end{align}
Since the ontological and quantum models respect convexity, for
every $\mu \in \Delta_{\psi}$ and $\mu^{\perp} \in
\Delta_{\sigma^{\psi^{\perp}}}$, the measure $\frac{1}{d} \mu +
\frac{d-1}{d}\mu^{\perp}$ is in $\Delta_{I/d}$.  Similarly, for
every $\nu \in \Delta_{\phi}$ and $\nu^{\perp} \in
\Delta_{\sigma^{\phi^{\perp}}}$, the measure $\frac{1}{d} \nu +
\frac{d-1}{d}\nu^{\perp}$ is in $\Delta_{I/d}$.  Assume that $I/d$
is preparation noncontextual, so that $\Delta_{I/d}$ only contains
one probability measure $\zeta$.  Then, we have
\begin{align}
\zeta & = \frac{1}{d} \mu + \frac{d-1}{d} \mu^{\perp} \\
& = \frac{1}{d} \nu + \frac{d-1}{d} \nu^{\perp}.
\end{align}

Now, since $\Proj{\psi}$ and $\Proj{\phi}$ satisfy
Eq.~\eqref{eq:PC:nME}, and the model reproduces the quantum
predictions, there must exist a $\mu \in \Delta_{\psi}$, a $\nu \in
\Delta_{\phi}$, an $M \in \mathcal{M}$ containing $\Proj{\phi}$, a
$\text{Pr} \in \Xi_M$, and a set $\Omega \in \Sigma$ such that
$\nu(\Omega) = 1$, but
\begin{equation}
\int_{\Omega} \text{Pr}(\phi|M,\lambda) d\mu(\lambda) <
\Tr{\Proj{\phi}\Proj{\psi}}.
\end{equation}
This means that, in order to reproduce the quantum predictions,
there must be a set $\tilde{\Omega} \subseteq \Lambda \backslash
\Omega$ such that $\text{Pr}(\phi|M,\lambda)$ is nonzero everywhere on
$\tilde{\Omega}$ and $\mu(\tilde{\Omega}) > 0$.  It is also the case
that $\nu(\tilde{\Omega}) = 0$ because $\tilde{\Omega}$ is a subset
of $\Lambda \backslash \Omega$ and $\Omega$ is measure one for
$\nu$.

Again, since the ontological model reproduces the quantum
predictions, we must have
\begin{equation}
\int_{\Omega} \text{Pr}(\phi|M,\lambda) d\nu^{\perp}(\lambda) =
\Tr{\Proj{\phi}\sigma^{\phi^{\perp}}} = 0.
\end{equation}
This implies that $\nu^{\perp}(\tilde{\Omega}) = 0$ because
$\text{Pr}(\phi|M,\lambda)$ is nonzero everywhere on $\tilde{\Omega}$.
Therefore we have
\begin{align}
\zeta(\tilde{\Omega}) & = \frac{1}{d} \nu(\tilde{\Omega}) +
\frac{d-1}{d} \nu^{\perp}(\tilde{\Omega}) \\
& = 0,
\end{align}
but also
\begin{align}
\zeta(\tilde{\Omega}) & = \frac{1}{d} \mu(\tilde{\Omega}) +
\frac{d-1}{d} \mu^{\perp}(\tilde{\Omega}) \\
& \geq \frac{1}{d} \mu(\tilde{\Omega}) \\
& > 0,
\end{align}
which is a contradiction.  Therefore, $I/d$ must be preparation
contextual.
\end{proof}

Similar arguments can be used to establish the preparation
contextuality of other mixed states.  If $\sigma$ and $\tau$ are states
such that $\Tr{\Proj{\phi}\sigma} = 0$ and
\begin{equation}
p \Proj{\psi} + (1-p)\tau = q \Proj{\phi} + (1-q) \sigma,
\end{equation}
for some $0 \leq p,q \leq 1$, then the state $\rho = p \Proj{\psi} +
(1-p)\tau$ must be preparation contextual by the same argument given
above.  Banik et.\ al.\ have recently used this method to show that
all mixed states must be preparation contextual in Hilbert spaces of
dimension $d \geq 3$ \cite{Banik2014}.  They also proved the same
result for all Hilbert space dimensions, but via a method that does
not make use of the connection with maximal $\psi$-epistemicity.  

\subsection{Bell's Theorem}

\label{Bell}

The purpose of this section is to argue that being committed to
$\psi$-ontology implies the conclusion of Bell's theorem, i.e.\ any
$\psi$-ontic model must fail to satisfy Bell's locality condition.  It
is not my intention to argue that the implication from $\psi$-ontology
theorems to Bell's theorem is the best or most elegant way of proving
Bell's theorem.  The argument is quite technical compared to other
proofs of Bell's theorem and $\psi$-ontology theorems are based on
auxiliary assumptions that are not required to prove Bell's theorem.
Instead, the aim is to compare the relative strength of the
conclusions of $\psi$-ontology theorems with those of Bell's theorem.

What we will actually do in this section is to argue that the kind of
preparation contextuality that follows from the failure of maximal
$\psi$-epistemicity implies the conclusion of Bell's theorem.  This
means that $\psi$-ontology implies the same, because it can be used to
establish the failure of maximal $\psi$-epistemicity.  In principle,
one can also view the result of this section as a full-blown proof of
Bell's theorem that does not rely on $\psi$-ontology, since, as
established in Appendix~\ref{App:KS}, the Kochen--Specker theorem can
alternatively be used to establish the failure of maximal
$\psi$-epistemicity without assuming the reality of the quantum state.
However, this would be quite a convoluted way of deriving Bell's
theorem compared to the standard proofs.

From a historical perspective, the Einstein-Podolsky-Rosen argument
\cite{Einstein1935} could be read as an argument that $\psi$-complete
models are nonlocal, but a $\psi$-ontic model may involve other
variables in addition to the wavefunction.  That nonlocality follows
from the weaker assumption of $\psi$-ontology is implicit in an
argument that Einstein made in correspondence with Schr{\"o}dinger.
This was first pointed out by Harrigan and Spekkens in
\cite{Harrigan2010}, where they also provide a more formal version of
the argument.  However, both the Einstein and Harrigan-Spekkens
arguments employ an additional assumption known as \emph{Einstein
separability}, which is discussed in more detail in \S\ref{MPIP}.
Einstein separability is not actually required to prove Bell's theorem
\cite{Henson2013}, so I do not make this assumption here.  This, in
addition to taking care of measure-theoretic issues, makes the
argument presented here a bit more involved than Harrigan and
Spekkens' treatment.  However, once this is done, weakening their
assumption of $\psi$-ontology to the failure of maximal
$\psi$-epistemicity does not significantly change the structure of the
argument.  As presented here, the argument is based on an idea due to
Barrett \cite{Barrett2004} that has appeared in the literature in an
informal version \cite{Leifer2013c}.

The basic idea runs as follows.  Suppose Alice and Bob share a
composite system with Hilbert space $\Hilb[A] \otimes \Hilb[B]$,
$\Hilb[A]=\Hilb[B]=\mathbb{C}^d$, prepared in the entangled state
$\Proj{\Phi^+}_{AB}$, where
\begin{equation}
\Ket{\Phi^+}_{AB} = \frac{1}{\sqrt{d}} \sum_{j=0}^{d-1} \Ket{j}_A \otimes
\Ket{j}_B,
\end{equation}
and $\left \{ \Ket{j} \right \}_{j=0}^{d-1}$ is an orthonormal basis
for $\mathbb{C}^d$.  If Alice performs a measurement described by the
projectors $\left \{ \Proj{\psi}_A^T, I_A - \Proj{\psi}_A^T \right
\}$, where $^T$ denotes transpose in the $\Ket{j}$ basis and $I_A$ is
the identity operator on $\Hilb[A]$, then, with probability $1/d$ she
obtains the $\Proj{\psi}_A^T$ outcome and the state of Bob's system
gets updated to $\Proj{\psi}_B$, and with probability $(d-1)/d$ she
gets the $I_A - \Proj{\psi}_A^T$ outcome and the state of Bob's system
gets updated to $\sigma^{\psi{\perp}}_B$.  If we allow Alice to
postselect on her measurement outcome, then we can regard this as a
method of preparing Bob's system in the states $\Proj{\psi}_B$ and
$\sigma^{\psi^{\perp}}_B$.  If she subsequently forgets and discards
her measurement outcome then we can also regard this as a method of
preparing Bob's system in the maximally mixed state by mixing
$\Proj{\psi}_B$ with $\sigma^{\psi^{\perp}}_B$.  Similar remarks apply
if Alice measures $\{\Proj{\phi}^T_A, I_A - \Proj{\phi}^T_A\}$, which,
by postselection, prepares Bob's system in the state $\Proj{\phi}_B$
or $\sigma^{\phi^{\perp}}_B$ and, upon forgetting the measurement
outcome, prepares it in the maximally mixed state by mixing
$\Proj{\phi}_B$ with $\sigma^{\phi^{\perp}}_B$.

Now, given an ontological model for the composite system that
satisfies Bell's locality condition, we can use it to derive an
ontological model for these preparation procedures on Bob's system
alone.  This model must be preparation noncontextual for the state
$I/d$ because locality implies that the probability of getting a
particular ontic state for the derived model on Bob's system cannot
depend on whether Alice chooses to measure $\left \{ \Proj{\psi}^T_A,
I_A - \Proj{\psi}^T_A \right \}$ or $\left \{ \Proj{\phi}^T_A, I_A -
\Proj{\phi}^T_A \right \}$.  However, in order to construct a model
for all possible preparations and measurements of this form that
reproduces the quantum predictions, preparation noncontextuality
must fail for some such pair because a maximally $\psi$-epistemic
model is impossible.  Therefore, Bell's locality condition must fail.
The remainder of this section formalizes this argument.

The type of PM fragment relevant to Bell's Theorem involves a
bipartite system and local measurements.  This is formalized in the
concept of a product measurement fragment.
\begin{definition}
\label{def:Bell:PM}
A \emph{product measurement fragment} $\mathfrak{F}_{AB} = \langle
\Hilb[A] \otimes \Hilb[B], \mathcal{P}_{AB}, \mathcal{M}_A \times
\mathcal{M}_B \rangle$ is a PM fragment on a composite system with
Hilbert space $\Hilb[A] \otimes \Hilb[B]$, where $\mathcal{M}_A$ is
a set of POVMs on $\Hilb[A]$, $\mathcal{M}_B$ is a set of POVMs on
$\Hilb[B]$ and $\mathcal{M}_A \times \mathcal{M}_B$ denotes the set
of all POVMs of the form $(M_A,M_B) = \{E_j \otimes F_k\}$, where
$M_A = \{E_j\} \in \mathcal{M}_A$ and $M_B = \{F_k\} \in
\mathcal{M}_B$.
\end{definition}

To prove Bell's Theorem, we only need consider a single bipartite
state $\rho_{AB}$, so for the remainder of this section we assume that
$\mathcal{P}_{AB}$ just contains this single state.  Most of the
concepts can be generalized to fragments with more than one state in
an obvious way.

The first step is to reinterpret experiments on the composite system
as prepare-and-measure experiments on Bob's system alone.  At the
operational level, when the system is prepared in the state
$\rho_{AB}$, Alice makes a measurement $M_A$ on system $A$ and Bob
makes a measurement $M_B$ on system $B$, quantum theory predicts the
outcome probabilities
\begin{equation}
\label{eq:Bell:joint}
\text{Prob}(E,F|M_A,M_B) =  \Tr[AB]{E \otimes F
\rho_{AB}},
\end{equation}
where $E \in M_A$ and $F \in M_B$.

These probabilities can be rewritten in terms of states and
measurements on $\Hilb[B]$ alone by imagining that Alice performs her
measurement before Bob performs his, and considering how the state of
Bob's system gets updated in light of Alice's measurement result.
Upon learning that the outcome of $M_A$ was $E$, the state of Bob's
system gets updated to
\begin{equation}
\label{eq:Bell:condstate}
\rho_{B|E} = \frac{\Tr[A]{E \otimes I_B
\rho_{AB}}}{\text{Prob}(E|M_A)},
\end{equation}
where $\text{Prob}(E|M_A) = \Tr[A]{E \rho_{A}}$ is the probability
of Alice's measurement outcome and $\rho_A = \Tr[B]{\rho_{AB}}$.
Eq.~\eqref{eq:Bell:joint} can then be rewritten as
\begin{equation}
\text{Prob}(E,F|M_A,M_B) = \text{Prob}(E|M_A) \Tr[B]{F
\rho_{B|E}}.
\end{equation}

By postselecting on obtaining the outcome $E$, Alice's measurement can
be viewed as a method of preparing Bob's system in the state
$\rho_{B|E}$.  In doing so, $E$ gets reinterpreted as specifying a
preparation procedure for system $B$, instead of a measurement on $A$,
so we now want to condition the probabilities on $E$.  This gives
\begin{equation}
\text{Prob}(F|M_A,E,M_B) =
\frac{\text{Prob}(E,F|M_A,M_B)}{\text{Prob}(E|M_A)}
= \Tr[B]{F \rho_{B|E}}. \label{eq:Bell:cond}
\end{equation}

We also want to consider the case where Alice forgets her measurement
outcome outcome.  This prepares Bob's system in the state $\rho_B =
\Tr[A]{\rho_{AB}}$ and we have, for every $M_A \in \mathcal{M}_A$,
\begin{equation}
\rho_B = \sum_{E \in M_A} \text{Prob}(E|M_A) \rho_{B|E}.
\end{equation}

We can now define the conditional fragment on Bob's system consisting
of all the states that Alice can prepare on his system in this way.
\begin{definition}
Let $\mathfrak{F}_{AB} = \langle \Hilb[A] \otimes \Hilb[B],
\mathcal{P}_{AB}, \mathcal{M}_A \times \mathcal{M}_B \rangle$ be a
product measurement fragment, where $\mathcal{P}_{AB}$ contains a
single state $\rho_{AB}$.  The \emph{conditional} fragment on $B$
given $A$ is $\mathfrak{F}_{B|A} = \langle \Hilb[B],
\mathcal{P}_{B|A}, \mathcal{M}_B \rangle$, where $\mathcal{P}_{B|A}$
consists of all states of the form $\rho_{B|E}$ as given by
Eq.~\eqref{eq:Bell:condstate} where $E \in M_A$ for some $M_A \in
\mathcal{M}_A$, as well as the state $\rho_B = \Tr[A]{\rho_{AB}}$.
\end{definition}

Typically, there are several different methods of preparing the same
quantum state and of implementing the same measurement in a PM
fragment.  These different methods need not correspond to the same
probability measures nor the same conditional probability
distributions in an ontological model.  However, the derivation of
Bell's theorem is greatly simplified if we focus attention on one
specific method of preparing $\rho_{AB}$ and one one method of
implementing each of the measurements $M_A \in \mathcal{M}_A$ and $M_B
\in \mathcal{M}_B$.  Then, we can assume that $\rho_{AB}$ is
represented by a unique measure $\mu$ and that each $(M_A,M_B)$ is
represented by a unique conditional probability distribution
$\text{Pr}(E,F|M_A,M_B,\lambda)$, where $E \in M_A$ and $F \in M_B$.
This does not weaken the conclusion because the result will hold
regardless of which method of implementation is chosen.  Therefore,
for the remainder of this section, we assume a unique measure and
conditional probabilities for ontological models of the product
measurement fragment.

Given such an ontological model of $\mathfrak{F}_{AB}$, it is natural
to think that an ontological model for the conditional PM fragment
$\mathfrak{F}_{B|A}$ might be derivable from it.  However, in general,
an ontological model for $\mathfrak{F}_{AB}$ cannot be cleanly
separated into a part that depends on system $A$ and a part that
depends on system $B$ unless it satisfies Bell's condition of local
causality.  To see this, suppose we have an ontological model of
$\mathfrak{F}_{AB}$.  If Alice chooses to make the measurement $M_A$
and Bob chooses to make the measurement $M_B$ then in general, not
assuming locality, this is represented by a conditional probability
distribution $\text{Pr}(E, F|M_A,M_B,\lambda)$, where $E \in M_A$
and $F \in M_B$.  The model then predicts that the observed
probability of obtaining the outcomes $E$ and $F$ is given by
\begin{equation}
\text{Pr}(E,F|M_A,M_B) = \int_{\Lambda}
\text{Pr}(E,F|M_A,M_B,\lambda) d\mu(\lambda), \label{eq:Bell:ontjoint}
\end{equation}
where $\mu$ is the measure representing $\rho_{AB}$.

At the quantum level, the conditional probability
$\text{Prob}(F|M_A,E,M_B)$ could be rewritten in terms of a state on
$\Hilb[B]$ that depends on $E$ alone and the measurement operator $F$,
which is also defined on $\Hilb[B]$.  The analogue at the ontological
level would be to absorb the dependence on $M_A$ and $E$ into the
measure $\mu$ and be left with a conditional probability distribution
that only depends on $M_B$ and $F$.  Unfortunately, the conditional
probability distribution $\text{Pr}(E, F | M_A,M_B,\lambda)$ need not
factor cleanly into a term that only depends on $E$ and $M_A$ and a
term that only depends on $F$ and $M_B$.  In order to obtain such a
decomposition, a locality assumption is required.

\begin{definition}
An ontological model of a product measurement fragment
$\mathfrak{F}_{AB} = \langle \Hilb[A] \otimes \Hilb[B],
\mathcal{P}_{AB}, \mathcal{M}_A \times \mathcal{M}_B \rangle$ is
\emph{Bell local} if, for all $M_A \in \mathcal{M}_A$, $M_B \in
\mathcal{M}_B$, $E \in M_A$, $F \in M_B$, $\lambda \in \Lambda$,
\begin{equation}
\label{eq:Bell:local}
\text{Pr}(E,F|M_A,M_B,\lambda) =
\text{Pr}(E|M_A,\lambda)\text{Pr}(F|M_B,\lambda)
\end{equation}
where $\text{Pr}(E|M_A,\lambda)$ is a valid conditional
probability distribution over $M_A$ and $\text{Pr}(F|M_B,\lambda)$
is a valid conditional probability distribution over $M_B$.
\end{definition}

In a Bell local model of $\mathfrak{F}_{AB}$, measures conditional on
$M_A$ and $E \in M_A$ can be defined as follows,
\begin{equation}
\label{eq:Bell:condprobs}
\mu_{M_A,E}(\Omega) = \frac{1}{\text{Pr}(E|M_A)}\int_{\Omega}
\text{Pr}(E|M_A,\lambda) d\mu(\lambda),
\end{equation}
where $\text{Pr}(E|M_A) = \int_{\Lambda}
\text{Pr}(E|M_A,\lambda)d\mu(\lambda)$ is the observed probability of
obtaining the outcome $E$ for the measurement $M_A$ according to the
ontological model.  Eq.~\eqref{eq:Bell:local} can then be
rewritten as
\begin{equation}
\text{Pr}(E,F|M_A,M_B,) = \text{Pr}(E|M_A)\int_{\Lambda} \text{Pr}(F|M_B,\lambda)
d\mu_{M_A,E}(\lambda),
\end{equation}
and conditioning on $E$ gives
\begin{equation}
\text{Pr}(F|M_A,E,M_B) = \int_{\Lambda} \text{Pr}(F|M_B,\lambda)
d\mu_{M_A,E}(\lambda),
\end{equation}
which is the analogue of Eq.~\eqref{eq:Bell:cond} at the ontological
level.

Further, in the case where Alice does not report her measurement
result we have,
\begin{equation}
\text{Pr}(F|M_A,M_B) = \int_{\Lambda} \text{Pr}(F|M_B,\lambda)
d\mu(\lambda),
\end{equation}

Thus, an ontological model for the conditional theory can be defined
as follows.
\begin{definition}
\label{def:Bell:cond}
Let $\mathfrak{F}_{AB} = \langle \Hilb[A] \otimes \Hilb[B],
\mathcal{P}_{AB}, \mathcal{M}_A \times \mathcal{M}_B \rangle$ be a
product measurement fragment, where $\mathcal{P}_{AB}$ contains a
single state $\rho_{AB}$, and consider a Bell local ontological
model of it in which $\rho_{AB}$ is represented by a unique measure
$\mu$, each $M_A \in \mathcal{M}_A$ is represented by a unique
conditional probability distribution $\text{Pr}(E|M_A,\lambda)$, and
each $M_B \in \mathcal{M}_B$ is represented by a unique conditional
probability distribution $\text{Pr}(F|M_B,\lambda)$.  Then, the
\emph{conditional ontological model} of the conditional fragment
$\mathfrak{F}_{B|A}$ is defined by
\begin{itemize}
\item Placing the measure $\mu_{M_A,E}$ in $\Delta_{\rho_{B|E}}$,
where
\begin{equation}
\mu_{M_A,E}(\Omega) = \frac{1}{\text{Pr}_A(E|M_A)}\int_{\Omega}
\text{Pr}(E|M_A,\lambda) d\mu(\lambda).
\end{equation}
Note that there may be more than one measure in
$\Delta_{\rho_{B|E}}$ if $E$ occurs in more than one of the POVMs
in $\mathcal{M}_A$.
\item Associating the unique measure $\mu$ with the state $\rho_B =
\Tr[A]{\rho_{AB}}$.
\item Associating the unique conditional probability distribution
$\text{Pr}(F|M_B,\lambda)$ as defined in the Bell local model
with each $M_B \in \mathcal{M}_B$.
\end{itemize}
\end{definition}
It is straightforward to see that if the original ontological model
reproduces the quantum predictions for $\mathfrak{F}_{AB}$ then the
conditional ontological model reproduces the quantum predictions of
$\mathfrak{F}_{B|A}$.

We now specialize to the product measurement fragment of interest for
Bell's Theorem.  This is $\mathfrak{F}_{AB} = \left \langle \Hilb[A]
\otimes \Hilb[B], \mathcal{P}_{AB}, \mathcal{M}_A \times
\mathcal{M}_B \right \rangle$, where $\Hilb[A] = \Hilb[B] =
\mathbb{C}^d$, $\mathcal{P}_{AB} = \left \{ \Proj{\Phi^+}_{AB} \right
\}$, $\mathcal{M}_A$ consists of all projective measurements of the
form $M_A = \{E_0,E_1\}$ where $E_0 = \Proj{\psi}$, $E_1 = I_A -
\Proj{\psi}$ for some $\Ket{\psi} \in \mathbb{C}^d$ and
$\mathcal{M}_B$ consists of all measurements on $\Hilb[B]$.

Given the available measurements in $\mathcal{M}_A$, $\rho_{B|E_0}$ is
always a pure state and in fact Alice can prepare Bob's system in any
pure state by choosing an appropriate $E_0$, i.e. $\Proj{\psi}^T_A$ if
she wants to prepare $\Proj{\psi}_B$.

\begin{theorem}
\label{prop:Bell:PNC}
Consider a Bell local model for the product measurement fragment
$\mathfrak{F}_{AB}$ defined above that reproduces the quantum
predictions for Alice's measurements, i.e.\ for all $M_A \in
\mathcal{M}_A$, $E \in M_A$,
\begin{equation}
\int_{\Lambda} \mathrm{Pr}(E|M_A,\lambda)d\mu(\lambda) =
\mathrm{Tr}_{AB} \left ( E \otimes I_B \Proj{\Phi^+}_{AB} \right
).
\end{equation}
Then, the conditional ontological model of the conditional fragment
$\mathfrak{F}_{B|A}$ is preparation noncontextual and respects
convex decompositions.
\end{theorem}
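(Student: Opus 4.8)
The plan is to unwind the definitions of the conditional fragment $\mathfrak{F}_{B|A}$ and of the conditional ontological model in Definition~\ref{def:Bell:cond}, after which both assertions follow from two elementary facts: every state that Alice can steer Bob's system into arises from a \emph{unique} outcome of a \emph{unique} measurement in $\mathcal{M}_A$, and for a fixed two-outcome Alice measurement the ontological response functions over its two outcomes sum to $1$ at every $\lambda$.

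First I would make $\mathcal{P}_{B|A}$ explicit. For $M_A=\{E_0,E_1\}$ with $E_0=\Proj{\psi}^T_A$ the transpose trick applied to the maximally entangled state gives $\Tr[A]{E_0\otimes I_B\,\Proj{\Phi^+}_{AB}}=\tfrac1d\Proj{\psi}_B$ and $\text{Prob}(E_0|M_A)=\Tr[A]{E_0\,I_A/d}=\tfrac1d$, so $\rho_{B|E_0}=\Proj{\psi}_B$; likewise $\rho_{B|E_1}=\sigma^{\psi^{\perp}}_B=\tfrac1{d-1}(I_B-\Proj{\psi}_B)$ with $\text{Prob}(E_1|M_A)=\tfrac{d-1}{d}$, while $\rho_B=\Tr[A]{\Proj{\Phi^+}_{AB}}=I_B/d$. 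Hence $\mathcal{P}_{B|A}=\{\Proj{\psi}_B\}\cup\{\sigma^{\psi^{\perp}}_B\}\cup\{I_B/d\}$. Since a measurement in $\mathcal{M}_A$ is fixed by its rank-one element, each outcome $E$ occurring in $\mathcal{M}_A$ belongs to exactly one POVM, and the map $E\mapsto\rho_{B|E}$ is injective: rank-one outcomes go to (distinct) pure states via the injective correspondence $\Ket{\psi}\mapsto\Proj{\psi}_B$, rank-$(d-1)$ outcomes go to (distinct) states $\sigma^{\psi^{\perp}}_B$, the two families are disjoint for $d\geq3$, and for $d=2$ the states $\sigma^{\psi^{\perp}}_B$ are again pure but the correspondence stays injective. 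Since $I_B/d$ has full, uniform spectrum it equals no $\rho_{B|E}$. Therefore the conditional model attaches exactly one measure to each element of $\mathcal{P}_{B|A}$ — the unique $\mu$ to $\rho_B$ and $\mu_{M_A,E}$ to $\rho_{B|E}$ — so $\mathfrak{F}_{B|A}$ is preparation noncontextual.

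For the convexity claim I would look only at the decompositions realized by Alice performing $M_A$ and then forgetting her outcome, $\rho_B=\text{Prob}(E_0|M_A)\rho_{B|E_0}+\text{Prob}(E_1|M_A)\rho_{B|E_1}$, i.e.\ $I_B/d=\tfrac1d\Proj{\psi}_B+\tfrac{d-1}{d}\sigma^{\psi^{\perp}}_B$. Reproduction of Alice's marginals forces the normalizing denominators of Definition~\ref{def:Bell:cond} to coincide with the quantum weights, $\text{Pr}(E_0|M_A)=\tfrac1d$ and $\text{Pr}(E_1|M_A)=\tfrac{d-1}{d}$, so they cancel and
\begin{align}
&\text{Prob}(E_0|M_A)\,\mu_{M_A,E_0}(\Omega)+\text{Prob}(E_1|M_A)\,\mu_{M_A,E_1}(\Omega)\nonumber\\
&\qquad=\int_{\Omega}\left(\text{Pr}(E_0|M_A,\lambda)+\text{Pr}(E_1|M_A,\lambda)\right)d\mu(\lambda)=\mu(\Omega),
\end{align}
using $\text{Pr}(E_0|M_A,\lambda)+\text{Pr}(E_1|M_A,\lambda)=1$ for all $\lambda$, which holds because Bell locality supplies a bona fide conditional probability distribution over $M_A$. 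As $\mu$ is precisely the measure the conditional model attaches to $\rho_B$, this decomposition lies in $\Delta_{\rho_B}$; these are exactly the decompositions invoked when Theorem~\ref{prop:PC:PC} is later applied to $\mathfrak{F}_{B|A}$.

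The step I expect to need the most care is the bookkeeping for preparation noncontextuality: one has to check that no two distinct Alice outcomes — possibly drawn from different measurements — produce the same $\rho_{B|E}$, so that each $\Delta_{\rho_{B|E}}$ is genuinely a singleton rather than being forced to collect several $\mu_{M_A,E}$; the $d=2$ case, where $\sigma^{\psi^{\perp}}_B$ is itself a pure state, needs to be handled separately. It is also worth flagging explicitly that ``respects convex decompositions'' here refers to the decompositions $\rho_B=\sum_{E\in M_A}\text{Prob}(E|M_A)\rho_{B|E}$ induced by Alice's choice of measurement — the conditional model does \emph{not} in general respect arbitrary convex decompositions of elements of $\mathcal{P}_{B|A}$ (e.g.\ resolving $\sigma^{\psi^{\perp}}_B$ into pure states), nor does the subsequent argument require it.
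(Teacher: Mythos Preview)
Your proof is correct and follows essentially the same approach as the paper's: both establish preparation noncontextuality by arguing that each conditional state $\rho_{B|E}$ arises from a unique outcome of a unique Alice measurement (so each $\Delta_{\rho_{B|E}}$ is a singleton), and both verify the convex decompositions of $I_B/d$ via the identity $\sum_j \text{Pr}(E_j|M_A,\lambda)=1$ together with the matching of ontological and quantum marginals for Alice. Your treatment is slightly more careful than the paper's in handling the $d=2$ case explicitly and in flagging that only the decompositions of $I_B/d$ induced by Alice's measurements---precisely those needed for the subsequent application of Theorem~\ref{prop:PC:PC}---are actually verified, a point the paper states somewhat loosely as ``the only convex combinations that appear in $\mathfrak{F}_{B|A}$.''
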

\begin{proof}
Given the choice of $\Proj{\Phi^+}_{AB}$, there is only one POVM
$M_A = \{E_0,E_1\}$ that can lead to a given conditional state
$\rho_{B|E_j}$.  The possible conditional states on Bob's side are
$\rho_{B|E_0} = \Proj{\psi}_B$ and $\rho_{B|E_1} =
\sigma^{\psi^{\perp}}_B = \frac{1}{d-1}(I_B - \Proj{\psi}_B)$, which
occur when Alice makes the measurement $\{\Proj{\psi}^T_A, I_A -
\Proj{\psi}^T_A\}$, and so each of Alice's measurements determine a
unique pair of conditional states for Bob.  Therefore, each
$\Delta_{\rho_{B|E_j}}$ contains only a single measure, so the model
is preparation noncontextual for these states.  Similarly, by
construction, there is a unique measure associated with $I/d$.

The only convex combinations that appear in $\mathfrak{F}_{B|A}$ are
those of the form
\begin{equation}
\frac{I_B}{d} = \sum_{j =0}^1 \text{Prob}(E_j|M_A) \rho_{B|E_j},
\end{equation}
where $\text{Prob}(E_j|M_A) = \Tr[AB]{E_j \otimes I_B
\Proj{\Phi^+}_{AB}}$.  Therefore, the result follows if
\begin{equation}
\mu = \sum_{j=0}^1 \text{Prob}(E_j|M_A) \mu_{M_A,E_j}.
\end{equation}
To see this, note that $\sum_j \text{Pr}(E_j|M_A,\lambda) = 1$ for all
$\lambda$, so, for $\Omega \in \Sigma$,
\begin{align}
\mu(\Omega) & = \int_{\Omega} d\mu(\lambda) \\
& = \int_{\Omega} \left [ \sum_j \text{Pr}(E_j|M_A,\lambda) \right
] d\mu(\lambda) \\
& = \sum_j \int_{\Omega} \text{Pr}(E_j|M_A,\lambda) d\mu(\lambda)
\\
& = \sum_j \text{Pr}(E_j|M_A)
\frac{1}{\text{Pr}(E_j|M_A)} \int_{\Omega}
\text{Pr}(E_j|M_A,\lambda) d\mu(\lambda) \\
& = \sum_j \text{Pr}(E_j|M_A) \mu_{M_A,E_j}(\Omega),
\end{align}
where $\text{Pr}(E_j|M_A) = \int_{\Lambda}
\text{Pr}(E_j|M_A,\lambda)d\mu(\lambda)$ is the probability of Alice
obtaining outcome $E_j$ when measuring $M_A$ according to the
ontological model.  By assumption, the model reproduces the quantum
predictions for Alice's measurements, so $\text{Pr}(E_j|M_A) =
\text{Prob}(E_j|M_A)$, from which the result follows.
\end{proof}

Theorem~\ref{prop:Bell:PNC} is enough to show that a $\psi$-ontic
model of the above product measurement fragment that reproduces the
quantum predictions cannot be Bell local.  To see this note that
Theorem~\ref{thm:ME:MEPE} shows that a $\psi$-ontic model cannot be
maximally $\psi$-epistemic and hence, by Theorem~\ref{prop:PC:PC}, if
convexity is respected then preparation noncontextuality must fail for
the maximally mixed state.  However, a Bell local model would give
rise to a preparation noncontextual ontological model for the
conditional theory that respects convexity.  Since the conditional
fragment contains the states needed for Theorem~\ref{prop:PC:PC}, this
implies that Bell locality must fail.

Note that, instead of relying on $\psi$-ontology one could use the
Kochen--Specker theorem to rule out maximally $\psi$-epistemic theories
for $d \geq 3$, as described in Appendix~\ref{App:KS}.  Then,
Theorem~\ref{prop:Bell:PNC} can be viewed as a full proof of Bell's
theorem that does not assume $\psi$-ontology, although this is quite
complicated compared to the standard proofs.

\section*{Part II. $\psi$-ontology theorems\label{SPON}}

Having understood the implications of $\psi$-ontology, it is finally
time to look at the extent to which it can be established.  The first
$\psi$-ontology theorem that was discovered is due to Pusey, Barrett
and Rudolph. Of the available results, it is the most widely discussed
in the literature and I think it makes the strongest case for
$\psi$-ontology, so it receives the longest treatment in
\S\ref{Pusey--Barrett--Rudolph}, including a discussion of
generalizations and criticisms.  Following this, I discuss two further
$\psi$-ontology theorems: Hardy's Theorem in \S\ref{Hardy}, and the
Colbeck--Renner Theorem in \S\ref{CR}.

Before getting to these theorems, I discuss the concept of
\emph{antidistinguishability} in \S\ref{Anti} because it plays a key
role in both the Pusey--Barrett--Rudolph and Hardy's theorems, as well as in some
of the results discussed in \hyperref[Beyond]{Part III}. Both Hardy's Theorem and the Colbeck--Renner Theorem involve assumptions about how dynamics
are represented in ontological models, so this is discussed in
\S\ref{Dyn} after the discussion of Pusey--Barrett--Rudolph is completed.

\section{Antidistinguishability}

\label{Anti}

According to Theorem~\ref{prop:POEM:distinguish}, sets of
operationally distinguishable states are pairwise ontologically
distinct in an ontological model.  However, $\psi$-ontology requires
that all pairs of pure states must be ontologically distinct, and most
pairs of pure states are not orthogonal.  Therefore, to prove
$\psi$-ontology, it is useful to consider a weaker concept.

\begin{definition}
A finite set of quantum states $\{\rho_j\}_{j=1}^n \subseteq
\mathcal{P}$ is \emph{antidistinguishable} if there exists a POVM $M
= \left \{ E_j \right \}_{j=1}^n \in \mathcal{M}$ such that, for
each $j$,
\begin{equation}
\Tr{E_j \rho_j} = 0.
\end{equation}
\end{definition}

Recall that the definition of distinguishability states that there
should exist a POVM $M = \{E_j\}_{j=1}^n \in \mathcal{M}$ such that
$\Tr{E_k\rho_j} = \delta_{jk}$ for all $j$ and $k$.  This is
equivalent to requiring $\Tr{E_j \rho_j} = 1$ for all $j$ by the
following argument.  Firstly, $\Tr{E_j\rho_j} = 1$ is obviously a
special case of $\Tr{E_k \rho_j} = \delta_{jk}$.  For the converse
direction note that if there were a $k \neq j$ such that $\Tr{E_k
\rho_j} > 0$ then $\Tr{\left ( E_j + E_k \right ) \rho_j} > 1$, but
this cannot be the case if $\rho_j$ is a density operator and
$\{E_j\}_{j=1}^n$ is a POVM because then we must have $1 = \Tr{\rho_j}
= \Tr{\sum_m E_m \rho_j} \geq \Tr{\left ( E_j + E_k \right ) \rho_j}$.
Therefore, the difference between distinguishability and
antidistinguishability is simply the replacement of $\Tr{E_j \rho_j} =
1$ with $\Tr{E_j \rho_j} = 0$.  To understand what this means, suppose
that one of the states in a set $\mathcal{D}$ is prepared and you do
not know which.  If $\mathcal{D}$ is distinguishable then there is a
measurement for which each outcome identifies that a specific member
of $\mathcal{D}$ was definitely the state prepared.  On the other
hand, if $\mathcal{D}$ is antidistinguishable then there is a
measurement for which each outcome identifies that a specific member
of $\mathcal{D}$ was definitely not the state prepared.  In this
sense, the two concepts are opposites.

Antidistinguishability is a weaker property than distinguishability
because, if the measurement outcome $j$ tells us that $\rho_j$ was
definitely prepared, then it also tells us that $\rho_k$ was
definitely not prepared for all $k \neq j$.  Therefore, a
distinguishing measurement can be converted to an antidistinguishing
one just by permuting the outcome indices in such a way that none of
them are left invariant, e.g.\ if there are $n$ POVM elements then
this can be done by mapping $j \rightarrow j+1$, where $n+1$ is
identified with $1$.  For sets of three or more states,
antidistinguishability is strictly weaker than distinguishability, as
there exist sets of nonorthogonal quantum states that can be
antidistinguished (see Fig.~\ref{fig:Anti:merc} for an example).  For
a set of two states, antidistinguishability is equivalent to
distinguishability because if you know that one of the states was not
the one prepared then you know that it must have been the other one.

\begin{figure}[t!]
\centering
\includegraphics[width=80mm]{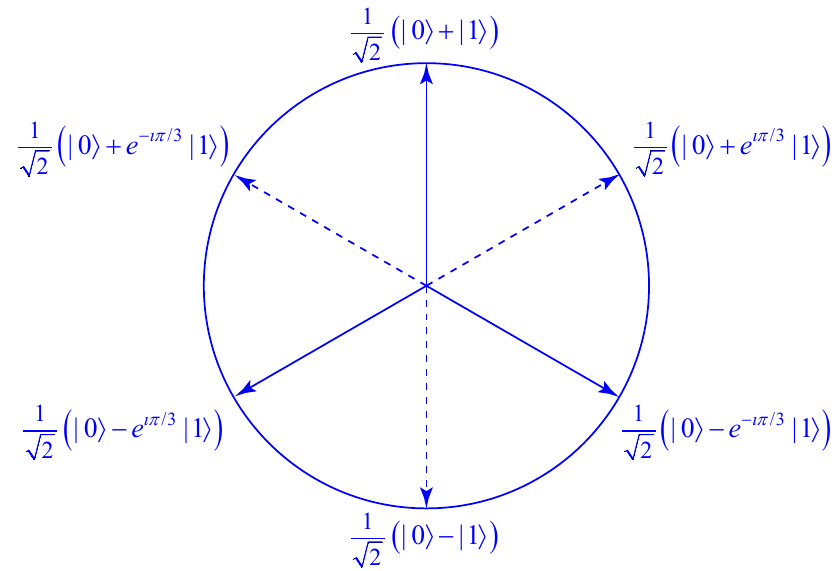}
\caption{\color[HTML]{0000FF}{\label{fig:Anti:merc}A set of nonorthogonal states that can
be antidistinguished, and the POVM that antidistinguishes them.
The diagram represents the equator of the Bloch sphere and the
three antidistinguishable states are shown by the solid arrows.
The dashed arrows show three states that are each orthogonal to
one of the states in the antidistinguishable set. A POVM can be
formed by scaling the projectors onto the three dashed states by a
factor of $2/3$.  By virtue of the orthogonality relations, this
is an antidistinguishing POVM.}}
\end{figure}

Like distinguishability, antidistinguishability has consequences for
the overlaps of probability measures in ontological models.  However,
whilst distinguishability constrains the pairwise overlaps,
antidistinguishability only constrains the $n$-way overlaps, where $n$
is the number of states in the antidistinguishable set.  Since the
variational distance only applies to two measures, a more general
notion of overlap is needed to capture this.

\begin{definition}
\label{def:Anti:partition}
A \emph{countable partition} $\{\Omega_k\}$ of a measurable space
$(\Lambda, \Sigma)$ is an at most countable set of disjoint subsets
$\Omega_k \in \Sigma$, $\Omega_k \cap \Omega_r = \emptyset$ for $k
\neq r$, such that $\cup_k \Omega_k = \Lambda$.  $T(\Lambda)$
denotes the set of countable partitions of $(\Lambda,\Sigma)$.
\end{definition}

\begin{definition}
\label{def:Anti:overlap}
The \emph{overlap} of an at most countable set $\{\mu_j\}$ of
probability measures on a sample space $(\Lambda, \Sigma) $ is given
by
\begin{equation}
\label{eq:Anti:overlap}
L \left ( \left \{ \mu_j \right \} \right ) = \inf_{\{\Omega_k\}
\in T(\Lambda)} \left ( \sum_k \min_j \left (
\mu_j(\Omega_k) \right ) \right ).
\end{equation}
For a set $\{\mu,nu\}$ of two measures, we write $L(\mu,\nu)$ for
$L(\{\mu,nu\})$.
\end{definition}

It can be shown that if $m$ is a measure that dominates $\{\mu_j\}$
then the overlap is given by
\begin{equation}
\label{eq:Anti:domform}
L \left ( \left \{ \mu_j \right \}\right ) = \int_{\Lambda} \min_j
\left (p_j(\lambda) \right ) dm(\lambda),
\end{equation}
where $p_j$ is a density that represents $\mu_j$ with respect to $m$,
i.e. $\mu_j(\Omega) = \int_{\Omega} p_j(\lambda)dm(\lambda)$.  For a
finite set of measures, a dominating measure always exists.  This form
of the overlap is often most useful for calculations.

Note that, if $\{\mu_j\}$ contains $n$ measures then the optimization
in Eq.~\eqref{eq:Anti:overlap} can be restricted to partitions that
contain $n$ subsets.  This is because, if a partition contains $m > n$
subsets, then the same $\mu_j$ is bound to be the minimum for more
than one $\Omega_k$ and these sets can then be replaced by their union
without affecting the result.  Therefore, the overlap is given by
\begin{equation}
L(\{\mu_j\}) = \min_{\{\Omega_k\} \in T_n(\Lambda)} \left ( \sum_{k=1}^n
\min_j \left ( \mu_j(\Omega_k) \right ) \right ),
\end{equation}
where $T_n(\Lambda)$ is the set of $n$-fold partitions of $\Lambda$.
In fact, this can be simplified further to
\begin{equation}
\label{eq:Anti:npart}
L(\{\mu_j\}) = \min_{\{\Omega_j\} \in T_n(\Lambda)} \left ( \sum_j
\mu_j(\Omega_j) \right ),
\end{equation}
because the minimizing partition can always be relabeled such that
$k$ gets mapped to the $j$ that minimizes $\mu_j(\Omega_k)$.

The overlap has the following operational interpretation.  Suppose
that a system is prepared according to one of $n$ preparation
procedures $P_j$, where $P_j$ corresponds to the measure $\mu_j$, each
case having equal a priori probability $1/n$.  You are then told the
actual value of $\lambda$ and your task is to announce a $j$ between
$1$ and $n$ such that $P_j$ was \emph{not} the procedure used to
prepare the system, i.e.\ you want to antidistinguish the probability
measures.  If you adopt a deterministic strategy then the best you can
do is to partition the space into $n$ subsets $\Omega_j$ and announce
that $P_j$ was not the preparation procedure used when $\lambda \in
\Omega_j$.  The probability of failure when using this strategy is
$\frac{1}{n} \sum_{j=1}^n \mu_j(\Omega_j)$ so the minimal probability
of failure is $\frac{1}{n} L(\{\mu_j\})$.  By convexity, a
probabilistic strategy cannot do any better than this.  Therefore,
$L(\{\mu_j\}) = 0$ corresponds to the case where the probability
measures can be antidistinguished perfectly.

The overlap generalizes the variational distance in the following
sense.

\begin{theorem}
\label{prop:Anti:varoverlap}
For two probability measures $\mu, \nu$ on $(\Lambda, \Sigma)$,
$L(\mu,\nu) = 1 - D(\mu,\nu)$, where $D$ is the variational
distance.
\end{theorem}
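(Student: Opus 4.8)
The plan is to rewrite both $L(\mu,\nu)$ and $D(\mu,\nu)$ as an optimization over a single measurable set and observe that the two optimizations are complementary. First I would specialize the partition formula: by the reduction already recorded in Eq.~\eqref{eq:Anti:npart}, for a set of $n=2$ measures the overlap is computed over two-element partitions $\{\Omega, \Lambda \backslash \Omega\}$, and after relabelling
\begin{equation}
L(\mu,\nu) = \inf_{\Omega \in \Sigma} \bigl( \mu(\Omega) + \nu(\Lambda \backslash \Omega) \bigr).
\end{equation}
On the other side, Definition~\ref{def:POEM:vd} and the remark following it give $D(\mu,\nu) = \sup_{\Omega \in \Sigma} \bigl( \nu(\Omega) - \mu(\Omega) \bigr)$. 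Since $\nu$ is normalized, $\nu(\Lambda \backslash \Omega) = 1 - \nu(\Omega)$, so for every $\Omega \in \Sigma$ the quantity inside the infimum equals $1 - \bigl( \nu(\Omega) - \mu(\Omega) \bigr)$; taking the infimum over $\Omega$ therefore subtracts the supremum of $\nu(\Omega) - \mu(\Omega)$ from $1$, which is exactly $1 - D(\mu,\nu)$.

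An alternative route, perhaps cleaner to write up, is via densities. Take the dominating measure $m = \tfrac{1}{2}(\mu + \nu)$ and let $p, q$ be densities of $\mu, \nu$ with respect to $m$. Then Eq.~\eqref{eq:Anti:domform} gives $L(\mu,\nu) = \int_{\Lambda} \min(p,q)\, dm$ while Eq.~\eqref{eq:POEM:intvar} gives $D(\mu,\nu) = \tfrac{1}{2} \int_{\Lambda} |p - q|\, dm$, and the identity drops out of the pointwise relation $\min(p,q) = \tfrac{1}{2}\bigl( p + q - |p - q| \bigr)$ together with $\int_{\Lambda} p\, dm = \int_{\Lambda} q\, dm = 1$.

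The step requiring the most care is the passage from the general overlap \eqref{eq:Anti:overlap} to the two-set form — that is, justifying that it suffices to optimize over $n$-fold partitions and that the optimal partition can be relabelled so that index $j$ is assigned $\Omega_j$ — but this is precisely the content already argued in the text just before Eq.~\eqref{eq:Anti:npart}, so I would simply cite it; everything after that reduction is elementary arithmetic. It is also worth noting that the Hahn decomposition of the signed measure $\mu - \nu$ produces a set $\Omega$ attaining simultaneously the supremum defining $D$ and the infimum defining $L$, so in the two-measure case these extrema are genuine maxima and minima, which dispenses with any worry about whether the bounds above are tight.
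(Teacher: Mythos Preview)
Your first approach is exactly the paper's proof: start from Eq.~\eqref{eq:Anti:npart} to write $L(\mu,\nu) = \inf_{\Omega}\bigl(\mu(\Omega)+\nu(\Lambda\backslash\Omega)\bigr)$, use normalization of $\nu$, and convert the infimum into $1-\sup_{\Omega}\bigl(\nu(\Omega)-\mu(\Omega)\bigr)=1-D(\mu,\nu)$ via the remark following Definition~\ref{def:POEM:vd}. Your density-based alternative and the Hahn-decomposition comment are correct extras the paper does not include.
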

\begin{proof}
From Eq.~\eqref{eq:Anti:npart},
\begin{align}
L(\mu,\nu) & = \inf_{\Omega \in \Sigma} \left ( \mu(\Omega) +
\nu(\Lambda \backslash \Omega) \right ) \\
& = \inf_{\Omega \in \Sigma} \left ( \mu(\Omega) + 1 - \nu(\Omega)
\right ) \\
& = 1 - \sup_{\Omega \in \Sigma} \left ( \nu(\Omega) - \mu(\Omega)
\right ) \\
& = 1 - \sup_{\Omega \in \Sigma} \left | \mu(\Omega) - \nu(\Omega)
\right | \\
& = 1 - D(\mu,\nu).
\end{align}
\end{proof}

Because of this, the condition for two quantum states $\rho$ and
$\sigma$ to be ontologically distinct in an ontological model can be
restated as: for all $\mu \in \Delta_{\rho}$ and $\nu \in
\Delta_{\sigma}$, $L(\mu,\nu) = 0$.

The reason for introducing the more general $n$-way overlaps is that,
being a weaker concept than distinguishability, antidistinguishability
only places constraints on the $n$-way overlaps rather than on
pairwise overlaps.  To understand the intuition behind these
constraints, consider the case of a finite ontic state space and
suppose that there is an ontic state $\lambda$ that is assigned
nonzero probability by every measure $\mu_j$ in a set of probability
measures $\{\mu_j\}_{j=1}^n$ corresponding to an antidistinguishable
set of preparations $\{\rho_j\}_{j=1}^n$.  Then, $\lambda$ must assign
zero probability to each of the outcomes of the antidistinguishing
measurement.  However, the probabilities that $\lambda$ assigns to the
measurement outcomes must form a probability distribution, so they
cannot all be zero.  Hence, no such $\lambda$ can exist.

In order to formulate this argument rigorously, we have to deal with
measure zero sets.  For this reason, it is better to formulate the
argument in terms of the overlap.

\begin{theorem}
\label{thm:Anti:antiover}
Let $\{\rho_j\}_{j=1}^n$ be an antidistinguishable set of states in
a PM fragment.  Then, in any ontological model that reproduces the
quantum predictions,
\begin{equation}
L(\{\mu_j\}) = 0.
\end{equation}
for every possible choice of probability measures $\mu_j \in
\Delta_{\rho_j}$.
\end{theorem}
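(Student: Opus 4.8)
The plan is to use the antidistinguishing POVM together with the reproducibility condition to exhibit an explicit $n$-fold partition of $\Lambda$ that witnesses zero overlap. First I would fix an antidistinguishing POVM $M = \left\{E_j\right\}_{j=1}^n \in \mathcal{M}$ with $\Tr{E_j \rho_j} = 0$ for each $j$, choose arbitrary measures $\mu_j \in \Delta_{\rho_j}$, and pick any response function $\text{Pr} \in \Xi_M$. Because the model reproduces the quantum predictions, Eq.~\eqref{eq:Form:Rep2} gives $\int_\Lambda \text{Pr}(E_j|M,\lambda)\,d\mu_j(\lambda) = \Tr{E_j \rho_j} = 0$ for every $j$.

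The next step is to convert each vanishing integral into an almost-everywhere statement. Since $\text{Pr}(E_j|M,\cdot)$ is a nonnegative measurable function whose integral against $\mu_j$ is zero, the set $N_j = \left\{\lambda \in \Lambda : \text{Pr}(E_j|M,\lambda) > 0\right\}$ is measurable and satisfies $\mu_j(N_j) = 0$; this is the standard fact that $\mu_j\!\left(\left\{\text{Pr}(E_j|M,\cdot) > 1/k\right\}\right) \leq k \int_\Lambda \text{Pr}(E_j|M,\lambda)\,d\mu_j(\lambda) = 0$ combined with countable subadditivity over $k$. On the other hand, the normalization $\sum_{E \in M} \text{Pr}(E|M,\lambda) = 1$ holding for every $\lambda$ forces at least one of the $\text{Pr}(E_j|M,\lambda)$ to be strictly positive at each $\lambda$, so $\bigcup_{j=1}^n N_j = \Lambda$.

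Finally I would disjointify the cover: set $\Omega_1 = N_1$ and $\Omega_j = N_j \setminus (N_1 \cup \cdots \cup N_{j-1})$ for $j \geq 2$. These sets are measurable, pairwise disjoint, and cover $\Lambda$, so $\left\{\Omega_j\right\}_{j=1}^n \in T_n(\Lambda)$. By the reduced expression for the overlap, Eq.~\eqref{eq:Anti:npart}, we then have $L(\left\{\mu_j\right\}) \leq \sum_j \mu_j(\Omega_j)$, and since $\Omega_j \subseteq N_j$ gives $\mu_j(\Omega_j) \leq \mu_j(N_j) = 0$, this upper bound is $0$. As the overlap is a sum of minima of nonnegative quantities it is also $\geq 0$, whence $L(\left\{\mu_j\right\}) = 0$, and the argument is independent of the choice of $\mu_j \in \Delta_{\rho_j}$, of the antidistinguishing POVM, and of the response function.

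I do not expect a serious obstacle here; the one point requiring care is the passage from the vanishing integral to the null set $N_j$, which is precisely why the conclusion is stated and proved in terms of the overlap $L$ rather than via the naive ``common support point'' reasoning sketched for finite sample spaces before the statement. The argument goes through verbatim whether $\Lambda$ is finite or a general measurable space, and it invokes nothing about the $\rho_j$ beyond antidistinguishability, so no nonorthogonality or Hilbert-space-dimension hypotheses enter.
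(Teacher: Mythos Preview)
Your argument is correct. It takes a somewhat different route from the paper's proof, so a brief comparison is in order. The paper chooses a dominating measure $m$ for the finite family $\{\mu_j\}$, writes each $\mu_j$ via a density $p_j$, and uses the density formula Eq.~\eqref{eq:Anti:domform} for the overlap: from $\int_\Lambda \text{Pr}(E_j|M,\lambda)p_j(\lambda)\,dm(\lambda)=0$ and $p_j \geq \min_k p_k$ one gets $\int_\Lambda \text{Pr}(E_j|M,\lambda)\min_k p_k(\lambda)\,dm(\lambda)=0$, and summing over $j$ with $\sum_j \text{Pr}(E_j|M,\lambda)=1$ yields $\int_\Lambda \min_k p_k(\lambda)\,dm(\lambda)=0$, which is $L(\{\mu_j\})$. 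Your proof instead works directly with the partition characterization Eq.~\eqref{eq:Anti:npart}: you extract the $\mu_j$-null sets $N_j=\{\text{Pr}(E_j|M,\cdot)>0\}$, observe that normalization forces them to cover $\Lambda$, and disjointify to exhibit a witnessing partition. Your route avoids invoking a dominating measure and Radon--Nikodym densities altogether, which is a mild conceptual economy; the paper's route is algebraically a bit cleaner, needing no set-theoretic bookkeeping. Both are short and natural, and the measurability of $N_j$ (which you rely on) is guaranteed by the standing assumption that $\text{Pr}(E_j|M,\cdot)$ is measurable in $\lambda$.
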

\begin{proof}
Let $M = \{E_j\}$ be the antidistinguishing POVM, let $\mu_j \in
\Delta_{\rho_j}$, and let $m$ be a measure that dominates
$\{\mu_j\}$ such that, for $\Omega \in \Sigma$, $\mu_j(\Omega) =
\int_{\Omega} p_j(\lambda) dm(\lambda)$ for some densities $p_j$.
In order to reproduce the predictions of $M$, for every $\text{Pr}
\in \Xi_M$, the ontological model must satisfy
\begin{eqnarray}
\int_{\Lambda} \text{Pr}(E_j|M,\lambda) d \mu_j(\lambda) & = &
\int_{\Lambda} \text{Pr}(E_j|M,\lambda) p_j(\lambda) dm(\lambda)\nonumber\\
& = & \Tr{E_j \rho_j} = 0.
\end{eqnarray}
However,
\begin{eqnarray}
&&\int_{\Lambda} \text{Pr}(E_j|M,\lambda) p_j(\lambda) dm(\lambda) \geq\nonumber\\
&&~\int_{\Lambda} \text{Pr}(E_j|M,\lambda) \min_k \left (p_k(\lambda)
\right ) dm(\lambda),
\end{eqnarray}
so
\begin{equation}
\int_{\Lambda} \text{Pr}(E_j|M,\lambda) \min_k \left (p_k(\lambda)
\right ) dm(\lambda) = 0.
\end{equation}
Now, sum both sides over $j$.  Because, $\sum_j
\text{Pr}(E_j|M,\lambda) = 1$ for every $\lambda \in \Lambda$, this
gives
\begin{equation}
\int_{\Lambda} \min_k \left ( p_k(\lambda) \right ) dm(\lambda) =
0,
\end{equation}
or in other words, by Eq.~\eqref{eq:Anti:domform}, $L \left ( \left
\{ \mu_k \right \} \right ) = 0$.
\end{proof}

Finally, a few remarks are in order about terminology.
Antidistinguishability was first introduced by Caves, Fuchs and Shack
in their study of the compatibility of quantum state assignments,
where it was called ``post-Peierls (PP) incompatibility''
\cite{Caves2002}.  The concept did not attract much attention until
its use in the Pusey--Barrett--Rudolph Theorem, but since then more systematic studies of
antidistinguishability have appeared \cite{Bandyopadhyay2013,
Perry2014}, where it goes under the name ``conclusive exclusion of
quantum states''.  Therefore, if you see the terms ``PP
incompatibility'' and ``conclusive exclusion'' in the literature then
you can rest assured that they mean the same thing as
antidistinguishability.  I think ``antidistinguishability'' is a more
suggestive name, but time will tell which terminology becomes
standard.

\section{The Pusey--Barrett--Rudolph Theorem}

\label{Pusey--Barrett--Rudolph}

The Pusey--Barrett--Rudolph Theorem is the most prominent $\psi$-ontology theorem.  It is
based on an assumption known as the Preparation Independence Postulate
(PIP), which essentially says that composite systems prepared in a
product state should be independent of one another.  It is this
postulate that allows bounds on $n$-way overlaps coming from
antidistinguishability to be leveraged into the bounds on pairwise
overlaps needed to establish $\psi$-ontology.  The PIP is presented in
its simplest form in \S\ref{PIP} in order to facilitate the simplest
possible proof of the main result in \S\ref{Main}.  However, the PIP
is the most controversial assumption of the Pusey--Barrett--Rudolph Theorem, so its
motivation is discussed in \S\ref{MPIP}, weakenings of it that still
allow the Pusey--Barrett--Rudolph Theorem to go through are discussed in \S\ref{WPusey--Barrett--Rudolph}, its
necessity for proving the theorem is established in \S\ref{NPIP}, and
criticism of it is discussed in \S\ref{CPIP}.  Finally, other
criticisms of the Pusey--Barrett--Rudolph Theorem, not directed at the PIP, are discussed
in \S\ref{OCPusey--Barrett--Rudolph}.

\subsection{The Preparation Independence Postulate}

\label{PIP}

The Preparation Independence Postulate (PIP) is an assumption about
how composite systems should be modeled when the subsystems are
prepared independently of one another.  Suppose Alice prepares her
system in a quantum state $\rho_A$ and Bob prepares his system in a
quantum state $\rho_B$, such that their joint state is a product
$\rho_A \otimes \rho_B$.  Suppose further that Alice has an
ontological model with ontic state space $(\Lambda_A,\Sigma_A)$ that
reproduces the quantum predictions for her system in isolation and
that $\mu_A$ is the probability measure that it assigns to her
preparation procedure.  Similarly, Bob has an ontological model for
his system with ontic state space $(\Lambda_B,\Sigma_B)$ that assigns
the measure $\mu_B$ to his preparation procedure.  Then, the PIP says
that there ought to be an ontological model for the joint system with
ontic state space $(\Lambda_A \times \Lambda_B, \Sigma_A \otimes
\Sigma_B)$ in which the product measure
\begin{equation}
\mu_A \times \mu_B(\Omega) = \int_{\Lambda_B}
\mu_A(\Omega_{\lambda_B}) d \mu_B(\lambda_B),
\end{equation}
reproduces the operational predictions for $\rho_A \otimes \rho_B$,
where $\Omega_{\lambda_B} = \{\lambda_A \in \Lambda_A |
(\lambda_A,\lambda_B) \in \Omega \}$.

This assumption looks superficially plausible, and indeed it is not
much commented upon in the Pusey--Barrett--Rudolph paper.  However, since it is the most
controversial assumption of the Pusey--Barrett--Rudolph Theorem, it is worth pausing to
define it a bit more rigorously.  Discussion of the motivation for the
PIP and criticism of it is deferred until \S\ref{MPIP}--\S\ref{CPIP}.

To start with, we need to understand how fragments for subsystems
can be combined to form fragments for composite systems.
\begin{definition}
Let $\mathfrak{F}_A = \langle \Hilb[A], \mathcal{P}_A, \mathcal{M}_A
\rangle$ and $\mathfrak{F}_B = \langle \Hilb[B], \mathcal{P}_B,
\mathcal{M}_B \rangle$ be PM fragments.  The \emph{direct product}
of $\mathfrak{F}_A$ and $\mathfrak{F}_B$ is
\begin{equation}
\mathfrak{F}_A \times \mathfrak{F}_B = \langle \Hilb[A] \otimes
\Hilb[B], \mathcal{P}_A \times \mathcal{P}_B, \mathcal{M}_A \times
\mathcal{M}_B \rangle,
\end{equation}
where
\begin{itemize}
\item $\mathcal{P}_A \times \mathcal{P}_B$ consists of all product
states $\rho_A \otimes \rho_B$ with $\rho_A \in \mathcal{P}_A$ and
$\rho_B \in \mathcal{P}_B$.
\item $\mathcal{M}_A \times \mathcal{M}_B$ consists of all POVMs of
the form $\{E_j \otimes F_k\}$ with $\{E_j\} \in \mathcal{M}_A$
and $\{F_k\} \in \mathcal{M}_B$.
\end{itemize}

A \emph{general product} (or just product) of $\mathfrak{F}_A$ and
$\mathfrak{F}_B$ is any PM fragment of the form
\begin{equation}
\mathfrak{F}_{AB} = \langle \Hilb[A] \otimes \Hilb[B],
\mathcal{P}_{AB}, \mathcal{M}_{AB} \rangle,
\end{equation}
where
\begin{itemize}
\item $\mathcal{P}_A \times \mathcal{P}_B \subseteq
\mathcal{P}_{AB}$.
\item $\mathcal{M}_A \times \mathcal{M}_B \subseteq
\mathcal{M}_{AB}$.
\end{itemize}

$\mathfrak{F}_A$ and $\mathfrak{F}_B$ are called the \emph{factors}
of $\mathfrak{F}_{AB}$.
\end{definition}

A direct product fragment represents a situation in which there are
two separated systems and we can only do separate prepare-and-measure
experiments on them individually.  It only allows product states and
local measurements, so there is no possibility of entanglement or even
classical correlation between the two systems.  A general product
fragment contains all the states and measurements that are in the
direct product, but it may, in addition, include some extra correlated
or entangled states and measurements on the composite system.

Any two ontological models $\Theta_A = (\Lambda_A, \Sigma_A, \Delta_A,
\Xi_A)$ and $\Theta_B = (\Lambda_B, \Sigma_B, \Delta_B, \Xi_B)$ for
$\mathfrak{F}_A$ and $\mathfrak{F}_B$ can be combined to form an
ontological model $\Theta_{AB} = (\Lambda_{AB}, \Sigma_{AB},
\Delta_{AB}, \Xi_{AB})$ for the direct product $\mathfrak{F}_A \times
\mathfrak{F}_B$.  To do so, set
\begin{itemize}
\item $\Lambda_{AB} = \Lambda_A \times \Lambda_B$ and $\Sigma_{AB} =
\Sigma_A \otimes \Sigma_B$.
\item For all $\rho_A \in \mathcal{P}_A$, $\rho_B \in \mathcal{P}_B$,
let $\Delta_{AB}[\rho_A \otimes \rho_B] = \Delta_A[\rho_A] \times
\Delta_B[\rho_B]$ be the set of measures of the form $\mu_{AB} =
\mu_A \times \mu_B$, where $\mu_A \in \Delta_A[\rho_A]$ and $\mu_B
\in \Delta_B[\rho_B]$.
\item  For all $M_A \in \mathcal{M}_A$, $M_B \in \mathcal{M}_B$, let
$\Xi_{AB}[M_A,M_B]$ be the set of conditional probability
distributions of the form
$\text{Pr}_{AB}(E \otimes F|M_A,M_B\lambda_A,\lambda_B) =
\text{Pr}_A(E|M_A,\lambda_A)\text{Pr}_B(F|M_B,\lambda_B)$,
where $\text{Pr}_A \in \Xi_A[M_A]$ and $\text{Pr}_B \in
\Xi_B[M_B]$.
\end{itemize}
Then, we have, for $\mu_{AB} \in \Delta_{AB}[\rho_A \otimes \rho_B]$,
\begin{align}
& \text{Pr}_{AB}(E,F|\rho_A \otimes \rho_B,M_A,M_B)\nonumber\\
& = \int_{\Lambda_{AB}} \text{Pr}_{AB}(E \otimes F
|M_A,M_B,\lambda_A,\lambda_B) d\mu_{AB}(\lambda_A,
\lambda_B) \\
& = \int_{\Lambda_A} \text{Pr}_A(E|M_A,\lambda_A) d \mu_{A}
(\lambda_A) \nonumber\\& \quad\quad\quad \times \int_{\Lambda_B} \text{Pr}_B(F|M_B,\lambda_B) d
\mu_{B}(\lambda_B) \\
& = \text{Pr}_A(E|\rho_A,M_A)\text{Pr}_B(F|\rho_B,M_B).
\end{align}
From this, it follows that if $\Theta_A$ and $\Theta_B$ reproduce the
quantum predictions for $\mathfrak{F}_A$ and $\mathfrak{F}_B$
respectively, then $\Theta_{AB}$ reproduces the quantum predictions
for $\mathfrak{F}_A \times \mathfrak{F}_B$.

The key point is that, in order to model a direct product fragment, we
only need to take the Cartesian product of the subsystem ontic state
spaces and we only need to use product measures.  The PIP states that
this should remain true even if the product fragment is expanded to
include entangled measurements (but still no entangled states).  A
little more terminology is required to state this formally.

\begin{definition}
Let $\mathfrak{F}_A = \langle \Hilb[A], \mathcal{P}_A, \mathcal{M}_B
\rangle$ and $\mathfrak{F}_B = \langle \Hilb[B], \mathcal{P}_B,
\mathcal{M}_B \rangle$ be PM fragments.  A \emph{product state
fragment} is a product fragment $\mathfrak{F}_{AB} = \langle
\Hilb[A] \otimes \Hilb[B], \mathcal{P}_{AB}, \mathcal{M}_{AB}
\rangle$ where $\mathcal{P}_{AB} = \mathcal{P}_A \times
\mathcal{P}_B$, i.e.\ it only includes product states, but it may
have entangled measurements.
\end{definition}

\begin{definition}
A pair of ontological models $\Theta_A = (\Lambda_A, \Sigma_A,
\Delta_A, \Xi_A)$, $\Theta_B = (\Lambda_B, \Sigma_B, \Delta_B,
\Xi_B)$ for PM fragments $\mathfrak{F}_A = \left \langle \Hilb[A],
\mathcal{P}_A, \mathcal{M}_A \right \rangle$ and $\mathfrak{F}_B =
\left \langle \Hilb[B], \mathcal{P}_B, \mathcal{M}_B \right \rangle$
are \emph{compatible} with an ontological model $\Theta_{AB} =
(\Lambda_{AB}, \Sigma_{AB}, \Delta_{AB}, \Xi_{AB})$ for a product
state fragment $\mathfrak{F}_{AB} = \left \langle \Hilb[A] \otimes
\Hilb[B], \mathcal{P}_A \times \mathcal{P}_B, \mathcal{M}_{AB}
\right \rangle$ if $\Theta_{AB}$ satisfies
\begin{itemize}
\item The Cartesian Product Assumption (CPA):
\begin{align}
\Lambda_{AB} & = \Lambda_A \times \Lambda_B, & \Sigma_{AB} & =
\Sigma_A \otimes \Sigma_B.
\end{align}
\item The No-Correlation Assumption (NCA): For all $\rho_A \in
\mathcal{P}_A$, $\rho_B \in \mathcal{P}_B$, $\Delta_{AB}[\rho_A
\otimes \rho_B] = \Delta_A[\rho_A] \times \Delta_B[\rho_B]$ is the
set of measures of the form $\mu_A \times \mu_B$, where $\mu_A \in
\Delta_A[\rho_A]$ and $\mu_B \in \Delta_B[\rho_B]$.  This means
that $\mu_{AB} \in \Delta_{AB}[\rho_A \otimes \rho_B]$ is of the
form
\begin{equation}
\mu_{AB}(\Omega) = \int_{\Lambda_B} \mu_A(\Omega_{\lambda_B}) d
\mu_B(\lambda_B),
\end{equation}
where $\Omega_{\lambda_B} = \{\lambda_A \in \Lambda_A |
(\lambda_A,\Lambda_B) \in \Omega\}$, $\mu_A \in \Delta_A[\rho_A]$
and $\mu_B \in \Delta_B[\rho_B]$.
\end{itemize}
\end{definition}

The CPA says that, when modeling independent local preparations,
there are no additional properties of the joint system that are not
derived from the properties of the individual systems.  In other words,
there are no ``genuinely nonlocal properties'' that are irrelevant for
local measurements but come into play when making joint measurements
of the two systems.  The NCA says that, further, local preparations
can be modeled as product measures.

\begin{definition}
A pair of ontological models $\Theta_A$ and $\Theta_B$ for PM
fragments $\mathfrak{F}_A$ and $\mathfrak{F}_B$ that reproduce the
quantum predictions satisfy the \emph{Preparation Independence
Postulate (PIP)} if, for all product state fragments
$\mathfrak{F}_{AB}$ with $\mathfrak{F}_A$ and $\mathfrak{F}_B$ as
factors, there exists an ontological model $\Theta_{AB}$ that is
compatible with $\Theta_A$ and $\Theta_B$ and reproduces the quantum
predictions.
\end{definition}

Note that, it is very important that the PIP only applies to product
state fragments and not to general product fragments.  In other words,
it does not have any implications for how entangled states should be
represented in an ontological model.  Suppose we have an ontological
model $(\Lambda_{AB}, \Sigma_{AB}, \Delta_{AB}, \Xi_{AB})$ for a
product state fragment $\mathfrak{F}_{AB}$ that satisfies the PIP and
we want to extend it to a more general product fragment
$\mathfrak{F}^{\prime}_{AB}$ that differs from $\mathfrak{F}_{AB}$ by
the addition of some entangled states.  Then, we have to allow for the
fact that the extended ontological model $(\Lambda^{\prime}_{AB},
\Sigma^{\prime}_{AB}, \Delta^{\prime}_{AB}, \Xi^{\prime}_{AB})$ might
have extra ontic states that have zero measure for all product states,
but nonzero measure for some entangled state.  Therefore, all we can
conclude about the ontic state space of the extended model from the
PIP is that $\Lambda_A \times \Lambda_B \subseteq
\Lambda^{\prime}_{AB}$, that $\Sigma_A \otimes \Sigma_B$ is a
$\sigma$-subalgebra of $\Sigma^{\prime}_{AB}$, and that independent
local preparations have measure zero outside the $\Lambda_A \times
\Lambda_B$ subset, but there may be other ontic states that come into
play when we prepare entangled states.  Similarly, probability
measures that induce correlations between $\Lambda_A$ and $\Lambda_B$
may come into play when we prepare entangled states.

The reason this is important is that $\psi$-ontic models do not
satisfy $\Lambda^{\prime}_{AB} = \Lambda_A \times \Lambda_B$ and it
would be poor form to prove a $\psi$-ontology theorem only to find
that $\psi$-ontic models are also ruled out by its assumptions.  To
understand this, consider the Beltrametti--Bugajski model for two
Hilbert spaces $\Hilb[A]$, $\Hilb[B]$ and for the composite system
$\Hilb[A] \otimes \Hilb[B]$.  For the two systems individually, we
have the two projective Hilbert spaces consisting of all pure states
$\Proj{\psi}_A$ and $\Proj{\phi}_B$ as the ontic state spaces.  If we
only want to model a product state fragment then we only need states
of the form $\Proj{\psi}_A \otimes \Proj{\phi}_B$, and the set of such
states is isomorphic to the Cartesian product of the two projective
Hilbert spaces.  Similarly, the point measure on $\Proj{\psi}_A
\otimes \Proj{\phi}_B$ can be written as the product of a point
measure at $\Proj{\psi}_A$ with a point measure at $\Proj{\psi}_B$.
Therefore, the model satisfies the PIP\@.  On the other hand, if we
want to model entangled states then we need additional ontic states
$\Proj{\psi}_{AB}$ that cannot be written as products, and the point
measure at $\Proj{\psi}_{AB}$ cannot be written as a product of
measures on the ontic state spaces of the factors.  Therefore,
modeling entangled states requires additional ontic states and
probability measures, but this is not a violation of the PIP because
the PIP only applies to product state fragments.

For ease of exposition, we have so far confined attention to
composites of two subsystems.  The Pusey--Barrett--Rudolph Theorem actually requires the
generalization to $n$ systems, but only in the case of $n$ identical
factors.  The required generalization should be obvious, but for
completeness here are the definitions.

\begin{definition}
Let $\mathfrak{F} = \langle \Hilb, \mathcal{P}, \mathcal{M} \rangle$
be a PM fragment.  An \emph{$n$ fold product state fragment} with
factors $\mathfrak{F}$ is any fragment $\mathfrak{F}^n$ of the form
\begin{equation}
\mathfrak{F}^n = \langle \Hilb^{\otimes n},
\mathcal{P}^{\times n}, \mathcal{M}^{\prime} \rangle,
\end{equation}
where $\mathcal{P}^{\times n}$ consists of states of the form
$\otimes_{j=1}^n \rho_j$ for $\rho_j \in \mathcal{P}$, and
$\mathcal{M}^{\times n} \subseteq \mathcal{M}^{\prime}$ where
$\mathcal{M}^{\times n}$ consists of all POVMs of the form
$\{\otimes_{j=1}^n E^{(j)}_{k_j}\}$ for $\{E^{(j)}_{k_j}\} \in
\mathcal{M}$.
\end{definition}

\begin{definition}
An ontological model $\Theta = (\Lambda, \Sigma, \Delta, \Xi)$ of a
fragment $\mathfrak{F}$ is \emph{compatible} with an ontological
model $\Theta^n = (\Lambda^{\prime}, \Sigma^{\prime},
\Delta^{\prime}, \Xi^{\prime})$ for an $n$-fold product state
fragment $\mathfrak{F}^n$ with factors $\mathfrak{F}$ if it
satisfies:
\begin{itemize}
\item The Cartesian Product Assumption (CPA):
\begin{align}
\Lambda^{\prime} & = \Lambda^{\times n}, & \Sigma^{\prime} & =
\Sigma^{\otimes n}.
\end{align}
\item The No-Correlation Assumption (NCA):
$\Delta^{\prime}_{\otimes_{j=1}^n \rho_j}$ consists of all product
measures of the form $\times_{j=1}^n\mu_j$, where $\mu_j \in
\Delta_{\rho_j}$.
\end{itemize}
\end{definition}

\begin{definition}
An ontological model for a PM fragment $\mathfrak{F}$ that
reproduces the quantum predictions satisfies the PIP if it is
compatible with an ontological model for any $n$-fold product state
fragment $\mathfrak{F}^n$ with factors $\mathfrak{F}$ that
reproduces the quantum predictions.
\end{definition}

\subsection{The main result}

\label{Main}

Recalling Definition~\ref{def:POEM:PON} of a $\psi$-ontic model, we
are now in a position to prove the main result of the Pusey--Barrett--Rudolph paper.

\begin{theorem}[The Pusey--Barrett--Rudolph Theorem]
\label{thm:Main:Pusey--Barrett--Rudolph}
Let $\mathfrak{F} = \langle \Hilb, \mathcal{P}, \mathcal{M} \rangle$
be a PM fragment where $\mathcal{P}$ contains all
pure states on $\Hilb$.  Then, any ontological model of
$\mathfrak{F}$ that reproduces the quantum predictions and satisfies
the PIP is $\psi$-ontic.
\end{theorem}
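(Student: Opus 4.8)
The plan is to prove the contrapositive route via antidistinguishability. Suppose, for contradiction, that the model is $\psi$-epistemic, so there exist distinct pure states $\Proj{\psi_0}, \Proj{\psi_1} \in \mathcal{P}$ and measures $\mu_0 \in \Delta_{\psi_0}$, $\mu_1 \in \Delta_{\psi_1}$ with $D(\mu_0,\mu_1) < 1$, equivalently (by Theorem~\ref{prop:Anti:varoverlap}) $L(\mu_0,\mu_1) = 1 - D(\mu_0,\mu_1) > 0$. Since we may apply a unitary (states are all pure, all measurements available), assume without loss of generality that $\BraKet{\psi_0}{\psi_1} = \cos\theta$ for some $\theta \in (0,\pi/2]$; write the two states in a shared two-dimensional plane as $\Ket{\psi_0} = \cos(\theta/2)\Ket{0} + \sin(\theta/2)\Ket{1}$ and $\Ket{\psi_1} = \cos(\theta/2)\Ket{0} - \sin(\theta/2)\Ket{1}$.

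Next I would invoke the PIP to pass to $n$ copies. Form the $n$-fold product state fragment, and for each of the $2^n$ binary strings $x \in \{0,1\}^n$ consider the product state $\Ket{\psi_x} = \bigotimes_{j=1}^n \Ket{\psi_{x_j}}$. The PIP guarantees an ontological model of the $n$-fold fragment in which $\Ket{\psi_x}$ is represented by the product measure $\mu_x = \times_{j=1}^n \mu_{x_j}$. A short computation with the overlap for product measures (using the dominated form Eq.~\eqref{eq:Anti:domform} and Fubini) gives $L(\{\mu_x\}_{x}) \ge \prod_{j} L(\mu_0,\mu_1) = L(\mu_0,\mu_1)^n > 0$ — more carefully, restricting to a suitable subset of $2^n$ strings on which the copies are ``as overlapping as possible'' still keeps the $n$-way overlap strictly positive. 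The heart of the argument is the quantum fact, originally due to Pusey--Barrett--Rudolph, that for $\theta$ bounded away from $0$ there is an $n$ (e.g.\ $n$ large enough that $\tan(\theta/2) \ge 2^{-1/n}$, or simply $n=2$ together with an entangled measurement on $\mathbb{C}^4$) and a projective measurement $M$ on $\Hilb^{\otimes n}$ whose $2^n$ outcomes $\{E_x\}$ satisfy $\Tr{E_x \Proj{\psi_x}} = 0$ for every $x$; that is, the set $\{\Proj{\psi_x}\}$ is antidistinguishable. I would present this measurement explicitly as a product of single-qubit unitaries followed by a fixed entangling circuit, verifying the antidistinguishing relations by the standard direct calculation.

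The contradiction then closes immediately: by Theorem~\ref{thm:Anti:antiover}, antidistinguishability of $\{\Proj{\psi_x}\}$ forces $L(\{\mu_x\}) = 0$ in any ontological model reproducing the quantum predictions, contradicting $L(\{\mu_x\}) \ge L(\mu_0,\mu_1)^n > 0$. Hence no such overlapping pair $\mu_0, \mu_1$ can exist, every pair of distinct pure states is ontologically distinct, and the model is $\psi$-ontic. The main obstacle — and the only genuinely quantum input — is constructing the antidistinguishing measurement on $n$ copies and pinning down how large $n$ must be as a function of the overlap $\BraKet{\psi_0}{\psi_1}$; the combinatorial bookkeeping that the product overlap stays positive, and the measure-theoretic care needed so that ``$D(\mu_0,\mu_1)<1$'' really yields a strictly positive $n$-way overlap after taking products, are the secondary technical points to handle carefully.
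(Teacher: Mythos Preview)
Your proposal follows the original Pusey--Barrett--Rudolph strategy and is essentially correct, but the paper takes a somewhat different (and simpler) route due to Moseley. Both approaches share the same skeleton: use the PIP to represent the $2^n$ product states $\Proj{\psi_{\bm{j}}}$ by product measures, show that the $2^n$-way overlap factorizes as $L(\{\mu_{\bm{j}}\}) = L(\mu_0,\mu_1)^n$ (this is Lemma~\ref{lem:Main:prod}; note it is an \emph{equality}, so your hedge about ``restricting to a suitable subset'' is unnecessary), and then obtain a contradiction from antidistinguishability via Theorem~\ref{thm:Anti:antiover}.

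The difference lies in how the antidistinguishing measurement is supplied. You propose, following the original PBR paper, to construct for each inner product an $n$ large enough and an explicit entangled measurement on $\Hilb^{\otimes n}$ antidistinguishing all $2^n$ product states simultaneously. The paper instead proves only the $n=2$ case explicitly (Lemma~\ref{lem:Main:anti}), which requires $\Tr{\Proj{\psi_0}\Proj{\psi_1}} \le \tfrac{1}{2}$, and then bootstraps: when the inner product exceeds $\tfrac{1}{2}$, first pass to $\Proj{\psi_0}^{\otimes m}$ and $\Proj{\psi_1}^{\otimes m}$ with $m$ chosen so their inner product drops below $\tfrac{1}{2}$, apply the two-copy argument to \emph{those} to conclude they are ontologically distinct in the $m$-copy model, and then use $L(\mu_{\bm{0}},\mu_{\bm{1}}) \ge L(\{\mu_{\bm{j}}\}_{\bm{j}\in\{0,1\}^m})$ together with Lemma~\ref{lem:Main:prod} once more. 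This buys a cleaner proof: only one explicit four-outcome POVM ever needs to be written down, and the CPTP-map trick in Lemma~\ref{lem:Main:anti} handles the whole range $0 < \Tr{\Proj{\psi_0}\Proj{\psi_1}} \le \tfrac{1}{2}$ uniformly. Your route is valid but requires exhibiting and verifying the general $2^n$-outcome measurement, which is more laborious. One minor phrasing point: ``apply a unitary'' is not available in a bare PM fragment; what you really mean, and what suffices, is simply a choice of basis for the two-dimensional span of $\Ket{\psi_0}$ and $\Ket{\psi_1}$.
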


The proof strategy adopted here is due to Moseley \cite{Moseley2014},
and is simpler than that of the original Pusey--Barrett--Rudolph paper.  It is based on
two lemmas.  The first establishes a connection between
antidistinguishability of product states and ontological distinctness,
and the second establishes that the required antidistinguishability
holds.  Because they are quite technical, it is worth pausing to
sketch a special case.

\begin{example}
\label{exa:Main:Pusey--Barrett--Rudolph}
Let $\left \{ \Ket{0}, \Ket{1} \right \}$ be a basis for
$\mathbb{C}^2$ and consider the two states $\Proj{0}$ and
$\Proj{+}$, where $\Ket{+} = \frac{1}{\sqrt{2}} \left ( \Ket{0} +
\Ket{1} \right )$.  These two states are not orthogonal, so we
cannot use distinguishability to establish their ontological
distinctness.  Since there are only two of them, they cannot be
antidistinguishable either.  However, now consider two copies of the
system and the four states $\Proj{0}\otimes \Proj{0}$, $\Proj{0}
\otimes \Proj{+}$, $\Proj{+} \otimes \Proj{0}$ and $\Proj{+} \otimes
\Proj{+}$.  It turns out that these four states are
antidistinguishable.  The measurement required to antidistinguish
them is entangled and consists of the projectors onto the following
four orthonormal vectors.
\begin{align}
\Ket{\phi_{00}} & = \frac{1}{\sqrt{2}} \left ( \Ket{0}
\Ket{1} + \Ket{1} \Ket{0} \right) \\
\Ket{\phi_{01}} & = \frac{1}{\sqrt{2}} \left ( \Ket{0}
\Ket{-} + \Ket{1} \Ket{+} \right) \\
\Ket{\phi_{10}} & = \frac{1}{\sqrt{2}} \left ( \Ket{+}
\Ket{1} + \Ket{-} \Ket{0} \right) \\
\Ket{\phi_{11}} & = \frac{1}{\sqrt{2}} \left ( \Ket{+}
\Ket{-} + \Ket{-} \Ket{+} \right),
\end{align}
where $\Ket{-} = \frac{1}{\sqrt{2}} \left ( \Ket{0} - \Ket{1}
\right )$.  It is easy to check that $\Ket{\phi_{00}}$ is orthogonal to
$\Ket{0}\otimes \Ket{0}$, $\Ket{\phi_{01}}$ is orthogonal to
$\Ket{0} \otimes \Ket{+}$, $\Ket{\phi_{10}}$ is orthogonal to
$\Ket{+} \otimes \Ket{0}$, and $\Ket{\phi_{11}}$ is orthogonal to
$\Ket{+} \otimes \Ket{+}$, as required.

The second lemma shows that a similar thing happens for any pair of
pure states $\Proj{\psi_0}, \Proj{\psi_1}$ for which
$\Tr{\Proj{\psi_0}\Proj{\psi_1}} \leq \frac{1}{2}$.  Specifically,
the set of states
\begin{equation}
\left \{ \Proj{\psi_{0}} \otimes \Proj{\psi_{0}}, \Proj{\psi_{0}}
\otimes \Proj{\psi_{1}}, \Proj{\psi_{1}} \otimes
\Proj{\psi_{0}}, \Proj{\psi_{1}} \otimes \Proj{\psi_{1}} \right
\},
\end{equation}
is always antidistinguishable.

Returning to the example, by Theorem~\ref{thm:Anti:antiover},
antidistinguishability implies that, in an ontological model, the
probability measures corresponding to $\Proj{0}\otimes \Proj{0}$,
$\Proj{0} \otimes \Proj{+}$, $\Proj{+} \otimes \Proj{0}$ and
$\Proj{+} \otimes \Proj{+}$ can have no four-way overlap.  However,
if the PIP is assumed, this implies that $\Proj{0}$ and $\Proj{+}$
must be ontologically distinct.  As a rough sketch of this argument,
consider the case of a finite ontic state space, let $\mu_{0}$ and
$\mu_{+}$ be a pair of measures associated with $\Proj{0}$ and
$\Proj{+}$ in the ontological model, and assume that there is some
$\lambda$ to which they both assign nonzero probability.  According
to the PIP, the four product states must be represented by measures
of the form $\mu_{0} \times \mu_{0}$, $\mu_{0} \times \mu_{+}$,
$\mu_{+} \times \mu_{0}$, and $\mu_{+} \times \mu_{+}$.  Each of
these assigns a nonzero probability to $(\lambda, \lambda)$, since
both $\mu_{0}$ and $\mu_{+}$ assign nonzero probability to
$\lambda$, so there is a common ontic state to which all four
measures assign nonzero probability.  However,
antidistinguishability implies that there can be no four-way overlap
between the measures so there is a contradiction, and hence
$\Proj{0}$ and $\Proj{+}$ must be ontologically distinct.

The first lemma is a more formal and slightly more general version
of this argument.  The generalization is needed to adapt these ideas
to the case where $\Tr{\Proj{\psi_0}\Proj{\psi_1}} > \frac{1}{2}$.
\end{example}

\begin{lemma}
\label{lem:Main:prod}
Let $\mathfrak{F} = \langle \Hilb, \mathcal{P}, \mathcal{M} \rangle$
be a PM fragment with ontological model $\Theta = (\Lambda, \Sigma,
\Delta, \Xi)$ that reproduces the quantum predictions, and let
$\rho_0, \rho_1 \in \mathcal{P}$.  Let $\mathfrak{F}^n = \langle
\Hilb^{\otimes n}, \mathcal{P}^{\times n}, \mathcal{M}^{\prime}
\rangle$ be an $n$-fold product state fragment with factors
$\mathfrak{F}$ and let $\Theta^n = (\Lambda^{\times n},
\Sigma^{\otimes n}, \Delta^{\prime}, \Xi^{\prime})$ be an
ontological model of it that is compatible with $\Theta$.

For $\bm{j} \in \{0,1\}^n$, let
\begin{equation}
\rho_{\bm{j}} = \rho_{j_1} \otimes \rho_{j_2} \otimes \ldots
\otimes \rho_{j_n},
\end{equation}
where $\bm{j} = (j_1,j_2,\ldots,j_n)$.

If $L \left ( \left \{ \mu_{\bm{j}}\right \}_{\bm{j} \in \{0,1\}^n}
\right ) = 0$ for all possible choices of $\mu_{\bm{j}} \in
\Delta^{\prime}_{\rho_{\bm{j}}}$ then $\rho_0$ and $\rho_1$ are
ontologically distinct in $\Theta$.
\end{lemma}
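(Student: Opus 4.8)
The plan is to argue by contradiction, converting a single pair of overlapping measures on $\Lambda$ into a whole family of overlapping product measures on $\Lambda^{\times n}$. By Theorem~\ref{prop:Anti:varoverlap}, $\rho_0$ and $\rho_1$ are ontologically distinct in $\Theta$ exactly when $L(\mu_0,\mu_1)=0$ for every $\mu_0\in\Delta_{\rho_0}$, $\mu_1\in\Delta_{\rho_1}$, so if they were \emph{not} ontologically distinct there would be a pair with $L(\mu_0,\mu_1)=c>0$. The goal is then to produce, from this one pair, a choice of measures $\mu_{\bm{j}}\in\Delta^{\prime}_{\rho_{\bm{j}}}$ with $L(\{\mu_{\bm{j}}\}_{\bm{j}\in\{0,1\}^n})\ge c^n>0$, directly contradicting the hypothesis.

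First I would fix a dominating probability measure $m=\tfrac12(\mu_0+\mu_1)$ with densities $p_0,p_1$, and introduce the ``common part'' $r=\min(p_0,p_1)$, defining a finite sub-measure $\omega(\Omega)=\int_\Omega r\,dm$; by Eq.~\eqref{eq:Anti:domform} its total mass is $\omega(\Lambda)=L(\mu_0,\mu_1)=c$, and by construction $\omega(\Omega)\le\mu_i(\Omega)$ for all $\Omega\in\Sigma$ and $i\in\{0,1\}$. Next, for each $\bm{j}=(j_1,\dots,j_n)\in\{0,1\}^n$ I would use the No-Correlation Assumption (which holds because $\Theta^n$ is compatible with $\Theta$) to place the product measure $\mu_{\bm{j}}=\mu_{j_1}\times\cdots\times\mu_{j_n}$ in $\Delta^{\prime}_{\rho_{\bm{j}}}$. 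The key comparison is that the $n$-fold product $\omega^{\times n}$ is setwise dominated by every $\mu_{\bm{j}}$ on the product $\sigma$-algebra $\Sigma^{\otimes n}$: since $m$ is $\sigma$-finite, Tonelli's theorem gives that $\omega^{\times n}$ and $\mu_{\bm{j}}$ have densities $\prod_k r(\lambda_k)$ and $\prod_k p_{j_k}(\lambda_k)$ with respect to $m^{\times n}$, and the former is pointwise $\le$ the latter because $r\le p_{j_k}$ pointwise for each $k$. Finally, for any countable partition $\{\Omega_k\}$ of $\Lambda^{\times n}$ one has $\sum_k\min_{\bm{j}}\mu_{\bm{j}}(\Omega_k)\ge\sum_k\omega^{\times n}(\Omega_k)=\omega^{\times n}(\Lambda^{\times n})=c^n$, so taking the infimum over partitions yields $L(\{\mu_{\bm{j}}\})\ge c^n>0$, the desired contradiction.

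The main obstacle is measure-theoretic bookkeeping rather than any conceptual difficulty: one must be sure that ``overlap'' really behaves multiplicatively under products, i.e.\ that the domination $\omega^{\times n}\le\mu_{\bm{j}}$ holds on \emph{all} of $\Sigma^{\otimes n}$ and not merely on measurable rectangles, and one must handle partitions that need not respect the product structure. Routing everything through densities with respect to the $\sigma$-finite product measure $m^{\times n}$ handles both issues cleanly, and it is worth recording that in the finite-sample-space case the argument collapses to the elementary observation that if some $\lambda$ has $\mu_0(\lambda),\mu_1(\lambda)>0$ then the diagonal point $(\lambda,\dots,\lambda)$ is assigned positive mass by all $2^n$ of the product measures $\mu_{\bm{j}}$.
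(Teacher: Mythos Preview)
Your proof is correct and follows essentially the same route as the paper's: both use the NCA to form the product measures $\mu_{\bm j}=\mu_{j_1}\times\cdots\times\mu_{j_n}$ and then exploit, via densities with respect to $m^{\times n}$, the fact that $\min_{\bm j}\prod_k p_{j_k}(\lambda_k)=\prod_k\min_j p_j(\lambda_k)$. The only cosmetic difference is that the paper computes the exact identity $L(\{\mu_{\bm j}\})=[L(\mu_0,\mu_1)]^n$ directly from the density formula for $L$, whereas you frame it contrapositively and obtain only the inequality $L(\{\mu_{\bm j}\})\ge c^n$ via the dominated sub-measure $\omega^{\times n}$; either suffices.
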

\begin{proof}
Consider any choice of $\mu_j \in \Delta_{\rho_j}$.  By the NCA,
\begin{equation}
\mu_{\bm{j}} = \mu_{j_1} \times \mu_{j_2} \times \cdots \times \mu_{j_n},
\end{equation}
is in $\Delta^{\prime}_{\rho_{\bm{j}}}$ and hence, by assumption $L
\left ( \left \{ \mu_{\bm{j}}\right \}_{\bm{j} \in \{0,1\}^n}\right
) = 0$.  Let $m$ be a measure that dominates $\mu_0$ and $\mu_1$,
and let $p_0$ and $p_1$ be corresponding densities. Then,
\begin{eqnarray}
&&L \left ( \left \{ \mu_{\bm{j}}\right \}_{\bm{j} \in \{0,1\}^n}\right ) =
\int_{\Lambda_1} dm(\lambda_1) \times \int_{\Lambda_2} dm(\lambda_2)
\times\ldots\nonumber\\ &&\quad\quad \times \int_{\Lambda_n} dm(\lambda_n)
\min_{\bm{j} \in \{0,1\}^n} \left [ \prod_{k = 1}^n p_{j_k}(\lambda_k)
\right ].
\end{eqnarray}
Here, the subscripts are just being used to identify the different
copies of $\Lambda$, so $\Lambda_1 = \Lambda_2 = \ldots = \Lambda_n
= \Lambda$ and $\lambda_k \in \Lambda_k$.  Since each term
$p_{j_k}(\lambda)$ in the product is positive, the minimum is
attained by minimizing each term individually.  We therefore have
\begin{align}
L \left ( \left \{ \mu_{\bm{j}}\right \}_{\bm{j} \in \{0,1\}} \right ) &
= \left ( \int_{\Lambda}dm(\lambda) \min_{j \in \{0,1\}} \left
[ p_{j}(\lambda) \right ] \right )^n \\
& = \left [ L \left ( \mu_0, \mu_1 \right ) \right ]^n,
\end{align}
and since $L \left ( \left \{ \mu_{\bm{j}}\right \}_{\bm{j} \in
\{0,1\}^n} \right ) = 0$, this implies that $L \left ( \mu_0,
\mu_1\right ) = 0$ because it is a positive function.  Since this
holds for any choice of $\mu_j \in \Delta_{\rho_j}$, this implies
that $\rho_0$ and $\rho_1$ are ontologically distinct in $\Theta$.
\end{proof}

\begin{lemma}
\label{lem:Main:anti}
Let $\mathfrak{F} = \langle \Hilb, \mathcal{P}, \mathcal{M} \rangle$
be a PM fragment and let $\Proj{\psi_0}, \Proj{\psi_1} \in
\mathcal{P}$ be pure states that satisfy $\mathrm{Tr} \left (
\Proj{\phi}\Proj{\psi} \right ) \leq \frac{1}{2}$.  Then there
exists a two fold product state fragment with factors $\mathfrak{F}$
in which the set of states
\begin{equation}
\mathcal{D} = \left \{\Proj{\psi_0} \otimes \Proj{\psi_0},
\Proj{\psi_0} \otimes \Proj{\psi_1}, \Proj{\psi_1} \otimes
\Proj{\psi_0}, \Proj{\psi_1} \otimes \Proj{\psi_1} \right \}
\end{equation}
is antidistinguishable.
\end{lemma}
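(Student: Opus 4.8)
The plan is to adapt the explicit construction of Pusey, Barrett and Rudolph. All four states in $\mathcal{D}$ are supported on $V\otimes V$, where $V=\mathrm{span}\{\Ket{\psi_0},\Ket{\psi_1}\}\subseteq\Hilb$, so it suffices to produce an orthonormal basis $\{\Ket{\xi_{\bm{j}}}\}_{\bm{j}\in\{0,1\}^2}$ of $V\otimes V$ satisfying
\begin{equation}
\Bra{\xi_{\bm{j}}}\left(\Ket{\psi_{j_1}}\otimes\Ket{\psi_{j_2}}\right)=0\qquad\text{for all }\bm{j}=(j_1,j_2).
\end{equation}
Given such a basis, let $\Pi$ be the projector onto $V\otimes V$ and put $E_{\bm{j}}=\KetBra{\xi_{\bm{j}}}{\xi_{\bm{j}}}$ for $\bm{j}\neq(0,0)$ and $E_{00}=\KetBra{\xi_{00}}{\xi_{00}}+\left(I-\Pi\right)$. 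This is a four-outcome POVM on $\Hilb\otimes\Hilb$ with $\Tr{E_{\bm{j}}\left(\Proj{\psi_{j_1}}\otimes\Proj{\psi_{j_2}}\right)}=0$ for every $\bm{j}$, since each $\Proj{\psi_{j_1}}\otimes\Proj{\psi_{j_2}}$ is supported inside $V\otimes V$. Because we are constructing the two fold product state fragment ourselves, we may take its measurement set to be $\mathcal{M}^{\times 2}$ together with this single POVM, and then $\mathcal{D}$ is antidistinguishable in it.

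First I would normalize the setup. Replacing $\Ket{\psi_1}$ by a phase multiple we may assume $\BraKet{\psi_0}{\psi_1}=\cos\theta$ with $\theta\in(0,\pi/2]$, so the hypothesis $\Tr{\Proj{\psi_0}\Proj{\psi_1}}\leq\tfrac12$ reads $\cos^{2}\theta\leq\tfrac12$, i.e.\ $\theta\geq\pi/4$. Choosing an orthonormal basis $\{\Ket{0},\Ket{1}\}$ of $V$ adapted to the pair, I may take $\Ket{\psi_0}=\cos\tfrac{\theta}{2}\Ket{0}+\sin\tfrac{\theta}{2}\Ket{1}$ and $\Ket{\psi_1}=\cos\tfrac{\theta}{2}\Ket{0}-\sin\tfrac{\theta}{2}\Ket{1}$. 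The extreme case $\theta=\pi/2$ is immediate: then $\{\Ket{\psi_0},\Ket{\psi_1}\}$ is orthonormal, $\mathcal{D}$ is an orthonormal set, hence operationally distinguishable by a projective POVM that we adjoin to the fragment, and a distinguishing POVM becomes antidistinguishing after a fixed-point-free permutation of its outcomes. The other extreme $\theta=\pi/4$ is precisely the explicit example of Example~\ref{exa:Main:Pusey--Barrett--Rudolph}, under the identification $\Ket{\psi_0}\leftrightarrow\Ket{0}$, $\Ket{\psi_1}\leftrightarrow\Ket{+}$, whose antidistinguishing measurement was displayed there.

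For the remaining range $\pi/4<\theta<\pi/2$ I would write the basis down explicitly. The natural starting point is the symmetric ansatz $\Ket{\eta_{\bm{j}}}\propto\Ket{\psi_{j_1}^{\perp}}\otimes\Ket{\psi_{j_2}}+\Ket{\psi_{j_1}}\otimes\Ket{\psi_{j_2}^{\perp}}$, where $\Ket{\psi_a^{\perp}}$ is the unit vector in $V$ orthogonal to $\Ket{\psi_a}$: each $\Ket{\eta_{\bm{j}}}$ is manifestly orthogonal to $\Ket{\psi_{j_1}}\otimes\Ket{\psi_{j_2}}$ by the relations $\BraKet{\psi_a}{\psi_a^{\perp}}=0$. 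However, using $\BraKet{\psi_0}{\psi_1}=\cos\theta$, $\BraKet{\psi_0^{\perp}}{\psi_1^{\perp}}=-\cos\theta$ and $\BraKet{\psi_0}{\psi_1^{\perp}}=\BraKet{\psi_0^{\perp}}{\psi_1}=\sin\theta$ one finds that the $\Ket{\eta_{\bm{j}}}$ form an orthonormal family only at $\theta=\pi/4$: the inner products $\BraKet{\eta_{00}}{\eta_{11}}$ and $\BraKet{\eta_{01}}{\eta_{10}}$ are proportional to $\cos 2\theta$, hence nonvanishing for $\theta\neq\pi/4$. To push past the boundary one replaces this ansatz by a one-parameter deformation of it that still respects the orthogonality constraints---this is available because $\Ket{\psi_0}^{\otimes 2}$ and $\Ket{\psi_1}^{\otimes 2}$ have no component along the antisymmetric vector, while $\Ket{\psi_0}\otimes\Ket{\psi_1}$ and $\Ket{\psi_1}\otimes\Ket{\psi_0}$ are each orthogonal to the symmetric vector $\tfrac{1}{\sqrt{2}}\left(\Ket{0}\Ket{1}+\Ket{1}\Ket{0}\right)$---and one solves for the value of the parameter that simultaneously restores all four normalizations and mutual orthogonalities. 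Equivalently, one packages the whole thing as a single two-qubit unitary $U$ on $V\otimes V$ with $\Bra{\bm{j}}U\left(\Ket{\psi_{j_1}}\otimes\Ket{\psi_{j_2}}\right)=0$ and sets $\Ket{\xi_{\bm{j}}}=U^{\dagger}\Ket{\bm{j}}$, so that orthonormality is automatic and orthogonality holds by construction.

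The hard part will be this last step: identifying the correct one-parameter family of corrections and showing the parameter can be chosen so that the deformed family is genuinely orthonormal. It is exactly the solvability of this system that produces the inequality $\Tr{\Proj{\psi_0}\Proj{\psi_1}}=\cos^{2}\theta\leq\tfrac12$; the symmetric guess alone does not suffice once $\theta>\pi/4$, so a genuinely entangling correction is unavoidable. Everything else---the reduction to $V\otimes V$, the two degenerate cases, the elementary inner-product computations, and the passage from an orthonormal basis to a bona fide POVM inside a product state fragment whose measurement set is ours to choose---is routine.
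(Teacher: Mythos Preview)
Your plan is viable and is essentially the original Pusey--Barrett--Rudolph route: work in the two-dimensional span $V$, and for the generic angle exhibit a two-qubit unitary $U$ (equivalently, an orthonormal basis $\{\Ket{\xi_{\bm{j}}}\}$ of $V\otimes V$) with the four antidistinguishing orthogonality constraints. The paper, however, takes a genuinely different and shorter path due to Moseley. Rather than deforming the $\theta=\pi/4$ basis and solving for a parameter, it observes that whenever $\Tr{\Proj{\psi_0}\Proj{\psi_1}}\leq\tfrac12$ the pair $(\Proj{\psi_0},\Proj{\psi_1})$ is \emph{more} distinguishable than $(\Proj{0},\Proj{+})$, so there is a completely-positive trace-preserving map $\mathcal{E}$ (given explicitly by two Kraus operators $M_0,M_1$) with $\mathcal{E}(\Proj{\psi_0})=\Proj{0}$ and $\mathcal{E}(\Proj{\psi_1})=\Proj{+}$. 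Preceding the known $\theta=\pi/4$ projective measurement $\{\Proj{\Phi_{jk}}\}$ by $\mathcal{E}\otimes\mathcal{E}$ yields the POVM $E_{jk}=\sum_{r,s}M_r^{\dagger}\otimes M_s^{\dagger}\,\Proj{\Phi_{jk}}\,M_r\otimes M_s$, and the antidistinguishing property is then a one-line trace computation. This buys two things over your approach: it completely sidesteps the step you correctly flag as ``the hard part'' (no parameter to solve for, no orthonormality to restore), and it makes transparent \emph{why} the bound $\tfrac12$ appears---it is exactly the condition $M_0^{\dagger}M_0+M_1^{\dagger}M_1=I$ for the Kraus operators to define a genuine channel. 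The price is that the resulting $\{E_{jk}\}$ is a bona fide POVM but not a projective measurement, whereas your route (if completed) produces an orthonormal basis.

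As written, your proposal stops short of a proof: you describe the deformation strategy but do not exhibit the corrected family or verify solvability of the resulting constraints in the range $\pi/4<\theta<\pi/2$. That is precisely the content of the lemma, so either carry out the explicit PBR circuit construction (apply $Z_{\beta}\otimes Z_{\beta}$, then a controlled-phase $R_{\alpha}$, then $H\otimes H$, and solve $\tan(\alpha/2)\tan^{2}(\theta/2)=1$, $\tan\beta=\tan(\alpha/2)\tan(\theta/2)$ for $\alpha,\beta$), or adopt the paper's CPT-map reduction, which makes the whole thing a short verification.
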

\begin{proof}
The result follows if there exists a POVM
$\{E_{00},E_{01},E_{10},E_{11}\}$ on $\Hilb \otimes \Hilb$ that
antidistinguishes $\mathcal{D}$, i.e.
\begin{equation}
\label{eq:Main:anti}
\Tr{E_{jk} \Proj{\psi_j} \otimes \Proj{\psi_k}} = 0
\end{equation}
Let $\Ket{\psi_0}$ and $\Ket{\psi_1}$ be vector representatives of
$\Proj{\psi_0}$ and $\Proj{\psi_1}$ such that
$\BraKet{\psi_0}{\psi_1}$ is real and positive.  Such vectors can
always be found by appropriate choice of global phases.  Then we
have $\BraKet{\psi_0}{\psi_1} \leq \frac{1}{\sqrt{2}}$ so we can
write $\BraKet{\psi_0}{\psi_1} = \sin \vartheta$, where $0 \leq
\vartheta \leq \frac{\pi}{4}$.  We can then choose an orthonormal
basis $\left \{ \Ket{0}, \Ket{1} \right \}$ for the two dimensional
subspace spanned by $\Ket{\psi_0}$ and $\Ket{\psi_1}$ such that
\begin{align}
\Ket{\psi_0} & = \Ket{0}, & \Ket{\psi_1} = \sin \vartheta \Ket{0}
+ \cos \vartheta \Ket{1}.
\end{align}
Note that the POVM elements $E_{jk}$ only need to be defined on this
subspace because the projector onto its orthogonal complement can be
added on to any one of them without changing the measurement
probabilities given in Eq.~\eqref{eq:Main:anti}.  Hence, the special
case $\vartheta = \frac{\pi}{4}$ is just the case dealt with in
Example~\ref{exa:Main:Pusey--Barrett--Rudolph}, for which the projectors $E_{jk} =
\Proj{\Phi_{jk}}$ provide the required antidistinguishing POVM.

The basic idea to deal with the case $\vartheta < \frac{\pi}{4}$ is
to note that in this case $\Proj{\psi_0}$ and $\Proj{\psi_1}$ are
more distinguishable than $\Proj{0}$ and $\Proj{+}$, and there is
always a physically allowed transformation to map a pair of states
to any pair of less distinguishable states.  The antidistinguishing
measurement therefore consists of performing the transformation
that maps $\Proj{\psi_0}$ to $\Proj{0}$, i.e.\ leaves it invariant,
and maps $\Proj{\psi_1}$ to $\Proj{+}$, followed by performing the
projective measurement $\left \{ \Proj{\Phi_{jk}} \right \}_{j,k \in
\{0,1\}}$.  Taken together, these two steps define the required
POVM.

More explicitly, consider the linear map $\mathcal{E}$ on the set of
density operators on $\Hilb$ defined by
\begin{equation}
\mathcal{E}(\rho) = M_0 \rho M_0^{\dagger} + M_1 \rho M_1^{\dagger},
\end{equation}
where the operators $M_0$ and $M_1$ are given by
\begin{align}
& M_0 = \KetBra{0}{0} + \tan \vartheta \KetBra{1}{1},\\
& M_1 = \sqrt{\frac{1 - \tan^2 \vartheta}{2}}\left ( \Ket{0} + \Ket{1} \right ) \Bra{1}.
\end{align}
This is an example of a completely-positive, trace-preserving map,
which is a dynamically allowed transformation of the set of quantum
states.  It does not matter if you do not know this formalism, since
for present purposes all that matters is that
\begin{align}
\label{eq:Main:incip}
\mathcal{E}(\Proj{\psi}) & = \Proj{0}, & \mathcal{E}(\Proj{\phi})
= \Proj{+},
\end{align}
which can easily be verified, and that, when $0 \leq \vartheta \leq
\frac{\pi}{4}$,
\begin{equation}
\label{eq:Main:idsum}
M_0^{\dagger}M_0 + M_1^{\dagger}M_1 = I_{\Hilb},
\end{equation}
where $I_{\Hilb}$ is the identity operator on $\Hilb$, which is
again easily verified.

Next, define the operators
\begin{equation}
E_{jk} = \sum_{r,s = 0}^1 M_r^{\dagger} \otimes M_s^{\dagger}
\Proj{\Phi_{jk}} M_r \otimes M_s.
\end{equation}
These operators form the required antidistinguishing POVM\@.  To see
this, we first have to check that $\{E_{jk}\}_{j,k \in \{0,1\}}$ is
a POVM and then that Eq.~\eqref{eq:Main:anti} is satisfied.  To see
that the operators $E_{jk}$ are positive note that any operator of
the form $N^{\dagger}N$ is positive and hence setting $N =
\Proj{\Phi_{jk}} M_r \otimes M_s$ shows that
\begin{equation}
M_r^{\dagger} \otimes M_s^{\dagger} \Proj{\Phi_{jk}} \Proj{\Phi_{jk}}
M_r \otimes M_s = M_r^{\dagger} \otimes M_s^{\dagger}
\Proj{\Phi_{jk}} M_r \otimes M_s,
\end{equation}
is positive, where we have used the fact that $\Proj{\Phi_{jk}}$ is
a projector, and hence idempotent.  Further, a sum of positive
operators is positive, and hence the $E_{jk}$'s are positive.  We
next need to check that they sum to the identity on $\Hilb \otimes
\Hilb$.  This follows from the fact that $\sum_{jk} \Proj{\Phi_{jk}}
= I_{\Hilb \otimes \Hilb}$ together with Eq.~\eqref{eq:Main:idsum}
as follows.
\begin{align}
\sum_{j,k = 0}^1 E_{jk} & = \sum_{j,k=}^1 \sum_{r,s = 0}^1
M_r^{\dagger} \otimes
M_s^{\dagger} \Proj{\Phi_{jk}} M_r \otimes M_s \\
& = \sum_{r,s = 0}^1 M_r^{\dagger} \otimes M_s^{\dagger} \left (
\sum_{j,k=0}^1\Proj{\Phi_{jk}}\right ) M_r
\otimes M_s \\
& = \sum_{r,s = 0}^1 M_r^{\dagger} \otimes M_s^{\dagger} I_{\Hilb
\otimes \Hilb} M_r \otimes M_s \\
& = \sum_{r,s = 0}^1 M_r^{\dagger} \otimes M_s^{\dagger} M_r
\otimes M_s \\
& = \sum_{r,s = 0}^1 M_r^{\dagger}M_r \otimes M_s^{\dagger} M_s \\
& = \left ( \sum_{r=0}^1 M_r^{\dagger}M_r \right ) \otimes \left
( \sum_{s=0}^1 M_s^{\dagger}M_s \right ) \\
& = I_{\Hilb} \otimes I_{\Hilb} \\
& = I_{\Hilb \otimes \Hilb}.
\end{align}

Finally, the $E_{jk}$'s satisfy Eq.~\eqref{eq:Main:anti} by virtue
of Eq.~\eqref{eq:Main:incip} and the cyclic property of the trace.
For example, $\Tr{E_{00} \Proj{\psi_0} \otimes \Proj{\psi_0}}$ is zero by
the following argument.
\begin{align}
&\Tr{E_{00} \Proj{\psi_0} \otimes \Proj{\psi_0}} \nonumber\\
& = \Tr{\sum_{r,s
= 0}^1 M_r^{\dagger} \otimes M_s^{\dagger} \Proj{\Phi_{00}}
M_r \otimes
M_s \Proj{\psi_0} \otimes \Proj{\psi_0}} \\
& = \Tr{\Proj{\Phi_{00}} \left [ \sum_{r=0}^1 M_r \Proj{\psi}
M_r^{\dagger} \right ] \otimes \left [ \sum_{s=0}^1 M_s
\Proj{\psi_0} M_s^{\dagger} \right ]} \\
& = \Tr{\Proj{\Phi_{00}} \mathcal{E}(\Proj{\psi_0}) \otimes
\mathcal{E}(\Proj{\psi_0})} \\
& = \Tr{\Proj{\Phi_{00}} \Proj{0} \otimes \Proj{0}} = 0.
\end{align}
The other antidistinguishability conditions follow from similar
arguments.
\end{proof}

\begin{proof}[Proof of Theorem~\ref{thm:Main:Pusey--Barrett--Rudolph}]
Consider any pair of pure states $\Proj{\psi_0}, \Proj{\psi_1} \in
\mathcal{P}$ and an ontological model $\Theta = (\Lambda, \Sigma,
\Delta, \Xi)$ of $\mathfrak{F}$ that reproduces the quantum
predictions and satisfies the PIP\@.  If
$\Tr{\Proj{\psi_0}\Proj{\psi_1}} \leq \frac{1}{2}$ then, by
Lemma~\ref{lem:Main:anti}, there exists a two fold product state
fragment $\mathfrak{F}^2$ in which the states
\begin{equation}
\left \{ \Proj{\psi_0} \otimes \Proj{\psi_0}, \Proj{\psi_0}
\otimes \Proj{\psi_1}, \Proj{\psi_1} \otimes \Proj{\psi_0},
\Proj{\psi_1} \otimes \Proj{\psi_1}\right \}
\end{equation}
are antidistinguishable.  Consider an ontological model $\Theta^2 =
(\Lambda \times \Lambda, \Sigma \otimes \Sigma, \Delta^{\prime},
\Xi^{\prime})$ of $\mathfrak{F}^2$ that is compatible with $\Theta$.
Then, by Theorem~\ref{thm:Anti:antiover} for any choice of measures
$\mu_{jk} \in \Delta^{\prime}_{\Proj{\psi_j}\otimes\Proj{\psi_k}}$,
\begin{equation}
L \left ( \left \{ \mu_{jk} \right \}_{j,k \in \{0,1\}} \right
) = 0.
\end{equation}
Then, by Lemma~\ref{lem:Main:prod}, $\Proj{\psi_0}$ and
$\Proj{\psi_1}$ must be ontologically distinct in any ontological
model of $\mathfrak{F}$ that satisfies the PIP and reproduces the
quantum predictions.

It remains to deal with the case where $\frac{1}{2}<\Tr{\Proj{\psi_0}\Proj{\psi_1}}<1$. For this, note that
\begin{equation}
\Tr{\Proj{\psi_0}^{\otimes n} \Proj{\psi_1}^{\otimes n}} =
\Tr{\Proj{\psi_0} \Proj{\psi_1}}^n.
\end{equation}
Because of this, we can always find an $n$ such that
\begin{equation}
\Tr{\Proj{\psi_0}^{\otimes n} \Proj{\psi_1}^{\otimes n}} \leq
\frac{1}{2}.
\end{equation}
Now, consider an $n$-fold product state fragment $\mathfrak{F}^n$.
This must contain the states $\Proj{\psi_0}^{\otimes n}$ and
$\Proj{\psi_1}^{\otimes n}$.  For any ontological model $\Theta =
(\Lambda, \Sigma, \Delta, \Xi)$ of $\mathfrak{F}$ that satisfies the
PIP, we can construct a compatible ontological model $\Theta^n =
(\Lambda^{\times n}, \Sigma^{\otimes n}, \Delta^{\prime},
\Xi^{\prime})$ of $\mathfrak{F}^n$.  This model will also satisfy
the PIP (with respect to taking products of $\mathfrak{F}^n$ with
itself), since a model compatible with $\mathfrak{F}$ is necessarily
also compatible with $\mathfrak{F}^n$.  Therefore, we can apply the
argument used in the $\Tr{\Proj{\psi_0}\Proj{\psi_1}} \leq
\frac{1}{2}$ case to the states $\Proj{\psi_0}^{\otimes n}$ and
$\Proj{\psi_1}^{\otimes n}$ and deduce that $\Proj{\psi_0}^{\otimes
n}$ and $\Proj{\psi_1}^{\otimes n}$ must be ontologically distinct
in $\Theta^n$.  This means that
\begin{equation}
L \left ( \left \{ \mu_{\bm{0}}, \mu_{\bm{1}} \right \}
\right ) = 0,
\end{equation}
for any choice of $\mu_{\bm{0}} \in
\Delta^{\prime}_{\Proj{\psi_0}^{\otimes n}}$ and $\mu_{\bm{1}} \in
\Delta^{\prime}_{\Proj{\psi_1}^{\otimes n}}$, where $\bm{0}$ denotes
the $n$-bit string $(0,0,\ldots,0)$ and $\bm{1}$ denotes the $n$-bit
string $(1,1,\ldots,1)$.  Now note that
\begin{equation}
L \left ( \left \{ \mu_{\bm{0}}, \mu_{\bm{1}} \right \} \right )
\geq L \left ( \left \{ \mu_{\bm{j}}\right \}_{\bm{j} \in
\{0,1\}^n}\right ),
\end{equation}
where $\mu_{\bm{j}} \in \Delta_{\psi_{\bm{j}}}$ and
\begin{equation}
\Proj{\psi_{\bm{j}}} = \Proj{\psi_{j_1}} \otimes
\Proj{\psi_{j_2}} \otimes \cdots \otimes \Proj{\psi_{j_n}}.
\end{equation}
This is because computing $L \left ( \left \{ \mu_{\bm{j}}\right
\}_{\bm{j} \in \{0,1\}^n}\right )$ involves minimizing over a
larger set than computing $ L \left ( \left \{ \mu_{\bm{0}},
\mu_{\bm{1}} \right \} \right )$.  Since $L$ is a positive
function, we therefore have that $L \left ( \left \{
\mu_{\bm{j}}\right \}_{\bm{j} \in \{0,1\}^n}\right ) = 0$ and
hence, by Lemma~\ref{lem:Main:prod}, $\Proj{\psi_0}$ and
$\Proj{\psi_1}$ must be ontologically distinct.
\end{proof}

\subsection{Motivation for the PIP}

\label{MPIP}

The PIP says that ontological models for subsystems should be
compatible with models for product states on the composite system.
Compatibility breaks down into two assumptions: the CPA, which says
that the ontic state space of the composite system should be a
Cartesian product of the ontic states spaces of the subsystems, and
the NCA, which says that product states should be modeled by product
measures.  We consider the motivations for each of these requirements
in turn.

The CPA is a weakening of a kind of locality assumption known as
\emph{Einstein separability}.  The terminology is due to Howard
\cite{Howard1985}, who defines it as the idea that ``two spatially
separated systems possess their own separate real states''.  Einstein
formulated this idea in the context of his arguments for the
incompleteness of quantum theory, and expressed it as follows.

\begin{quotation}
If one asks what is characteristic of the realm of physical ideas
independently of the quantum-theory, then above all the following
attracts our attention: the concepts of physics refer to a real
external world, i.e.\ ideas are posited of things that claim a `real
existence' independent of the perceiving subject (bodies, fields,
etc.), and these ideas are, on the other hand, brought into as
secure a relationship as possible with sense impressions.  Moreover,
it is characteristic of these physical things that they are
conceived of as being arranged in a space-time continuum.  Further,
it appears to be essential for this arrangement of the things
introduced in physics that, at a specific time, these things claim
an existence independent of one another, insofar as these things
`lie in different parts of space'.  Without such an assumption of
the mutually independent existence (the `being-thus') of spatially
distant things, an assumption which originates in everyday thought,
physical thought in the sense familiar to us would not be possible.
Nor does one see how physical laws could be formulated and tested
without such a clean separation.  Field theory has carried out this
principle to the extreme, in that it localizes within infinitely
small (four-dimensional) space-elements the elementary things
existing independently of one another that it takes as basic, as
well as the elementary laws it postulates for them.  --- Albert
Einstein \cite{Einstein1948}.
\end{quotation}

The Pusey--Barrett--Rudolph argument only makes use of product state preparations, so we
can always imagine that the individual systems are prepared very far
away from each other and are only later brought together to perform
the entangled measurement.  If implemented this way, the individual
systems would occupy spatially separated regions at their point of
preparation and so, according to separability, they ought to possess
their own separate real states.  Implicit in this is the idea that
there are no inherently global joint properties of the composite
system that are not determined by the properties of the individual
systems.  In the language of ontological models, this means that ontic
state spaces should compose according to the Cartesian product.

Einstein thought that separability should always hold, regardless of
whether we are preparing product states or entangled states.  In light
of Bell's Theorem, the case for such a general separability assumption
is significantly weakened.  Separability is not actually required to
prove Bell's Theorem \cite{Henson2013}, but, if we are contemplating a
world in which the effects of measurement can be transmitted
instantaneously across space, then it makes sense to also contemplate
a world in which there are inherently global properties as well.
Additionally, since we are in the business of proving $\psi$-ontology
theorems and the quantum state of an entangled system would, if ontic,
be such an inherently global property, we had better not introduce any
assumptions that rule them out.  For this reason, the CPA is
restricted to product state preparations.  It says that separability
should hold, not in general, but only when systems are prepared in
product states.

The motivation for assuming separability for product states is that we
generally think that experiments on separated systems are independent
of one another.  It should not be necessary to gather our system here
on Earth together with one on Mars in order to determine all of the
parameters relevant to our Earth-bound experiment.  Of course, when
performing experiments involving systems on Earth that are correlated
with those on Mars, what happens on Mars is very relevant, but the CPA
only applies to product states.  If we allow that genuinely global
properties may be relevant even to an isolated system then we open up
a Pandora's box.  It could well be necessary to gather together every
system in the universe in order to determine all the parameters that
are relevant for our system here on Earth.

At this point, it is worth noting that the assumptions behind no-go
theorems are often designed to mirror operational features of quantum
theory at the ontological level.  This is perhaps clearest in the case
of contextuality.  Preparation noncontextuality says that if there is
no difference between two preparation procedures in terms of the
observable statistics they predict, i.e.\ they are represented by the
same quantum state, then there should be no difference between them at
the ontological level either, i.e.\ they should be represented by the
same probability measure over the ontic states.  If this assumption
does not hold then an explanatory gap is opened because, if two
preparations are represented by different probability measures, then
one would generally expect to be able to pick up this difference in
the observed statistics.  To prevent this from happening, the way that
measurements are represented by conditional probability distributions
has to be fine-tuned so that the difference between the two
probability measures is washed out by averaging.  This fine-tuning has
no natural explanation within the ontological model, other than that
it is needed in order to reproduce the quantum predictions.
Similarly, on an operational level, local measurements on entangled
states cannot be used to send signals, so it makes sense to demand
locality when modeling them ontologically.  Otherwise, we would have
to explain why the purported nonlocal influences cannot be used to
signal.  Proofs of preparation contextuality and Bell's Theorem
therefore expose explanatory gaps in the ontological models framework.

In this vein, separability can be motivated by the operational
principle known as \emph{local tomography}.  This says that the state
of a joint system, and hence the probabilities it predicts for any
measurement, is completely determined by the statistics of local
measurements made on the subsystems.  For example, the state of a
two-qubit system is completely determined by the statistics of
measurements of the Pauli observables $\sigma_x$, $\sigma_y$, and
$\sigma_z$ made on each qubit individually and their correlations.
There are no parameters of the joint state that require bringing the
two subsystems together and making a joint measurement in order to
determine them.  Therefore, it makes sense to posit an underlying
ontology that does not involve genuinely global properties either.
Without this assumption, we would have to explain why the genuinely
global properties do not lead to observable parameters that can only
be determined by bringing the systems together.

Of course, the whole point of no-go theorems, and the reason they are
surprising, is that they show that such operationally motivated
assumptions are not actually viable at the ontological level, at least
within the usual framework.  They each imply fine-tunings for which we
currently have no good explanation.  My point here is just that
separability bears a family resemblance to the assumptions behind
other no-go theorems, so if you think that preparation
noncontextuality and Bell locality have intuitive appeal then the same
should hold for separability as well.  If we find, as we have, that
some of these assumptions are not actually viable then it makes sense
to explore the consequences of weakened versions of them, which
maintain some of the appeal but are not yet ruled out.  From this
perspective the CPA, as a weakening of separability, is a reasonable
thing to posit.

The NCA can similarly be motivated as mirroring an operational feature
of quantum theory at the ontological level.  When two systems are
prepared in a product state they are completely uncorrelated from each
other.  No quantum measurement will ever reveal any pre-existing
correlation.  Therefore, it makes sense to think that the systems are
uncorrelated at the ontological level as well.  If not, then it is
puzzling that this correlation does not show up in measurement
statistics.

Perhaps the CPA and NCA are not so self-evidently true that they must
never be questioned, but they are at least as solid as the assumptions
that go into other no-go theorems.  Therefore, the Pusey--Barrett--Rudolph Theorem
presents us with a dilemma.  Either we must give up the CPA, the NPA,
or $\psi$-epistemicism, and each choice opens up an explanatory gap.
I think this is at least as interesting as the dilemmas posed by
Bell's Theorem and contextuality.

\subsection{Weakening the assumptions}

\label{WPusey--Barrett--Rudolph}

The assumptions of the Pusey--Barrett--Rudolph Theorem have so far been presented in their
most intuitively accessible form.  However, it is possible to weaken
them somewhat without affecting the conclusion.  This has been pointed
out by Hall \cite{Hall2011} and by Schlosshauer and Fine
\cite{Schlosshauer2012}.  The cost of doing this is that the weakened
assumptions are less clearly motivated by operational properties of
quantum theory.  Nevertheless, it is interesting to identify the
weakest set of assumptions under which the theorem can be proved.

First of all, note that the only quantum predictions used in the Pusey--Barrett--Rudolph
argument involve measurement outcomes that are assigned probability
zero by some quantum state via antidistinguishability.  It would make
no difference to the argument if ontological models were only required
to reproduce the probability zero predictions of quantum theory,
instead of requiring that they reproduce all of the quantum
predictions exactly.  Thus, the requirement of reproducing the quantum
predictions may be replaced by the following.

\begin{definition}
An ontological model $\Theta = (\Lambda, \Sigma, \Delta, \Xi)$ of a
PM fragment $\mathfrak{F} = \langle \Hilb, \mathcal{P}, \mathcal{M}
\rangle$ \emph{reproduces the quantum preclusions} if, for all $\rho
\in \mathcal{P}$, $M \in \mathcal{M}$ and $E \in M$ such that $\Tr{E
\rho} = 0$, each $\mu \in \Delta_{\rho}$ and $\text{Pr} \in \Xi_M$
satisfies
\begin{equation}
\int_{\Lambda} \text{Pr}(E|M,\lambda) d \mu(\lambda) = 0.
\end{equation}
\end{definition}

Secondly, in the Pusey--Barrett--Rudolph argument, the PIP is only used in
Lemma~\ref{lem:Main:prod}, where it licenses the following inference.
Given a pair of quantum states $\rho_0$ and $\rho_1$, if the product
states $\{\rho_{j_1} \otimes \rho_{j_2} \otimes \cdots \otimes
\rho_{j_n}\}_{\bm{j} \in \{0,1\}^n}$ correspond to nonoverlapping
measures $\mu_{\bm{j}}$, i.e. $L \left ( \left \{ \mu_{\bm{j}}\right
\}_{\bm{j} \in \{0,1\}^n} \right ) = 0$, then the measures $\mu_0$
and $\mu_1$ corresponding to $\rho_0$ and $\rho_1$ should also not
overlap, i.e. $L(\mu_0,\mu_1) = 0$.  Therefore, if we want to avoid
using the PIP, then we can simply make this assumption directly.

\begin{definition}
An ontological model $\Theta = (\Lambda, \Sigma, \Delta, \Xi)$ for a
PM fragment $\mathfrak{F} = \langle \Hilb, \mathcal{P}, \mathcal{M}
\rangle $ is \emph{compact} with respect to an ontological model
$\Theta^n = (\Lambda^{\prime}, \Sigma^{\prime}, \Delta^{\prime},
\Xi^{\prime})$ for the $n$-fold product state fragment
$\mathfrak{F}^n = \langle \Hilb^{\otimes n}, \mathcal{P}^{\times n},
\mathcal{M}^{\prime} \rangle$ with factors $\mathfrak{F}$ if,
whenever $L \left ( \left \{ \mu_{\bm{j}} \right \}_{\bm{j} \in
\{0,1\}^n} \right ) = 0$ for all choices of $\mu_{\bm{j}} \in
\Delta^{\prime}_{\rho_{\bm{j}}}$ (where $\rho_{\bm{j}} =
\otimes_{k=1}^n \rho_{j_k}$), then $L(\mu_0,\mu_1) = 0$ for all
choices of $\mu_j \in \Delta_{\rho_j}$.
\end{definition}

\begin{definition}
An ontological model for a fragment $\mathfrak{F}$ that reproduces
the quantum preclusions satisfies \emph{compactness} if it is
compact with respect to an ontological model that reproduces the
quantum preclusions for any $n$-fold product state fragment
$\mathfrak{F}^n$ with factors $\mathfrak{F}$.
\end{definition}

The condition of compactness is originally due to Schlosshauer and
Fine \cite{Schlosshauer2012}, and is a slight generalization of a
condition that Hall called ``compatibility'' \cite{Hall2011}.  Here, I
have presented it in a slightly more rigorous form in order to take
care of measure zero issues and to allow for preparation
contextuality.  In this form, the meaning of compactness may be
somewhat obscure, so it is helpful to consider the special case of a
finite ontic state space.  In this case, when $n=2$, compactness says
is that, if a measure $\mu_0$ corresponding to $\rho_0$ assigns a
nonzero probability to the ontic state $\lambda$ and a measure $\mu_1$
corresponding to $\rho_1$ likewise assigns a nonzero probability to
$\lambda$, then there ought to be probability measures $\mu_{00}$,
$\mu_{01}$, $\mu_{10}$ and $\mu_{11}$ representing the states $\rho_0
\otimes \rho_0$, $\rho_0 \otimes \rho_1$, $\rho_1 \otimes \rho_0$ and
$\rho_1 \otimes \rho_1$ of the composite system that all assign
nonzero probability to some ontic state $\lambda^{\prime}$.  The
advantage of this formulation is that it does not assume that the
ontic state space of the composite system has a Cartesian product
structure because it does not specify how $\lambda^{\prime}$, which is
an ontic state of the composite, is related to $\lambda$, which is an
ontic state of the subsystem.  It also does not rule out the
possibility of correlations between the two systems at the ontological
level.

However, the motivation for assuming compactness is somewhat obscure.
Unlike the PIP, it is not a weakened form of separability, and has no
obvious operational motivation.  It is doubtful that one would come up
with compactness without having the idea of Cartesian products and
product measures in mind in the first place.  It amounts to simply
assuming that Lemma~\ref{lem:Main:prod} is true by fiat.  Without any
examples of natural models that satisfy compactness but not the PIP,
it is not clear why one would make such an assumption.  Life would be
very simple if we always just raised the lemmas needed to prove
theorems into assumptions instead of proving them.  Nevertheless, the
Pusey--Barrett--Rudolph Theorem can be stated in the more general form.

\begin{theorem}[The ``Generalized'' Pusey--Barrett--Rudolph Theorem]
\label{thm:Main:Pusey--Barrett--RudolphGen}
Let $\mathfrak{F} = \langle \Hilb, \mathcal{P}, \mathcal{M} \rangle$
be a PM fragment where $\mathcal{P}$ contains all
pure states on $\Hilb$.  Then, any ontological model of
$\mathfrak{F}$ that reproduces the quantum preclusions and satisfies
compactness is $\psi$-ontic.
\end{theorem}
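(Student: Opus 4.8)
The plan is to replay the proof of Theorem~\ref{thm:Main:Pusey--Barrett--Rudolph}, inserting the two weakened hypotheses exactly where "reproduces the quantum predictions" and the PIP were used before. First I would observe that the only ingredient of that proof which used the full strength of "reproduces the quantum predictions" is Theorem~\ref{thm:Anti:antiover}, and that its proof only ever invokes the identities $\Tr{E_j\rho_j}=0$ supplied by antidistinguishability together with $\sum_j\text{Pr}(E_j|M,\lambda)=1$; hence the conclusion $L(\{\mu_j\})=0$ already follows for any ontological model that merely reproduces the quantum preclusions. Lemma~\ref{lem:Main:anti} is a purely quantum statement about states and POVMs, so it is untouched. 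The role formerly played by the PIP, namely the inference in Lemma~\ref{lem:Main:prod} from vanishing $n$-way overlap to vanishing pairwise overlap, is now played directly by the compactness hypothesis, which was designed to be precisely that inference.

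Concretely, fix distinct pure states $\Proj{\psi_0},\Proj{\psi_1}\in\mathcal{P}$ and set $p=\Tr{\Proj{\psi_0}\Proj{\psi_1}}<1$. Choose $n$ with $p^n\le\tfrac12$ (so $n=1$ when $p\le\tfrac12$). Since $\mathcal{P}$ contains all pure states, the states $\Proj{\psi_0}^{\otimes n}$ and $\Proj{\psi_1}^{\otimes n}$, which have inner product $p^n\le\tfrac12$, lie in every $n$-fold product state fragment of $\mathfrak{F}$. Applying Lemma~\ref{lem:Main:anti} to these two states (viewing $\Hilb^{\otimes n}$ as the factor Hilbert space) produces a $2n$-fold product state fragment $\mathfrak{F}^{2n}$ of $\mathfrak{F}$ whose measurement set contains a POVM antidistinguishing the four-element set $\mathcal{D}=\{\Proj{\psi_0}^{\otimes n}\otimes\Proj{\psi_0}^{\otimes n},\Proj{\psi_0}^{\otimes n}\otimes\Proj{\psi_1}^{\otimes n},\Proj{\psi_1}^{\otimes n}\otimes\Proj{\psi_0}^{\otimes n},\Proj{\psi_1}^{\otimes n}\otimes\Proj{\psi_1}^{\otimes n}\}$. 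Now invoke compactness: it supplies an ontological model $\Theta^{2n}$ of $\mathfrak{F}^{2n}$ that reproduces the quantum preclusions and with respect to which $\Theta$ is compact. By the preclusion-only version of Theorem~\ref{thm:Anti:antiover}, the four measures that $\Theta^{2n}$ assigns to the members of $\mathcal{D}$ satisfy $L=0$, for every admissible choice of those measures.

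The next step is to note that the four strings involved are among the $2^{2n}$ strings $\bm j\in\{0,1\}^{2n}$, and that $L$ is monotone non-increasing under enlarging the family of measures, since adjoining extra measures only decreases each $\min_j\mu_j(\Omega_k)$ in Eq.~\eqref{eq:Anti:overlap}; hence $L(\{\mu_{\bm j}\}_{\bm j\in\{0,1\}^{2n}})=0$ for every choice of $\mu_{\bm j}\in\Delta^{\prime}_{\rho_{\bm j}}$, where $\rho_{\bm j}=\otimes_{k=1}^{2n}\rho_{j_k}$ with $\rho_0=\Proj{\psi_0}$, $\rho_1=\Proj{\psi_1}$. Applying the compactness hypothesis of $\Theta$ then forces $L(\mu_0,\mu_1)=0$ for all $\mu_0\in\Delta_{\psi_0}$ and $\mu_1\in\Delta_{\psi_1}$; by Theorem~\ref{prop:Anti:varoverlap} this reads $D(\mu_0,\mu_1)=1$ for all such pairs, i.e.\ $\Proj{\psi_0}$ and $\Proj{\psi_1}$ are ontologically distinct. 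Since the pair was arbitrary, the model is $\psi$-ontic by Definition~\ref{def:POEM:PON}.

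I expect the only genuinely delicate points to be bookkeeping rather than substantive. The main things to verify carefully are: that the compactness definition, which quantifies over "any $n$-fold product state fragment with factors $\mathfrak{F}$", does cover the particular $2n$-fold fragment $\mathfrak{F}^{2n}$ carrying the entangled antidistinguishing POVM coming out of Lemma~\ref{lem:Main:anti} (a two-fold product of an $n$-fold product fragment is a $2n$-fold product fragment, so this is fine but should be spelled out); that Theorem~\ref{thm:Anti:antiover} really does go through verbatim once "reproduces the quantum predictions" is relaxed to "reproduces the quantum preclusions"; and that all measure-zero subtleties are absorbed by working throughout with the overlap functional $L$ and Theorem~\ref{prop:Anti:varoverlap} rather than with pointwise statements about individual ontic states, exactly as in \S\ref{Main}. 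No ideas beyond those already developed in \S\ref{Anti}, \S\ref{Main} and \S\ref{WPusey--Barrett--Rudolph} should be needed.
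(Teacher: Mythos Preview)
Your proposal is correct and follows exactly the approach the paper indicates: the paper's own proof consists of the single sentence ``The proof is just the same as the original Pusey--Barrett--Rudolph Theorem, except for the removal of Lemma~\ref{lem:Main:prod},'' and you have faithfully carried this out, observing that Theorem~\ref{thm:Anti:antiover} only uses preclusions and that compactness supplies precisely the inference that Lemma~\ref{lem:Main:prod} used to prove from the PIP. Your unification of the two cases (inner product $\le\tfrac12$ versus $>\tfrac12$) by passing directly to a $2n$-fold product is a minor cosmetic streamlining of the original two-step argument, not a substantive departure, and your bookkeeping checks (that a two-fold product of an $n$-fold product state fragment is a $2n$-fold product state fragment, and that $L$ is monotone under enlarging the family of measures) are all sound.
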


The proof is just the same as the original Pusey--Barrett--Rudolph Theorem, except for the
removal of Lemma~\ref{lem:Main:prod}.

\subsection{Necessity of the PIP}

\label{NPIP}

The PIP is the most substantive assumption that goes into the
Pusey--Barrett--Rudolph Theorem.  All of the rest of its assumptions
are standard for the ontological models framework and are common to
the vast majority of no-go theorems for realist models of quantum
theory, such as Bell's Theorem and the Kochen--Specker Theorem.
Therefore, any criticism of Pusey--Barrett--Rudolph that is not
directed against the PIP could equally well be directed against these
other results.  Whilst I do not wish to belittle such criticisms, and
will deal with them in \S\ref{OCPusey--Barrett--Rudolph}, it is more
important to deal with objections that are specific to the
Pusey--Barrett--Rudolph Theorem, and hence to the PIP\@.  Before doing
so, we should check that the PIP is a necessary assumption, since if
$\psi$-ontology can be established without the PIP then the question
of whether to accept it is moot.

The Kochen--Specker model shows that the PIP is necessary for the case
of a qubit.  However, the case of a qubit is rather special and the
existence of this model does not rule out the possibility that
ontological models for systems of larger dimension might have to be
$\psi$-ontic.  Lewis, Jennings, Barrett and Rudolph (LJBR) showed that
it is in fact possible to construct a $\psi$-epistemic model for
measurements in orthonormal bases in all dimensions \cite{Lewis2012}.
This was based on an idea due to Rudolph, as discussed in Morris'
master's thesis \cite{Morris2009}.  Subsequently, Aaronson, Bouland,
Chua and Lowther (ABCL) provided an alternative construction based on
the same basic idea \cite{Aaronson2013}.  Of course, none of these
models satisfy the PIP\@.  Whilst the LJBR model is $\psi$-epistemic,
the ABCL construction additionally shows that, for any pair
$\Proj{\psi}$, $\Proj{\phi}$ of nonorthogonal states, a
$\psi$-epistemic model can be constructed in which measures
corresponding to $\Proj{\psi}$ and $\Proj{\phi}$ have nonzero overlap.
Since the main idea is the same, we focus on the ABCL result here.

Both LJBR and ABCL start from the Bell model described in
Example~\ref{exa:EOM:Bell}.  This is a $\psi$-ontic model, but by a
simple modification it can be made $\psi$-epistemic.  Recall that, in
the Bell model, the ontic state space is $\Lambda = \Lambda_1 \times
\Lambda_2$, where $\Lambda_1$ is the set of pure states on
$\mathbb{C}^d$ and $\Lambda_2$ is the unit interval.  A state
$\Proj{\psi}$ is represented by the product measure $\mu = \mu_1
\times \mu_2$, where $\mu_1 = \delta_{\psi}$ is the point
measure at $\Proj{\psi}$ on $\Lambda_1$ and $\mu_2$ is the uniform
measure on $\Lambda_2 = [0,1]$.  For each $[\lambda_1] \in \Lambda_1$,
the response functions are set by dividing up the interval $[0,1]$
into subsets of length $\Tr{\Proj{\phi_j}\Proj{\lambda_1}}$ that give
the $\Proj{\phi_j}$ outcome with certainty.  In
Example~\ref{exa:EOM:Bell} this was done simply by dividing $[0,1]$ up
into consecutive intervals of the form
\begin{equation}
\sum_{k=0}^{j-1} \Tr{\Proj{\phi_k}\Proj{\lambda_1}} \leq \lambda_2
< \sum_{k=0}^j \Tr{\Proj{\phi_k}\Proj{\lambda_1}},
\end{equation}
as illustrated in Fig.~\ref{fig:EOM:Bell}.  However, this division
could be done in a different way, providing the total length of the
subset that gives $\Proj{\phi_j}$ with certainty remains equal to
$\Tr{\Proj{\phi_j}\Proj{\lambda_1}}$.  The first step in converting
the Bell model into a $\psi$-epistemic model is to change this
division.

Following ABCL, let $\Proj{a}$ and $\Proj{b}$ be two nonorthogonal
states that we would like to make ontologically indistinct.  Given a
measurement $M = \left \{ \Proj{\phi_j} \right \}_{j=0}^{d-1}$, let
$\sigma$ be a permutation of $(0,1,\ldots,d-1)$ such that
\begin{align}
&\min \left ( \Tr{\Proj{\phi_{\sigma(0)}}\Proj{a}},
\Tr{\Proj{\phi_{\sigma(0)}}\Proj{b}} \right )\nonumber\\
& \geq \min \left ( \Tr{\Proj{\phi_{\sigma(1)}}\Proj{a}},
\Tr{\Proj{\phi_{\sigma(1)}}\Proj{b}} \right ) \geq \ldots \\
& \geq \min \left ( \Tr{\Proj{\phi_{\sigma(d-1)}}\Proj{a}},
\Tr{\Proj{\phi_{\sigma(d-1)}}\Proj{b}} \right ).
\end{align}
Now, the unit interval is divided up in exactly the same way as
before, except with respect to the permuted indices instead of the
original ordering of outcomes i.e.\ the outcome
$\Proj{\phi_{\sigma(j)}}$ is obtained with certainty if
\begin{equation}
\sum_{k=0}^{j-1} \Tr{\Proj{\phi_{\sigma(k)}}\Proj{\lambda_1}} \leq
\lambda_2 < \sum_{k=0}^j
\Tr{\Proj{\phi_{\sigma(k)}}\Proj{\lambda_1}}.
\end{equation}

So far, the model is still $\psi$-ontic because we have only modified
the conditional probability distributions representing measurements,
so the probability measures representing states still have their point
measure terms.  The next step is to show that there is an $\epsilon >
0$ such that all ontic states in the set
\begin{equation}
\Omega_{a,b} = \{\Proj{a},\Proj{b}\} \times [0,\epsilon],
\end{equation}
always return the $\Proj{\phi_{\sigma(0)}}$ outcome in any
measurement.  Thus, any weight that a probability measure assigns to
this region of the ontic state space can be redistributed in an
arbitrary way without affecting the observed probabilities, and by
doing so the probability measures associated with $\Proj{a}$ and
$\Proj{b}$ can be made to overlap.

To prove this, let $\Ket{a}$, $\Ket{b}$ and $\Ket{\phi_j}$ be vector
representatives of $\Proj{a}$, $\Proj{b}$ and $\Proj{\phi_j}$.  It is
sufficient to show that there exists an $\epsilon > 0$ such that, for
all measurements $M = \left \{ \Proj{\phi_j}\right \}_{j=0}^{d-1}$, there
exists a $j$ such that $\left | \BraKet{\phi_j}{a}\right | > \epsilon$
and $\left | \BraKet{\phi_j}{b}\right | > \epsilon$.  Then, whatever
$\Proj{\phi_{\sigma(0)}}$ is, it must satisfy $\left |
\BraKet{\phi_{\sigma(0)}}{a}\right | > \epsilon$ and $\left |
\BraKet{\phi_{\sigma(0)}}{b}\right | > \epsilon$ because $\sigma$
orders the outcomes in decreasing order of $\min \left ( \left |
\BraKet{\phi_j}{a}\right |, \left | \BraKet{\phi_j}{b}\right |
\right )$.  To show this, note that
\begin{align}
\left | \BraKet{a}{b} \right | & = \left |
\Bra{a} \left ( \sum_{j=0}^{d-1} \KetBra{\phi_j}{\phi_j} \right )
\Ket{b} \right | \\
& \leq \sum_{j=0}^{d-1} \left | \BraKet{a}{\phi_j}\BraKet{\phi_j}{b}
\right | \\
& \leq \sum_{j=0}^{d-1} \left | \BraKet{a}{\phi_j} \right | \left |
\BraKet{\phi_j}{b} \right |,
\end{align}
where we have used the fact that $\sum_{j=0}^{d-1}
\KetBra{\phi_j}{\phi_j}$ is a resolution of the identity and the
triangle inequality.  The largest term in the sum must be larger than
the average, so we have
\begin{equation}
\max_j  \left ( \left | \BraKet{a}{\phi_j} \right | \times \left |
\BraKet{\phi_j}{b} \right |\right ) \geq \frac{\left | \BraKet{a}{b}
\right |}{d},
\end{equation}
but since $\left | \BraKet{a}{\phi_j} \right |$ and $\left |
\BraKet{\phi_j}{b} \right |$ are between $0$ and $1$, this means
\begin{equation}
\max_j  \left ( \min \left ( \left | \BraKet{a}{\phi_j} \right |, \left |
\BraKet{\phi_j}{b} \right | \right ) \right ) \geq \frac{\left |
\BraKet{a}{b} \right |}{d},
\end{equation}
and since $\Proj{a}$ and $\Proj{b}$ are nonorthogonal, $\left |
\BraKet{a}{b} \right |/d > 0$, so setting $\epsilon = \left |
\BraKet{a}{b} \right |/d$ gives the desired result.

Now, we can replace the measure associated with $\Proj{a}$ with
\begin{equation}
\mu(\Omega) = \int_{\Lambda_2} \mu_{\lambda_2}(\Omega_{\lambda_2})
d\mu_2(\lambda_2),
\end{equation}
where as before $\mu_2$ is the uniform measure on $\Lambda_2$ and
$\Omega_{\lambda_2} = \left \{ \Proj{\lambda_1} \in \Lambda_1 \middle
| (\Proj{\lambda_1}, \lambda_2) \in \Omega \right \}$, but now
$\mu_{\lambda_2}$ is a measure on $\Lambda_1$ that depends on
$\lambda_2$ via
\begin{equation}
\mu_{\lambda_2} = \begin{cases} \delta_{a} & \text{if} \,\,
\lambda_2 > \epsilon \\ \epsilon \mu_{\Omega_{a,b}} & \text{if}
\,\, \lambda_2 \leq \epsilon, \end{cases}
\end{equation}
where $\mu_{\Omega_{a,b}}$ is an arbitrary measure on $\Omega_{a,b}$
which, for example, could be the uniform measure.  Similarly for
$\Proj{b}$, with $\delta_{a}$ replaced by $\delta_{b}$.  Since both
measures now agree on $\Omega_{a,b}$ and assign nonzero probability to
it, the model is $\psi$-epistemic.

This model is only $\psi$-epistemic in a fairly weak sense, since only
a single pair of pure states has any overlap.  However, the same idea
can be used as the basis for constructing a model in which every pair
of nonorthogonal pure states has overlap (see \S\ref{Pair} for an
outline of the construction).

The Pusey--Barrett--Rudolph Theorem implies that the ABCL model must necessarily fail to
satisfy the PIP\@.  To see this, we have to show that, given an ABCL
model for $\mathbb{C}^d$, there is some $n$ for which there is no
model for product states on the composite system $\left ( \mathbb{C}^d
\right )^{\otimes n}$ with which it is compatible.  Given that the
ABCL construction can be used to make any single pair of pure states
ontologically indistinct, we can apply it to the states $\Proj{z+}$
and $\Proj{x+}$ in $\mathbb{C}^2$ from Example~\ref{exa:Main:Pusey--Barrett--Rudolph}.
Then, compatibility implies that the model for product states on
$\mathbb{C}^2 \otimes \mathbb{C}^2$ would have to have nonzero
four-way overlap for the states $\Proj{z+} \otimes \Proj{z+}$,
$\Proj{z+} \otimes \Proj{x+}$, $\Proj{x+}\otimes\Proj{z+}$ and
$\Proj{x+}\otimes\Proj{x+}$, since these are represented by products
of the measures representing $\Proj{z+}$ and $\Proj{x+}$ in the ABCL
model, which have nonzero pairwise overlap.  However, nonzero overlap
for these four states is ruled out by their antidistinguishability, so
no such model can exist.  Similarly, whichever pair of quantum states
we choose to make ontologically indistinct, the Pusey--Barrett--Rudolph Theorem rules out
compatibility with models for the composite system for some $n$.

\subsection{Criticism of the PIP}

\label{CPIP}

Since the PIP is necessary for proving the Pusey--Barrett--Rudolph Theorem, it is
important to settle the question of whether it is reasonable.  In this
section, two criticisms of the PIP are discussed, directed at the CPA
and NCA respectively.  In \S\ref{CCPA}, a weakening of the PIP due to
Emerson, Serbin, Sutherland and Veitch (ESSV) \cite{Emerson2013} is
discussed and we show that, under this weakened assumption,
$\psi$-epistemic models can always be constructed.  This is still a
useful criticism, as it highlights the fact that giving up the CPA
need not lead to global parameters becoming relevant to local
measurements, so the spirit of separability can perhaps be preserved.
However, the weakened PIP is not really a viable constraint on how
systems should compose, as the construction shows that it can be
satisfied trivially.  \S\ref{CNCA} discusses a criticism due to Hall
\cite{Hall2011} that the NCA does not follow from causality
considerations of the type considered in Bell's Theorem.  Whilst this
argument is formally correct, I think it misses the point as the NCA
is not motivated by causality in the first place.

\subsubsection{Criticism of the CPA}

\label{CCPA}

Emerson, Serbin, Sutherland and Veitch (ESSV) \cite{Emerson2013} have
proposed a weakening of the PIP, aimed at criticizing the CPA, and
have shown how a $\psi$-epistemic model satisfying their weakened PIP
can reproduce the predictions of Example~\ref{exa:Main:Pusey--Barrett--Rudolph}.  Before
getting into the details, it is worth examining what we should expect
of a weakened PIP assumption.

The PIP is a constraint on how subsystems should be composed at the
ontological level.  The CPA encodes the idea that, when preparing two
independent systems in a product state, there should be no genuinely
global properties that are not reducible to local properties of the
individual systems.  It is motivated by the idea that, if this
condition were violated, then it might be necessary to bring the two
independent systems together in order to determine all of the
parameters that are relevant even for just local experiments on one of
the systems.  Any proposal for weakening the PIP by introducing global
properties should therefore be evaluated on the following criteria.
\begin{itemize}
\item It ought to demonstrate that the motivation for the CPA is
misguided by showing how the existence of global properties need not
interfere with local experiments.
\item It ought to provide a viable constraint on how subsystems should
be composed at the ontological level.
\end{itemize}

In my view, the proposal of ESSV satisfies the first criterion, but
not the second.  It does a good job of criticizing the motivation for
the CPA, but is not, in itself, a viable constraint on how subsystems
should be composed.  Explanation of what I mean by a viable constraint
on subsystem composition is deferred until after explaining the details
of the ESSV proposal.

The basic idea is as follows.  Recall that, in order to model direct
product fragments, we can always get away with taking the Cartesian
product of ontic state spaces and using product measures to represent
states.  Therefore, the only time global properties would necessarily
have to play a role when modeling product state fragments is when a
joint measurement is made, e.g.\ a measurement in an entangled basis.
Such measurements already involve an interaction between the two
systems at the operational level, so invoking global properties to
explain such experiments does not seem to open up a huge gap between
the structure of the ontological model and the operational structure
of the experiment.  The argument that it might be necessary to bring a
system on Earth together with a system on Mars in order to
characterize all the parameters relevant to experiments on Earth does
not hold water if the genuinely global properties do not have any
effect on the outcomes of local measurements.  It would still be
possible to work with separate systems completely independently of
each another, in blissful ignorance of the global properties, until we
decide to do an experiment that necessarily involves bringing the
systems together.  With this in mind, ESSV's weakened PIP runs as
follows.

\begin{definition}
A pair of ontological models $\Theta_A = (\Lambda_A, \Sigma_A,
\Delta_A, \Xi_A)$, $\Theta_B = (\Lambda_B, \Sigma_B, \Delta_B, \Xi_B)$
for fragments $\mathfrak{F}_A = \langle \Hilb[A], \mathcal{P}_A,
\mathcal{M}_A \rangle$ and $\mathfrak{F}_B = \langle \Hilb[B],
\mathcal{P}_B, \mathcal{M}_B \rangle$ are \emph{weakly compatible}
with an ontological model $\Theta_{AB} = (\Lambda_{AB}, \Sigma_{AB},
\Delta_{AB}, \Xi_{AB})$ for a product state fragment
$\mathfrak{F}_{AB} = \langle \Hilb[A] \otimes \Hilb[B],
\mathcal{P}_A \times \mathcal{P}_B, \mathcal{M}_{AB} \rangle$ if
$\Theta_{AB}$ satisfies
\begin{itemize}
\item The Weak Cartesian Product Assumption (WCPA):
\begin{align}
\Lambda_{AB} & = \Lambda_A \times \Lambda_B \times
\Lambda_{\text{NL}}, & \Sigma_{AB} & = \Sigma_A \otimes \Sigma_B \otimes
\Sigma_{\text{NL}},
\end{align}
where $\Lambda_{\text{NL}}$ represents some global degrees of
freedom not reducible to properties of system $A$ and system $B$
alone, and $\Sigma_{\text{NL}}$ is a $\sigma$-algebra over
$\Lambda_{\text{NL}}$.
\item The Weak No-Correlation Assumption (WNCA): $\forall \rho_A \in
\mathcal{P}_A, \rho_B \in \mathcal{P}_B$, all $\mu_{AB} \in
\Delta_{AB}[\rho_A \otimes \rho_B]$ satisfy
\begin{equation}
\mu_{AB}(\Omega_{AB} \times \Lambda_{\text{NL}}) = \mu_A
\times \mu_B(\Omega_{AB}),
\end{equation}
for all $\Omega_{AB} \in \Sigma_A \otimes \Sigma_B$ and for some
$\mu_A \in \Delta_A(\rho_A)$ and $\mu_B \in \Delta_B(\rho_B)$,
i.e.\ the marginal measure on $\Lambda_A \times \Lambda_B$
satisfies the NCA.
\end{itemize}
\end{definition}

In general, the ontic state space for a composite system need not
break down neatly into a product of local ontic state spaces and
global properties.  The WCPA therefore seems like a relatively mild
generalization of the CPA\@.  The WNCA then says that, if we only have
access to $\lambda_A$ and $\lambda_B$, then the probability measures
corresponding to product states look just like product measures, even
though they may in fact be correlated via the third variable
$\lambda_{\text{NL}}$.  Further, not only do they look like product
measures, they look just like the product of measures that the
ontological models $\Theta_A$ and $\Theta_B$ would assign to the two
systems individually.  If we assume that $\Theta_A$ and $\Theta_B$ are
adequate for modeling $\mathfrak{F}_A$ and $\mathfrak{F}_B$
individually, it follows that local measurements can be modeled by
products of the conditional probability distributions $\text{Pr}_A$
and $\text{Pr}_B$ assigned by these models, so the outcomes of local
measurements need only depend on $\lambda_A$ and $\lambda_B$.
Therefore, under local measurements, this type of model is
indistinguishable from one that satisfies the PIP\@.  It is only under
joint measurements that there is a difference, since the outcomes of
these may depend on $\lambda_{\text{NL}}$.

\begin{definition}
A pair of ontological models $\Theta_A$ and $\Theta_B$ for fragments
$\mathfrak{F}_A$ and $\mathfrak{F}_B$ that reproduce the quantum
predictions satisfy the \emph{Weak Preparation Independence
Postulate (WPIP)} if, for all product state fragments
$\mathfrak{F}_{AB}$ with $\mathfrak{F}_A$ and $\mathfrak{F}_B$ as
factors, there exists an ontological model $\Theta_{AB}$ that is
weakly compatible with $\Theta_A$ and $\Theta_B$ and reproduces the
quantum predictions.
\end{definition}

Before discussing this further, we prove that the WPIP places no
constraints at all on the local models $\Theta_A$ and $\Theta_B$.  The
proof idea is due to Matt Pusey and Terry Rudolph \cite{Pusey2013}.

\begin{theorem}
\label{thm:CCPA:reprod}
Let $\mathfrak{F}_A = \langle \Hilb[A], \mathcal{P}_A, \mathcal{M}_A
\rangle$ and $\mathfrak{F}_B = \langle \Hilb[B], \mathcal{P}_B,
\mathcal{M}_B \rangle$ be PM fragments and let $\Theta_A =
(\Lambda_A, \Sigma_A, \Delta_A, \Xi_A )$ and $\Theta_B = (\Lambda_B,
\Sigma_B, \Delta_B, \Xi_B)$ be any ontological models of them that
reproduce the quantum predictions.  Then, $\Theta_A$ and $\Theta_B$
satisfy the WPIP.
\end{theorem}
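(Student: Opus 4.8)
The plan is to show that the WPIP is essentially vacuous: given \emph{any} ontological models $\Theta_A$ and $\Theta_B$ that reproduce the quantum predictions, one can build, for \emph{every} product state fragment $\mathfrak{F}_{AB} = \langle \Hilb[A] \otimes \Hilb[B], \mathcal{P}_A \times \mathcal{P}_B, \mathcal{M}_{AB} \rangle$ with factors $\mathfrak{F}_A$ and $\mathfrak{F}_B$, a model $\Theta_{AB}$ that is weakly compatible with $\Theta_A, \Theta_B$ and reproduces the quantum predictions. The device, due to Pusey and Rudolph, is to let the ``genuinely nonlocal'' register $\Lambda_{\mathrm{NL}}$ carry a complete copy of the joint quantum state, and then to let the response functions on the composite behave exactly as in the Beltrametti--Bugajski (orthodox) model: they read the joint state off $\Lambda_{\mathrm{NL}}$ and apply the Born rule. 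Since the definition of weak compatibility constrains only the structure of $\Lambda_{AB}$ and the marginals of the state measures on $\Lambda_A \times \Lambda_B$, and places no constraint whatsoever on how measurements on the composite are represented, this freedom is exactly what makes the construction go through.

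Concretely, I would take $\Lambda_{\mathrm{NL}}$ to be the set of density operators on $\Hilb[A] \otimes \Hilb[B]$ equipped with the Borel $\sigma$-algebra $\Sigma_{\mathrm{NL}}$ of the trace-norm topology, and set $\Lambda_{AB} = \Lambda_A \times \Lambda_B \times \Lambda_{\mathrm{NL}}$ with $\Sigma_{AB} = \Sigma_A \otimes \Sigma_B \otimes \Sigma_{\mathrm{NL}}$, so the WCPA holds by fiat. For the state map, given $\rho_A \otimes \rho_B \in \mathcal{P}_A \times \mathcal{P}_B$ and any choice of $\mu_A \in \Delta_A[\rho_A]$, $\mu_B \in \Delta_B[\rho_B]$, I would place the product measure
\begin{equation}
\mu_A \times \mu_B \times \delta_{\rho_A \otimes \rho_B}
\end{equation}
in $\Delta_{AB}[\rho_A \otimes \rho_B]$, where $\delta_{\rho_A \otimes \rho_B}$ is the point mass at $\rho_A \otimes \rho_B \in \Lambda_{\mathrm{NL}}$ (well defined, as singletons are Borel). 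Its marginal on $\Lambda_A \times \Lambda_B$ is $\mu_A \times \mu_B$, a product of measures drawn from the local models, so the WNCA is satisfied. For each POVM $M_{AB} = \{G_k\} \in \mathcal{M}_{AB}$ I would use the single response function
\begin{equation}
\text{Pr}_{AB}(G_k \mid M_{AB}, \lambda_A, \lambda_B, \sigma) = \Tr{G_k \sigma},
\end{equation}
depending only on the $\Lambda_{\mathrm{NL}}$-component $\sigma$. This is a legitimate ontological model in the sense of Definition~\ref{def:Form:OM}: the right-hand side is nonnegative, sums to $1$ over $k$ since $\sum_k G_k = I$, and is measurable in the ontic state because $\sigma \mapsto \Tr{G_k \sigma}$ is Lipschitz (hence continuous) with respect to the trace norm.

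It then remains only to check that $\Theta_{AB}$ reproduces the quantum predictions, and this is immediate: integrating the response function against $\mu_A \times \mu_B \times \delta_{\rho_A \otimes \rho_B}$ collapses the $\Lambda_{\mathrm{NL}}$-integral onto the point mass and yields $\Tr{G_k (\rho_A \otimes \rho_B)}$, which is exactly the Born-rule prediction for $M_{AB}$ on $\rho_A \otimes \rho_B$ (in particular it handles the product POVMs in $\mathcal{M}_A \times \mathcal{M}_B$ correctly, since $\Tr{(E \otimes F)(\rho_A \otimes \rho_B)} = \Tr{E\rho_A}\Tr{F\rho_B}$). As this works for every such $\mathfrak{F}_{AB}$, the pair $\Theta_A, \Theta_B$ satisfies the WPIP. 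There is no genuine conceptual obstacle here; the only care needed is bookkeeping --- choosing a measurable structure on $\Lambda_{\mathrm{NL}}$ for which the point masses and the triple product measures are well defined, and checking measurability of the trace functionals. The actual content of the theorem is the negative conclusion it supports: because an unconstrained nonlocal register can always absorb all the ``work,'' the WPIP has no force and cannot serve as a principled substitute for the PIP.
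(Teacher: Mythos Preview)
Your proof is correct and follows essentially the same idea as the paper: load a complete ontological model of the joint fragment onto the nonlocal register $\Lambda_{\mathrm{NL}}$ (the paper phrases this as picking an arbitrary $\Theta_{\mathrm{NL}}$ and mentions Beltrametti--Bugajski as the canonical choice, which is exactly what you instantiate), take product measures $\mu_A \times \mu_B \times \mu_{\mathrm{NL}}$ so that the WCPA and WNCA hold by construction, and then read the predictions off $\Lambda_{\mathrm{NL}}$.

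The one noteworthy difference is in how the response functions are assigned. You use the Born-rule response $\Tr{G_k\sigma}$ uniformly for \emph{all} measurements in $\mathcal{M}_{AB}$, ignoring $\lambda_A,\lambda_B$ entirely. The paper instead splits: for local measurements $(M_A,M_B)\in\mathcal{M}_A\times\mathcal{M}_B$ it uses the product of the local response functions from $\Theta_A,\Theta_B$ (ignoring $\lambda_{\mathrm{NL}}$), and only for genuinely joint measurements does it fall back on $\Theta_{\mathrm{NL}}$. Both choices satisfy weak compatibility, since that notion constrains only the ontic state space and the state measures, not $\Xi_{AB}$. Your version is cleaner as a bare proof; the paper's split is chosen deliberately to sharpen the subsequent critique---it makes vivid that the construction simply toggles between two unrelated models depending on whether the measurement is local or joint, which is the sense in which the WPIP is argued to be a non-constraint.
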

\begin{proof}
Let $\mathfrak{F}_{AB} = \langle \Hilb[A] \otimes \Hilb[B],
\mathcal{P}_A \times \mathcal{P}_B, \mathcal{M}_{AB} \rangle$ be a
product state fragment with factors $\mathfrak{F}_A$ and
$\mathfrak{F}_B$.  We need to show that there is an ontological
model $\Theta_{AB}$ of $\mathfrak{F}_{AB}$ satisfying the following
three requirements.
\begin{itemize}
\item $\Theta_{AB}$ satisfies the WCPA, i.e.\ it is of the form
$\Theta_{AB} = (\Lambda_A \times \Lambda_B \times
\Lambda_{\text{NL}}, \Sigma_A \otimes \Sigma_B \otimes
\Sigma_{NL}, \Delta_{AB}, \Xi_{AB})$.
\item $\Theta_{AB}$ satisfies the WNCA.
\item $\Theta_{AB}$ reproduces the quantum predictions.
\end{itemize}

To satisfy the first requirement, we first need to establish what
the measurable space $(\Lambda_{\text{NL}}, \Sigma_{\text{NL}})$
should be.  For this, consider an arbitrary ontological model
$\Theta_{\text{NL}} = ( \Lambda_{\text{NL}}, \Sigma_{\text{NL}},
\Delta_{\text{NL}}, \Xi_{\text{NL}})$ of $\mathfrak{F}_{AB}$ that
reproduces the quantum predictions.  Such an ontological model
always exists.  For example, we could use the Beltrametti--Bugajski
model extended to mixed states.  Now, just take
$(\Lambda_{\text{NL}}, \Sigma_{\text{NL}})$ to be the ontic state
space of this model.  With this, we have that the model satisfies
the WCPA.

For the second requirement, take the sets $\Delta_{AB}[\rho_A
\otimes \rho_B]$ to consist of all measures of the form $\mu_A
\times \mu_B \times \mu_{\text{NL}}$, where $\mu_A \in
\Delta_A[\rho_A]$, $\mu_B \in \Delta_B[\rho_B]$ and $\mu_{\text{NL}}
\in \Delta_{\text{NL}}[\rho_A \otimes \rho_B]$.  This assignment
satisfies the WNCA.

It remains to show that the model can be made to reproduce the
quantum predictions.  There are two cases to consider: local
measurements for which $(M_A,M_B) \in \mathcal{M}_A \times
\mathcal{M}_B$, and joint measurements $M_{AB} \in \mathcal{M}_{AB}$
that are not in $\mathcal{M}_A \times \mathcal{M}_B$.

For the local measurements, the sets $\Xi_{AB}[M_A,M_B]$ are chosen
to consist of all conditional probability distributions of the form
\begin{align}
&\text{Pr}_{AB}(E,F|M_A,M_B,\lambda_A,\lambda_B,\lambda_{\text{NL}})\nonumber\\
&\quad = \text{Pr}_A(E|M_A,\lambda_A) \text{Pr}_B(F|M_B,\lambda_B),
\end{align}
where $E \in M_A$ and $F \in M_B$, and where $\text{Pr}_A \in
\Xi_A[M_A]$ and $\text{Pr}_B \in \Xi_B[M_B]$.  In other words, the
nonlocal variable is ignored and we just use the conditional
probabilities from the two local ontological models.  This
reproduces the quantum predictions by virtue of the fact that
$\Theta_A$ and $\Theta_B$ do.

For the nonlocal measurements, the sets $\Xi_{AB}[M_{AB}]$ are
chosen to consist of all conditional probability distributions of
the form
\begin{equation}
\text{Pr}_{AB}(E|M_{AB},\lambda_A,\lambda_B,\lambda_{\text{NL}})
= \text{Pr}_{\text{NL}}(E|M_{AB},\lambda_{\text{NL}}),
\end{equation}
where $E \in M_{AB}$, and where $\text{Pr}_{\text{NL}} \in
\Xi_{\text{NL}}[M_{AB}]$.  This reproduces the quantum predictions
by virtue of the fact that $\Theta_{\text{NL}}$ does.
\end{proof}

If the PIP is replaced with the WPIP, the implications for
$\psi$-ontology are as follows.  Consider the situation in which the
factors $\mathfrak{F}_A$ and $\mathfrak{F}_B$ consist of all pure
states and measurements in orthonormal bases on a finite dimensional
Hilbert space of dimension $d$.  The ABCL construction shows that
$\psi$-epistemic models always exist for these factors.  Since any
pair of ontological models satisfy the WPIP, ABCL models do in
particular, so the WPIP cannot be used to prove a $\psi$-ontology
theorem.  One might object that we should demand that the weakly
compatible model of $\mathfrak{F}_{AB}$ should also be
$\psi$-epistemic, although that was not part of the definition of the
WPIP\@.  However, if we restrict attention to fragments
$\mathfrak{F}_{AB}$ that consist of measurements in orthonormal bases
then this can be achieved by using an ABCL model for dimension $d^2$
as $\Theta_{\text{NL}}$.  Since we can make an arbitrary pair of
states ontologically indistinct in an ABCL model, if $\Proj{\psi_0}_A$
and $\Proj{\psi_1}_A$ are ontologically indistinct in $\Theta_A$ and
$\Proj{\phi_0}_B$ and $\Proj{\phi_1}_B$ are ontologically indistinct
in $\Theta_B$ then we can choose $\Proj{\psi_0}_A \otimes
\Proj{\phi_0}_B$ and $\Proj{\psi_1}_A \otimes \Proj{\phi_1}_B$ to be
ontologically indistinct in $\Theta_{\text{NL}}$ and then they will
remain so in $\Theta_{AB}$.  One might make the further objection that
all four states $\Proj{\psi_0}_A \otimes \Proj{\phi_0}_B$,
$\Proj{\psi_0}_A \otimes \Proj{\phi_1}_B$, $\Proj{\psi_1}_A \otimes
\Proj{\phi_0}_B$ and $\Proj{\psi_1}_A \otimes \Proj{\phi_1}_B$ ought
to be pairwise ontologically indistinct, as they would be in the model
of the direct product fragment discussed in \S\ref{PIP}, but this can
also be arranged by using the extension of the ABCL model to be
discussed in \S\ref{Pair}, in which every pair of pure states can be
made ontologically indistinct.  Finally, it should be evident that the
construction used in proving Theorem~\ref{thm:CCPA:reprod} can be
iterated to larger numbers of subsystems.  We simply introduce a new
$\Theta_{\text{NL}}$ for the joint system every time a new subsystem
is added.

However, the proof of Theorem~\ref{thm:CCPA:reprod} also illustrates
why I do not consider the WPIP to be a viable constraint on subsystem
composition.  All we have done is to load an ontological model
$\Theta_{\text{NL}}$ for the joint system, that could potentially
reproduce the quantum predictions for both local and joint
measurements all by itself, onto the nonlocal variable
$\lambda_{\text{NL}}$.  When we make a local measurement, this model
is completely ignored and we use the local models $\Theta_A$ and
$\Theta_B$ instead, but when we make a joint measurement we do the
opposite, using the nonlocal model $\Theta_{\text{NL}}$ and completely
ignoring $\Theta_A$ and $\Theta_B$.  This is fully compatible with the
WPIP, but it is not clear why this model is better than just using
$\Theta_{\text{NL}}$ on its own in the first place.

I accept the criticism that dropping the CPA need not necessarily give
rise to global parameters becoming relevant to local experiments.
However, arguably, the form of the local ontological models should
place some constraints on what the model for the joint system should
look like.  The PIP, in its original form, does this, but the WPIP
does not, since it is compatible with just assigning an ontological
model to the nonlocal variable that is completely independent of the
local ontological models.  In response to this, I think one should
either drop the PIP entirely, or come up with something of
intermediate strength that places nontrivial constraints on the
ontological model for the joint system.  Without that, giving examples
of $\psi$-epistemic models that satisfy the WPIP does not, in my
opinion, strengthen the case against $\psi$-ontology.

Of course, ESSV do not have in mind the trivial sort of model
constructed here.  In their model of Example~\ref{exa:Main:Pusey--Barrett--Rudolph}, the
nonlocal variable does not contain enough information to reproduce the
quantum predictions all on its own, and the probability measures over
$\Lambda_A \times \Lambda_B \times \Lambda_{\text{NL}}$ are
nontrivially correlated.  The point is that ESSV have not done enough
to articulate the way in which their model differs from the trivial
construction presented here.  On this point, note that both the ESSV
model and the construction given here share the feature that the
amount of overlap between states is lower than it would be in the
model of the direct product fragment discussed in \S\ref{PIP}, due to
the probability measure on the extra factor $\Lambda_{\text{NL}}$.
One might therefore impose the additional requirement that the
reduction in overlap introduced by the nonlocal variable ought to be
as small as possible, which could potentially rule out the
construction presented here.  One might also demand that there should
be no redundancy in the model, in the sense that it should not be
possible to reproduce the quantum predictions for local measurements
using the nonlocal variable on its own.  Further work along these
lines is needed to determine whether there is a weakened version of
the PIP that can serve as a reasonable constraint on subsystem
composition.

\subsubsection{Criticism of the NCA}

\label{CNCA}

The PIP has also been criticized on other grounds by Hall
\cite{Hall2011} (this criticism is also mentioned by Schlosshauer and
Fine \cite{Schlosshauer2012}).  Hall objects to the NCA on the grounds
that, even if they are space-like separated, the events corresponding
to preparing two systems have a common past, so their lack of
correlation cannot be derived from causality by the same sort of
reasoning that Bell used to motivate his locality condition.  More
explicitly, suppose we have a model that does satisfy the CPA, so that
we can associate separate ontic states $\lambda_A$ and $\lambda_B$ to
Alice and Bob's systems.  Suppose that the intersection of the past
lightcones of Alice and Bob's preparation events also has some
physical properties described its own ontic state
$\lambda_{\text{past}}$.  For concreteness, suppose that all the ontic
state spaces are finite, i.e.\ the ontic state spaces of Alice's
system, Bob's system, and their common past are all finite.  The
argument does not depend on this, but it makes things conceptually
simpler.

Suppose that Alice decides to prepare the quantum state
$\Proj{\psi}_A$ and Bob decides to prepare the quantum state
$\Proj{\phi}_B$.  In general, the resulting $\lambda_A$ and
$\lambda_B$ might depend on both Alice and Bob's choices of
preparation and the ontic state $\lambda_{\text{past}}$ of their
common past, so the preparation procedure would be specified by
conditional probabilities
$\text{Pr}(\lambda_A,\lambda_B|\Proj{\psi}_A,\Proj{\phi}_B,\lambda_{\text{past}})$.
Given that the preparation procedures might occur at space-like
separation, it is not unreasonable to impose a factorization of
probabilities akin to Bell locality, which would read
\begin{align}
&\text{Pr}(\lambda_A,\lambda_B|\Proj{\psi}_A,\Proj{\phi}_B,\lambda_{\text{past}})\nonumber\\
&= \text{Pr}(\lambda_A|\Proj{\psi}_A,\lambda_{\text{past}})
\text{Pr}(\lambda_B|\Proj{\phi}_B, \lambda_{\text{past}}),
\end{align}
However, this is not enough to entail the NCA because the measure
$\mu$ corresponding to the preparation of $\Proj{\psi}_A \otimes
\Proj{\phi}_B$ would then be,
\begin{align}
&\mu(\lambda_A,\lambda_B) = \sum_{\lambda_{\text{past}}}
\text{Pr}(\lambda_A|\Proj{\psi}_A, \lambda_{\text{past}})\nonumber\\
&\quad\times\text{Pr}(\lambda_B|\Proj{\phi}_B, \lambda_{\text{past}})
\text{Pr}(\lambda_{\text{past}}),
\end{align}
where $\text{Pr}(\lambda_{\text{past}})$ is the prior distribution
over the variables in the common past.  This can induce correlations
between $\lambda_A$ and $\lambda_B$ due to their common dependence on
$\lambda_{\text{past}}$.

This argument is correct, but all it shows is that the NCA cannot be
regarded as a causality assumption akin to Bell locality.  It does not
imply that the NCA is necessarily unreasonable.  To infer that, one
would have to believe that the \emph{only} reasonable type of
assumption to make in a no-go theorem is one that follows from
Bell-type locality.  In contrast, \S\ref{MPIP} discussed a more
general strategy for positing reasonable assumptions, which is to look
at the operational structure of quantum theory and try to impose
similar structure at the ontological level.  In the case of the NCA,
we noted that a product state displays no correlations in any quantum
measurement, so it makes sense to posit that no such correlations
exist at the ontological level either.  It is this, and not Bell
locality, that is the motivation for the NCA\@.  Dropping the NCA in
light of the Pusey--Barrett--Rudolph Theorem is not an obviously crazy thing to do, but it
is the Pusey--Barrett--Rudolph Theorem itself rather than causality arguments that should
motivate this move.  The same reasoning applies to all other no-go
theorems that are not motivated by locality.  For example, I think it
is fairly clear that proofs of preparation contextuality are a good
reason for dropping the assumption of preparation noncontextuality,
but the fact that the latter cannot be derived from Bell locality is a
much less compelling reason for doing so.

\subsection{Other Criticisms of the Pusey--Barrett--Rudolph Theorem}

\label{OCPusey--Barrett--Rudolph}

Apart from criticism of the PIP, several other criticisms of the
Pusey--Barrett--Rudolph Theorem have been raised.
\S\ref{Crit:Response} discusses a criticism due to Drezet
\cite{Drezet2012, Drezet2012a} and Schlosshauer and Fine
\cite{Schlosshauer2012} based on the idea that the conditional
probability distributions representing measurements should depend on
the quantum state in addition to the ontic state.  I argue that this
criticism is simply a misunderstanding of what is meant by the term
``ontic state'' in the ontological models framework.  I then discuss
two further criticisms due to Schlosshauer and Fine
\cite{Schlosshauer2012}, the second of which has also been made by
Dutta et.\ al.\ \cite{Dutta2014}.  The first criticism, discussed in
\S\ref{Crit:SF1}, is a claim that the $\psi$-ontic/$\psi$-epistemic
distinction is merely conventional because one kind of model can be
converted into a structurally equivalent model of the other kind.  The
second criticism, discussed in \S\ref{Crit:SF2}, is based on the idea
that modeling detector inefficiencies offers a way out of the dilemma
posed by the Pusey--Barrett--Rudolph Theorem.  Whilst this is strictly
speaking true, it only concerns the practical implementation of
Pusey--Barrett--Rudolph-type experiments and has no impact on the
Pusey--Barrett--Rudolph Theorem as a structural result about quantum
theory itself, assuming ideal experiments.  If one accepts the PIP,
error analysis can be used to constrain this escape route in more
practical experiments.  Finally, in \S\ref{Crit:Copenhagen}, I discuss
criticisms that reject some aspect of the ontological models framework
outright.  These criticisms are mostly based on a neo-Copenhagen point
of view, and thus are easy to deal with as the Pusey--Barrett--Rudolph
Theorem was never intended to rule out such approaches.

Since the criticisms in this section are directed against the
ontological models framework in general, they are not specific to Pusey--Barrett--Rudolph
but could also be directed against Hardy's Theorem, the Colbeck--Renner
Theorem and the results discussed in \hyperref[Beyond]{Part III}.  I discuss
them here because they were made as responses to the Pusey--Barrett--Rudolph Theorem and
not these other results, but this is simply because the Pusey--Barrett--Rudolph Theorem
was the first and is still the most prominent $\psi$-ontology theorem.

\subsubsection{The conditional probabilities for measurements should
depend on the quantum state}

\label{Crit:Response}

Several authors have pointed out that, in the ontological models
framework, the conditional probability distribution
$\text{Pr}(E|M,\lambda)$ representing a measurement is assumed to be
independent of the quantum state $\rho$ that is prepared
\cite{Drezet2012, Drezet2012a, Schlosshauer2012}.  If a dependence
$\text{Pr}(E|\rho,M,\lambda)$ is allowed then the theorem can be
trivially evaded.  The conditional probabilities can simply be made
independent of the ontic state and and can just return the quantum
probabilities by setting $\text{Pr}(E|\rho,M,\lambda) = \Tr{E \rho}$,
since we then have, for any probability measure $\mu$,
\begin{align}
\int_{\Lambda} \text{Pr}(E|\rho,M,\lambda) d\mu(\lambda) & =
\Tr{E\rho} \int_{\Lambda} d\mu(\lambda) \\
& = \Tr{E\rho}.
\end{align}
The ontic state space and probability measures can then be anything at
all, so the model can trivially be made to satisfy the PIP\@.  For
example, $\mathbb{C}^d$ could be associated with the ontic state space
$\Lambda_d = \{1,2,\ldots,d\}$ and $\mathbb{C}^d \otimes
\mathbb{C}^{d^{\prime}}$ with the Cartesian product $\Lambda_d \times
\Lambda_{d^{\prime}}$, which is isomorphic to $\Lambda_{dd^{\prime}}$.
Then, the uniform measure can be used to represent all quantum states,
which makes the model trivially $\psi$-epistemic.

Of course, proponents of this view do not have this sort of model in
mind as a realistic candidate for describing quantum theory.  They
think the conditional probabilities should depend on the ontic states
in some way in addition to the quantum state.  The above model is just
intended to show how trivial the $\psi$-ontic/$\psi$-epistemic
distinction becomes when the conditional probabilities are allowed to
depend on the quantum state.

I think there are two intuitions behind this sort of objection.  The
first is based on elementary probability theory and the second on a
misunderstanding of how conventional hidden variable theories, such as
de Broglie--Bohm theory, are meant to fit into the ontological models
framework.  First of all, putting aside everything we know about the
ontological models framework for the moment, suppose we are interested
in some PM fragment $\mathfrak{F} = \langle \Hilb,
\mathcal{P}, \mathcal{M} \rangle$ and we have a theory for reproducing
its predictions that involves some sort of additional variable
$\lambda$ that takes values in a set $\Lambda$.  For the purposes of
this argument, assume that $\Lambda$, $\mathcal{P}$ and $\mathcal{M}$
are finite, so that elementary probability theory can be used.  This
restriction can easily be lifted, but dealing with the measure
theoretic complications would obscure the argument.

A quite general way that such a theory could be formulated is in terms
of a joint probability distribution $\text{Pr}(E,\rho,M,\lambda)$ over
all the variables involved, where $\text{Pr}(E,\rho,M,\lambda)$ is
specified for every $\rho \in \mathcal{P}$, $M \in \mathcal{M}$, $E
\in M$, $\lambda \in \Lambda$.  In order to determine whether such a
model reproduces the quantum predictions, we need to determine the
conditional probabilities $\text{Pr}(E|\rho,M)$, i.e.\ the probability
for an outcome given the choice of state and measurement, because
quantum theory tells us this should equal $\Tr{E\rho}$.  This can be
computed as follows.
\begin{align}
\text{Pr}(E|\rho, M) &= \sum_{\lambda} \text{Pr}(E,\lambda| \rho,
M)\nonumber\\
&= \sum_{\lambda} \text{Pr}(E|\rho,M,\lambda)
\text{Pr}(\lambda|\rho,M),
\end{align}
where the conditional probabilities are defined in terms of the joint
probability in the usual way and the second equality follows from the
law of total probability.

On the other hand, in the ontological models framework, the same
quantity would be computed as
\begin{equation}
\text{Pr}(E|\rho,M) = \sum_{\lambda} \text{Pr}(E|M,\lambda)
\mu(\lambda),
\end{equation}
where $\mu \in \Delta_\rho$ and the usual integral has been replaced
by a sum because $\Lambda$ is finite.  Comparing these two
expressions, in an ontological model we have
\begin{align}
\text{Pr}(E|\rho,M,\lambda) & =
\text{Pr}(E|M,\lambda) \label{eq:OCPusey--Barrett--Rudolph:response} \\
\text{Pr}(\lambda|\rho,M) & = \mu(\lambda). \label{eq:OCPusey--Barrett--Rudolph:state}
\end{align}

The left hand side of Eq.~\eqref{eq:OCPusey--Barrett--Rudolph:response} depends on $\rho$,
but the right hand side does not.  Similarly, the left hand side of
Eq.~\eqref{eq:OCPusey--Barrett--Rudolph:state} depends on $M$, but the right hand side
does not.  Despite appearances, both sides of
Eq.~\eqref{eq:OCPusey--Barrett--Rudolph:state} depend on $\rho$ because $\mu$ itself is
$\rho$ dependent given that it must be a member of $\Delta_{\rho}$.
Therefore, it seems that the ontological models framework implicitly
assumes that the following conditional independences hold.
\begin{align}
\text{Pr}(E|\rho,M,\lambda) & = \text{Pr}(E|M,\lambda) \\
\text{Pr}(\lambda|\rho,M) & = \text{Pr}(\lambda|\rho).
\end{align}

The second conditional independence, that $\lambda$ should not depend
on the choice of measurement, was noted in the context of the
Pusey--Barrett--Rudolph Theorem by Hall \cite{Hall2011}, but it is not
all that controversial.  It is usually justified by the idea that the
measurement setting is a ``free choice'' that may be chosen by the
experimenter long after the preparation is completed.  Things can be
set up such that the measurement choice is determined by something
that should be independent of the rest of the experiment, such as a
random number generator.  Theories in which dependence of $\lambda$ on
$M$ nevertheless still holds in the underlying ontology are often
called \emph{superdeterministic}.  There is a minority that seriously
advocates superdeterminism, but this loophole exists in almost all
no-go theorems for ontological models of quantum theory, e.g.\ it
applies to Bell's Theorem as well.  Therefore, if it is taken
seriously as a response to Pusey--Barrett--Rudolph then it must be
taken seriously for these other results as well.  One way in which
dependence on $M$ can happen is if there are retrocausal influences
that travel backwards in time.  Personally, I am not opposed to
developing retrocausal theories as a response to all the quantum no-go
theorems, although I am not convinced it is the right direction
either.  Nonetheless, this criticism is not specific to
Pusey--Barrett--Rudolph.

The first conditional independence, that the measurement outcome
should be independent of the quantum state given the ontic state and
the choice of measurement, is the one that is objected to more
frequently.  However, this is not really a substantive assumption, but
rather it is part of the very meaning of the term ``ontic state''.
The ontic state is supposed to comprise all the properties of the
system that exist in reality.  In addition to its own setting, the
response of the measurement device is only supposed to depend on those
properties of the system that exist in reality, so the ontic state is
the only information about the preparation procedure that it receives.
If there is an additional dependence on the quantum state then that
simply means that we have made an incorrect assertion about what the
ontic state actually is.  It must include all the information that is
required to determine the response of the measurement device.
Therefore, saying that the conditional probability distribution
describing the measurement should depend on the quantum state is
tantamount to saying that the quantum state is part of the ontic
state, and it is very easy to prove that the quantum state is ontic if
you assume that it is ontic from the outset.  In conclusion, I agree
with the critics that the scope of the Pusey--Barrett--Rudolph Theorem is restricted to
the case where this conditional independence holds, but this is part
of the definition of the term ``ontic state'', rather than something
that can be eliminated in order to arrive at a more general notion of
what it means for a model to be $\psi$-epistemic that still conveys
the same meaning.

I think that proponents of this objection have been misled by the way
in which hidden variable theories, and de Broglie--Bohm theory in
particular, have traditionally been presented.  It is often thought
that the aim of a hidden variable theory should be to restore
determinism, and so the problem of developing such a theory is often
phrased in terms of whether quantum theory is ``complete''.  The
terminology ``complete'' suggests taking the idea that elements of the
existing quantum formalism represent reality for granted, and only
asking whether anything else needs to be added to it.  It is obviously
critical to not take this point of view if the reality of the
wavefunction is the very thing under investigation.  The criticism of
Drezet \cite{Drezet2012, Drezet2012a} exemplifies this mistake.  He
suggests that de Broglie--Bohm theory is a counterexample to the Pusey--Barrett--Rudolph
argument.  In de Broglie--Bohm theory, particles have well-defined
positions which evolve deterministically, and the probability
distribution assigned to the particles is given by $|\psi(x)|^2$,
where $\psi(x)$ is the wavefunction.  Drezet's argument is that, if we
view the particle positions as the ontic states of the system, then
their distributions overlap for any pair of nonorthogonal states
because in this case the $|\psi(x)|^2$ distributions overlap.  He
claims that this makes the theory $\psi$-epistemic.

Now, in the conventional understanding of de Broglie--Bohm theory, the
wavefunction is understood to be part of the ontic state in addition
to the particle positions.  It is true that the particle positions are
in some sense more fundamental than the wavefunction, and they are
often called the ``primitive ontology'' \cite{Duerr1992,
Goldstein1998, Goldstein1998a} or the ``local beables'' of the
theory \cite{Bell2004}.  The particle positions are supposed to be the
things in the theory that provide a direct picture of what reality
looks like to us, e.g.\ when we observe the pointer of a measurement
device pointing to a specific value then it is the positions of the
particles that make up the pointer that determine this.  Nevertheless,
the wavefunction is still needed as part of the ontology because it
determines how the particles move via the guidance equation.  The
response of a measurement device to an interaction with a system it is
measuring depends on the wavefunction of the system as well as the
particle positions, so the wavefunction is still part of the ontic
state, even if it is in some sense less primitive than the particle
positions.  Of course, Drezet can get away with having only the
particle positions comprise the ontic state if he allows the
conditional probabilities representing a measurement to depend on the
wavefunction separately.  Indeed, as pointed out by Schlosshauer and
Fine \cite{Schlosshauer2012}, this is often how the probabilities are
written in de Broglie--Bohm theory and other hidden variable theories,
but this is because that framework was addressing the issue of
completeness, which assumes that the wavefunction is real, rather than
the question of whether the quantum state is real in the first place.
As I have argued, it is part of the definition of an ontic state that
it suffices to completely determine how a measurement device will
react to the system, so if your conditional probabilities for
measurement outcomes depend on the wavefunction then the wavefunction
is ontic and there is nothing left to prove.

\subsubsection{The $\psi$-ontic/epistemic distinction is conventional}

\label{Crit:SF1}

Fine and Schlosshauer \cite{Schlosshauer2012} claim that the
distinction between $\psi$-ontic and $\psi$-epistemic models is merely
conventional because a $\psi$-ontic model can be converted into a
``structurally equivalent'' model that is $\psi$-epistemic and vice
versa.  They do not actually define the term ``structurally
equivalent'', so it is perhaps best to look at the procedures they
propose for converting models.

Firstly, starting from a $\psi$-epistemic model it is trivial to
construct a $\psi$-ontic one.  Simply take the new ontic state space
to be the Cartesian product of the existing ontic state space with the
set of pure quantum states.  Then, for each pure state preparation,
take the existing probability measures and form the product with a
point measure on the same pure state in the new component of the ontic
state space.  These are the probability measures of the new model.
Finally, extend the conditional probabilities representing
measurements to the new ontic state space in the most trivial way, by
having them not depend on the new factor at all.  Because of the point
measures, the new model is $\psi$-ontic, but because the conditional
probabilities completely ignore this component of the ontic state
space, the model makes the exact same predictions as the original one.

The construction that Fine and Schlosshauer intend for converting a
$\psi$-ontic model into a $\psi$-epistemic model is less obvious.
They simply refer to the LJBR paper for this \cite{Lewis2012}, which
uses the same sort of construction as the ABCL model, discussed in
\S\ref{NPIP}.  Recall that the idea is to find regions of the ontic
state space associated with two different quantum states that make the
exact same predictions for all measurements.  One can then
redistribute any weight that probability measures assign to this
region such that measures associated with distinct quantum states now
match on this region, and hence they now overlap.  Fine and
Schlosshauer seem to think that such regions can always be found, but
this is not the case.  For example, in the Beltrametti--Bugajski model,
the ontic state space is just the set of pure quantum states and each
quantum state is represented by a different point measure on this
space.  The response functions simply return the quantum
probabilities, so each pair of ontic states makes different
predictions for some quantum measurement.  Therefore, there are no
regions of the ontic state space corresponding to distinct quantum
states that make identical predictions.  Thus, this construction
cannot be used to generate a $\psi$-epistemic model in this case.

To be fair, Fine and Schlosshauer confine their attention to
deterministic models, but they do not prove that appropriate regions
of the ontic state space can always be found even in this case.  It
might be interesting to investigate this, but nevertheless determinism
is deliberately not an assumption of the ontic models framework, so
the fact that there are non-deterministic models to which their
construction does not apply is enough to defeat the criticism.

Further, since Schlosshauer and Fine do not define what they mean by
``structurally equivalent'', it is not clear what their objection is
in the first place.  The only requirement I can infer from their paper
is that ``structurally equivalent'' models should make the same
predictions for all quantum measurements.  If this is really all that
the term means then any two models that reproduce the quantum
predictions would be structurally equivalent.  For example, the
Beltrametti--Bugajski model and de Broglie--Bohm theory would be
structurally equivalent by this criterion.  It is clear, however, that
they are not \emph{explanatorially} equivalent.  Beltrametti--Bugajski
is simply a more precise version of the orthodox interpretation of
quantum theory in which the quantum state, and only the quantum state,
is the state of reality.  This brings with it all the attendant
problems of measurement and the collapse of the wavefunction.  On the
other hand de Broglie--Bohm solves these problems by introducing
additional variables.  Whether or not you think it is plausible as a
fundamental theory, it does not have a measurement problem.
Therefore, reproducing the same set of predictions does not mean that
two theories are equally viable.

Of course, to some extent, Fine and Schlosshauer are just pointing out
that, without the PIP, both $\psi$-epistemic and $\psi$-ontic models
are possible.  However, this means that criticism of the Pusey--Barrett--Rudolph Theorem
should be directed specifically at the PIP, and not at whether the
$\psi$-ontic/$\psi$-epistemic distinction makes sense in the first
place.

\subsubsection{Detector inefficiency}

\label{Crit:SF2}

Both Schlosshauer and Fine \cite{Schlosshauer2012} and Dutta et.\ al.\
\cite{Dutta2014} have raised an objection to the Pusey--Barrett--Rudolph theorem based on
detector inefficiency.  Here is not the place to go over the fine
details of experimental error analysis, but it is worth taking a
little time to explain what a practical test of a $\psi$-ontology
theorem can be expected to show, so that we can deal with this more
easily.

Experimentally testing a mathematical theorem is a bizarre concept at
first sight, but what it really means is to test that the quantum
theoretical predictions hold in the experimental scenarios used to
prove the theorem, at least approximately.  In the case of the Pusey--Barrett--Rudolph
theorem, this means checking that the antidistinguishing measurements
used to prove the theorem really do antidistinguish the sets of states
that they are supposed to.  In practice, the experiments will not
reproduce the quantum predictions exactly.  The error analysis
therefore involves figuring out how the conclusions of the theorem
must be modified when the quantum predictions are only reproduced
approximately.

One possible source of error is detector inefficiency.  In the
theoretical treatment, we generally assume that, when performing a
measurement of a POVM $M = \{E_j\}$, one of the outcomes $E_j$
actually occurs.  In practice, the measuring device may sometimes
simply fail to register any outcome at all.  Thus, for a realistic
analysis, we should add an extra POVM element $E_{\text{null}}$ to
every POVM, representing the possibility of detector failure.  We will
assume that our detectors are reasonably efficient, so that there is
some small $\eta > 0$ such that $\Tr{E_{\text{null}} \rho} \leq \eta$
for every state $\rho$ prepared in the experiment.  Now, if a set of
states $\{\rho_j \}_{j=1}^n$ is antidistinguishable then this means
that there exists a measurement $M = \{E_j\}_{j=1}^n$ with exactly $n$
outcomes such that $\Tr{E_j \rho_j} = 0$.  With detector inefficiency,
we have to deal with the fact that there are $n+1$ outcomes in the
measurement that we actually perform and that all we can say about the
extra $E_{\text{null}}$ outcome is that $\Tr{E_{\text{null}} \rho_j}
\leq \eta$ for every $\rho_j$.

Of course, there are other sources of error as well.  Even if the
detectors do register an outcome, they may not fire with the exact
probabilities predicted by quantum theory due to environmental noise
and our inability to control the experimental apparatus with absolute
precision.  This sort of error was dealt with in the error analysis in
the original Pusey--Barrett--Rudolph paper \cite{Pusey2012}.  There is also a statistical
error arising from the fact that the experiment is only repeated a
finite number of times, so we are inferring probabilities from a
finite sample.  This can be dealt with by the standard techniques of
statistics.  In any case, the objection to the Pusey--Barrett--Rudolph theorem is based on
detector inefficiency, so we do not need to consider these other
sources of error here.

Given an antidistinguishable set of states $\{\rho_j\}_{j=1}^n$,
detector inefficiency has implications for what we can say about the
overlaps of the probability measures $\mu_j \in \Delta_{\rho_j}$.
Specifically, assuming the ontological model reproduces these
predictions, we now have
\begin{align}
\int_{\Lambda} \text{Pr}(E_j|M,\lambda) p_j(\lambda) dm(\lambda) & =
0 \\
\int_{\Lambda} \text{Pr}(E_{\text{null}}|M,\lambda) p_j(\lambda)
dm(\lambda) & \leq \eta,
\end{align}
where $m$ is a measure that dominates $\{\mu_j\}$ and the $p_j$'s are
corresponding densities.

From these equations, we can immediately infer that
\begin{align}
\int_{\Lambda} \text{Pr}(E_j|M,\lambda) \min_k \left [ p_k(\lambda)
\right ] dm(\lambda) & = 0 \\
\int_{\Lambda} \text{Pr}(E_{\text{null}}|M,\lambda) \min_k \left [
p_k(\lambda) \right ] dm(\lambda) & \leq \eta.
\end{align}
and since $\sum_j \text{Pr}(E_j|M,\lambda) +
\text{Pr}(E_{\text{null}}|M,\lambda) = 1$, we have
\begin{equation}
\int_{\Lambda} \min_k \left [ p_k(\lambda) \right ] dm(\lambda) \leq
\eta,
\end{equation}
or, in other words $L \left ( \left \{ \mu_j \right \}_{j=1}^n \right
) \leq \eta$.

The reason why $L \left ( \left \{ \mu_j \right \}_{j=1}^n \right )$
is no longer zero is fairly straightforward.  Given that there are
detector inefficiencies, we cannot rule out the possibility that there
are ontic states $\lambda \in \Lambda$ that always cause the outcome
$E_{\text{null}}$ to occur.  If the detector does not fire with
probability $\eta$, then one of these ontic states might be occupied
with probability $\eta$, regardless of which quantum state was
prepared, and this means that the corresponding probability measures
could have an overlap of up to $\eta$.

Both Fine and Schlosshauer \cite{Schlosshauer2012} and Dutta et.\ al.\
\cite{Dutta2014} point out that that, since $L \left ( \left \{ \mu_j
\right \}_{j=1}^n \right )$ is always nonzero regardless of how
small the detector inefficiency is, you can never get a definitive
confirmation of $\psi$-ontology from a practical experiment.  If the
aim of such an experiment is to definitively rule out the possibility
of a $\psi$-epistemic model, then any nonzero detector inefficiency
immediately makes this impossible.

However, it is important to remember that the definition of a
$\psi$-epistemic model is very permissive.  The $\psi$-epistemic
explanations of quantum phenomena such as the indistinguishability of
quantum states require much more than that the overlap should be
nonzero.  For example, the states $\Proj{0}$ and $\Proj{+}$ of
Example~\ref{exa:Main:Pusey--Barrett--Rudolph} can only be distinguished with probability
$3/4$ in quantum theory.  If the corresponding probability measures
have an overlap that is almost, but not exactly, zero, then there is
not enough overlap to explain why they cannot be distinguished much
better than this.  For this reason, we do not need to definitively
rule out all $\psi$-epistemic models to make $\psi$-epistemic
explanations seem implausible.  It is enough to show that the overlaps
are small compared to the indistinguishability of the quantum states,
and experiments should therefore be used to provide upper bounds on
these overlaps, rather than a definitive yes/no result.

I think both Fine and Schlosshauer \cite{Schlosshauer2012} and Dutta
et.\ al.\ \cite{Dutta2014} have been mislead by the way that detector
inefficiencies are dealt with in experimental tests of Bell's theorem.
In that case, one is looking for a definitive yes/no test of whether a
model satisfying Bell's locality condition can account for the
experimental probabilities and, in that case, there is indeed a sharp
detection efficiency above which such local models can be definitively
ruled out \cite{Eberhard1993}.  However, tests of $\psi$-ontology
theorems simply should not be thought of like this.  Instead of a
yes/no result, they yield a numerical upper bound on the degree of
overlap.  There is no specific value that this must take beyond which
$\psi$-epistemic models are definitively ruled out, but if it is
sufficiently close to zero then they should be regarded as
implausible.

In response to the perceived difficulty with detector inefficiency,
Dutta et.\ al.\ suggest that experimental tests of the Pusey--Barrett--Rudolph theorem
should instead be viewed as definitive yes/no experiments about the
possibility of a \emph{maximally}-$\psi$-epistemic theory, and they
derive the detection efficiencies required to do this.  However, in my
view, it is more informative to simply report the overlaps inferred
from the experiment rather than trying to do a definitive test of
whether they are below some fixed threshold.

Fine and Sclosshauer's version of the objection involves an additional
subtlety that stems from the fact that they prefer the compactness
condition to the PIP\@.  Recall that in the proof of the Pusey--Barrett--Rudolph theorem,
the antidistinguishing measurement acts on multiple copies of the
system prepared in a product state.  The PIP is needed to infer bounds
on the overlap of the measures for a single system from the bounds
obtained from the antidistinguishing joint measurement made on all of
the systems together.  The details can be found in the original Pusey--Barrett--Rudolph
paper \cite{Pusey2012} and I do not want to rehash them here.  The
main point is that this can be done assuming the PIP because the
probability measures of the joint system are products of those for the
individual systems.  With compactness, there is no necessary
connection between the amount of overlap of the probability measures
on the global system and the amount of overlap of the measures for the
individual subsystems.  Compactness just says that if the former is
zero then the latter must also be zero.  Because of this, compactness
does not allow for the derivation of error bounds.  However, we have
already argued that compactness is poorly motivated compared to the
PIP, so this is just another reason for preferring the PIP to
compactness.

Fine and Schlosshauer can also be read as making a prediction that a
certain amount of detector efficiency will necessarily occur when we
attempt to make the measurements needed to test the Pusey--Barrett--Rudolph theorem.  This
amounts to a prediction that the quantum theory can only be verified
to some finite accuracy in these experiments, so it is a prediction
that quantum theory will break down at some point.  Again, assuming
compactness, this only requires a nonzero detection inefficiency, but
under the PIP this inefficiency would have to be implausibly large for
some pairs of states, such as $\Proj{0}$ and $\Proj{+}$, in order to
allow for a viable $\psi$-epistemic model that supports things like
the $\psi$-epistemic explanation of indistinguishability.  In any
case, the idea that quantum theory would be violated in order to
preserve the possibility of $\psi$-epistemic explanations is one of
the more implausible reasons for thinking that quantum theory might
break down that I have heard.

\subsubsection{Rejecting the ontological models framework}

\label{Crit:Copenhagen}

Finally, several authors have objected to the adoption of the
ontological models framework wholesale \cite{Hofmann2011, Motl2011,
Griffiths2012}.  Usually, these objections come from those who adopt
neo-Copenhagen approaches, so I would say that this is just a
misunderstanding of the intended scope of the Pusey--Barrett--Rudolph Theorem, which was
never intended to rule out such interpretations.

For example, Griffiths \cite{Griffiths2012} wonders why anyone would
still be interested in the ontological models framework at all, given
that existing results like Bell's Theorem and the Kochen--Specker
Theorem already make it look quite implausible.  To this I would
respond that, to the extent that explicitly nonlocal and contextual
theories like de Broglie--Bohm theory, spontaneous collapse theories
and modal interpretations are currently taken seriously, the framework
is interesting because it encapsulates them and allows us to study
what other constraints must be satisfied by theories in this category.
Beyond that, the ontological models framework is interesting as a
model of how to simulate quantum systems with classical resources, so
even if the framework is without foundational significance, it is
still relevant to quantum information theory.  Of course, Griffiths
thinks he has a superior approach in the form of decoherent/consistent
histories, but to my mind the best way of understanding this approach
is either as a neo-Copenhagen interpretation or as a way of
formulating the branching structure in a many-worlds interpretation.
Both of these are beyond the intended scope of the ontological models
framework and the Pusey--Barrett--Rudolph Theorem.

Similarly, Hofmann \cite{Hofmann2011} thinks that the conclusion of
Pusey--Barrett--Rudolph can be avoided by allowing exotic probability theories, such as
those involving negative probabilities.  Such theories have a long
history in quantum theory, with the most famous example being the
Wigner function \cite{Wigner1932}.  However, exotic probability
theories are usually couched in neo-Copenhagen or operationalist
terms, i.e.\ it is fine to use whatever mathematical object you like
to represent unobservable things, so long as you always derive an
ordinary positive probability distribution for observable measurement
outcomes.  In the ontological models framework, the ontic state
$\lambda$ is supposed to be something that objectively exists
independently of the experimenter.  Although we may not know the exact
value of $\lambda$ and may only be able to detect coarse-grained
properties of it in our experiments, it is supposed to be knowable in
principle, even if only by a hypothetical super-quantum agent who is
not subject to the same limitations as us.  For such an agent, the
probability assigned to $\lambda$ has to have one of its conventional
meanings, e.g.\ in terms of frequencies, betting odds, etc., since it
is not conceptually different from any other probability.  Therefore,
only requiring probabilities to be positive for things that we can
observe is not good enough to constitute a realist interpretation.
One would need to specify what it means to assign a negative
probability to $\lambda$ for someone who can know $\lambda$ exactly
and, as far as I am aware, no such interpretation of negative
probabilities exists.

\section{Dynamics in ontological models}

\label{Dyn}

The remaining two $\psi$-ontology theorems---Hardy's Theorem and the
Colbeck--Renner Theorem---make use of assumptions about how dynamics
are represented in ontological models.  Two distinct scenarios are
relevant.  Firstly, we extend the notion of a PM fragment to include
the possibility of performing discrete unitary transformations between
preparation and measurement.  As with measurements, we do not want to
assume that this dynamics is deterministic at the ontological level,
so a unitary transformation is represented by a stochastic
transformation on the ontic state space.  This type of dynamics is
discussed in \S\ref{Dyn:With}.  An important property of stochastic
transformations is that they cannot increase the variational distance
between probability measures.  Therefore, if two quantum states are
ontologically indistinct then they remain ontologically indistinct
after applying a unitary transformation.

Using just unitary dynamics, it is possible to prove versions of both
the Hardy and Colbeck--Renner Theorems, but they fail to show that all
nonorthogonal pure states must be ontologically distinct.  Instead,
they show that, if the inner product of a pair of pure states is less
than some quantity that depends on the Hilbert space dimension, then
they must be ontologically distinct.

The trick to extending these results into full blown $\psi$-ontology
theorems for arbitrary dimensions is to consider a different type of
dynamical transformation.  This involves appending an ancillary system
in a fixed state to the system of interest and is discussed in
\S\ref{Dyn:Bet}.  This does not change the inner product of the
original pure states, but it does increase the dimension of the
Hilbert space.  The dimension of the ancillary system can then be
chosen so that the required inequality is satisfied in the larger
Hilbert space.  However, our definition of a fragment assumes that the
Hilbert space is fixed and adding an ancillary system changes the
Hilbert space.  Therefore, in the present framework, appending a
system should be viewed as a dynamical mapping between two
\emph{different} fragments and, at the ontological level, as a mapping
between two \emph{different} ontological models.  If appending an
ancilla is modeled by a stochastic transformation in the same sort of
way as unitary dynamics within a fixed fragment, then it also shares
the property that ontologically indistinct quantum states remain so
after such a transformation.

The property that dynamics preserve ontological distinctness, be
they unitary or the appending of ancillas, is the only assumption
about dynamics needed to prove the Colbeck--Renner Theorem.  Therefore,
we state this as an explicit assumption so that the Colbeck--Renner
argument can be formulated in terms of PM fragments with just this
additional assumption.  On the other hand, Hardy's Theorem involves an
additional assumption about dynamics, to be discussed in \S\ref{OI},
so it requires the extended notion of a fragment that includes unitary
transformations.

\subsection{Unitary dynamics}

\label{Dyn:With}

\begin{definition}
A \emph{prepare-measure-transform (PMT) fragment} of quantum theory
$\mathfrak{F} = \langle \Hilb, \mathcal{P}, \mathcal{M}, \mathcal{T}
\rangle$ consists of a Hilbert space $\Hilb$, a set $\mathcal{P}$ of
density operators on $\Hilb$, a set $\mathcal{M}$ of POVMs on
$\Hilb$, and a set $\mathcal{T}$ of unitary operators on $\Hilb$
that contains the identity.  Additionally, $\mathcal{P}$ is closed
under the action of $\mathcal{T}$, i.e.\ if $\rho \in \mathcal{P}$
and $U \in \mathcal{T}$ then $U \rho U^{\dagger} \in \mathcal{P}$.

The quantum probability of obtaining the outcome $E \in M$ of a
measurement $M \in \mathcal{M}$ when the state $\rho \in
\mathcal{P}$ is prepared and the transformation $U \in \mathcal{T}$
is applied between preparation and measurement is
\begin{equation}
\label{eq:Dyn:qprob}
\text{Prob} \left ( E \middle | \rho, M, U \right ) = \Tr{E U \rho
U^{\dagger}}.
\end{equation}
\end{definition}

The constraint that $\mathcal{P}$ should be closed under the action of
$\mathcal{T}$ is imposed because preparing the state $\rho$ followed
by implementing $U$ provides a method of preparing the state $U \rho
U^{\dagger}$.  The identity is assumed to be in $\mathcal{T}$ because
we want a PMT fragment to be an extension of a PM fragment, so it
should be possible to do nothing between preparation and measurement,
i.e.\ Eq.~\eqref{eq:Dyn:qprob} reduces to the probability rule for a
PM fragment when $U$ is the identity.  In addition, it is natural to
impose further consistency constraints. Firstly, it is common to
assume that if $U,V \in \mathcal{T}$ then $VU \in \mathcal{T}$,
because if we can implement $U$ and $V$ separately then we could apply
them one after the other.  This makes $\mathcal{T}$ a semigroup.
Similarly, it is usual to assume that $\mathcal{M}$ is closed under
the action of $\mathcal{T}$, i.e.\ if $\{E_j\} \in \mathcal{M}$ and $U
\in \mathcal{T}$ then $\{U^{\dagger} E_j U\} \in \mathcal{M}$.  This
is because applying $U$ before measuring $\{E_j\}$ is a method of
measuring $\{ U^{\dagger} E_j U \}$.  These additional constraints are
not imposed here because they are not required for the Hardy or
Colbeck--Renner Theorems.

In order to construct an ontological model for a PMT fragment, states
and measurements are represented on an ontic state space $(\Lambda,
\Sigma)$ as before.  The only novel issue is how to represent
transformations.  As with measurements, we do not want to assume that
$U \in \mathcal{T}$ acts deterministically on the ontic states, so in
general $U$ is represented by a Markov kernel $\gamma$ with source and
target both equal to $(\Lambda,\Sigma)$.  This means that $\gamma$ is
a measurable function that associates, to each $\lambda \in \Lambda$,
a probability measure $\gamma_{\lambda}$ on $(\Lambda,\Sigma)$.  For
$\Omega \in \Sigma$, $\gamma_{\lambda}(\Omega)$ is the conditional
probability that the ontic state will end up in $\Omega$ after the
transformation, given that it started in the ontic state $\lambda$.
For a finite ontic state space
$\gamma_{\lambda}(\{\lambda^{\prime}\})$ is the probability that the
dynamics causes $\lambda$ to make a transition to $\lambda^{\prime}$,
and thus a Markov kernel is just the measure theoretic generalization
of a transition matrix, which is used to model stochastic dynamics for
a system with finite state space.  If the system is assigned
probability measure $\mu$ before the transformation then afterward it
is assigned the measure $\nu$, where
\begin{equation}
\label{eq:Dyn:transform}
\nu(\Omega) = \int_{\Lambda} \gamma_{\lambda}(\Omega) d\mu(\lambda).
\end{equation}

As with states, there is the possibility that different methods of
implementing $U$ might lead to different Markov kernels, in which case
the model is \emph{transformation contextual} \cite{Spekkens2005}.  To
account for this, each $U \in \mathcal{T}$ is associated with a set
$\Gamma_{U}$ of Markov Kernels, rather than just one.

The consistency constraint that $\mathcal{P}$ is closed under
$\mathcal{T}$ also needs to be reflected at the ontological level.
Suppose $\rho, \sigma \in \mathcal{P}$ and $\sigma = U \rho
U^{\dagger}$ for some $U \in \mathcal{T}$.  For every $\mu \in
\Delta_{\rho}$ and $\gamma \in \Gamma_U$, it should be the case that
$\nu \in \Delta_{\sigma}$, where $\nu$ is given by
Eq.~\eqref{eq:Dyn:transform}.  This is because any method of preparing
$\rho$ and applying $U$ is a method of preparing $\sigma$.

\begin{definition}
\label{def:Dyn:OM}
An \emph{ontological model} $\Theta = (\Lambda, \Sigma, \Delta, \Xi,
\Gamma)$ of a PMT fragment $\mathfrak{F} = \langle \Hilb,
\mathcal{P}, \mathcal{M}, \mathcal{T} \rangle$ consists of
\begin{itemize}
\item A measurable space $(\Lambda, \Sigma)$, where $\Lambda$ is
called the \emph{ontic state space}.
\item A function $\Delta$ that maps each quantum state $\rho \in
\mathcal{P}$ to a set of probability measures $\Delta[\rho] =
\Delta_{\rho}$ on $(\Lambda,\Sigma)$.
\item A function $\Xi$ that maps each POVM $M \in \mathcal{M}$ to a
set of conditional probability distributions over $M$, $\Xi[M] =
\Xi_M$, i.e.\ each $\text{Pr} \in \Xi_M$ is a function from $M
\times \Lambda$ to $\mathbb{R}$ that is measurable as a function
of $\lambda \in \Lambda$ and satisfies, for all $\lambda \in
\Lambda$,
$\forall E \in M, \,\text{Pr}(E|M,\lambda) \geq 0$
and $\sum_{E \in M} \text{Pr}(E|M,\lambda) = 1$.
\item A function $\Gamma$ that maps each $U \in \mathcal{T}$ to a
set of Markov kernels $\Gamma(U) = \Gamma_U$, i.e.  $\gamma \in
\Gamma_U$ is a measurable function $\gamma:\lambda \rightarrow
\gamma_{\lambda}$ where $\gamma_{\lambda}$ is a probability
measure on $(\Lambda,\Sigma)$.
\end{itemize}

In addition, for every $\rho, \sigma \in \mathcal{P}$, $U \in
\mathcal{T}$ such that $\sigma = U \rho U^{\dagger}$, for every $\mu
\in \Delta_{\rho}$ and $\gamma \in \Gamma_U$, it must be the case
that $\nu \in \Delta_{\sigma}$, where
\begin{equation}
\nu(\Omega) = \int_{\Lambda} \gamma_{\lambda}(\Omega) d\mu(\lambda).
\end{equation}

The ontological model \emph{reproduces the quantum predictions} if,
for all $\rho \in \mathcal{P}$ and $M \in \mathcal{M}$, each $\mu
\in \Delta_{\rho}$ and $\text{Pr} \in \Xi_M$ satisfies
\begin{equation}
\label{eq:Dyn:Rep2}
\forall E \in M, \,\,\,\, \int_{\Lambda} \text{Pr}(E|M,\lambda) d
\mu(\lambda) = \Tr{E \rho}.
\end{equation}
\end{definition}

Note that we do not have to explicitly impose that, for all $\rho \in
\mathcal{P}$, $U \in \mathcal{T}$, each $\mu \in \Delta_{\rho}$,
$\gamma \in \Gamma_U$ and $\text{Pr} \in \Xi_M$ satisfies
\begin{equation}
\forall E \in M,\int_{\Lambda}\int_{\Lambda} \text{Pr}(E|M,\lambda^{\prime}) d
\gamma_{\lambda}(\lambda^{\prime}) d \mu(\lambda) = \Tr{E U \rho
U^{\dagger}},
\end{equation}
since this is implied by the consistency constraints on states and the
probability densities that represent them.

In fact, as in the Pusey--Barrett--Rudolph Theorem, Hardy's Theorem only depends on the
weaker requirement that the model reproduces the quantum preclusions,
which means that
\begin{equation}
\int_{\Lambda} \text{Pr}(E|M,\lambda) d \mu(\lambda) = 0,
\end{equation}
whenever $\Tr{E \rho} = 0$.

The following standard result is the basis of the most common
assumption about dynamics used to prove $\psi$-ontology theorems.
\begin{theorem}
\label{prop:Dyn:contract}
Let $\mu$ and $\nu$ be probability measures on a measurable space
$(\Lambda,\Sigma)$, let $\gamma$ be a Markov kernel with source
$(\Lambda,\Sigma)$ and target $(\Lambda^{\prime},\Sigma^{\prime})$,
and let $\mu^{\prime}$ and $\nu^{\prime}$ be the measures on
$(\Lambda^{\prime},\Sigma^{\prime})$ resulting from the action of
$\gamma$ on $\mu$ and $\nu$, i.e.\ for $\Omega^{\prime} \in
\Sigma^{\prime}$,
\begin{align}
\mu^{\prime}(\Omega^{\prime}) & = \int_{\Lambda}
\gamma_{\lambda}(\Omega^{\prime})d\mu(\lambda) \\
\nu^{\prime}(\Omega^{\prime}) & = \int_{\Lambda}
\gamma_{\lambda}(\Omega) d\nu(\lambda).
\end{align}
Then, $D(\mu^{\prime},\nu^{\prime}) \leq D(\mu,\nu)$, where $D$ is
the variational distance.
\end{theorem}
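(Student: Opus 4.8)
The plan is to establish this data-processing inequality for the variational distance by reducing the supremum over measurable sets in the target space to a bound that involves only $\mu$ and $\nu$ on the source. First I would fix an arbitrary $\Omega' \in \Sigma'$ and note that, by the definition of the pushforward measures,
\[
\mu'(\Omega') - \nu'(\Omega') = \int_\Lambda \gamma_\lambda(\Omega')\, d\mu(\lambda) - \int_\Lambda \gamma_\lambda(\Omega')\, d\nu(\lambda),
\]
where $f(\lambda) := \gamma_\lambda(\Omega')$ is a $\Sigma$-measurable function of $\lambda$ (this measurability being part of the definition of a Markov kernel) taking values in $[0,1]$ (since each $\gamma_\lambda$ is a probability measure).

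The heart of the argument is the elementary lemma that for any $\Sigma$-measurable $f:\Lambda \to [0,1]$ one has $\left| \int_\Lambda f\, d\mu - \int_\Lambda f\, d\nu \right| \le D(\mu,\nu)$. To prove it I would pass to the dominating measure $m = \tfrac{1}{2}(\mu+\nu)$, which exists for any pair of probability measures, and write $\mu$ and $\nu$ as densities $p$ and $q$ with respect to $m$, so that by \eqref{eq:POEM:intvar} we have $D(\mu,\nu) = \tfrac{1}{2}\int_\Lambda |p-q|\, dm$. Splitting $\Lambda$ into the sets $\{p > q\}$ and $\{p \le q\}$: on the first, $f(p-q) \le p-q$ because $0 \le f \le 1$ and $p - q > 0$; on the second, $f(p-q) \le 0$. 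Hence $\int_\Lambda f\, d\mu - \int_\Lambda f\, d\nu = \int_\Lambda f(p-q)\, dm \le \int_{\{p>q\}}(p-q)\, dm$. Since $\mu$ and $\nu$ are normalized, $\int_\Lambda (p-q)\, dm = 0$, so $\int_{\{p>q\}}(p-q)\, dm = \tfrac{1}{2}\int_\Lambda |p-q|\, dm = D(\mu,\nu)$. The symmetric argument with the roles of $\mu$ and $\nu$ exchanged gives the matching lower bound, establishing the lemma.

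Applying the lemma to $f(\lambda) = \gamma_\lambda(\Omega')$ yields $|\mu'(\Omega') - \nu'(\Omega')| \le D(\mu,\nu)$ for every $\Omega' \in \Sigma'$, and taking the supremum over $\Omega'$ gives $D(\mu',\nu') \le D(\mu,\nu)$, as claimed. I do not expect a genuine obstacle here: the only delicate points are measure-theoretic bookkeeping — that $\lambda \mapsto \gamma_\lambda(\Omega')$ is measurable so the pushforwards are well defined (guaranteed by the Markov-kernel definition) and that the density representations with respect to $m$ behave as claimed. Note that the argument never uses any relationship between $(\Lambda,\Sigma)$ and $(\Lambda',\Sigma')$, so it applies equally to the ancilla-appending transformations discussed in \S\ref{Dyn:Bet} as to unitary dynamics within a fixed fragment.
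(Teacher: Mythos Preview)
Your proof is correct and follows the same overall strategy as the paper: fix $\Omega'\in\Sigma'$, pass to a dominating measure $m$ with densities $p,q$, bound $|\mu'(\Omega')-\nu'(\Omega')|$ by $D(\mu,\nu)$, and take the supremum. The only real difference is in how the integral is bounded. The paper applies the triangle inequality to get $\bigl|\int \gamma_\lambda(\Omega')(p-q)\,dm\bigr|\le\int|p-q|\,dm$ and then identifies the right-hand side with $D(\mu,\nu)$; but by Eq.~\eqref{eq:POEM:intvar} one actually has $\int|p-q|\,dm = 2D(\mu,\nu)$, so that step as written loses a factor of two. Your lemma---that $\bigl|\int f\,d\mu-\int f\,d\nu\bigr|\le D(\mu,\nu)$ for any measurable $f:\Lambda\to[0,1]$, proved by splitting into $\{p>q\}$ and $\{p\le q\}$ and using $\int(p-q)\,dm=0$---is the standard dual characterization of the variational distance and recovers the sharp constant directly. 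So your argument is not only correct but slightly tidier on this point.
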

\begin{proof}
Let $m$ be a measure that dominates $\mu$ and $\nu$ and let $p$ and
$q$ be corresponding densities.  Then, for all $\Omega^{\prime} \in
\Sigma^{\prime}$
\begin{align}
\left | \mu^{\prime}(\Omega^{\prime}) -
\nu^{\prime}(\Omega^{\prime}) \right | & =  \left |
\int_{\Lambda} \gamma_{\lambda}(\Omega^{\prime}) \left (
p(\lambda) - q (\lambda) \right )
dm(\lambda) \right | \\
& \leq \int_{\Lambda} \left | \gamma_{\lambda}(\Omega^{\prime})
\left ( p(\lambda) - q (\lambda) \right ) \right |dm(\lambda),
\end{align}
where the second line follows from the triangle inequality.
However, since $0 \leq \gamma_{\lambda}(\Omega^{\prime}) \leq 1$, we
have
\begin{align}
&\int_{\Lambda} \left | \gamma_{\lambda}(\Omega^{\prime}) \left (
p(\lambda) - q (\lambda) \right ) \right |dm(\lambda) \leq\nonumber\\
&\quad \int_{\Lambda} \left | p(\lambda) - q(\lambda) \right |
dm(\lambda) = D(\mu,\nu).
\end{align}
Hence, for all $\Omega^{\prime} \in \Sigma^{\prime}$,
\begin{equation}
\left | \mu^{\prime}(\Omega^{\prime}) -
\nu^{\prime}(\Omega^{\prime}) \right | \leq D(\mu,\nu).
\end{equation}
Therefore, this also applies for the supremum of the left hand side,
so $D(\mu^{\prime},\nu^{\prime}) \leq D(\mu,\nu)$.
\end{proof}
\begin{corollary}
\label{cor:Dyn:contract}
Let $\langle \Hilb, \mathcal{P}, \mathcal{M}, \mathcal{T} \rangle$
be a PMT fragment and let $\Theta = (\Lambda, \Sigma, \Delta, \Xi,
\Gamma)$ be an ontological model of it.  If $\rho, \sigma \in
\mathcal{P}$ are ontologically indistinct then $U\rho U^{\dagger}$
and $U \sigma U^{\dagger}$ are also ontologically indistinct for all
$U \in \mathcal{T}$.
\end{corollary}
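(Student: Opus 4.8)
The plan is to use Theorem~\ref{prop:Dyn:contract} (stochastic maps do not increase variational distance) together with the consistency constraint on transformations built into Definition~\ref{def:Dyn:OM}. Recall that $\rho$ and $\sigma$ being ontologically indistinct means there exist $\mu \in \Delta_{\rho}$ and $\nu \in \Delta_{\sigma}$ with $D(\mu,\nu) < 1$; the goal is to produce measures in $\Delta_{U\rho U^{\dagger}}$ and $\Delta_{U\sigma U^{\dagger}}$ whose variational distance is also strictly less than $1$.

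First I would fix such a pair $\mu,\nu$ and pick a single Markov kernel $\gamma \in \Gamma_U$ --- the \emph{same} kernel for both measures. Define $\mu'$ and $\nu'$ to be the measures obtained by pushing $\mu$ and $\nu$ forward through $\gamma$, i.e.\ $\mu'(\Omega) = \int_{\Lambda} \gamma_{\lambda}(\Omega)\, d\mu(\lambda)$ and likewise for $\nu'$. The consistency constraint in Definition~\ref{def:Dyn:OM} --- which says that whenever $\sigma = U\rho U^{\dagger}$, every $\mu \in \Delta_{\rho}$ and $\gamma \in \Gamma_U$ yield a measure in $\Delta_{U\rho U^{\dagger}}$ --- immediately gives $\mu' \in \Delta_{U\rho U^{\dagger}}$ and $\nu' \in \Delta_{U\sigma U^{\dagger}}$.

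Then I would apply Theorem~\ref{prop:Dyn:contract}, with source and target both equal to $(\Lambda,\Sigma)$, to conclude $D(\mu',\nu') \leq D(\mu,\nu) < 1$. Since $\mu'$ and $\nu'$ are legitimate members of $\Delta_{U\rho U^{\dagger}}$ and $\Delta_{U\sigma U^{\dagger}}$ respectively, the existence of a pair at variational distance below $1$ is exactly the statement that $U\rho U^{\dagger}$ and $U\sigma U^{\dagger}$ are ontologically indistinct, which is what we want.

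There is essentially no hard part: the corollary is a direct packaging of Theorem~\ref{prop:Dyn:contract}. The only point requiring a moment's care is making sure the same kernel $\gamma$ is used for both states so that the contraction bound applies verbatim, and noting that $\Gamma_U$ is nonempty (a Markov kernel representing $U$ exists by definition of an ontological model of a PMT fragment), so the construction can actually be carried out. One could equivalently phrase the argument in terms of the overlap $L$ via Theorem~\ref{prop:Anti:varoverlap}, but the variational-distance formulation is the most direct.
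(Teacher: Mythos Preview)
Your proposal is correct and matches the paper's proof essentially step for step: pick $\mu,\nu$ witnessing indistinctness, push both through a single $\gamma \in \Gamma_U$ to land in $\Delta_{U\rho U^\dagger}$ and $\Delta_{U\sigma U^\dagger}$ via the consistency constraint, and apply Theorem~\ref{prop:Dyn:contract} to conclude $D(\mu',\nu') \leq D(\mu,\nu) < 1$. Your explicit remarks about using the same kernel and about $\Gamma_U$ being nonempty are good hygiene but are not even spelled out in the paper's version.
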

\begin{proof}
If $\rho$ and $\sigma$ are ontologically indistinct then there exist
$\mu \in \Delta_{\rho}$ and $\nu \in \Delta_{\sigma}$ such that
$D(\mu,\nu) < 1$.  However, for any $\gamma \in \Gamma_U$,
$\mu^{\prime} \in \Delta_{U \rho U^{\dagger}}$ and $\nu^{\prime} \in
\Delta_{U \sigma U^{\dagger}}$, where, for all $\Omega \in \Sigma$,
\begin{align}
\mu^{\prime}(\Omega) & = \int_{\Lambda}
\gamma_{\lambda}(\Omega)d\mu(\lambda) \\
\nu^{\prime}(\Omega) & = \int_{\Lambda} \gamma_{\lambda}(\Omega)
d\nu(\lambda).
\end{align}
Hence, by Theorem~\ref{prop:Dyn:contract},
$D(\mu^{\prime},\nu^{\prime}) \leq D(\mu,\nu) < 1$, so $U \rho
U^{\dagger}$ and $U \sigma U^{\dagger}$ are also ontologically
indistinct.
\end{proof}

If the PMT fragment of interest contains all unitary transformations
on $\Hilb$, then corollary~\ref{cor:Dyn:contract} implies that whether
or not two pure states are ontologically distinct depends only on
their inner product.  This is because, if $\Proj{\psi^{\prime}} =
U\Proj{\psi}U^{\dagger}$ and $\Proj{\phi^{\prime}} = U
\Proj{\phi}U^{\dagger}$, then $\Proj{\psi} =
U^{\dagger}\Proj{\psi^{\prime}}U$ and $\Proj{\phi} = U^{\dagger}
\Proj{\phi^{\prime}} U$, so applying Corollary~\ref{cor:Dyn:contract}
to both $U$ and $U^{\dagger}$ implies that $\Proj{\psi}$ and
$\Proj{\phi}$ are ontologically distinct iff $\Proj{\psi^{\prime}}$
and $\Proj{\phi^{\prime}}$ are.

In fact, Corollary~\ref{cor:Dyn:contract} is the only property of the
way that unitary dynamics is represented that is used in proving the
Colbeck--Renner Theorem.  Therefore, for the purposes of proving that
theorem, all of the detailed considerations about how unitary dynamics
are represented can be replaced by the following assumption.
\begin{definition}
\label{def:Dyn:uind}
Let $\mathfrak{F} = \langle \Hilb, \mathcal{P}, \mathcal{M} \rangle$
be a PM fragment and let $\Theta = (\Lambda, \Sigma, \Delta, \Xi)$
be an ontological model of it.  $\Theta$ \emph{preserves ontological
distinctness} with respect to a unitary operator $U$ if, for every
$\rho, \sigma \in \mathcal{P}$ that are ontologically distinct in
$\Theta$, $U \rho U^{\dagger}$ and $U \sigma U^{\dagger}$ are also
ontologically distinct in $\Theta$ whenever $U \rho U^{\dagger}, U
\sigma U^{\dagger} \in \mathcal{P}$.
\end{definition}
The benefit of this assumption is that, if we can show that
$\Proj{\psi}$ and $\Proj{\phi}$ are ontologically distinct then it
follows that all pairs $\Proj{\psi^{\prime}}$ and
$\Proj{\phi^{\prime}}$ where
$\Tr{\Proj{\phi^{\prime}}\Proj{\psi^{\prime}}} =
\Tr{\Proj{\phi}\Proj{\psi}}$ must also be ontologically distinct, so
to prove $\psi$-ontology we only need to prove that there exists a
pair of ontologically distinct states for every value of the inner
product.

The assumption that unitary dynamics preserves ontological
distinctness stands independently of how $U$ is represented in an
ontological model, but it is worth bearing in mind that the
representation of $U$ as a stochastic transformation is really what
motivates it.

\subsection{Appending ancillas}

\label{Dyn:Bet}

So far, we have considered unitary dynamics on a fixed Hilbert space.
More generally, the Hilbert space of the system may change during the
course of the experiment.  For example, the dimension may be reduced
if part of the system is absorbed into the environment.  For present
purposes, we only need to consider a very particular kind of change,
in which the experimenter appends an additional system in a fixed
quantum state $\tau$ to the system under investigation.  Specifically,
if the system is originally described by a state $\rho_A$ on a Hilbert
space $\Hilb[A]$, then after appending the ancilla it is described by
the state $\rho_A \otimes \tau_B$ on the Hilbert space $\Hilb[A]
\otimes \Hilb[B]$, where $\Hilb[B]$ is the Hilbert space of the
ancillary system.  In order to prove full-blown $\psi$-ontology
theorems for arbitrary dimensional Hilbert spaces, the following
assumption is used.

\begin{definition}
\label{def:Dyn:aind}
Let $\mathfrak{F}_A = \langle \Hilb[A], \mathcal{P}_A, \mathcal{M}_A
\rangle$ and $\mathfrak{F}_B = \langle \Hilb[B], \mathcal{P}_B,
\mathcal{M}_B \rangle$ be PM fragments and let $\mathfrak{F}_{AB} =
\left \langle \Hilb[A] \otimes \Hilb[B], \mathcal{P}_{AB} ,
\mathcal{M}_{AB} \right \rangle$ be a product fragment with
factors $\mathfrak{F}_A$ and $\mathfrak{F}_B$.  Let $\Theta_A =
(\Lambda_A, \Sigma_A, \Delta_A,\Xi_A)$ be an ontological model of
$\mathfrak{F}_A$ and let $\Theta_{AB} = (\Lambda_{AB}, \Sigma_{AB},
\Delta_{AB}, \Xi_{AB})$ be an ontological model of
$\mathfrak{F}_{AB}$.  $\Theta_{AB}$ \emph{preserves ontological
distinctness} with respect to $\Theta_A$ if, whenever $\rho_A
\otimes \tau_B$ and $\sigma_A \otimes \tau_B$ are ontologically
distinct in $\Theta_{AB}$ for some $\tau_B \in \mathcal{P}_B$,
$\rho_A, \sigma_A \in \mathcal{P}_A$ are ontologically distinct in
$\Theta_A$,
\end{definition}

The motivation for this is similar to that for the assumption of
preserving ontological distinctness with respect to unitary
transformations.  The action of appending an ancilla is a type of
dynamics, and so it should be represented by a stochastic
transformation at the ontological level.  The main difference from the
unitary case is that we are dealing with a transformation that changes
the underlying Hilbert space, and our definition of a PM fragment
assumes a fixed Hilbert space.  Therefore, appending an ancilla is a
map between two distinct fragments and so it is represented by a
stochastic map between two different ontological models.
Specifically, if $(\Lambda_A, \Sigma_A)$ is the ontic state space of
$\Theta_A$ and $(\Lambda_{AB},\Sigma_{AB})$ is the ontic state space
of $\Theta_{AB}$, then appending an ancilla would be represented by a
Markov kernel $\gamma$ with source $(\Lambda_A, \Sigma_A)$ and target
$(\Lambda_{AB},\Sigma_{AB})$.  By Theorem~\ref{prop:Dyn:contract},
this cannot increase the variational distance of the measures
representing quantum states and hence, by the same argument used to
prove Corollary~\ref{cor:Dyn:contract}, it preserves ontological
distinctness.

At first, it may seem surprising that appending ancillas should be
modeled at the ontological level in the same sort of way as unitary
dynamics, since they are very different types of operations.  As
additional motivation, note that the most general type of dynamics of
a quantum system is described by a Completely-Positive
Trace-Preserving (CPT) map and the more general claim is that CPT maps
should be represented by stochastic transformations (see
e.g. \cite{Spekkens2005} for a discussion of this).  Both unitary
dynamics and appending an ancilla are examples of CPT maps, so their
representation in terms of stochastic maps are special cases of this
general idea.

\section{Hardy's Theorem}

\label{Hardy}

Hardy has proven a $\psi$-ontology theorem \cite{Hardy2013} based on
an assumption about how dynamics should be represented in an
ontological model known as \emph{ontic indifference}.  This assumption
is rather unnatural for a $\psi$-epistemic theory, but nonetheless
Hardy's Theorem is of interest due to its close connection with the
argument for the reality of the wavefunction based on interference,
which was discussed in \S\ref{Int}.  Hardy's Theorem can be regarded
as the missing step in the inference from ``something must go through
both slits'' to ``that thing must be the wavefunction''.  The reason
for the failure of the argument from interference can therefore be
pinpointed more precisely as the failure of ontic indifference.

\S\ref{OI} describes the ontic indifference assumption and the way it
can be violated in a $\psi$-epistemic theory, as well as explaining
Hardy's motivation for introducing it.  \S\ref{HExa} presents a sketch
of a special case of Hardy's Theorem in terms of a simple Mach-Zehnder
interferometry experiment.  This helps to clarify the relation to the
argument from interference.  The full theorem and its proof are given
in \S\ref{HMain}.

\subsection{Ontic Indifference}

\label{OI}

To understand Hardy's assumption, first consider an ontological model
with a finite ontic state space.  Ontic indifference then says that,
if a pure state $\Proj{\psi}$ is invariant under the action of a
unitary $U$, i.e. $U \Proj{\psi} U^{\dagger} = \Proj{\psi}$, then
there should be a method of implementing $U$ such that every ontic
state that get assigned a nonzero probability by $\Proj{\psi}$ is
left invariant.  As usual, this needs to be modified to accommodate
the general measure-theoretic case.

\begin{definition}
Let $\mathfrak{F} = \langle \Hilb, \mathcal{P}, \mathcal{M},
\mathcal{T} \rangle$ be a PMT fragment and let $\Theta = \left (
\Lambda, \Sigma, \Delta, \Xi, \Gamma \right )$ be an ontological
model of it.  A state $\rho \in \mathcal{P}$ satisfies \emph{ontic
indifference} in $\Theta$ if, for every $U \in \mathcal{T}$ such that
$U \rho U^{\dagger} = \rho$, for every $\mu \in \Delta_{\rho}$ there
exists a $\gamma \in \Gamma_U$ and a set $\Omega \in \Sigma$ such
that $\mu(\Omega) = 1$ and $\gamma_{\lambda}(\Omega^{\prime}) =
\delta_{\lambda}(\Omega^{\prime})$ for all $\Omega^{\prime} \in
\Sigma, \Omega^{\prime} \subseteq \Omega$, where $\delta_{\lambda}$
is the point measure at $\lambda$.

The model $\Theta$ satisfies \emph{ontic indifference} if every pure
state $\Proj{\psi} \in \mathcal{P}$ satisfies ontic indifference in
$\Theta$.  It satisfies \emph{restricted ontic indifference} if there
exists a pure state $\Proj{\psi} \in \mathcal{P}$ that satisfies
ontic indifference in $\Theta$.
\end{definition}

To see why ontic indifference is suspect from a $\psi$-epistemic point
of view, it suffices to consider a model with a finite ontic state
space $\Lambda$.  It is then unclear why ontic indifference should
hold for a quantum state that is represented by a probability measure
with support on more than one ontic state, and there obviously must be
such states in a $\psi$-epistemic model.  For example, suppose that
$\Proj{\psi}$ is represented by the uniform distribution over
$\Lambda$.  Then, any permutation of the ontic states leaves this
distribution invariant and hence could potentially represent the
dynamics of a unitary that leaves $\Proj{\psi}$ invariant without
contradicting the quantum predictions.  More generally, the measures
corresponding to $\Proj{\psi}$ need only be fixed points of the
stochastic transformations representing the unitaries that leave them
invariant, and even this is stronger than required, since the
stochastic transformations could also map between different members of
$\Delta_{\psi}$.

In fact, permutations that leave the epistemic states invariant is
precisely how unitary dynamics are represented in Spekkens' toy
theory.  Consider the state $\Proj{z+}$, which is invariant under the
action of the unitary operator $\sigma_z$.  In Spekkens' theory,
$\Proj{z+}$ is represented by the distribution $\RKet{z+}$ that has
equal support on $(+,+)$ and $(-,-)$ and is zero elsewhere on the
ontic state space.  The transformation $\sigma_z$ can be represented
by the permutation
\begin{align}
(+,+) & \rightarrow (-,-) \\
(+,-) & \rightarrow (-,+) \\
(-,+) & \rightarrow (+,-) \\
(-,-) & \rightarrow (+,+).
\end{align}
This leaves $\RKet{z+}$ invariant but does not satisfy ontic
indifference because it swaps the two states $(+,+)$ and $(-,-)$ in
the support of $\RKet{z+}$.  It is straightforward to check that this
permutation acts appropriately on all the other distributions as well.
For example, $\Proj{x+}$ gets mapped to $\Proj{x-}$ under $\sigma_z$,
and, in the toy theory, $\RKet{x+}$ has equal support on $(+,+)$ and
$(+,-)$ and is zero elsewhere, whereas $\RKet{x-}$ has equal support
on $(-,+)$ and $(-,-)$ and is zero elsewhere.  Since the permutation
maps $(+,+)$ to $(-,-)$ and $(+,-)$ to $(-,+)$, it maps $\RKet{x+}$ to
$\RKet{x-}$ as required.

Given that Spekkens' toy theory is an archetypal example of a
$\psi$-epistemic theory, the fact that it does not satisfy ontic
indifference is evidence that ontic indifference is not a reasonable
assumption for a $\psi$-ontology theorem.  Nevertheless, Hardy does
provide a motivation for it based on locality, which reveals an
interesting connection to the argument from interference.

Hardy's motivation runs as follows.  Consider a single photon which
can be in one of $d$ different spatial modes, labeled $x_0, x_1,
\ldots, x_{d-1}$.  For example, the modes might represent the arms of
an interferometer, as depicted in Fig.~\ref{fig:Hardy:dmach}. The
state in which the photon is in mode $x_j$ is written as $\Proj{x_j}$.
Suppose that, at the ontological level, $\Proj{x_j}$ is to be thought
of as a state in which the photon is literally \emph{in} mode $x_j$ so
it corresponds to a situation in which there is literally nothing
relevant to the behavior of the photon located in any of the other
modes.
\begin{figure}[t!]
\centering
\includegraphics[width=75mm]{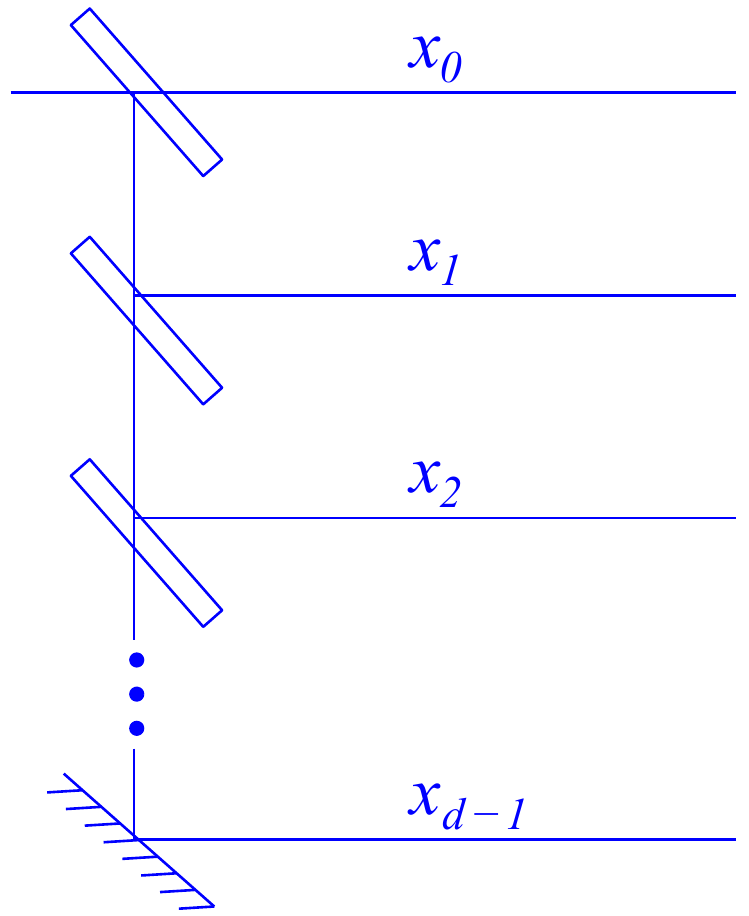}
\caption{\color[HTML]{0000FF}{\label{fig:Hardy:dmach}One way of instantiating $d$ spatial
modes in an interferometer. A single photon is passed through the
first beamsplitter from the left and is repeatedly split at $d-1$
beamsplitters, with the $(d-1)$th mode reflected by the mirror at
the bottom.}}
\end{figure}
To model this situation, the ontic state space is assumed to decompose
into sets of ontic states $\Lambda^{(j)}$, which are localized at each
mode, along with possibly some additional degrees of freedom
$\Lambda_{\text{NL}}$ which are not so localized.  The total ontic
state space is then assumed to be $\Lambda = \left (
\oplus_{j=0}^{d-1} \Lambda^{(j)} \right ) \oplus
\Lambda_{\text{NL}}$.  As a concrete example, each mode might have
exactly one ontic state, with the ontic state corresponding to mode
$x_j$ labeled by the integer $j$.  The ontic state space would then
be $\{0, 1, \cdots,d-1, \cdots\} = \{0\} \oplus \{1\} \oplus \cdots
\oplus \{d-1\} \oplus \cdots$, where the second $\cdots$ is to leave
room for possible additional ontic states not localized in a mode.
More generally, the structure would be similar, but there could be
multiple ontic states corresponding to each mode.  Then, the state
$\Proj{x_j}$ represents a situation in which only ontic states in
$\Lambda^{(j)}$ can be occupied, so $\Lambda^{(j)}$ would be a measure
one set according to any probability measure in $\Delta_{x_j}$.

Now, any unitary that leaves $\Proj{x_0}$ invariant can be implemented
in such a way that it only involves manipulating modes
$x_1$--$x_{d-1}$, say by adding phase shifters to them or combining
them at beamsplitters.  If implemented in this way, it makes sense, by
locality, to think that this would have no effect on ontic states
localized at $x_0$ and hence ontic indifference would be satisfied for
$\Proj{x_0}$.

This argument for ontic indifference is based on locality, so it only
works for states like $\Proj{x_j}$, which are spatially localized.
Fortunately, Hardy's Theorem can be proved under restricted ontic
indifference, which only requires that ontic indifference should hold
for a single pure state, so, in the case of our photon modes setup,
this can always be chosen to be $\Proj{x_0}$.  Hence the assumption
need never be applied to a state that is not localized.

Even if the argument from locality were sound, we should be skeptical
of imposing locality requirements on ontological models because of
Bell's Theorem.  However, the argument can be evaded without even
giving up on locality, since it is really the assumption that
$\Proj{x_j}$ corresponds to a set of ontic states localized at $x_j$
that is at fault.  Moving to a Fock space description, in which
$\Proj{n}_{x_j}$ is the state in which there are $n$ photons in mode
$x_j$, makes this objection clearer.  In Fock space, the state
$\Proj{x_0}$ is written as $\Proj{1}_{x_0} \otimes \Proj{0}_{x_1}
\otimes \cdots \otimes \Proj{0}_{x_{d-1}}$, which explicitly shows that
modes $x_1$--$x_{d-1}$ are in their vacuum state.  From quantum field
theory, we know that the vacuum is not a featureless void, but has
some sort of structure.  Therefore, it makes sense that, at the
ontological level, there might be more than one ontic state associated
with the vacuum, and a transformation that does not affect things
localized at $x_0$ might still act nontrivially on these vacuum ontic
states.

As an explicit example of this, we can construct a ``second
quantized'' version of Spekkens' toy theory, in which we allow each
mode to have at most one photon.  The states $\Proj{0}_{x_j}$,
$\Proj{1}_{x_j}$, $\Proj{+}_{x_j}$, $\Proj{-}_{x_j}$,
$\Proj{+i}_{x_j}$ and $\Proj{-i}_{x_j}$, where
\begin{align}
\Ket{\pm} & = \frac{1}{\sqrt{2}} \left (\Ket{0} \pm \Ket{1} \right ) \\
\Ket{\pm i} & = \frac{1}{\sqrt{2}} \left (\Ket{0} \pm i\Ket{1} \right ),
\end{align}
are isomorphic to the states $\Proj{z+}$, $\Proj{z-}$, $\Proj{x+}$,
$\Proj{x-}$, $\Proj{y+}$ and $\Proj{y-}$ of a spin-$1/2$ particle.
Therefore, since Spekkens' toy theory provides a model for these
spin-$1/2$ states under measurements in the corresponding bases, it
also provides a model for the corresponding photon states and
measurements.  In this model, the ontic state space for a mode is
$\Lambda^{(j)} = \{(+,+)_{x_j},(+,-)_{x_j},(-,+)_{x_j},(-,-)_{x_j}\}$
and the ontic state spaces for modes compose via the Cartesian product
rather than the direct sum.  So, for example, in the case of two modes
$x_0$ and $x_1$, the total ontic state space would be $\Lambda =
\Lambda^{(0)} \times \Lambda^{(1)}$, which is
\begin{eqnarray}
&\{\left[(+,+)_{x_0},(+,+)_{x_1}\right], \left[(+,+)_{x_0},(+,-)_{x_1}\right],\nonumber\\
&~\left[(+,+)_{x_0},(-,+)_{x_1}\right], \left[(+,+)_{x_0},(-,-)_{x_1}\right], \nonumber\\
&~\left[(+,-)_{x_0},(+,+)_{x_1}\right], \left[(+,-)_{x_0},(+,-)_{x_1}\right],\nonumber\\
&~\left[(+,-)_{x_0},(-,+)_{x_1}\right],\left[(+,-)_{x_0},(-,-)_{x_1}\right],\nonumber\\
&~\left[(-,+)_{x_0},(+,+)_{x_1}\right], \left[(-,+)_{x_0},(+,-)_{x_1}\right],\nonumber\\
&~\left[(-,+)_{x_0},(-,+)_{x_1}\right],\left[(-,+)_{x_0},(-,-)_{x_1}\right],\nonumber\\
&~\left[(-,-)_{x_0},(+,+)_{x_1}\right],\left[(-,-)_{x_0},(+,-)_{x_1}\right],\nonumber\\
&~\left[(-,-)_{x_0},(-,+)_{x_1}\right],\left[(-,-)_{x_0},(-,-)_{x_1}\right] \}.
\end{eqnarray}
The vacuum state $\Proj{0}_{x_j}$ is represented by a equal mixture of
$(+,+)_{x_j}$ and $(-,-)_{x_j}$, just like the $\Proj{z+}$ state in
the spin-$1/2$ case, and the state $\Proj{1}_{x_j}$ with one photon is
represented by an equal mixture of $(+,-)_{x_j}$ and $(-,+)_{x_j}$,
just like the $\Proj{z-}$ state in the spin-$1/2$ case.  The state
$\Proj{x_0} = \Proj{1}_{x_0} \otimes \Proj{0}_{x_1}$ is then
represented by the product of these two distributions, which is an
equal mixture of $((+,-)_{x_0},(+,+)_{x_1})$,
$((+,-)_{x_0},(-,-)_{x_1})$, $((-,+)_{x_0},(+,+)_{x_1})$ and
$((-,+)_{x_0},(-,-)_{x_1})$.  A transformation acting locally on $x_1$
can then switch the states $(+,+)_{x_1}$ and $(-,-)_{x_1}$, in
violation of ontic indifference, whilst leaving the distribution
invariant.  Thus, the assumption that local ontic state spaces compose
according to the direct sum, rather than locality per se., is the main
problem with Hardy's argument for ontic indifference.  Since we expect
vacuum states to have structure, this is not a good assumption on
which to build a $\psi$-ontology theorem.

Nevertheless, the direct sum construction that motivates ontic
indifference is closely related to the argument from interference
discussed in \S\ref{Int}.  To see this connection, it is helpful to
outline a simple special case of Hardy's Theorem.

\subsection{An example}

\label{HExa}

Consider a Mach-Zehnder interferometer as depicted in
Fig.~\ref{fig:Hardy:mach}.  We will outline the argument that ontic
indifference implies that the state $\Proj{x_0}$, representing a
photon in the upper arm of the interferometer, must be ontologically
distinct from $\Proj{\psi}$, where $\Ket{\psi} = \frac{1}{\sqrt{2}}
\left ( \Ket{x_0} + \Ket{x_1} \right )$, representing an equal
superposition of both paths.  The argument in this section assumes a
finite ontic state space.  A measure-theoretic argument is given for
the general case in the next section.

\begin{figure}[t!]
\centering
\includegraphics[width=85mm]{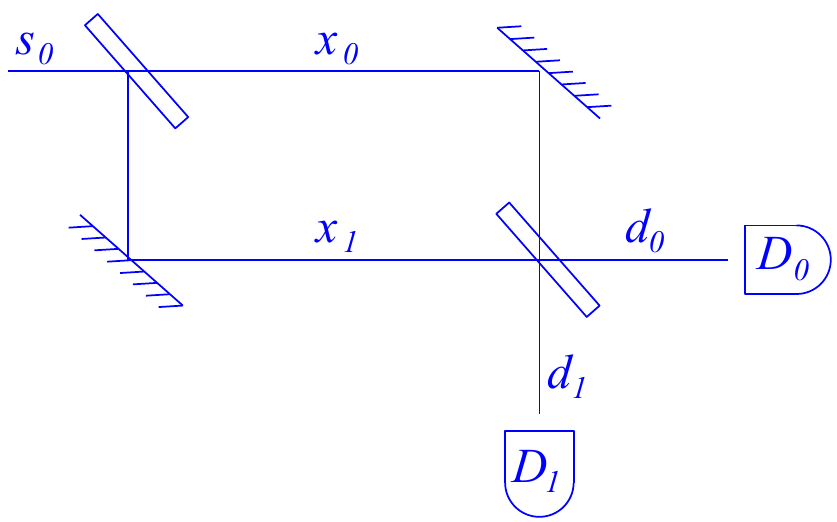}
\caption{\color[HTML]{0000FF}{\label{fig:Hardy:mach}The Mach-Zehnder interferometer used
in the example of Hardy's Theorem.}}
\end{figure}

On passing through the second beamsplitter, $\Ket{x_0}$ gets mapped to
$\frac{1}{\sqrt{2}} \left ( \Ket{d_0} + \Ket{d_1} \right )$ and
$\Ket{x_1}$ gets mapped to $\frac{1}{\sqrt{2}} \left ( \Ket{d_0} -
\Ket{d_1} \right )$, from which we infer that $\Ket{\psi}$ gets
mapped to $\Ket{d_0}$.  Hence, if the state $\Proj{x_0}$ is prepared
then the detectors $D_0$ and $D_1$ will fire with $50/50$ probability
and if the state $\Proj{\psi}$ is prepared then $D_0$ will fire with
certainty.  For the practically inclined, note that the state
$\Proj{\psi}$ can be prepared by passing a photon from the source
$s_0$ through the first beamsplitter as in Fig.~\ref{fig:Hardy:mach}
and the state $\Proj{x_0}$ can be prepared by removing the first
beamsplitter and passing a photon from $s_0$ directly into the upper
arm.

The outcome of this experiment can be altered by placing a $\pi$ phase
shifter in the lower arm of the interferometer, e.g.\ by altering its
path length by half a wavelength relative to the upper arm.  This
leaves $\Ket{x_0}$ invariant, but maps $\Ket{\psi}$ to $\Ket{\phi} =
\frac{1}{\sqrt{2}} \left ( \Ket{x_0} - \Ket{x_1} \right )$, which will
now cause the detector $D_1$ to fire with certainty.  This is a
typical example of interference, replacing constructive interference
($D_0$ fires without the phase shifter) with destructive interference
($D_1$ fires with the phase shifter).

In an ontological model, assume that there is some ontic state
$\lambda$ that is assigned nonzero probability by both $\Proj{x_0}$
and $\Proj{\psi}$.  We will show that, under the assumption of ontic
indifference, this leads to a contradiction.

Since $\Proj{\psi}$ and $\Proj{\phi}$ are orthogonal, by
Theorem~\ref{prop:POEM:distinguish} they must be ontologically
distinct.  This can also be seen more directly because every ontic
state assigned nonzero probability by $\Proj{\psi}$ must cause $D_0$
to fire with certainty, whereas every ontic state assigned nonzero
probability by $\Proj{\phi}$ must cause $D_1$ to fire with certainty.
Therefore, $\Proj{\phi}$ must assign zero probability to the ontic
state $\lambda$.

Now consider the action of the unitary $U$ that adds a $\pi$ phase
shift to the mode $x_1$.  Since this leaves $\Proj{x_0}$ invariant, the
assumption of ontic indifference implies that $\lambda$ must be left
invariant.  However, since $U$ maps $\Proj{\psi}$ to $\Proj{\phi}$, it
must also cause $\lambda$ to transition to an ontic state that causes
$D_1$ to fire with certainty, and no such ontic state has nonzero
probability according to $\Proj{\psi}$.  Therefore $\lambda$ cannot be
left invariant, which is a contradiction.  We conclude that there can
be no $\lambda$ assigned nonzero probability by both $\Proj{x_0}$ and
$\Proj{\psi}$.

By considering Hardy's motivation for ontic indifference, this can be
connected to the argument from interference as follows.  Assume that
the ontic state space decomposes into a direct sum $\Lambda =
\Lambda^{(0)} \oplus \Lambda^{(1)} \oplus \Lambda_{\text{NL}}$.  The
states in $\Lambda^{(0)}$ are localized in the upper arm of the
interferometer, representing the photon definitely taking the upper
path, and the states in $\Lambda^{(1)}$ are localized in the lower arm
of the interferometer, representing the photon definitely taking the
lower path.  If we assume that the placement of a $\pi$ phase shifter
in the lower arm has no effect on the ontic states in $\Lambda^{(0)}$
then we can deduce that $\Proj{\psi}$ cannot assign nonzero
probability to states in $\Lambda^{(0)}$ by the same argument as
above.  Switching the roles of $x_0$ and $x_1$, so that the phase
shifter is now placed or not placed in $x_0$, we can deduce the same
for $\Lambda^{(1)}$.  Therefore, $\Proj{\psi}$ must assign all its
weight to $\Lambda_{\text{NL}}$, which is not localized to either arm.
You could say that this represents a situation in which the photon
``travels along both paths'', but it would be more accurate to say
that it only has global degrees of freedom, not localized on either
path.  Further, if we assume that $\Proj{x_0}$ and $\Proj{x_1}$ are
entirely supported on $\Lambda^{(0)}$ and $\Lambda^{(1)}$
respectively, then the three quantum states $\Proj{\psi}$,
$\Proj{x_0}$ and $\Proj{x_1}$ must be ontologically distinct.  This is
a bit further along the road to proving $\psi$-ontology than the more
naive version of the argument from interference.  Hardy's result then
shows that the same sort of assumption can be used to prove a full
blown $\psi$-ontology theorem, which seems to vindicate the intuition
behind the argument from interference.

Of course, if we give up the idea that the ontic state space must
decompose as a direct sum of local state spaces plus some global
degrees of freedom, then there is no motivation for ontic
indifference, so the above argument does not go through.  In fact, the
second quantized version of Spekkens' toy theory is able to account
for almost all the qualitative features of Mach-Zehnder interferometry
in a local and $\psi$-epistemic way \cite{Martin2014}.  Thus, any
claim that these experiments entail the reality of the wavefunction or
violate locality must involve implicitly assuming something like ontic
indifference.  The most important aspect of Hardy's analysis is to
bring this assumption to the fore more clearly.

\subsection{The main result}

\label{HMain}

We are now in a position to state and prove Hardy's main result.  The
proof strategy used here is due to Patra, Pironio and Massar
\cite{Patra2013a} and is conceptually simpler than Hardy's original
proof.

\begin{theorem}
\label{thm:Hardy:hardy}
Let $\mathcal{P}$ be the set of pure states on $\mathbb{C}^d$ for $d
\geq 3$.  Then, there exists a set of measurements $\mathcal{M}$ and
a set of unitaries $\mathcal{T}$ such that any ontological model of
the PMT fragment $\mathfrak{F} = \langle \mathbb{C}^d, \mathcal{P},
\mathcal{M}, \mathcal{T} \rangle$ that reproduces the quantum
preclusions and satisfies restricted ontic indifference has the
property that any pair of states $\Proj{\psi}, \Proj{\phi} \in
\mathcal{P}$ satisfying $\Tr{\Proj{\phi}\Proj{\psi}} \leq (d-1)/d$
must be ontologically distinct.
\end{theorem}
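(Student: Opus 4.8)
The plan is to take $\mathcal{T}$ to consist of all unitaries on $\mathbb{C}^{d}$ and $\mathcal{M}$ to consist of all rank-one projective measurements, and to reduce the whole statement to a fact about the single state $\Proj{e}$ supplied by restricted ontic indifference. Since $\mathcal{T}$ contains every unitary and $\mathcal{P}$ is closed under them, Corollary~\ref{cor:Dyn:contract} applies with both $U$ and $U^{\dagger}$, so ontological indistinctness is a unitary invariant; in particular it depends only on the inner product of the pair. Hence it suffices to prove: whenever $s:=\Tr{\Proj{e}\Proj{f}}\le(d-1)/d$ for a pure state $\Proj{f}$, the states $\Proj{e}$ and $\Proj{f}$ are ontologically distinct. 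Indeed, for an arbitrary pair $\Proj{\psi},\Proj{\phi}$ with $\Tr{\Proj{\phi}\Proj{\psi}}\le(d-1)/d$, conjugating by a unitary $W$ carrying $\Proj{\psi}$ to $\Proj{e}$ preserves the inner product, so $\Proj{e}$ and $W\Proj{\phi}W^{\dagger}$ are distinct, and Corollary~\ref{cor:Dyn:contract} then forbids $\Proj{\psi},\Proj{\phi}$ from being indistinct.

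The geometric core is to produce, out of $\Proj{f}$, an antidistinguishable $d$-element family lying inside the orbit of $\Proj{f}$ under the stabiliser of $\Proj{e}$. Put $\Ket{e}=\Ket{0}$ and write $\Ket{f}=\sqrt{s}\,\Ket{0}+\sqrt{1-s}\,\Ket{f^{\perp}}$ with $\Ket{f^{\perp}}$ a unit vector orthogonal to $\Ket{0}$; then for any unit vector $\Ket{\chi}$ orthogonal to $\Ket{0}$ the state $\sqrt{s}\,\Ket{0}+\sqrt{1-s}\,\Ket{\chi}$ equals $U\Ket{f}$ for a unitary $U$ fixing $\Proj{e}$, namely $U$ acting as a unitary of $\mathbb{C}^{d-1}$ that carries $\Ket{f^{\perp}}$ to $\Ket{\chi}$. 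Next I fix an orthonormal basis $\{\Ket{g_{j}}\}_{j=1}^{d}$ of $\mathbb{C}^{d}$ of the shape $\Ket{g_{j}}=\tfrac{1}{\sqrt{d}}\Ket{0}+\Ket{h_{j}}$, where the $\Ket{h_{j}}$ are the vertices of a regular simplex of unit vectors in $\mathbb{C}^{d-1}$ rescaled to norm $\sqrt{(d-1)/d}$, so that $\BraKet{h_{j}}{h_{k}}=-1/d$ for $j\neq k$ and $\|h_{j}\|^{2}=(d-1)/d$, which makes $\{\Ket{g_{j}}\}$ orthonormal. For each $j$ choose a unit vector $\Ket{\chi_{j}}$ orthogonal to $\Ket{0}$ with $\BraKet{h_{j}}{\chi_{j}}=-\sqrt{s/(d(1-s))}$; such a vector exists because $\sqrt{s/(d(1-s))}\le\|h_{j}\|=\sqrt{(d-1)/d}$ is exactly the inequality $s\le(d-1)/d$, and because for $d\ge3$ the vector $\Ket{h_{j}}$ still has a nonzero orthogonal complement inside $\mathbb{C}^{d-1}$ in which to place the remaining component of $\Ket{\chi_{j}}$ — this is the only place where $d\ge3$ is used. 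With $\Ket{f_{j}}:=\sqrt{s}\,\Ket{0}+\sqrt{1-s}\,\Ket{\chi_{j}}=U_{j}\Ket{f}$ for the corresponding stabiliser unitary $U_{j}$, a one-line computation gives $\BraKet{g_{j}}{f_{j}}=0$, so the projective measurement $\{\Proj{g_{j}}\}\in\mathcal{M}$ antidistinguishes $\{U_{j}\Proj{f}U_{j}^{\dagger}\}_{j=1}^{d}$.

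Now I run the ontic-indifference argument. Suppose for contradiction that $\Proj{e}$ and $\Proj{f}$ are ontologically indistinct, and fix $\mu\in\Delta_{e}$, $\nu\in\Delta_{f}$ with $D(\mu,\nu)<1$; with densities $p,q$ relative to a common dominating measure $m$ this says $\int_{\Lambda}\min(p,q)\,dm=:c>0$. Applying restricted ontic indifference to $\mu$ together with each $U_{j}$ (legitimate since $U_{j}\Proj{e}U_{j}^{\dagger}=\Proj{e}$) yields $\gamma_{j}\in\Gamma_{U_{j}}$ and a set $\Omega_{j}$ with $\mu(\Omega_{j})=1$ on which $\gamma_{j}$ acts as the identity; set $\Omega=\bigcap_{j=1}^{d}\Omega_{j}$, still of full $\mu$-measure. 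Pushing $\nu$ forward through $\gamma_{j}$ produces a measure $\nu_{j}'\in\Delta_{U_{j}\Proj{f}U_{j}^{\dagger}}$ by the consistency clause in Definition~\ref{def:Dyn:OM}, and since $\gamma_{j}$ is the identity on $\Omega$ we get $\nu_{j}'(\Omega')\ge\nu(\Omega')$ for every measurable $\Omega'\subseteq\Omega$, hence a density $q_{j}'$ of $\nu_{j}'$ satisfies $q_{j}'\ge q$ $m$-almost everywhere on $\Omega$. Because $\mu(\Omega)=1$ forces $p=0$ off $\Omega$, we have $c=\int_{\Omega}\min(p,q)\,dm\le\int_{\Omega}q\,dm=\nu(\Omega)$, and therefore
\[
L\big(\{\nu_{j}'\}_{j=1}^{d}\big)=\int_{\Lambda}\min_{j}q_{j}'\,dm\;\ge\;\int_{\Omega}\min_{j}q_{j}'\,dm\;\ge\;\int_{\Omega}q\,dm\;\ge\;c\;>\;0 .
\]
On the other hand $\{U_{j}\Proj{f}U_{j}^{\dagger}\}$ is antidistinguishable, so Theorem~\ref{thm:Anti:antiover} — whose proof uses only that the model assigns probability zero to the outcomes that quantum theory precludes, which is all we have assumed — forces $L(\{\nu_{j}'\})=0$, a contradiction. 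Hence $\Proj{e}$ and $\Proj{f}$ are ontologically distinct, and the reduction of the first paragraph completes the proof.

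I expect the real work to sit in the second paragraph: arranging the construction so the threshold comes out exactly as $(d-1)/d$, verifying the orthonormality and orthogonality identities, and pinpointing the single use of $d\ge3$. The measure-theoretic bookkeeping of the third paragraph — intersecting the sets $\Omega_{j}$, and handling the almost-everywhere nature of ontic indifference and of the density comparisons — is routine but needs a little care; everything else is a direct application of results already in hand (Corollary~\ref{cor:Dyn:contract} and Theorem~\ref{thm:Anti:antiover}).
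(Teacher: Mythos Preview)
Your proof is correct and follows the same overall architecture as the paper: reduce by unitary invariance to the single state $\Proj{e}$ furnished by restricted ontic indifference (the paper's Corollary~\ref{cor:Dyn:contract}), manufacture $d$ stabiliser unitaries $U_j$ whose images $U_j\Proj{f}U_j^\dagger$ form an antidistinguishable family, and then combine ontic indifference with Theorem~\ref{thm:Anti:antiover} to force a contradiction from assumed indistinctness. Your third paragraph is exactly the content of the paper's Lemma~\ref{lem:Hardy:overlap}, carried out inline.

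The one genuine difference is the geometric construction of the antidistinguishable family. The paper (Lemmas~\ref{lem:Hardy:inner} and~\ref{lem:Hardy:qstates}) places the ontically-indifferent state at the uniform superposition $\tfrac{1}{\sqrt{d}}\sum_j\Ket{j}$, builds a single $\Ket{\phi_0}$ orthogonal to $\Ket{0}$ by a phase argument, and then generates the rest via the cyclic shift $V:\Ket{j}\mapsto\Ket{j+1}$, so that the antidistinguishing measurement is simply the computational basis. You instead place the fixed state at $\Ket{0}$, build a simplex-type orthonormal basis $\{\Ket{g_j}\}$ all making equal angle with $\Ket{0}$, and construct each $\Ket{f_j}$ separately inside the orthogonal complement of $\Ket{0}$. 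Both routes hit the same threshold $(d-1)/d$ and isolate the role of $d\ge 3$ at the same point (needing at least one free direction in $\mathbb{C}^{d-1}$); the paper's cyclic-shift construction is a bit more economical in that a single unitary generates the family, while yours makes the ``stabiliser of $\Ket{0}$'' structure more transparent.
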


Before proving this theorem, note that, if we form a product of
$\mathfrak{F}$ with another fragment $\mathfrak{F}^{\prime}$ and
assume that there exits an ontological model of the larger fragment
that reproduces the quantum preclusions and preserves ontological
distinctness with respect to our ontological model of $\mathfrak{F}$,
then, for any pair $\Proj{\psi}, \Proj{\phi} \in \mathcal{P}$, we can
always choose the dimensionality of the Hilbert space of
$\mathfrak{F}^{\prime}$ such that $\Tr{\Proj{\phi} \otimes \Proj{0}
  \Proj{\psi} \otimes \Proj{0}} \leq (d-1)/d$ is satisfied, where
$\Proj{0}$ is any fixed pure state of $\mathfrak{F}^{\prime}$.
Hardy's Theorem implies that $\Proj{\psi} \otimes \Proj{0}$ and
$\Proj{\phi} \otimes \Proj{0}$ must be ontologically distinct, and
then the assumption that appending ancillas preserves ontological
distinctness implies that $\Proj{\psi}$ and $\Proj{\phi}$ must be
ontologically distinct in the original model.  Since this applies to
any pair of pure states, the original ontological model must be
$\psi$-ontic.  The case $d=2$ is dealt with in a similar way.

The proof relies on the following three lemmas.
\begin{lemma}
\label{lem:Hardy:overlap}
Let $\mathfrak{F} = \langle \Hilb, \mathcal{P}, \mathcal{M},
\mathcal{T} \rangle$ be a PMT fragment, let $\Theta = (\Lambda, \Sigma,
\Delta, \Xi, \Gamma)$ be an ontological model of it, let
$\Proj{\psi}$ be a state that satisfies ontic indifference in $\Theta$,
let $\{U_j\} \subseteq \mathcal{T}$ be an at most countable set of
unitaries that leave $\Proj{\psi}$ invariant and suppose that
$\Proj{\phi}$ is ontologically indistinct from $\Proj{\psi}$ in
$\Theta$.  Then there exist measures $\mu_j \in \Delta_{U_j
\Proj{\phi}U_j^{\dagger}}$, such that $L(\{\mu_j\}) > 0$, where
$L$ is the overlap.
\end{lemma}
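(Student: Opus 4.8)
The plan is to use ontic indifference of $\Proj{\psi}$ to transport the overlap between $\mu\in\Delta_{\psi}$ and $\nu\in\Delta_{\phi}$ that witnesses the ontological indistinctness of $\Proj{\phi}$ and $\Proj{\psi}$ onto a region where every $U_j$ acts trivially, and then to check that this region is not $\nu$-null. First I would record the hypothesis concretely: $\Proj{\phi}$ ontologically indistinct from $\Proj{\psi}$ means there are $\mu\in\Delta_{\psi}$ and $\nu\in\Delta_{\phi}$ with $D(\mu,\nu)<1$, which by Theorem~\ref{prop:Anti:varoverlap} is the same as $L(\mu,\nu)=1-D(\mu,\nu)>0$. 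Since each $U_j\in\mathcal{T}$ leaves $\Proj{\psi}$ invariant, applying ontic indifference of $\Proj{\psi}$ to this \emph{same} $\mu$ gives, for each $j$, a Markov kernel $\gamma^{(j)}\in\Gamma_{U_j}$ and a set $\Omega_j\in\Sigma$ with $\mu(\Omega_j)=1$ such that $\gamma^{(j)}_{\lambda}(\Omega')=\delta_{\lambda}(\Omega')$ for all measurable $\Omega'\subseteq\Omega_j$. Putting $\Omega=\bigcap_j\Omega_j$, countable subadditivity applied to the complements gives $\mu(\Omega)=1$, and for every $\lambda\in\Omega$ all the $\gamma^{(j)}_{\lambda}$ agree with $\delta_{\lambda}$ on subsets of $\Omega$.

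Next I would exhibit the candidate measures. Because $\mathcal{P}$ is closed under $\mathcal{T}$ we have $U_j\Proj{\phi}U_j^{\dagger}\in\mathcal{P}$, so the consistency clause in Definition~\ref{def:Dyn:OM} shows that the pushforward $\mu_j(\Omega'')=\int_{\Lambda}\gamma^{(j)}_{\lambda}(\Omega'')\,d\nu(\lambda)$ lies in $\Delta_{U_j\Proj{\phi}U_j^{\dagger}}$, which is exactly what the statement asks for. The key estimate is that all the $\mu_j$ dominate $\nu$ on subsets of $\Omega$: for measurable $A\subseteq\Omega$,
\begin{equation}
\mu_j(A)\ \geq\ \int_{\Omega}\gamma^{(j)}_{\lambda}(A)\,d\nu(\lambda)\ =\ \int_{\Omega}\delta_{\lambda}(A)\,d\nu(\lambda)\ =\ \nu(A\cap\Omega)\ =\ \nu(A),
\end{equation}
using the trivial action of $\gamma^{(j)}$ on $\Omega$ in the second step and $A\subseteq\Omega$ in the last.

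Finally I would bound the overlap from below. For an arbitrary countable partition $\{\Omega_k\}\in T(\Lambda)$, monotonicity of measures and the key estimate give $\min_j\mu_j(\Omega_k)\geq\min_j\mu_j(\Omega_k\cap\Omega)\geq\nu(\Omega_k\cap\Omega)$, and since $\{\Omega_k\cap\Omega\}$ partitions $\Omega$, summing over $k$ yields $\sum_k\min_j\mu_j(\Omega_k)\geq\nu(\Omega)$; taking the infimum over partitions and using Definition~\ref{def:Anti:overlap} gives $L(\{\mu_j\})\geq\nu(\Omega)$. To close, note $L(\mu,\nu)\leq\mu(\Lambda\setminus\Omega)+\nu(\Omega)=\nu(\Omega)$ since $\mu(\Omega)=1$, so $\nu(\Omega)\geq L(\mu,\nu)>0$ and hence $L(\{\mu_j\})>0$. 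I do not expect a serious conceptual obstacle here; the points needing care are bookkeeping ones: applying ontic indifference to the one fixed $\mu$ that overlaps $\nu$ (so a single $\Omega$ works for all $\gamma^{(j)}$ simultaneously), using countability of $\{U_j\}$ to keep $\mu(\Omega)=1$ after intersecting, and correctly reading off $\nu(\Omega)\geq L(\mu,\nu)$ from the definition of the variational distance/overlap rather than from any support argument (so that the proof stays valid measure-theoretically, including for models like Beltrametti--Bugajski with no dominating measure).
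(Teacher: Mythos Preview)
Your proof is correct and follows essentially the same route as the paper's: fix the overlapping pair of measures, use ontic indifference for each $U_j$ applied to the $\Proj{\psi}$-measure, intersect the resulting measure-one sets, push the $\Proj{\phi}$-measure forward by the kernels, and bound $L(\{\mu_j\})$ from below by the $\Proj{\phi}$-measure of the intersection. Your write-up is in fact slightly tidier than the paper's in two places: you correctly record $\mu_j(A)\geq\nu(A)$ on subsets of $\Omega$ (the paper states equality, which need not hold since $\gamma^{(j)}_{\lambda}$ is uncontrolled off $\Omega$), and you extract the quantitative bound $L(\{\mu_j\})\geq L(\mu,\nu)$ rather than merely $>0$, via the clean observation $L(\mu,\nu)\leq\mu(\Lambda\setminus\Omega)+\nu(\Omega)=\nu(\Omega)$.
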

\begin{proof}
Let $\nu \in \Delta_{\psi}$ and suppose that $\Omega_j$ is a measure
one set left invariant by some $\gamma^j \in \Gamma_{U_j}$,
i.e. $\gamma^j_{\lambda}(\Omega_j^{\prime}) =
\delta_{\lambda}(\Omega_j^{\prime})$ for $\lambda \in \Omega_j$ and
$\Omega_j^{\prime} \subseteq \Omega_j$.  Then, $\Omega = \cap_j
\Omega_j$ is also measure one according to $\nu$ because it is the
intersection of an at most countable set of measure one sets, and
$\Omega$ is left invariant by each $\gamma^j$.  Since $\Proj{\psi}$
and $\Proj{\phi}$ are ontologically indistinct, there exists a $\mu
\in \Delta_{\phi}$ such that $\mu(\Omega) > 0$.  Now, the measure
$\mu_j$, defined by $\mu_j(\Omega^{\prime}) = \int_{\Lambda}
\gamma^j_{\lambda}(\Omega^{\prime}) d\mu(\lambda)$ for
$\Omega^{\prime} \in \Sigma$, is in $\Delta_{U_j
\Proj{\phi}U_j^{\dagger}}$, and $\mu_j(\Omega^{\prime}) =
\mu(\Omega^{\prime})$ for $\Omega^{\prime} \subseteq \Omega$.

Now, consider a partition of $\Lambda$ into sets $\Omega^k$.  Then,
\begin{align}
&\sum_k \min_j \left [ \mu_j(\Omega^k )\right ] \nonumber\\
& = \sum_k \min_j
\left [ \mu_j \left ( \Omega^k \cap \Omega \right ) + \mu_j \left
( \Omega^k \cap \left ( \Lambda \backslash \Omega \right )
\right ) \right ] \\
& \geq \sum_k \left ( \min_j \left [ \mu_j \left ( \Omega^k \cap
\Omega \right ) \right ] + \min_j \left [ \mu_j \left (
\Omega^k \cap \left ( \Lambda
\backslash \Omega \right ) \right ) \right ] \right ) \\
& \geq \sum_k \min_j \left [ \mu_j \left ( \Omega^k \cap \Omega
\right ) \right ] \\
& = \sum_k \mu(\Omega^k \cap \Omega) = \mu(\Omega) > 0.
\end{align}
Since this holds for all partitions it must also hold for the
infimum, and hence $L(\{\mu_j\}) > 0$.
\end{proof}

The next lemma is also used in the proof of the Colbeck--Renner
Theorem.
\begin{lemma}
\label{lem:Hardy:inner}
Let $\left \{ \Ket{j} \right \}_{j=0}^{d-1}$ be a set of $d \geq 3$
orthonormal vectors in $\mathbb{C}^{d^{\prime}}$ for $d^{\prime}
\geq d$, i.e.\ not necessarily a complete basis, and consider the
vectors
\begin{equation}
\Ket{\psi} = \frac{1}{\sqrt{d}} \sum_{j=0}^{d-1} \Ket{j},
\end{equation}
and
\begin{equation}
\Ket{\phi} = \frac{1}{\sqrt{d-1}} \sum_{j=1}^{d-1} e^{\imath
\varphi_j} \Ket{j}.
\end{equation}
Then, the phases $\varphi_j$ may be chosen such that
$\BraKet{\phi}{\psi}$ is real and takes any value between $0$ and
$\sqrt{\frac{d-1}{d}}$.
\end{lemma}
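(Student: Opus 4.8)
The plan is to compute $\BraKet{\phi}{\psi}$ directly, reduce the statement to a claim about sums of unit-modulus complex numbers, and then verify that claim by an explicit choice of phases. First I would expand the inner product using orthonormality of the $\Ket{j}$. Since $\Ket{\phi}$ has no component along $\Ket{0}$, only the indices $j = 1, \ldots, d-1$ contribute, giving
\begin{equation}
\BraKet{\phi}{\psi} = \frac{1}{\sqrt{d(d-1)}} \sum_{j=1}^{d-1} e^{-\imath \varphi_j}.
\end{equation}
(One also notes in passing that $\Ket{\psi}$ and $\Ket{\phi}$ are genuine unit vectors for any choice of phases.) Thus the lemma reduces to the purely arithmetic claim that the phases $\varphi_1, \ldots, \varphi_{d-1}$ can be chosen so that $S := \sum_{j=1}^{d-1} e^{-\imath \varphi_j}$ is real and equal to any prescribed value in the interval $[0, d-1]$; dividing by $\sqrt{d(d-1)}$ then produces any value in $\left[0, \sqrt{(d-1)/d}\,\right]$, as required.

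Next I would establish the claim about $S$ by a parity argument, writing $n = d-1 \geq 2$. If $n$ is even, group the phases into $n/2$ conjugate pairs $e^{\imath\alpha_p}, e^{-\imath\alpha_p}$, so that $S = 2\sum_{p} \cos\alpha_p$ is real and, as the $\alpha_p$ vary, ranges continuously over $[-n,n] \supseteq [0,n]$; the intermediate value theorem then yields any target value. If $n$ is odd (so $n \geq 3$), fix one phase to $0$ and pair the remaining $n-1$ as above, giving $S = 1 + 2\sum_{p} \cos\alpha_p$, which ranges over $[2-n, n]$; since $n \geq 3$ this interval contains $[0,n]$. In either case one checks the endpoints are attained — $S = n$ with all $\varphi_j = 0$, and $S = 0$ by an appropriate symmetric choice — and that $S$ is real throughout, so $\BraKet{\phi}{\psi}$ is real and sweeps out $\left[0, \sqrt{(d-1)/d}\,\right]$.

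There is no serious obstacle in this argument; the only point requiring a little care is the parity split, and specifically checking that the odd case still covers all of $[0,n]$ — which works precisely because $d \geq 3$ forces $n = d-1 \geq 2$, with the borderline value $n = 2$ falling under the even case. An alternative to the explicit pairing is to invoke the standard fact that the set of sums of $n \geq 2$ unit complex numbers is the closed disk of radius $n$, which certainly contains the real segment $[0,n]$; but the pairing construction is self-contained, so that is the version I would write out.
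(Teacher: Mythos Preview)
Your proof is correct and follows essentially the same approach as the paper: both compute $\BraKet{\phi}{\psi} = \frac{1}{\sqrt{d(d-1)}}\sum_{j=1}^{d-1} e^{-\imath\varphi_j}$, split by the parity of $n=d-1$, and pair the phases into complex-conjugate pairs (leaving one phase fixed at $0$ when $n$ is odd) so that the sum is real and varies continuously over the required range. The only cosmetic difference is that the paper uses a single common angle $\varphi$ for all pairs and writes down the resulting cosine formula explicitly, whereas you allow independent $\alpha_p$ and invoke the intermediate value theorem.
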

\begin{proof}
First consider the case where $d$ is odd.  In this case, set
$\varphi_j = \varphi$ for $1 \leq j \leq \frac{d-1}{2}$ and
$\varphi_j = -\varphi$ for $\frac{d+1}{2} \leq j \leq {d-1}$.  Then,
$\BraKet{\phi}{\psi} = \sqrt{\frac{d-1}{d}} \cos \varphi$, which may
take any value between $0$ and $\sqrt{\frac{d-1}{d}}$ by varying
$\varphi$ from $0$ to $\pi/2$.

When $d$ is even, set $\varphi_1 = 0$, $\varphi_j = \phi$ for $2
\leq j \leq \frac{d}{2}$, and $\varphi_j = -\phi$ for $\frac{d}{2} +
1 \leq j \leq {d-1}$. Then, $\BraKet{\phi}{\psi} =
\sqrt{\frac{1}{d(d-1)}} (1 + (d-2)\cos \varphi)$.  This is equal to
$\sqrt{\frac{d-1}{d}}$ for $\varphi = 0$ and is equal to $0$ for
$\varphi = \pi - \cos^{-1} \left ( \frac{1}{d-2} \right )$ and
varies continuously in between, so again any positive real number
between $0$ and $\sqrt{\frac{d-1}{d}}$ can be obtained.
\end{proof}

\begin{lemma}
\label{lem:Hardy:qstates}
Let $\Proj{\psi}$ and $\Proj{\phi}$ be two nonorthogonal states on
$\mathbb{C}^d$ for $d \geq 3$ such that $\Tr{\Proj{\phi}\Proj{\psi}}
\leq (d-1)/d$.  Then, there exist unitaries $\{U_j\}_{j=0}^{d-1}$
such that $U_j \Proj{\psi} U_j^{\dagger} = \Proj{\psi}$ and such
that $\{\Proj{\phi_j}\}$ is antidistinguishable, where $
\Proj{\phi_j} = U_j \Proj{\phi} U_j^{\dagger}$.
\end{lemma}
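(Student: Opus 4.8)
The plan is to put the pair $(\Proj{\psi},\Proj{\phi})$ into a normal form adapted to $\psi$ by means of Lemma~\ref{lem:Hardy:inner}, and then to take the $U_j$ to be powers of a cyclic shift permutation that fixes $\psi$. First I would pick vector representatives $\Ket{\psi}$ and $\Ket{\phi}$ of $\Proj{\psi}$ and $\Proj{\phi}$ with $r := \BraKet{\phi}{\psi}$ real and nonnegative; then $r^2 = \Tr{\Proj{\phi}\Proj{\psi}} \leq (d-1)/d$, so $0 < r \leq \sqrt{(d-1)/d}$. Applying Lemma~\ref{lem:Hardy:inner} with $d^{\prime} = d$ and $\{\Ket{e_j}\}_{j=0}^{d-1}$ the standard basis of $\mathbb{C}^d$, there are phases $\varphi_1,\dots,\varphi_{d-1}$ for which the model vectors
\begin{equation}
\Ket{\tilde\psi} = \frac{1}{\sqrt d}\sum_{j=0}^{d-1}\Ket{e_j}, \qquad
\Ket{\tilde\phi} = \frac{1}{\sqrt{d-1}}\sum_{j=1}^{d-1} e^{\imath \varphi_j}\Ket{e_j}
\end{equation}
satisfy $\BraKet{\tilde\phi}{\tilde\psi} = r$. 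Since $r$ is real, the ordered pairs $(\Ket{\psi},\Ket{\phi})$ and $(\Ket{\tilde\psi},\Ket{\tilde\phi})$ then have the same Gram matrix (diagonal entries $1$, off-diagonal entries $r$), so by the standard fact that two tuples of vectors with equal Gram matrices are unitarily equivalent there is a unitary $W$ on $\mathbb{C}^d$ with $W\Ket{\tilde\psi}=\Ket{\psi}$ and $W\Ket{\tilde\phi}=\Ket{\phi}$. Setting $\Ket{j} = W\Ket{e_j}$ gives an orthonormal basis of $\mathbb{C}^d$ in which $\Ket{\psi} = \tfrac{1}{\sqrt d}\sum_{j=0}^{d-1}\Ket{j}$ and $\Ket{\phi} = \tfrac{1}{\sqrt{d-1}}\sum_{j=1}^{d-1} e^{\imath \varphi_j}\Ket{j}$; in particular $\BraKet{0}{\phi}=0$.

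With this normal form in hand, I would let $V$ be the cyclic shift defined by $V\Ket{j} = \Ket{(j+1)\bmod d}$ and set $U_j = V^j$ for $j=0,\dots,d-1$. Since $\Ket{\psi}$ is the uniform superposition of the $\Ket{j}$, it is invariant under $V$, so $U_j \Proj{\psi} U_j^{\dagger} = \Proj{\psi}$ as required. Put $\Proj{\phi_j} = U_j \Proj{\phi} U_j^{\dagger}$, with representative $\Ket{\phi_j} = V^j\Ket{\phi}$. Because $\Bra{j}V^j = \Bra{0}$, one computes $\BraKet{j}{\phi_j} = \Bra{0}\Ket{\phi} = 0$, and hence the projective measurement $M = \{\Proj{j}\}_{j=0}^{d-1}$ satisfies $\Tr{\Proj{j}\Proj{\phi_j}} = 0$ for every $j$; that is, $M$ antidistinguishes $\{\Proj{\phi_j}\}_{j=0}^{d-1}$, which is what was to be shown.

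I do not expect a serious obstacle here, but the step requiring the most care is the passage to normal form: one must verify that Lemma~\ref{lem:Hardy:inner} really delivers a model pair whose inner product equals $r$ for every $r \in (0,\sqrt{(d-1)/d}\,]$ (it does, with the endpoint attained), and recall why two pairs of unit vectors in $\mathbb{C}^d$ with equal Gram matrices are related by a unitary (elementary for a pair, since each pair spans a $2$-dimensional subspace). It is worth emphasizing that the hypothesis $\Tr{\Proj{\phi}\Proj{\psi}} \leq (d-1)/d$ is used precisely and only at this point: it is exactly the condition under which $\phi$ can be written as an equal-modulus superposition over $d-1$ of the $d$ basis vectors $\Ket{j}$, and it is this structure that lets the $\psi$-fixing cyclic shift carry the "missing" basis vector through all of $\Ket{0},\dots,\Ket{d-1}$, turning the computational-basis measurement into an antidistinguishing one.
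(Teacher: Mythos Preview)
Your proof is correct and follows essentially the same route as the paper: use Lemma~\ref{lem:Hardy:inner} to realize the given inner product with a model pair in which $\Ket{\psi}$ is the uniform superposition and $\Ket{\phi}$ is supported on $d-1$ basis vectors, then take the $U_j$ to be powers of the cyclic shift $V$ so that $\{\Proj{j}\}$ antidistinguishes the rotated states. The only difference is packaging: the paper first fixes a basis making $\Ket{\psi}$ uniform, builds the model state $\Ket{\phi_0}$ there, and then introduces an explicit $\psi$-fixing unitary $U$ with $U\Ket{\phi}=\Ket{\phi_0}$, setting $U_j = V^j U$; you instead absorb that $U$ into the choice of basis via the Gram-matrix/unitary-equivalence argument, which lets you take $U_j = V^j$ directly (so in particular $U_0 = I$ and $\Proj{\phi_0}=\Proj{\phi}$). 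This is a cosmetic simplification rather than a different idea.
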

\begin{proof}
Let $\Ket{\psi}$ and $\Ket{\phi}$ be vector representatives of
$\Proj{\psi}$ and $\Proj{\phi}$.  Then,
\begin{equation}
\Ket{\phi} = \alpha \Ket{\psi} + \beta \Ket{\psi^{\perp}},
\end{equation}
where $\alpha = \BraKet{\phi}{\psi}$ and
$\BraKet{\psi^{\perp}}{\psi} = 0$.  By appropriate choice of the
global phase of $\Ket{\psi}$, it is possible to make $\alpha$ real
and positive.

Let $\left \{ \Ket{j} \right \}_{j=0}^{d-1}$ be a basis in which
$\Ket{\psi}$ is represented as $\Ket{\psi} = \frac{1}{\sqrt{d}}
\sum_{j=0}^{d-1} \Ket{j}$ and consider the vector
\begin{equation}
\Ket{\phi_0} = \frac{1}{\sqrt{d-1}}\sum_{j=0}^{d-1} e^{\imath \varphi_j} \Ket{j}.
\end{equation}
By Lemma~\ref{lem:Hardy:inner}, it is possible to choose the phases
$\varphi_j$ such that $\BraKet{\phi_0}{\psi} = \alpha$, since
$\alpha$ is between $0$ and $\sqrt{\frac{d-1}{d}}$.

Making this choice of phases, it follows that
\begin{equation}
\Ket{\phi_0} = \alpha \Ket{\psi} + \beta \Ket{\psi^{\perp}_0},
\end{equation}
where $\BraKet{\psi^{\perp}_0}{\psi} = 0$.  Let $U$ be a unitary
that acts as $U \Ket{\psi} = \Ket{\psi}$ and $U \Ket{\psi^{\perp}} =
\Ket{\psi^{\perp}_0}$.  It follows that $\Ket{\phi_0} = U
\Ket{\phi}$.

Now, consider the unitary $V$, where $V\Ket{j} = \Ket{j + 1 \mod
d}$, which leaves $\Ket{\psi}$ invariant.  Then, $U_j = V^jU$ also
leaves $\Ket{\psi}$ invariant for $j = 0,1,\ldots,d-1$ and the
vectors $\Ket{\phi_j} = U_j\Ket{\psi}$ satisfy $\BraKet{j}{\phi_j} =
0$, since, by construction, $\BraKet{0}{\phi_0} = 0$.  Thus, the
states $\left \{ \Proj{\phi_j} \right \}_{j=0}^{d-1}$ are
antidistinguished by the measurement $\left \{\Proj{j} \right
\}_{j=0}^{d-1}$ as required.
\end{proof}

\begin{proof}[Proof of Theorem~\ref{thm:Hardy:hardy}]
Let $\Theta = (\Lambda, \Sigma, \Delta, \Xi, \Gamma)$ be an ontological
model of the PMT fragment $\mathfrak{F}$ to be constructed.  For the
purposes of contradiction, assume that two states $\Proj{\psi},
\Proj{\phi} \in \mathcal{P}$ satisfying $\Tr{\Proj{\phi}\Proj{\psi}}
\leq (d-1)/d$ are ontologically indistinct in $\Theta$ and let
$\Proj{\psi^{\prime}}$ be a state that satisfies ontic indifference
in $\Theta$.  Without loss of generality, we may assume that
$\Proj{\psi} = \Proj{\psi^{\prime}}$.  If not, let $U$ be a unitary
such that $\Proj{\psi^{\prime}} = U \Proj{\psi} U^{\dagger}$ and
assume $U \in \mathcal{T}$.  Then the states $\Proj{\psi^{\prime}}$
and $\Proj{\phi^{\prime}} = U \Proj{\phi} U^{\dagger}$ satisfy
$\Tr{\Proj{\phi^{\prime}}\Proj{\psi^{\prime}}} \leq (d-1)/d$ because
the inner product is preserved under unitary transformations.  Also,
by Corollary~\ref{cor:Dyn:contract}, if $\Proj{\psi}$ and
$\Proj{\phi}$ are ontologically indistinct then so are
$\Proj{\psi^{\prime}}$ and $\Proj{\phi^{\prime}}$ so it is
sufficient to derive a contradiction for these two states.

By Lemma~\ref{lem:Hardy:qstates}, there exist unitaries
$\{U_j\}_{j=0}^{d-1}$ that leave $\Proj{\psi}$ invariant such that
$\left \{ U_j \Proj{\phi}U_j^{\dagger} \right \}_{j=0}^{d-1}$ is
antidistinguishable.  Assume that each $U_j$ is in $\mathcal{T}$ and
that the antidistinguishing measurement is in $\mathcal{M}$.  Then,
by Theorem~\ref{thm:Anti:antiover}, for every set of probability
measures $\{\mu_j\}_{j=0}^{d-1}$ with $\mu_j \in \Delta_{U_j
\Proj{\phi}U_j^{\dagger}}$, $L(\{\mu_j\}) = 0$, but by
Lemma~\ref{lem:Hardy:overlap} there must exist a set such that
$L(\{\mu_j\}) > 0$.  Thus, we have a contradiction so $\Proj{\psi}$
and $\Proj{\phi}$ must be ontologically distinct.
\end{proof}

\section{The Colbeck--Renner Theorem}

\label{CR}

Our final $\psi$-ontology theorem is due to Colbeck and Renner
\cite{Colbeck2013a}.  They originally proved a $\psi$-ontology theorem
as a byproduct of a broader no-go theorem that aimed to rule out any
ontological model of quantum theory in which more precise predictions
can be made given knowledge of the ontic state than if you only know
the quantum state \cite{Colbeck2011, Colbeck2012a, Colbeck2012}.  This
broader theorem is quite tricky to understand, both conceptually and
technically, and its assumptions have been criticized
\cite{Ghirardi2013, Ghirardi2013a, Ghirardi2013b}.  Fortunately,
Colbeck and Renner have recently put forward a simplified argument for
$\psi$-ontology based on similar ideas that bypasses much of the
complexity \cite{Colbeck2013a}.  I present my own variation of the
proof here.

The Colbeck--Renner Theorem is based on a locality assumption known as
\emph{parameter independence}.  The condition of Bell locality,
discussed in \S\ref{Bell}, can be decomposed as the conjunction of
parameter independence and another assumption called \emph{outcome
independence}.  Roughly speaking, parameter independence says that
the measurement outcome at Bob's side should not depend on Alice's
choice of measurement, and outcome independence says that the
measurement outcome at Bob's side should not depend on Alice's
measurement outcome.  For a typical Bell inequality experiment, it is
in principle possible to reproduce the quantum predictions by
violating outcome independence and leaving parameter independence
intact, so the Colbeck--Renner Theorem rules out the class of
$\psi$-epistemic models that opt for this resolution to the conundrum
presented by Bell's Theorem.  Nevertheless, Bell's Theorem should lead
us to be skeptical of imposing locality assumptions on ontological
models.  Indeed, many notable realist approaches to quantum theory,
such as de Broglie--Bohm theory, violate parameter independence.  As
well as ruling out $\psi$-epistemic models, parameter independence
rules out a wide class of viable $\psi$-ontic models, so it is perhaps
an unreasonably strong assumption for a $\psi$-ontology theorem.

It should be noted that Colbeck and Renner claim that their theorem is
not directly based on parameter independence, but rather on a broader
assumption that they call ``free choice'', aimed at capturing the idea
that certain variables, such as the settings of preparation and
measurement devices, are chosen freely by the experimenter and not
constrained by other parameters of the system.  Parameter independence
is just one consequence of their free choice assumption.  However,
whether their assumption truly captures the meaning of making a free
choice is one of the most controversial aspects of the Colbeck--Renner
argument.  My own opinion is that, whilst some aspects of free choice
are captured by the Colbeck--Renner assumption, it is at least partly a
locality assumption as well because it implies parameter independence.
Therefore, I prefer to avoid the free choice controversy and just
assume parameter independence directly.

Understanding the Colbeck--Renner Theorem requires quite a bit of
background material, so the discussion is broken down into four
sections.  \S\ref{CR:PI} gives a formal definition of parameter
independence and shows how, given an ontological model for a bipartite
system, it can be used to derive an ontological model for the reduced
states on a subsystem.  \S\ref{CR:Chain} introduces chained Bell
measurements, which are the main technical tool used in proving the
Colbeck--Renner Theorem.  \S\ref{CR:Equi} discusses the equiprobability
theorem, which shows that, in a model that satisfies parameter
independence, all ontic states associated with a maximally entangled
state must assign equal probability to all the outcomes of some local
measurement in an orthonormal basis.  Finally, \S\ref{CR:Main}
leverages the equiprobability theorem to prove the main result.

\subsection{Parameter Independence}

\label{CR:PI}

As in Bell's Theorem, the fragment of interest for the Colbeck--Renner
Theorem is a product measurement fragment, consisting of states and
local measurements on a bipartite system (see
Definition~\ref{def:Bell:PM}).  \S\ref{Bell} discussed the concept of
a conditional fragment on Bob's system that results from viewing
Alice's measurements as preparation procedures for states on Bob's
system.  Given an ontological model for a product measurement fragment
that satisfies Bell locality, it is possible to derive a model for the
conditional fragment from it.  Here, we are concerned with a smaller
fragment that results from tracing out one of the systems, but not
post-selecting on measurement outcomes.

\begin{definition}
Let $\mathfrak{F}_{AB} = \langle \Hilb[A] \otimes \Hilb[B],
\mathcal{P}_{AB}, \mathcal{M}_A \times \mathcal{M}_B \rangle$ be a
product measurement fragment.  The \emph{marginal fragment on $A$}
is $\mathfrak{F}_A = \langle \Hilb[A], \mathcal{P}_A, \mathcal{M}_A
\rangle$, where $\mathcal{P}_A$ consists of all states of the form
$\rho_A = \Tr[B]{\rho_{AB}}$ for $\rho_{AB} \in \mathcal{P}_{AB}$.
The marginal fragment on $B$ is defined similarly, with the roles
of $A$ and $B$ interchanged.
\end{definition}

In general, an ontological model $\Theta_{AB} = (\Lambda_{AB},
\Sigma_{AB}, \Delta_{AB}, \Xi_{AB})$ for a product measurement
fragment $\mathfrak{F}_{AB} = \langle \Hilb[A] \otimes \Hilb[B],
\mathcal{P}_{AB}, \mathcal{M}_A \times \mathcal{M}_B \rangle$
specifies a set of conditional probability distributions
$\text{Pr}_{AB} \in \Xi_{AB}[M_A,\allowbreak M_B]$ for every $M_A \in
\mathcal{M}_A$ and $M_B \in \mathcal{M}_B$.  These distributions are
of the form $\text{Pr}_{AB}(E,F|M_A,M_B,\lambda)$, where $E$ varies
over $M_A$ and $F$ varies over $M_B$.  If the model reproduces the
quantum predictions, then for all $\rho_{AB} \in \mathcal{P}_{AB}$,
$M_A \in \mathcal{M}_A$, $M_B \in \mathcal{M}_B$, $E \in M_A$ and $F
\in M_B$,
\begin{equation}
\int_{\Lambda_{AB}} \text{Pr}_{AB}(E,F|M_A,M_B,\lambda)
d\mu(\lambda) = \Tr[AB]{E \otimes F \rho_{AB}},
\end{equation}
for every $\mu \in \Delta_{AB}[\rho]$ and $\text{Pr}_{AB} \in
\Xi_{AB}[M_A,M_B]$.  Summing both sides of this equation over $F \in
M_B$ gives
\begin{equation}
\label{eq:CR:margreprod}
\int_{\Lambda_{AB}} \text{Pr}_{AB}(E|M_A,M_B,\lambda) d\mu(\lambda) =
\Tr[A]{E \rho_A},
\end{equation}
where $\rho_A = \Tr[B]{\rho_{AB}}$ and
$\text{Pr}_{AB}(E|M_A,M_B,\lambda) = \sum_{F \in M_B}
\text{Pr}_{AB}(E,F|M_A,M_B,\lambda)$.  For a fixed $M_B$,
$\text{Pr}_{AB}(E|M_A,M_B,\lambda)$ is a valid conditional probability
distribution over $M_A$ that could be used in an ontological model of
the marginal fragment.  Further, if $\mu$ is used to represent
$\rho_A$, this choice of conditional probability distributions
reproduces the quantum predictions.  The only difficulty is that
$\text{Pr}_{AB}(E|M_A,M_B,\lambda)$ depends on $M_B$ as well as $M_A$,
so there is an ambiguity about which choice of $M_B$ should be used to
model the marginal fragment.  The assumption of parameter independence
removes this ambiguity.

\begin{definition}
An ontological model $\Theta_{AB} = (\Lambda_{AB}, \Sigma_{AB},
\Delta_{AB}, \xi_{AB})$ of a product measurement fragment
$\mathfrak{F}_{AB} = \langle \Hilb[A] \otimes \Hilb[B],
\mathcal{P}_{AB}, \mathcal{M}_A \times \mathcal{M}_B \rangle$
satisfies \emph{parameter independence} if, for all $M_A \in
\mathcal{M}_A$, $M_B \in \mathcal{M}_B$ and $\text{Pr}_{AB} \in
\Xi_{AB}[M_A,M_B]$,
\begin{align}
\text{Pr}_{AB}(E|M_A,M_B,\lambda) & =
\text{Pr}_{AB}(E|M_A,\lambda) \\
\text{Pr}_{AB}(F|M_A,M_B,\lambda) & =
\text{Pr}_{AB}(F|M_B,\lambda).
\end{align}
\end{definition}

Parameter independence is a locality assumption that says that Alice's
measurement outcome should not depend on Bob's choice of measurement
and vice versa.  It is a weaker requirement than Bell locality, as it
can be shown that the conjunction of parameter independence and
\emph{outcome independence}, which reads
\begin{align}
\text{Pr}_{AB}(E|M_A,M_B,F,\lambda) & = \text{Pr}_{AB}(E|M_A,M_B,\lambda) \\
\text{Pr}_{AB}(F|M_A,E,M_B,\lambda) & = \text{Pr}_{AB}(F|M_A,M_B,\lambda),
\end{align}
is equivalent to Bell locality \cite{Jarrett1984, Shimony1993}.  For
present purposes, the point of this is that, whilst the full Bell
locality condition is needed to derive a well-defined ontological
model for conditional fragments, parameter independence is sufficient
to have unambiguous models for marginal fragments.

\begin{definition}
Let $\mathfrak{F}_{AB} = \langle \Hilb[A] \otimes \Hilb[B],
\mathcal{P}_{AB}, \mathcal{M}_A \times \mathcal{M}_B \rangle$ be a
product measurement fragment and let $\Theta_{AB} = ( \Lambda_{AB},
\Sigma_{AB}, \Delta_{AB}, \Xi_{AB} )$ be an ontological model of it
that satisfies parameter independence.  The \emph{marginal model on
$A$}, $\Theta_A = (\Lambda_{AB}, \Sigma_{AB}, \Delta_A, \Xi_A)$ is an
ontological model for the marginal fragment $\mathfrak{F}_A =
\langle \Hilb[A], \mathcal{P}_A, \mathcal{M}_A \rangle$, where, for
$\rho_A \in \mathcal{P}_A$,
\begin{equation}
\Delta_A[\rho_A] =  \cup_{\{\rho_{AB} \in
\mathcal{P}_{AB}|\Tr[B]{\rho_{AB}} = \rho_A\}} \Delta_{AB}[\rho_{AB}],
\end{equation}
and $\Xi_A[M_A]$ consists of all conditional probability
distributions of the form
\begin{equation}
\text{Pr}_A(E|M_A, \lambda) = \sum_{F \in M_B}
\text{Pr}_{AB}(E,F|M_A,M_B, \lambda),
\end{equation}
for all $\text{Pr}_{AB} \in \Xi_{AB}[M_A,M_B]$.  Note that, by
parameter independence, these distributions are independent of
$M_B$.
\end{definition}

Eq.~\eqref{eq:CR:margreprod} implies that if $\Theta_{AB}$ reproduces the
quantum predictions for $\mathfrak{F}_{AB}$ then the marginal model
$\Theta_A$ reproduces the quantum predictions for the marginal fragment
$\mathfrak{F}_A$.

\subsection{Chained Bell measurements}

\label{CR:Chain}

The main engine of the proof of the Colbeck--Renner Theorem is the
\emph{chained Bell measurements}.  These were originally developed in
the context of a proof of Bell's Theorem that involved chaining
together several tests of the Clauser--Horne--Shimony--Holt inequality
\cite{Clauser1969}, resulting in the chained Bell inequalities
\cite{Braunstein1989, Braunstein1989a}, which explains the
terminology.

We are interested in the product measurement fragment $\mathfrak{F} =
\langle \Hilb[A] \otimes \Hilb[B], \mathcal{P}_{AB}, \mathcal{M}_A
\times \mathcal{M}_B \rangle$, where $\Hilb[A] = \Hilb[B] =
\mathbb{C}^2$ and both $\mathcal{M}_A$ and $\mathcal{M}_B$ contain $N$
measurements in orthonormal bases.  For later convenience, Alice's
measurements are labeled by even integers and Bob's by the odd
integers, so we have $M_A^a \in \mathcal{M}_A$ for $a \in E_N =
\{0,2,\ldots,2N-2\}$ and $M_A^b \in \mathcal{M}_B$ for $b \in O_N =
\{1,3,\ldots,2N-1\}$.  The measurements are described by pairs of
orthogonal rank-1 projectors $M_A^a =
\{\Proj{\phi^a_0}_A,\Proj{\phi^a_1}_A\}$, $M_B^b =
\{\Proj{\phi^b_0}_B,\Proj{\phi^b_1}_B\}$ onto the orthonormal bases
$\left \{ \Ket{\phi^a_0}_A, \Ket{\phi^a_1}_A \right \}$ and $\left \{
\Ket{\phi^b_0}_B, \Ket{\phi^b_1}_B \right \}$.  If the state
$\rho_{AB}$ is prepared, Alice measures $M_A^a$, and Bob measures
$M_B^b$, then the outcome probabilities are
\begin{equation}
\text{Prob} \left ( \Proj{\phi^a_j},
\Proj{\phi^b_k}|M_A^a,M_B^b,\rho_{AB}\right ) =
\Tr[AB]{\Proj{\phi^a_j} \otimes \Proj{\phi^b_k} \rho_{AB}}.
\end{equation}
For later convenience, we introduce two random variables $X$ and $Y$
that take values $0$ and $1$ to represent Alice and Bob's outcomes and
write
\begin{align}
&\text{Prob} (X = j,Y = k|a,b,\rho_{AB}) \nonumber\\
&\quad= \text{Prob} \left (
\Proj{\phi^a_j}, \Proj{\phi^b_k}|M_A^a,M_B^b,\rho_{AB}\right ).
\end{align}

Now consider the following correlation measure for these outcome
probabilities.
\begin{align}
\label{eq:CR:corr}
&I_N(\rho_{AB}) = \text{Prob}(X=Y|0,2N-1,\rho_{AB}) \nonumber\\
&\quad+ \sum_{\left
\{(a,b) \middle | a \in E_N, b \in O_N, |a-b| = 1 \right \}}
\text{Prob}(X \neq Y|a,b,\rho_{AB}).
\end{align}
The structure of this measure is illustrated in
Fig.~\ref{fig:CR:corr}. Note that smaller values of this measure
indicate that the measurement outcomes are more highly correlated.

Now, let $\left \{ \Ket{0}, \Ket{1} \right \}$ be an orthonormal basis
for $\mathbb{C}^2$, and consider the quantum state
$\Proj{\Phi^+}_{AB}$, where
\begin{equation}
\Ket{\Phi^+}_{AB} = \frac{1}{\sqrt{2}} \left ( \Ket{0}_A \otimes \Ket{0}_B
+ \Ket{1}_A \otimes \Ket{1}_B \right ).
\end{equation}
Let $\vartheta^a_j = \left ( \frac{a}{2N} + j\right ) \pi$, and
suppose that the bases for Alice and Bob's measurements are given by
\begin{align}
\Ket{\phi^a_j}_A & = \cos \left ( \frac{\vartheta^a_j}{2} \right )
\Ket{0}_A + \sin \left ( \frac{\vartheta^a_j}{2} \right
) \Ket{1}_A \\
\Ket{\phi^b_k}_B & = \cos \left ( \frac{\vartheta^b_k}{2} \right )
\Ket{0}_A + \sin \left ( \frac{\vartheta^b_k}{2} \right ) \Ket{1}_B.
\end{align}
The $j,k=0$ outcomes of these measurements are illustrated on the
Bloch sphere in Fig.~\ref{fig:CR:chained}.
\begin{figure}[t!]
\centering
\includegraphics[width=85mm]{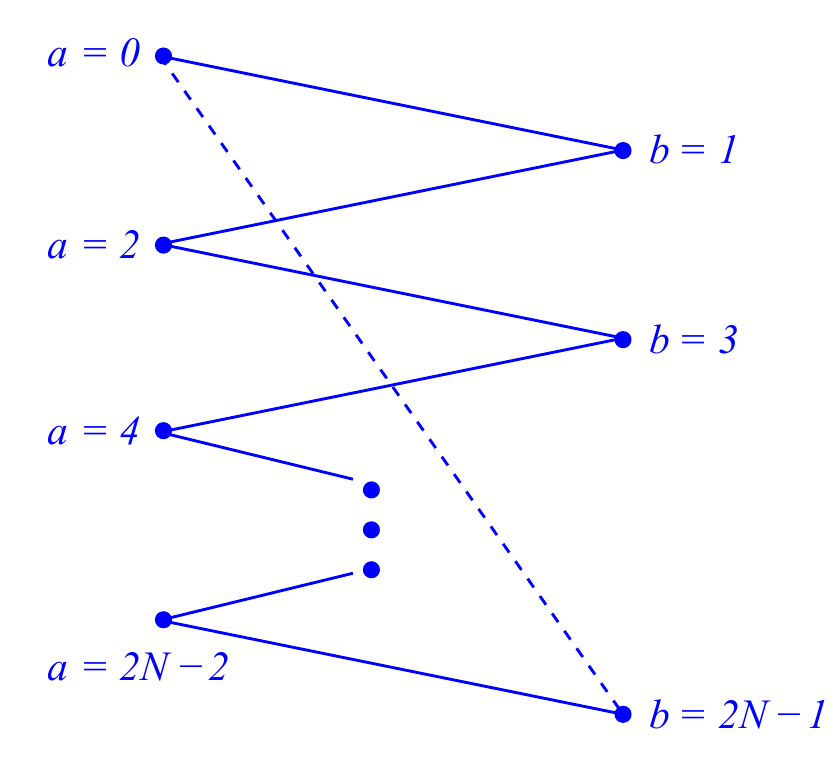}
\caption{\color[HTML]{0000FF}{\label{fig:CR:corr}The structure of the correlation measure
given in Eq.~\eqref{eq:CR:corr}.  A solid line between $a$ and $b$
represents the probability that the outcomes of $M^A_a$ and
$M^A_b$ are different and a dotted line represents the probability
that they are the same.  The correlation measure is then the sum
of the terms represented by each line.}}
\end{figure}

With these particular choices, the outcome probabilities are
\begin{align}
&\text{Prob} \left ( X = j,Y = k|a,b,\Proj{\Phi^+}_{AB}\right ) \nonumber\\
&\quad= \frac{1}{2}
\cos^2 \left ( \left ( \frac{a-b}{2N} + j - k \right ) \frac{\pi}{2}
\right ),
\end{align}
from which it is straightforward to compute that
\begin{equation}
\label{eq:CR:corval}
I_N(\Proj{\Phi^+_{AB}}) = 2N \sin^2 \frac{\pi}{4N} \leq
\frac{\pi^2}{8N} ,
\end{equation}
so that the correlation measure tends to zero as $N \rightarrow
\infty$.

Without going through the whole calculation, it is fairly easy to see
why this should be the case.  For the state $\Proj{\Phi^+}_{AB}$, when
Alice performs the measurement $M_A^a$ and obtains the outcome $j$,
Bob's state gets updated to $\Proj{\phi^a_j}_B$.  The condition $|a-b|
= 1$ in the sum in Eq.~\eqref{eq:CR:corr} means that, in the limit of
large $N$, the terms in this sum involve basis states for Bob that are
very close to Alice's basis states.  Thus, since Bob's system
collapses to the basis state that Alice obtained in her measurement,
the probability of getting the same outcome is very high, and hence
$P(X \neq Y|a,b,\Proj{\Phi^+}_{AB})$ is close to zero.  Similarly, the
state $\Proj{\phi_j^{2N-1}}_B$ is almost orthogonal to
$\Proj{\phi_j^{0}}_B$, so the probability of getting equal outcomes in
the first term is also close to zero.
\begin{figure}[t!]
\centering
\includegraphics[width=85mm]{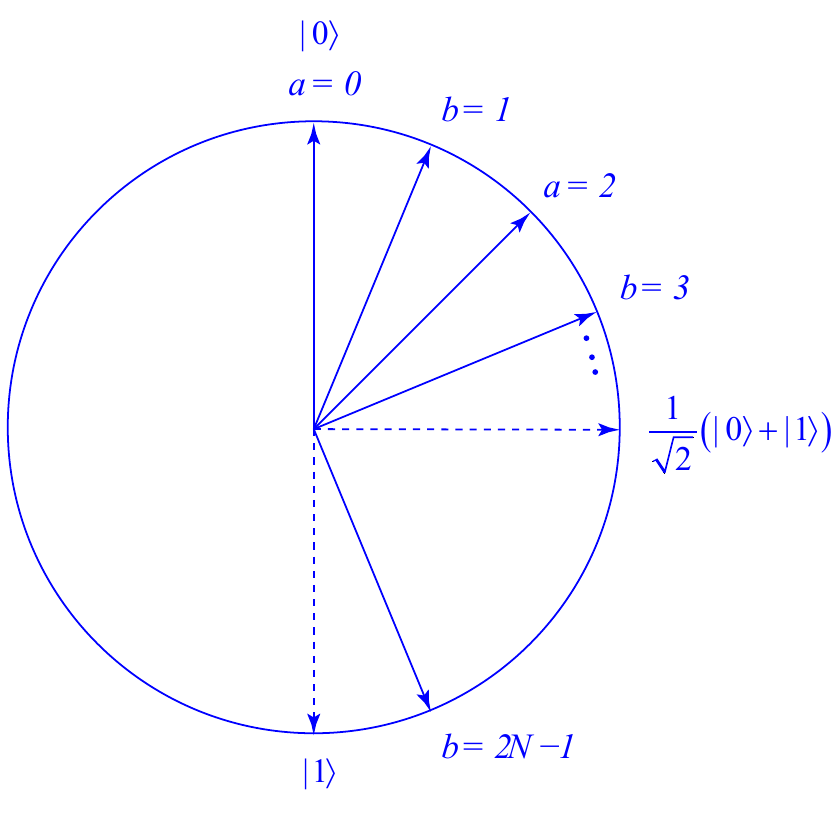}
\caption{\color[HTML]{0000FF}{\label{fig:CR:chained}Alice and Bob's measurement bases
represented on a great circle of the Bloch sphere.}}
\end{figure}

\subsection{The equiprobability theorem}

\label{CR:Equi}

The aim of this section is to prove the main technical result that the
Colbeck--Renner Theorem relies upon.  We first deal with the case where
Alice and Bob's Hilbert spaces are two dimensional, and then extend
this to higher dimensions.

\begin{theorem}
\label{thm:CR:qubit}
Let $\mathfrak{F}_{AB} = \langle \Hilb[A] \otimes \Hilb[B],
\mathcal{P}_{AB}, \mathcal{M}_A \times \mathcal{M}_B \rangle$ be a
product measurement fragment where $\Hilb[A] = \Hilb[B] =
\mathbb{C}^2$, $\mathcal{P}_{AB}$ contains the maximally entangled
state $\Proj{\Phi^+}_{AB}$, where
\begin{equation}
\Ket{\Phi^+}_{AB} = \frac{1}{\sqrt{2}} \left ( \Ket{0}_A \otimes
\Ket{0}_B  + \Ket{1}_A \otimes \Ket{1}_B \right ),
\end{equation}
for some orthonormal basis $\left \{ \Ket{0}, \Ket{1} \right \}$,
$\mathcal{M}_A$ contains measurements in every orthonormal basis on
$\Hilb[A]$ and $\mathcal{M}_B$ contains measurements in every
orthonormal basis on $\Hilb[B]$.

Let $\Theta_{AB} = (\Lambda_{AB}, \Sigma_{AB}, \Delta_{AB}, \Xi_{AB})$
be an ontological model of $\mathfrak{F}_{AB}$ that reproduces the
quantum predictions and satisfies parameter independence.

Then, for any $\mu \in \Delta_{AB}\left [ \Proj{\Phi^+}_{AB} \right
]$, there exists a set $\Omega \in \Sigma_{AB}$ such that
$\mu(\Omega) = 1$ and, for the measurement $M_A = \left
\{\Proj{0}_A, \Proj{1}_A \right \}$, all $\mathrm{Pr}_A \in
\Xi_A[M_A]$ in the marginal model on $A$ satisfy $\mathrm{Pr}_A
\left ( \Proj{0}_A \middle | M_A,\lambda \right ) = \mathrm{Pr}_A
\left ( \Proj{1}_A \middle | M_A,\lambda \right ) = \frac{1}{2}$ for
$\lambda \in \Omega$.
\end{theorem}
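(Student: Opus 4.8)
The plan is to use the chained Bell measurements from \S\ref{CR:Chain} together with parameter independence to force the marginal response functions for $M_A = \{\Proj{0}_A,\Proj{1}_A\}$ to equal $\tfrac12$ almost everywhere. The key observation is that the quantity $I_N(\Proj{\Phi^+}_{AB})$ computed at the ontological level is a sum over $\lambda$ (weighted by $\mu$) of a ``local'' correlation measure built out of the conditional probabilities $\text{Pr}_{AB}(\cdot|M_A^a,\lambda)$ and $\text{Pr}_{AB}(\cdot|M_B^b,\lambda)$; by parameter independence these are well-defined functions of $\lambda$ alone, not of the other party's setting. So I would first define, for each $\lambda$, the single-system response functions $\text{Pr}_A(X=j|a,\lambda)$ and $\text{Pr}_B(Y=k|b,\lambda)$, and write
\begin{equation}
I_N(\Proj{\Phi^+}_{AB}) = \int_{\Lambda_{AB}} I_N(\lambda)\, d\mu(\lambda),
\end{equation}
where $I_N(\lambda)$ is the obvious analogue of Eq.~\eqref{eq:CR:corr} with ontological probabilities in place of quantum ones. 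Since $I_N(\lambda)\geq 0$ and $I_N(\Proj{\Phi^+}_{AB})\leq \pi^2/(8N)$ by Eq.~\eqref{eq:CR:corval}, Markov's inequality (or just the fact that a nonnegative function with small integral must be small off a small set) gives that $I_N(\lambda)$ is small for $\mu$-almost all $\lambda$.

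Next I would extract a \emph{pointwise} constraint from smallness of $I_N(\lambda)$. The chained structure is a telescoping chain of ``disagreement'' terms closing with one ``agreement'' term, so for any fixed deterministic or stochastic local assignment the value of $I_N(\lambda)$ cannot be made arbitrarily small unless the marginal distribution over the two outcomes is close to uniform at the endpoints of the chain. More precisely, because $a=0$ and $b=2N-1$ are the ``closing'' pair and all the intermediate links force near-equality of outcomes across neighbouring settings, a small $I_N(\lambda)$ forces $\text{Pr}_A(X=0|0,\lambda)$ and $\text{Pr}_A(X=1|0,\lambda)$ to each lie within $O(1/N)$ (or some explicit function of $I_N(\lambda)$) of $\tfrac12$. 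Running this for a sequence $N\to\infty$: for each $N$ there is a set $\Omega_N$ with $\mu(\Omega_N)=1$ on which $I_N(\lambda)$ is finite and controlled, and intersecting countably many of these, $\Omega = \bigcap_N \Omega_N$ still has $\mu(\Omega)=1$; on $\Omega$ the bounds for all $N$ hold simultaneously, so $\text{Pr}_A(X=0|0,\lambda)=\text{Pr}_A(X=1|0,\lambda)=\tfrac12$ exactly. By symmetry between $A$ and $B$ the marginal on $B$ behaves the same way, though the statement as given only asks for $A$.

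The main obstacle I expect is the second step: converting ``$I_N(\lambda)$ small'' into ``marginal close to uniform at $\lambda$'', carefully and with an explicit rate, while allowing genuinely \emph{stochastic} (not just deterministic) response functions. For deterministic responses one has the classical chained-Bell combinatorial bound $I_N(\lambda)\geq$ (something forcing the endpoint bits to flip an odd number of times around the chain), and one reads off a contradiction unless the response is balanced; for stochastic responses one has to average this and be slightly more delicate, but convexity means the stochastic bound is no better than the deterministic one, so the same conclusion goes through after taking the appropriate supremum/infimum. The measure-theoretic bookkeeping — producing the measure-one sets $\Omega_N$ from ``$\int I_N\,d\mu$ small'' via Markov's inequality applied along a subsequence, and then intersecting — is routine but must be written with care because $\mu(\Omega_N)=1$ (not merely close to $1$) is what lets the countable intersection retain full measure; this uses that $I_N(\Proj{\Phi^+}_{AB})$ is genuinely bounded for each finite $N$, so $\{\lambda : I_N(\lambda)<\infty\}$ is already $\mu$-full, and the uniformity at each $\lambda$ then comes from letting $N\to\infty$ within $\Omega$.
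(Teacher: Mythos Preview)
Your overall strategy---define $I_N(\lambda)$ at the ontological level, integrate against $\mu$, use the quantum bound $I_N(\Proj{\Phi^+}_{AB})\leq \pi^2/(8N)$, and extract a pointwise constraint on the marginal at $a=0$---matches the paper. But the step you correctly flag as the obstacle is also where your proposal goes wrong.

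Your suggested route through ``deterministic chained-Bell bound plus convexity'' does not work under \emph{parameter independence alone}. The convexity argument you have in mind decomposes the stochastic response at $\lambda$ into deterministic \emph{local} strategies; that decomposition requires Bell locality (i.e.\ outcome independence as well), not just parameter independence. Under parameter independence the joint $P(X,Y|a,b,\lambda)$ is an arbitrary coupling of the marginals $P(X|a,\lambda)$ and $P(Y|b,\lambda)$, not a mixture of product strategies, so the deterministic lower bound $I_N(\lambda)\geq 1$ need not hold (indeed $I_N(\lambda)$ can vanish). The paper instead uses the elementary coupling inequality (Lemma~\ref{lem:CR:couple}): for any joint $P(X,Y)$ with given marginals, $D(P(X),P(Y))\leq P(X\neq Y)$. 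Applying this term-by-term and then the triangle inequality for $D$ around the chain (with $\tilde X=(X+1)\bmod 2$ at the closing link) yields directly
\[
I_N(\lambda)\;\geq\; D\bigl(P(\tilde X|0,\lambda),\,P(X|0,\lambda)\bigr)\;=\;\bigl|\text{Pr}_A(\Proj{0}_A|M_A,\lambda)-\text{Pr}_A(\Proj{1}_A|M_A,\lambda)\bigr|.
\]
This is the missing ingredient, and it renders your Markov/subsequence bookkeeping unnecessary: the right-hand side is \emph{independent of $N$}, so integrating gives $\int|\text{Pr}_A(\Proj{0}_A|M_A,\lambda)-\text{Pr}_A(\Proj{1}_A|M_A,\lambda)|\,d\mu\leq \pi^2/(8N)$ for every $N$, hence the integral is zero and the integrand vanishes on a $\mu$-measure-one set. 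No Borel--Cantelli or intersection of $\Omega_N$'s is needed.
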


Before proving this theorem, it is instructive to see how it can be
used to establish ontological distinctness for some pairs of states.
Consider the state $\Proj{0}_A \otimes \Proj{0}_B$.  To reproduce the
quantum predictions, the marginal model on $A$ must predict that the
$\Proj{1}_A$ outcome has zero probability.  This means that any $\nu
\in \Delta_{AB}[\Proj{0}_A \otimes \Proj{0}_B]$ must assign zero
measure to any set $\Omega \in \Sigma_{AB}$ for which
$\text{Pr}_A(\Proj{1}_A|M_A,\lambda)$ is nonzero for some $\text{Pr}_A
\in \Xi_A[M_A]$.  However, according to Theorem~\ref{thm:CR:qubit},
for any $\mu \in \Delta_{AB} \left [ \Proj{\Phi^+}_{AB} \right ]$
there is such a set that is measure one according to $\mu$.  Thus, for
any $\mu \in \Delta_{AB} \left [ \Proj{\Phi^+}_{AB} \right ]$ and any
$\nu \in \Delta_{AB} \left [ \Proj{0}_A \otimes \Proj{0}_B \right ]$,
there is a set that is measure $1$ according to $\mu$ and measure zero
according to $\nu$, so $\Proj{\Phi^+}_{AB}$ and $\Proj{0}_A \otimes
\Proj{0}_B$ are ontologically distinct.  The Colbeck--Renner Theorem
generalizes this argument to arbitrary pairs of pure states.

The proof of Theorem~\ref{thm:CR:qubit} relies on the following two
lemmas.

\begin{lemma}
\label{lem:CR:convex}
Let $X$ and $Y$ be random variables that take values in the set
$\{0,1,\ldots,d-1\}$ and let $P(X), P^{\prime}(X), Q(Y),
Q^{\prime}(Y)$ be probability distributions over them.  For $0 \leq
p \leq 1$, let
\begin{align}
P^{\prime\prime}(X) & = p P(X) + (1-p)P^{\prime}(X) \\
Q^{\prime\prime}(Y) & = p Q(Y) + (1-p)Q^{\prime}(Y).
\end{align}
Then,
\begin{align}
D \left (P^{\prime\prime}(X),Q^{\prime\prime}(Y) \right ) \leq &\quad p D
\left ( P(X),Q(Y) \right ) \nonumber\\
&+ (1-p)D \left (
P^{\prime}(X),Q^{\prime}(Y) \right ),
\end{align}
where
\begin{equation}
D \left ( P(X),Q(Y) \right ) = \frac{1}{2} \sum_{j=0}^{d-1} \left
| P(X=j) - Q(Y=j)\right |,
\end{equation}
is the variational distance.
\end{lemma}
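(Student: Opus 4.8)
The statement is the joint convexity (in fact just linearity-plus-triangle-inequality) of the variational distance when both arguments are mixed with the same weights. The plan is to expand the definition of $D$ directly and apply the triangle inequality term by term. Concretely, I would start from
\begin{equation*}
D\left(P^{\prime\prime}(X),Q^{\prime\prime}(Y)\right) = \frac{1}{2}\sum_{j=0}^{d-1}\left|P^{\prime\prime}(X=j) - Q^{\prime\prime}(Y=j)\right|,
\end{equation*}
substitute the definitions $P^{\prime\prime}(X=j) = pP(X=j) + (1-p)P^{\prime}(X=j)$ and $Q^{\prime\prime}(Y=j) = pQ(Y=j) + (1-p)Q^{\prime}(Y=j)$, and regroup the terms inside the absolute value as
\begin{equation*}
p\bigl(P(X=j) - Q(Y=j)\bigr) + (1-p)\bigl(P^{\prime}(X=j) - Q^{\prime}(Y=j)\bigr).
\end{equation*}

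Next I would apply the triangle inequality to each summand, using $p \geq 0$ and $1-p \geq 0$ to pull the nonnegative coefficients outside the absolute values:
\begin{equation*}
\left|p\,\delta_j + (1-p)\,\delta_j^{\prime}\right| \leq p\left|\delta_j\right| + (1-p)\left|\delta_j^{\prime}\right|,
\end{equation*}
where $\delta_j = P(X=j) - Q(Y=j)$ and $\delta_j^{\prime} = P^{\prime}(X=j) - Q^{\prime}(Y=j)$. Summing over $j$ from $0$ to $d-1$ and multiplying by $\tfrac12$ then splits the sum into the two pieces $p\cdot\tfrac12\sum_j|\delta_j| = pD(P(X),Q(Y))$ and $(1-p)\cdot\tfrac12\sum_j|\delta_j^{\prime}| = (1-p)D(P^{\prime}(X),Q^{\prime}(Y))$, which is exactly the claimed bound.

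There is essentially no obstacle here — the result is elementary once one observes that the variational distance on a finite set is (half) an $\ell^1$ norm of the difference of the two probability vectors, and that norms are convex and positively homogeneous. The only thing to be slightly careful about is to use the finite-$d$ formula for $D$ stated in the lemma (rather than the supremum-over-sets definition of Definition~\ref{def:POEM:vd}), which is legitimate since $X$ and $Y$ range over the finite set $\{0,1,\ldots,d-1\}$; the equivalence of the two forms of $D$ on a finite sample space is the specialization noted after Definition~\ref{def:POEM:vd}. I would write the argument as a short displayed chain of (in)equalities with no intermediate commentary needed.
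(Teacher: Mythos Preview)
Your proposal is correct and is essentially identical to the paper's own proof: the paper also expands the finite-sum formula for $D$, regroups the summand as $p(P(X=j)-Q(Y=j)) + (1-p)(P^{\prime}(X=j)-Q^{\prime}(Y=j))$, applies the triangle inequality term by term, and then recognizes the two resulting sums as $pD(P,Q)$ and $(1-p)D(P^{\prime},Q^{\prime})$.
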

\begin{proof}
  \begin{align}
	& D  ( P^{\prime\prime}(X), Q^{\prime\prime}(Y)  ) \nonumber\\
	& =
    \frac{1}{2} \sum_{j=0}^{d-1}  | p  ( P(X=j) -
        Q(Y=j) ) \nonumber\\
				&\quad\quad+ (1-p)  ( P^{\prime}(X=j) -
        Q^{\prime}(X=j) ) | \\
    & \leq \frac{1}{2} \sum_{j=0}^{d-1}  (  | p  (
          P(X=j) - Q(Y=j) )  | \nonumber\\
					&\quad\quad +  | (1-p)  (
          P^{\prime}(X=j) - Q^{\prime}(X=j) ) |  ) \\
  & = p \frac{1}{2} \sum_{j=0}^{d-1}  | P(X=j) - Q(Y=j) | \nonumber\\
	&\quad\quad +
  (1-p) \frac{1}{2} \sum_{j=0}^{d-1}  | P^{\prime}(X=j) -
    Q^{\prime}(Y=j) |  \\
  & = p D  ( P(X), Q(Y) ) + (1-p)  ( P^{\prime}(X),
    Q^{\prime}(Y) ),
 \end{align}
where the second step follows from the triangle inequality.
\end{proof}

\begin{lemma}
\label{lem:CR:couple}
Let $X$ and $Y$ be random variables that take values in the set
$\{0,1,\ldots,d-1\}$ and let $P(X)$ and $P(Y)$ be probability
distributions over them.  Let $P(X,Y)$ be a joint distribution such
that $P(X) = \sum_k P(X,Y=k)$ and $P(Y) = \sum_j P(X=j, Y)$.  Then,
\begin{equation}
D \left ( P(X),P(Y) \right ) \leq P(X \neq Y),
\end{equation}
where $D$ is the variational distance.
\end{lemma}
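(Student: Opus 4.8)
Given a joint distribution $P(X,Y)$ with the stated marginals, I want to show $D(P(X),P(Y)) \le P(X\neq Y)$.

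The plan is to bound the difference of each marginal probability $|P(X=j) - P(Y=j)|$ in terms of the probability of disagreement. The key identity is that, for each value $j$, the diagonal term $P(X=j,Y=j)$ contributes to both $P(X=j)$ and $P(Y=j)$, so it cancels in the difference:
\begin{equation}
P(X=j) - P(Y=j) = \sum_{k \neq j} P(X=j,Y=k) - \sum_{k \neq j} P(X=k,Y=j).
\end{equation}
First I would take absolute values and apply the triangle inequality to get
\begin{equation}
|P(X=j) - P(Y=j)| \le \sum_{k \neq j} P(X=j,Y=k) + \sum_{k \neq j} P(X=k,Y=j).
\end{equation}
Then I would sum over $j$, multiply by $\tfrac{1}{2}$, and observe that each off-diagonal cell $P(X=j,Y=k)$ with $j \neq k$ appears exactly twice on the right-hand side: once in the first sum (with that choice of $j$) and once in the second sum (with the roles of the indices swapped). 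Hence
\begin{equation}
D(P(X),P(Y)) = \frac{1}{2}\sum_j |P(X=j)-P(Y=j)| \le \sum_{j \neq k} P(X=j,Y=k) = P(X\neq Y),
\end{equation}
which is the claimed bound.

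The routine calculations here are genuinely routine, so there is no serious obstacle; the only point requiring a moment's care is the double-counting bookkeeping in the last step — making sure each off-diagonal entry is counted with the right multiplicity when passing from the per-$j$ bound to the global sum. An alternative, perhaps cleaner, route would be to use the standard characterization $D(P(X),P(Y)) = \sum_j \max(P(X=j)-P(Y=j),0)$ together with the fact that for any coupling $\max(P(X=j)-P(Y=j),0) \le P(X=j, Y\neq j)$, then sum over $j$; but the triangle-inequality argument above is self-contained and needs nothing beyond the definition of variational distance already given in the excerpt.
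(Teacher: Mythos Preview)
Your proof is correct. It is the standard direct coupling argument: cancel the diagonal term $P(X=j,Y=j)$ in each marginal difference, apply the triangle inequality, and observe that summing over $j$ counts each off-diagonal entry exactly twice so that the factor $\tfrac12$ in the definition of $D$ gives precisely $P(X\neq Y)$.

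The paper takes a somewhat different route. It decomposes the joint distribution as a mixture $P(X,Y) = p_{\neq}\,P_{\neq}(X,Y) + p_{=}\,P_{=}(X,Y)$ of the conditional distributions given $X\neq Y$ and $X=Y$ respectively, passes to marginals, and then invokes the preceding Lemma~\ref{lem:CR:convex} on the joint convexity of the variational distance to get $D(P(X),P(Y)) \le p_{\neq}\,D(P_{\neq}(X),P_{\neq}(Y)) + p_{=}\,D(P_{=}(X),P_{=}(Y))$. The second term vanishes because conditioning on $X=Y$ forces the marginals to coincide, and the first term is bounded by $p_{\neq}$ since $D\le 1$. Your argument is more elementary and self-contained: it does not need Lemma~\ref{lem:CR:convex} at all, working directly from the definition of $D$. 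The paper's approach, by contrast, factors the work through a general convexity property, which makes the structure cleaner but requires the extra lemma. The alternative you sketch at the end (using $D = \sum_j \max(P(X=j)-P(Y=j),0)$) is also fine and is essentially a one-sided version of your main argument.
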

\begin{proof}
Let $P_{\neq}(X,Y)$ be the conditional probability distribution of
$X$ and $Y$ given that $X \neq Y$, let $P_{=}(X,Y)$ be the
conditional probability distribution of $X$ and $Y$ given that
$X=Y$, let $p_= = P(X=Y)$ and let $p_{\neq} = P(X \neq Y)$.  Then,
by the law of total probability
\begin{equation}
P(X,Y) = p_{\neq}P_{\neq}(X,Y)  + p_{=}P_{=}(X,Y).
\end{equation}
Taking the marginals of this equation gives
\begin{align}
P(X) & = p_{\neq}P_{\neq}(X) + p_{=}P_{=}(X) \\
P(Y) & = p_{\neq}P_{\neq}(Y) + p_{=}P_{=}(Y).
\end{align}
Then, by Lemma~\ref{lem:CR:convex},
\begin{align}
D(P(X),P(Y)) & \leq p_{\neq} D(P_{\neq}(X), P_{\neq}(Y)) \nonumber\\
&\quad\quad + p_{=}
D(P_{=}(X),P_{=}(Y))  \\
& \leq p_{\neq} \\
& = P(X \neq Y).
\end{align}
\end{proof}

\begin{proof}[Proof of Theorem~\ref{thm:CR:qubit}].  Consider the
chained Bell measurements $M^a_A = \left \{ \Proj{\phi^a_0}_A,
\Proj{\phi^a_1}_A \right \}$ and $M^b_B = \left \{
\Proj{\phi^b_0}_B,\allowbreak \Proj{\phi^b_1}_B\right \}$ defined
in \S\ref{CR:Chain} and note that $M^0_A = \left \{ \Proj{0}_A,
\Proj{1}_A \right \}$.  From Eq.~\eqref{eq:CR:corval}, the
correlation measure $I_N$ defined in Eq.~\eqref{eq:CR:corr}
satisfies $I_N \left ( \Proj{\Phi^+}_{AB} \right ) \leq
\frac{\pi^2}{8N}$.  For any fixed choice of $\text{Pr}_{AB} \in
\Xi[M_A^a,M_B^b]$ for each $M^a_A$ and $M^b_B$, we can also define a
similar correlation measure within the ontological model, given by
\begin{align}
\label{eq:CR:corront}
&I_N(\lambda) = P(X=Y|0,2N-1,\lambda) \nonumber\\
&\quad + \sum_{\left \{(a,b)
\middle | a \in E_N, b \in O_N, |a-b| = 1 \right \}} P(X \neq
Y|a,b,\lambda),
\end{align}
where
\begin{equation}
P(X=j,Y=k|a,b,\lambda) = \text{Pr}_{AB} \left ( \Proj{\phi^a_j}_A,
\Proj{\phi^b}_B \middle | M^a_A,M^b_B,\lambda \right ).
\end{equation}
Now, by parameter independence
\begin{equation}
\label{eq:CR:presponse1}
P(X = j|a,b,\lambda) = P(X = j|a,\lambda) =
\text{Pr}_A \left ( \Proj{\phi^a_j} \middle | M_A^a,\lambda \right ),
\end{equation}
and
\begin{equation}
\label{eq:CR:presponse2}
P(Y = k|a,b,\lambda) = P(Y = k|b,\lambda) \\ =
\text{Pr}_B \left ( \Proj{\phi^b_k} \middle | M_B, \lambda \right ),
\end{equation}
where $\text{Pr}_A$ and $\text{Pr}_B$ are the conditional
probability distributions of the marginal models on $A$ and $B$
respectively.

Let $\tilde{X} = (X + 1) \mod 2$, so that if $X = 0$ then $\tilde{X}
= 1$ and vice versa.  Then, $P(X=Y|0,2N-1,\lambda) = P(\tilde{X}
\neq Y|0,2N-1,\lambda)$.  Now, by parameter independence and
Lemma~\ref{lem:CR:couple},
\begin{align}
I_N(\lambda) & \geq D(P(\tilde{X}|0,\lambda), P(Y|2N-1,\lambda)) \nonumber\\
&\quad +
\sum_{\left \{(a,b) \middle | a \in E_N, b \in O_N, |a-b| = 1
\right \}}
D(P(X|a,\lambda), P(Y|b,\lambda)) \\
& \geq D(P(\tilde{X})|0,\lambda),
P(X|0,\lambda)), \label{eq:CR:ontcor}
\end{align}
where the second step follows from the triangle inequality for the
variational distance and the structure of the correlation measure
shown in Fig.~\ref{fig:CR:corr}.

Since the ontological model reproduces the quantum predictions we
have, for any $\mu \in \Delta_{AB} \left [ \Proj{\Phi^+}_{AB} \right
]$, $\int_{\Lambda} I_N(\lambda) d\mu(\lambda) = I_N \left (
\Proj{\Phi^+}_{AB} \right )$ and in the limit $N \rightarrow
\infty$, this is equal to $0$.  Therefore, using
Eq.~\eqref{eq:CR:ontcor} gives
\begin{equation}
\int_{\Lambda}
D(P(\tilde{X}|0,\lambda),P(X|0,\lambda))d\mu(\lambda) = 0.
\end{equation}
Using Eq.~\eqref{eq:CR:presponse1}, this is equivalent to
\begin{equation}
\int_{\Lambda}\left | \text{Pr}_A \left ( \Proj{0}_A \middle |
M^0_A, \lambda \right ) -
\text{Pr} \left ( \Proj{1}_A \middle | M^0_A, \lambda \right )
\right |d\mu(\lambda) = 0.
\end{equation}
This can only happen if there exists a set $\Omega \in \Sigma_{AB}$
such that $\mu(\Omega) = 1$ and $\text{Pr}_A \left ( \Proj{0}_A
\middle | M^0_A, \lambda \right ) = \text{Pr} \left ( \Proj{1}_A
\middle | M^0_A, \lambda \right )$ for $\lambda \in \Omega$.  But
since $\text{Pr}_A \left ( \Proj{0}_A \middle | M^0_A, \lambda
\right ) + \text{Pr} \left ( \Proj{1}_A \middle | M^0_A, \lambda
\right ) = 1$, this means that $\text{Pr}_A \left ( \Proj{0}_A
\middle | M^0_A, \lambda \right ) = \text{Pr} \left ( \Proj{1}_A
\middle | M^0_A, \lambda \right ) = \frac{1}{2}$ for $\lambda \in
\Omega$.
\end{proof}

The next step is to generalize this result to arbitrary dimensions.

\begin{theorem}
\label{thm:CR:qudit}
Let $\mathfrak{F}_{AB} = \langle \Hilb[A] \otimes \Hilb[B],
\mathcal{P}_{AB}, \mathcal{M}_A \times \mathcal{M}_B \rangle$ be a
product measurement fragment where $\Hilb[A] = \Hilb[B] =
\mathbb{C}^d$, $\mathcal{P}_{AB}$ contains the maximally entangled
state $\Proj{\Phi^+}_{AB}$, where
\begin{equation}
\Ket{\Phi^+}_{AB} = \frac{1}{\sqrt{d}} \sum_{j=0}^{d-1} \Ket{j}_A
\otimes \Ket{j}_B,
\end{equation}
for some orthonormal basis $\left \{ \Ket{j} \right \}_{j=0}^{d-1}$,
$\mathcal{M}_A$ contains measurements in every orthonormal basis on
$\Hilb[A]$, and $\mathcal{M}_B$ contains measurements in every
orthonormal basis on $\Hilb[B]$.

Let $\Theta_{AB} = (\Lambda_{AB}, \Sigma_{AB}, \Delta_{AB}, \Xi_{AB})$
be an ontological model of $\mathfrak{F}_{AB}$ that reproduces the
quantum predictions and satisfies parameter independence.

Then, for any $\mu \in \Delta_{AB} \left (\Proj{\Phi^+}_{AB} \right
)$, there exists a set $\Omega \in \Sigma_{AB}$ such that
$\mu(\Omega) = 1$ and, for the measurement $M_A = \left \{\Proj{j}_A
\right \}_{j=0}^{d-1}$, all $\mathrm{Pr}_A \in \Xi_A[M_A]$ in the
marginal model on $A$ satisfy $\mathrm{Pr}_A \left (\Proj{j}_A \middle
| M_A, \lambda \right ) = \frac{1}{d}$ for $\lambda \in \Omega$.
\end{theorem}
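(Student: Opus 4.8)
The plan is to reduce the $d$-dimensional case to the qubit case already established in Theorem~\ref{thm:CR:qubit}, exploiting the symmetry of the maximally entangled state $\Proj{\Phi^+}_{AB}$ and the fact that parameter independence lets us restrict measurements to two-dimensional subspaces. First I would fix an index $j \in \{0,1,\ldots,d-1\}$ and aim to show that $\mathrm{Pr}_A(\Proj{j}_A | M_A, \lambda) = 1/d$ on a measure-one set; taking the intersection of the finitely many resulting measure-one sets then gives the single $\Omega$ required in the statement. For the fixed $j$, pick any other index $k \neq j$ and consider the two-dimensional subspace $\mathcal{V}_{jk} = \mathrm{span}\{\Ket{j},\Ket{k}\}$ on both Alice's and Bob's sides. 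The key observation is that $\Ket{\Phi^+}_{AB}$, when one restricts attention to the ``sector'' where both parties' outcomes lie in $\mathcal{V}_{jk} \otimes \mathcal{V}_{jk}$, is proportional to the two-qubit maximally entangled state $\frac{1}{\sqrt{2}}(\Ket{j}_A\Ket{j}_B + \Ket{k}_A\Ket{k}_B)$, with the remaining weight $(d-2)/d$ sitting outside this sector.

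Next I would run the chained-Bell-measurement machinery of \S\ref{CR:Chain} entirely inside $\mathcal{V}_{jk}$: Alice and Bob use the $N$-measurement chain of orthonormal bases in $\mathcal{V}_{jk}$ (each of these is a genuine rank-$2$ projective measurement on $\mathbb{C}^d$, with the complementary projector onto $\mathcal{V}_{jk}^{\perp}$ grouped into one of the two outcomes, or treated as an additional ``null'' outcome whose probability we track). The correlation measure $I_N$ restricted to this subspace behaves just as in the qubit computation, up to the overall factor accounting for the $(d-2)/d$ leakage weight: quantum mechanics predicts that, conditioned on both outcomes landing in $\mathcal{V}_{jk}$, the correlations are those of the two-qubit singlet, so the analogue of Eq.~\eqref{eq:CR:corval} gives an $I_N$ that, after subtracting a fixed contribution from the leakage terms, tends to zero as $N \to \infty$. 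Parameter independence again lets me define the ontological correlation measure $I_N(\lambda)$ and, via Lemma~\ref{lem:CR:convex} and Lemma~\ref{lem:CR:couple} applied to the two relevant outcomes $\Proj{j}_A$ and $\Proj{k}_A$, conclude that $\int_{\Lambda} |\mathrm{Pr}_A(\Proj{j}_A|M_A,\lambda) - \mathrm{Pr}_A(\Proj{k}_A|M_A,\lambda)| \, d\mu(\lambda) = 0$. Hence on a measure-one set $\Omega_{jk}$ we have $\mathrm{Pr}_A(\Proj{j}_A|M_A,\lambda) = \mathrm{Pr}_A(\Proj{k}_A|M_A,\lambda)$.

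Finally, intersecting the $\Omega_{jk}$ over all pairs $\{j,k\}$ (a finite intersection, hence measure one) yields a set $\Omega$ on which $\mathrm{Pr}_A(\Proj{0}_A|M_A,\lambda) = \mathrm{Pr}_A(\Proj{1}_A|M_A,\lambda) = \cdots = \mathrm{Pr}_A(\Proj{d-1}_A|M_A,\lambda)$; since these sum to $1$, each equals $1/d$, as required. The main obstacle I anticipate is bookkeeping the ``leakage'' outcomes cleanly: the chained-Bell argument in \S\ref{CR:Chain} assumes two-outcome measurements and a correlation measure built from $P(X=Y)$ and $P(X\neq Y)$, so I must be careful that restricting to the $\mathcal{V}_{jk}$ sector does not spoil the telescoping/triangle-inequality structure of $I_N(\lambda)$ that drives the qubit proof. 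Concretely, I need to verify that the probability mass Alice and Bob assign to outcomes outside $\mathcal{V}_{jk}$ is itself governed consistently by the marginal models (so that the ``restricted'' correlation measure still has a well-defined ontological counterpart to which the lemmas apply), and that the limiting value is genuinely zero after the fixed leakage contribution is removed. Once this subspace-restriction step is handled correctly, the rest is a direct transcription of the proof of Theorem~\ref{thm:CR:qubit}.
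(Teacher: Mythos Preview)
Your approach is essentially the same as the paper's: reduce to the qubit case by restricting the chained Bell construction to a two-dimensional subspace $\mathcal{V}_{jk}$, conclude $\mathrm{Pr}_A(\Proj{j}_A|M_A,\lambda)=\mathrm{Pr}_A(\Proj{k}_A|M_A,\lambda)$ on a measure-one set $\Omega_{jk}$, and intersect over all pairs. The paper resolves your anticipated ``leakage'' bookkeeping by extending each chained measurement to a $d$-outcome measurement (the two nontrivial projectors in $\mathcal{V}_{jk}$ plus the fixed projectors $\Proj{m}$ for $m\notin\{j,k\}$), then \emph{conditioning} on the event $C=\{X\in\{j,k\}\cap Y\in\{j,k\}\}$; the conditional quantum distribution is exactly the qubit chained-Bell distribution, and at the ontological level one divides by a $\lambda$-dependent normalization to obtain conditional response functions to which the qubit argument applies directly---this is cleaner than tracking a ``null'' outcome or subtracting a fixed leakage term.
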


\begin{proof}
The chained Bell measurements can be extended to higher dimensions
by defining $\Proj{\phi^a_j}_A$ $\Proj{\phi^b_k}_B$ as before for
$j,k \in \{0,1\}$ and setting $\Proj{\phi^a_j}_A = \Proj{j}_A$,
$\Proj{\phi^b_k} = \Proj{k}_B$ for $2 \leq j,k \leq d-1$.  Now,
define two random variables $X$ and $Y$ taking values $0,1,\ldots
d-1$ and, as before, set
\begin{align}
&\text{Prob} \left (X = j, Y = k|a,b,\Proj{\Phi^+}_{AB} \right ) \nonumber\\
&\quad\quad=
\Tr[AB]{\Proj{\phi^a_j}_A \otimes \Proj{\phi^b_k}_B \Proj{\Phi^+}_{AB}}.
\end{align}
Now, condition these probabilities on the event $C = \{X \in \{0,1\}
\cap Y \in \{0,1\}\}$, i.e.\ define
\begin{align}
&\text{Prob} \left (X = j, Y = k|a,b,\Proj{\Phi^+}_{AB}, C \right )
\nonumber\\
&\quad= \frac{\text{Prob} \left (X = j, Y = k|a,b,\Proj{\Phi^+}_{AB}
\right )}{\sum_{j^{\prime},k^{\prime} = 0}^1 \text{Prob} \left
(X = j, Y = k|a,b,\Proj{\Phi^+}_{AB} \right )},
\end{align}
for $j,k \in \{0,1\}$.  Straightforward calculation shows that
$\text{Prob} \left (X = j, Y = k|a,b,\Proj{\Phi^+}_{AB}, C \right )$
is equal to the $d=2$ case of the unconditioned probability
distribution.

Similarly, at the ontological level, for $j,k \in \{0,1\}$, we can
define the conditional probability distribution
\begin{align}
&P(X=j,Y=k|a,b,\lambda,C) \nonumber\\
&\quad=
\frac{1}{N(\lambda,M_A^a,M_B^b)}\text{Pr}_{AB} \left ( \Proj{\phi^a_j} ,
\Proj{\phi^b_k} \middle | M_A^a,M_B^b, \lambda \right ),
\end{align}
where
\begin{equation}
N(\lambda,M_A^a,M_B^b) = \sum_{j,k = 0}^1
\text{Prob} \left ( \Proj{\phi^a_{j}}, \Proj{\phi^b_{k}} \middle |
M_A^a, M_B^b \lambda \right ).
\end{equation}
Given that the model reproduces the quantum predictions, we must
have
\begin{align}
&\text{Prob} \left (X = j, Y = k|a,b,\Proj{\Phi^+}_{AB}, C \right )
\nonumber\\
&= \int_{\Lambda_{AB}} \frac{1}{N(M_A^a,M_B^b,\lambda)}
\text{Pr}_{AB} \left ( \Proj{\phi^a_{j}} , \Proj{\phi^b_{k}} \middle
| M_A^a, M_B^b, \lambda \right ) d\mu(\lambda),
\end{align}
for $j,k \in \{0,1\}$ and $\mu \in \Delta_{AB} \left [
\Proj{\Phi^+}_{AB}\right ]$.  Then, Theorem~\ref{thm:CR:qubit}
implies that $\frac{1}{N(M_A,M_B,\lambda)}\text{Pr}_A \left (
\Proj{0}_A \middle | M_A, \lambda \right ) =
\frac{1}{N(M_A,M_B,\lambda)} \text{Pr}_A \left ( \Proj{1}_B \middle |
M_A, \lambda \right )$ for a measure one set $\Omega_{0,1}$
according to $\mu$ and hence $\text{Pr}_A \left ( \Proj{0}_A \middle |
M_A, \lambda \right ) = \text{Pr}_A \left ( \Proj{1} \middle | M_A,
\lambda \right )$ for $\lambda \in \Omega_{0,1}$.

Now, the same argument can be repeated by placing the nontrivial
part of the chained Bell measurements on a different subspace, e.g.\
on the subspace spanned by $\Ket{1}$ and $\Ket{2}$ rather than the
subspace spanned by $\Ket{0}$ and $\Ket{1}$ and conditioning on $\{X
\in \{1,2\} \cap Y \in \{1,2\}\}$ rather than $\{X \in \{0,1\} \cap
Y \in \{0,1\}\}$.  In total, this shows that, for all $j,k \in
\{0,1,\ldots,d-1\}$ there exists sets $\Omega_{j,k}$ that are
measure one according to $\mu$ and such that $\text{Pr} \left (
\Proj{j}_A \middle | M_A, \lambda \right ) = \text{Pr} \left (
\Proj{k}_A \middle | M_A, \lambda \right )$ for $\lambda \in
\Omega_{j,k}$.  Now, $\Omega = \cap_{j,k = 0}^{d-1} \Omega_{j,k}$ is
also measure one according to $\mu$ because it is the intersection
of a finite collection of measure one sets.  This implies that, for
every $j$ and $k$, we have $\text{Pr}_A \left ( \Proj{j}_A \middle |
M_A, \lambda \right ) = \text{Pr}_A \left ( \Proj{k}_A \middle |
M_A, \lambda \right )$ for $\lambda \in \Omega$.  Finally,
$\sum_{j=0}^{d-1} \text{Pr}_A \left ( \Proj{j}_A \middle | M_A,
\lambda \right ) = 1$ implies that $\text{Pr}_A \left ( \Proj{j}_A
\middle | M_A, \lambda \right ) = \frac{1}{d}$ for $\lambda \in
\Omega$.
\end{proof}

\subsection{The main result}

\label{CR:Main}

\begin{theorem}
\label{thm:CR:CR}
Let $\mathfrak{F}_{AB} = \langle \Hilb[A] \otimes \Hilb[B],
\mathcal{P}_{AB}, \mathcal{M}_A \times \mathcal{M}_B \rangle$ be a
product measurement fragment where $\Hilb[A] = \Hilb[B] =
\mathbb{C}^d$ for $d \geq 3$, $\mathcal{P}_{AB}$ contains all pure
states on $\Hilb[A] \otimes \Hilb[B]$, $\mathcal{M}_A$ contains all
measurements in orthonormal bases on $\Hilb[A]$, and $\mathcal{M}_B$
contains all measurements in orthonormal bases on $\Hilb[B]$.

Let $\Theta_{AB} = (\Lambda_{AB}, \Sigma_{AB}, \Delta_{AB},
\Xi_{AB})$ be an ontological model of $\mathfrak{F}_{AB}$ that
satisfies parameter independence, preserves ontological distinctness
with respect to all unitaries on $\Hilb[A] \otimes \Hilb[B]$, and
reproduces the quantum predictions.  Then, any pair of pure states
$\Proj{\psi}_{AB}, \Proj{\phi}_{AB} \in \mathcal{P}_{AB}$ that
satisfies $\Tr[AB]{\Proj{\phi}_{AB}\Proj{\psi}_{AB}} \leq
\frac{d-1}{d}$ is ontologically distinct in $\Theta_{AB}$.
\end{theorem}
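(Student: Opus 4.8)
The plan is to reduce the general statement, using the covariance of ontological distinctness under joint unitaries, to the single claim that the maximally entangled state $\Proj{\Phi^+}_{AB}$ is ontologically distinct from a conveniently chosen representative of each inner–product value, and then to settle that claim with the equiprobability theorem (Theorem~\ref{thm:CR:qudit}).

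First I would note that ontological distinctness of a pair of pure states depends only on their inner product. Given $\Proj{\psi}_{AB},\Proj{\phi}_{AB}\in\mathcal{P}_{AB}$ with $\Tr[AB]{\Proj{\phi}_{AB}\Proj{\psi}_{AB}}=c\le\tfrac{d-1}{d}$, adjust a global phase so that $\BraKet{\psi}{\phi}=\sqrt{c}$ is real and nonnegative. Since $\mathcal{M}_A,\mathcal{M}_B$ contain all orthonormal bases, $\Hilb[A]\otimes\Hilb[B]$ carries a maximally entangled state $\Ket{\Phi^+}_{AB}=\tfrac{1}{\sqrt d}\sum_{j=0}^{d-1}\Ket{j}_A\Ket{j}_B$. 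Because $d\ge 3$, the subspace $\mathrm{span}\{\Ket{0}_A,\dots,\Ket{d-2}_A\}$ has dimension $d-1\ge 2$, so I can build a unit vector $\Ket{\phi_c}_{AB}=\sum_{j=0}^{d-2}\sqrt{p_j}\,e^{\imath\theta_j}\Ket{j}_A\Ket{j}_B$ supported on $\mathrm{span}\{\Ket{j}_A\}_{j=0}^{d-2}\otimes\Hilb[B]$, hence of Schmidt rank at most $d-1$; varying the $p_j$ and $\theta_j$ realizes $|\BraKet{\Phi^+}{\phi_c}|$ throughout $[0,\sqrt{(d-1)/d}\,]$ (the extreme value $\sqrt{(d-1)/d}$ being attained by the uniform choice $\Ket{\phi_c}=\tfrac{1}{\sqrt{d-1}}\sum_{j=0}^{d-2}\Ket{j}_A\Ket{j}_B$), so one may take $\BraKet{\Phi^+}{\phi_c}=\sqrt c$. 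The ordered pairs $(\Ket{\psi}_{AB},\Ket{\phi}_{AB})$ and $(\Ket{\Phi^+}_{AB},\Ket{\phi_c}_{AB})$ consist of unit vectors with the same inner product $\sqrt c$, hence have the same Gram matrix, so there is a unitary $U$ on $\Hilb[A]\otimes\Hilb[B]$ with $U\Proj{\psi}_{AB}U^{\dagger}=\Proj{\Phi^+}_{AB}$ and $U\Proj{\phi}_{AB}U^{\dagger}=\Proj{\phi_c}_{AB}$ (both of which lie in $\mathcal{P}_{AB}$ since it contains all pure states). Applying the hypothesis that $\Theta_{AB}$ preserves ontological distinctness with respect to $U$ and to $U^{\dagger}$, the pair $\{\Proj{\psi}_{AB},\Proj{\phi}_{AB}\}$ is ontologically distinct if and only if $\{\Proj{\Phi^+}_{AB},\Proj{\phi_c}_{AB}\}$ is.

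Next I would prove ontological distinctness of $\Proj{\Phi^+}_{AB}$ and $\Proj{\phi_c}_{AB}$. Let $M_A=\{\Proj{j}_A\}_{j=0}^{d-1}$ be the measurement in the Schmidt basis of $\Ket{\Phi^+}_{AB}$. By Theorem~\ref{thm:CR:qudit}, for every $\mu\in\Delta_{AB}[\Proj{\Phi^+}_{AB}]$ there is a set $\Omega\in\Sigma_{AB}$ with $\mu(\Omega)=1$ such that every $\mathrm{Pr}_A\in\Xi_A[M_A]$ in the marginal model on $A$ satisfies $\mathrm{Pr}_A(\Proj{j}_A|M_A,\lambda)=\tfrac{1}{d}$ for all $j$ and all $\lambda\in\Omega$; in particular $\mathrm{Pr}_A(\Proj{d-1}_A|M_A,\lambda)=\tfrac{1}{d}>0$ on $\Omega$. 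On the other hand $\Tr[B]{\Proj{\phi_c}_{AB}}$ is supported on $\mathrm{span}\{\Ket{j}_A\}_{j=0}^{d-2}$, so $\Tr[A]{\Proj{d-1}_A\Tr[B]{\Proj{\phi_c}_{AB}}}=0$; since $\Theta_{AB}$ reproduces the quantum predictions, Eq.~\eqref{eq:CR:margreprod} gives, for every $\nu\in\Delta_{AB}[\Proj{\phi_c}_{AB}]$ and every $\mathrm{Pr}_A\in\Xi_A[M_A]$,
\begin{equation}
0=\int_{\Lambda_{AB}}\mathrm{Pr}_A(\Proj{d-1}_A|M_A,\lambda)\,d\nu(\lambda)\ \ge\ \frac{1}{d}\,\nu(\Omega),
\end{equation}
so $\nu(\Omega)=0$. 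Hence $D(\mu,\nu)=1$ for every $\mu\in\Delta_{AB}[\Proj{\Phi^+}_{AB}]$ and $\nu\in\Delta_{AB}[\Proj{\phi_c}_{AB}]$, i.e.\ $\Proj{\Phi^+}_{AB}$ and $\Proj{\phi_c}_{AB}$ are ontologically distinct. Combined with the reduction above, this proves that every pair satisfying $\Tr[AB]{\Proj{\phi}_{AB}\Proj{\psi}_{AB}}\le\tfrac{d-1}{d}$ is ontologically distinct.

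I expect the main obstacle to be the choice made in the reduction step: the right move is not merely to rescale the inner product but to pass, via a global unitary, to a representative whose reduced state on $A$ is rank-deficient, since that is exactly what allows the equiprobability theorem to bite; and one must check that for $d\ge 3$ this can be arranged for every inner product up to $(d-1)/d$ (and that $d=2$ genuinely escapes this scheme, which is why that case is instead handled by appending ancillas, using the assumption that appending ancillas preserves ontological distinctness). The remaining ingredients — the Gram-matrix argument for the existence of $U$, and the measure-theoretic bookkeeping around the $\mu$-dependent set $\Omega$ — are routine given the results already established.
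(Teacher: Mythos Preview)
Your proposal is correct and follows essentially the same approach as the paper: reduce via a global unitary to the pair $(\Proj{\Phi^+}_{AB},\Proj{\eta}_{AB})$ with $\Ket{\eta}_{AB}$ of Schmidt rank at most $d-1$, then use the equiprobability theorem (Theorem~\ref{thm:CR:qudit}) on the missing Schmidt direction to force ontological distinctness. The only cosmetic differences are index conventions (you omit the $(d{-}1)$th Schmidt term rather than the $0$th) and that the paper invokes Lemma~\ref{lem:Hardy:inner} to realise the required inner product using equal weights and varying phases, whereas you allow both weights and phases to vary.
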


Before proving this theorem note that it assumes that $\Theta_{AB}$
preserves ontological distinctness with respect to unitary
transformations (see Definition~\ref{def:Dyn:uind}).  Recall that this
can be derived from the requirement that unitary transformations
should be represented by stochastic transformations on the ontic state
space, which is a basic framework assumption of the ontological models
framework for PMT fragments as discussed in \S\ref{Dyn}.

Also, note that, as with Hardy's Theorem, assuming that appending
ancillas preserves ontological distinctness allows this to be
converted into a full blown $\psi$-ontology theorem.  In this case,
there are two issues to deal with.  The first is that the condition on
the inner product is not satisfied for all states and the second is
that not all fragments have the tensor product structure of a product
measurement fragment.  To deal with the first issue, consider a
fragment $\mathfrak{F}_A = \langle \Hilb[A], \mathcal{P}_A,
\mathcal{M}_A \rangle$ where $\Hilb[A] = \mathbb{C}^d$,
$\mathcal{P}_A$ contains all pure states on $\Hilb[A]$, and
$\mathcal{M}_A$ contains all measurements in orthonormal bases on
$\Hilb[A]$.  A given pair of pure states, $\Proj{\psi}_A$ and
$\Proj{\phi}_A$, either satisfy the condition $\Tr[A]{\Proj{\phi}_A
\Proj{\psi}_A} \leq \frac{d-1}{d}$ or they do not.  If they do not
then we can introduce a product fragment $\mathfrak{F}_{AA^{\prime}} =
\langle \Hilb[A] \otimes \Hilb[A^{\prime}], \mathcal{P}_{AA^{\prime}},
\mathcal{M}_{AA^{\prime}} \rangle$, where we again assume that
$\mathcal{P}_{AA^{\prime}}$ contains all pure states and
$\mathcal{M}_{AA^{\prime}}$ contains all measurements in orthonormal
bases.  The dimension of $\Hilb[A^{\prime}]$ is chosen such that
$\Proj{\psi}_{A} \otimes \Proj{0}_{A^{\prime}}$ and $\Proj{\phi}_A
\otimes \Proj{0}_{A^{\prime}}$ satisfy the inner product condition,
where $\Proj{0}_{A^{\prime}}$ is a fixed pure state.  If we assume
that an ontological model of $\mathfrak{F}_{AA^{\prime}}$ must
preserve ontological distinctness with respect to an ontological model
of $\mathfrak{F}_A$, then proving the ontological distinctness of
$\Proj{\psi}_A \otimes \Proj{0}_{A^{\prime}}$ and $\Proj{\phi}_A
\otimes \Proj{0}_{A^{\prime}}$ is enough to establish the ontological
distinctness of $\Proj{\psi}_A$ and $\Proj{\phi}_A$ in the original
model.

To deal with the issue of tensor product structure, first of all, to
avoid cluttered notation, relabel $AA^{\prime}$ as $A$.  Then, we can
take $\mathfrak{F}_B = \langle \Hilb[B], \mathcal{P}_B, \mathcal{M}_B
\rangle$ to be a second copy of $\mathfrak{F}_A$, i.e.\ it has the
same Hilbert space, states and measurements, but just a different
system label.  Then, form a product measurement fragment
$\mathfrak{F}_{AB} = \langle \Hilb[A] \otimes \Hilb[B],
\mathcal{P}_{AB}, \mathcal{M}_A \times \mathcal{M}_B \rangle$ with
factors $\mathfrak{F}_A$ and $\mathfrak{F}_B$, where
$\mathcal{P}_{AB}$ contains all pure states on $\Hilb[A] \otimes
\Hilb[B]$.  If $\Tr[A]{\Proj{\phi}_A\Proj{\psi}_A} \leq \frac{d-1}{d}$
then $\Tr[AB]{\Proj{\phi}_A \otimes \Proj{0}_B\Proj{\psi}_A \otimes
\Proj{0}_B} = \Tr[A]{\Proj{\phi}_A\Proj{\psi}_A} \leq \frac{d-1}{d}$
for any fixed pure state $\Proj{0}_B \in \mathcal{P}_B$.  Assuming
that an ontological model of $\mathfrak{F}_{AB}$ must preserve
ontological distinctness with respect to an ontological model of
$\mathfrak{F}_A$ means that proving ontological distinctness of
$\Proj{\psi}_A \otimes \Proj{0}_B$ and $\Proj{\phi}_A \otimes
\Proj{0}_B$ is enough to prove the ontological distinctness of
$\Proj{\psi}_A$ and $\Proj{\phi}_A$.  If we assume further that an
ontological model of $\mathfrak{F}_{AB}$ must satisfy the conditions
of the theorem, i.e.\ parameter independence, preserving ontological
distinctness with respect to unitary transformations, and reproducing
the quantum predictions, then any ontological model of
$\mathfrak{F}_A$ must be $\psi$-ontic.

\begin{proof}[Proof of Theorem~\ref{thm:CR:CR}]
By choosing the global phases appropriately, we can find vector
representatives, $\Ket{\psi}_{AB}$ and $\Ket{\phi}_{AB}$, of
$\Proj{\psi}_{AB}$ and $\Proj{\phi}_{AB}$ such that
$\BraKet{\psi}{\phi}_{AB}$ is real and positive.  Since
$\Tr{\Proj{\phi}_{AB}\Proj{\psi}_{AB}} \leq \frac{d-1}{d}$, we have
$0 \leq \BraKet{\phi}{\psi}_{AB} \leq \sqrt{\frac{d-1}{d}}$.

Let $\left \{ \Ket{j} \right \}_{j=0}^{d-1}$ be an orthonormal basis
for $\mathbb{C}^d$ and consider the vectors
\begin{equation}
\Ket{\Phi^+}_{AB} = \frac{1}{\sqrt{d}}\sum_{j=0}^{d-1} \Ket{j}_A
\otimes \Ket{j}_B,
\end{equation}
and
\begin{equation}
\Ket{\eta}_{AB} = \frac{1}{\sqrt{d-1}}\sum_{j=1}^{d-1} e^{i
\varphi_j} \Ket{j}_A \otimes \Ket{j}_B,
\end{equation}
By Lemma~\ref{lem:Hardy:inner}, the phases $\varphi_j$ can be chosen
in such a way that $\BraKet{\eta}{\Phi^+}_{AB} =
\BraKet{\phi}{\psi}_{AB}$.  Thus, there exists a unitary operator
$U$ such that $U \Ket{\psi}_{AB} = \Ket{\Phi^+}$ and
$U\Ket{\phi}_{AB} = \Ket{\eta}_{AB}$.  Given that $\Theta_{AB}$
preserves ontological distinctness with respect to unitaries, if
$\Proj{\Phi^+}_{AB}$ and $\Proj{\eta}_{AB}$ are ontologically
distinct in $\Theta_{AB}$ then $\Proj{\psi}_{AB}$ and
$\Proj{\phi}_{AB}$ must also be ontologically distinct.

Let $\mu \in \Delta_{AB} \left [ \Proj{\Phi^+}_{AB} \right ]$ and
$\nu \in \Delta_{AB} \left [ \Proj{\eta}_{AB} \right ]$.  Consider
the measurement $M_A = \left \{ \Proj{j}_A \right \}_{j=0}^{d-1}$.
By Theorem~\ref{thm:CR:qudit}, any conditional probability
distribution $\text{Pr}_A \in \Xi_A[M_A]$ must satisfy $\text{Pr}_A
\left ( \Proj{0}_A \middle | M_A, \lambda \right ) = 1/d$ on a set
$\Omega$ that is measure one according to $\mu$.  However, in order
to reproduce the quantum predictions, we must have
\begin{equation}
\int_{\Lambda_{AB}} \text{Pr}_A \left ( \Proj{0}_A \middle | M_A,
\lambda \right ) d\nu(\lambda) = 0,
\end{equation}
because $\Ket{\eta}_{AB}$ is orthogonal to $\Ket{0}_A$.  Thus,
$\text{Pr}_A \left ( \Proj{0}_A \middle | M_A, \lambda \right )$ can
only be nonzero on a set that is of measure zero according to $\nu$.
Hence, $\mu(\Omega) = 1$ and $\nu(\Omega) = 0$, so
$\Proj{\Phi^+}_{AB}$ and $\Proj{\eta}_{AB}$ are ontologically
distinct.
\end{proof}

\section*{Part III. Beyond the $\psi$-ontic/epistemic distinction\label{Beyond}}

\hyperref[SPON]{Part II} presented three theorems that aimed to rule out
$\psi$-epistemic ontological models.  In each case, auxiliary
assumptions were required to prove the theorem: the preparation
independence postulate for the Pusey--Barrett--Rudolph Theorem, ontic indifference for
Hardy's Theorem, and parameter independence for the Colbeck--Renner
Theorem.  Each of these auxiliary assumptions are question begging.
We also saw, in \S\ref{NPIP}, that $\psi$-epistemic theories are
possible if we do not make any auxiliary assumptions.

However, as noted in \S\ref{POEM}, the definition of a
$\psi$-epistemic model is highly permissive in that it only requires
probability measures corresponding to two states to have overlap in an
ontological model.  Most pairs of nonorthogonal states may still be
ontologically distinct and furthermore the overlaps that do exist may
be arbitrarily small.  Clearly, more than this is needed if the
$\psi$-epistemic explanations of quantum phenomena like the
indistinguishability of nonorthogonal states are to be viable.
Therefore, one might hope that viable $\psi$-epistemic interpretations
might be ruled out without making auxiliary assumptions by imposing
stronger requirements on the overlaps in ontological models.  This
part discusses four proposals for such requirements.

\S\ref{Pair} discusses the requirement that all pairs of nonorthogonal
states should correspond to measures that have nonzero overlap.
Aaronson et.\ al.\ \cite{Aaronson2013} have shown that such models
exist in all dimensions, but they have also shown that imposing
additional assumptions can rule them out.  One of these is that the
model should satisfy a fairly reasonable symmetry requirement, but the
other severely restricts the class of ontic state spaces. \S\ref{Cont}
discusses the requirement that ontological models should be
continuous, in the sense that quantum states with large inner product
should correspond to measures with large overlap.  Two different
notions of continuity have been discussed in the literature.  Patra,
Pironio and Massar have proven a theorem ruling out models that
satisfy a very strong notion of continuity \cite{Patra2013a}.
However, their notion of continuity is so strong that it rules out
many quite reasonable $\psi$-epistemic models.  This is
discussed\S\ref{PPM}.  \S\ref{Lips} discusses a version of Lipschitz
continuity, which says that the ratio of overlaps in an ontological
model to the quantum probabilities should be bounded.  This can be
used as the basis of measuring the degree to which a model is
$\psi$-epistemic by comparing the overlaps of measures in the
ontological model with the inner products of the corresponding quantum
states.  The idea here is that if the former is small for a pair of
states with large inner product then the $\psi$-epistemic explanation
of the indistinguishability of quantum states is not really viable.
Several results bounding these overlaps have recently been obtained
\cite{Maroney2012, Barrett2013, Leifer2014, Branciard2014},
culminating in a proof by Branciard \cite{Branciard2014} that
Lipschitz continuity is impossible in an ontological model that
reproduces the quantum predictions in Hilbert space dimension $\geq
4$.  Finally, \S\ref{Never} discusses the concept of a
\emph{sometimes} $\psi$-ontic model and its converse, a \emph{never}
$\psi$-ontic model.  In a \emph{sometimes} $\psi$-ontic model, for
each pure state there is a region of the ontic state space to which
its probability measures assign nonzero measure, but this region is
assigned zero measure by the probability measures corresponding to all
other pure states.  Thus, if the system happens to occupy an ontic
state in one of these regions then the quantum state that was prepared
can be deduced uniquely.  The system need not always occupy such an
ontic state though; hence the terminology ``sometimes'' $\psi$-ontic.
In the context of reproducing the predictions of all projective
measurements for all pure states on $\mathbb{C}^d$, all
$\psi$-epistemic models that have been proposed to date are sometimes
$\psi$-ontic for $d \geq 3$, but the question of whether this must
necessarily be the case is open.

Even if you think the additional assumptions of the $\psi$-ontology
theorems discussed in \hyperref[SPON]{Part II} are reasonable, there are still
important reasons to look for stronger notions of $\psi$-epistemicity
that might be ruled out without them.  In \S\ref{Imp}, we discussed
the implications of assuming that the quantum state is $\psi$-ontic,
ranging from excess baggage to preparation contextuality and
nonlocality.  However, if your claim of $\psi$-ontology is based on
the existing theorems then the implications from $\psi$-ontology to
these other features of ontological models inherit the auxiliary
assumptions made in the existing $\psi$-ontology theorems.  Since we
can already prove these things without such additional assumptions,
e.g.\ the standard proof of Bell's Theorem does not require assuming
the preparation independence postulate or making assumptions about how
dynamics is represented, the claim that $\psi$-ontology theorems are
the strongest known no-go theorems for ontological models is somewhat
weakened.  From this point of view, proving that models must be
sometimes $\psi$-ontic would be particularly interesting, since all of
the implications discussed in \S\ref{Imp} would follow from it.  This
is also discussed in \S\ref{Never}.

The discussion in this part is less detailed than in Parts~\ref{OED}
and \ref{SPON}.  This is because going beyond the
$\psi$-ontic/$\psi$-epistemic distinction is a relatively new
development and many of the known results are only provisional.  I
expect that some of the material in this section will be obsolete by
the time this review is published, so it is more important to outline
the main ideas than to give detailed proofs.

\section{Pairwise $\psi$-epistemic models}

\label{Pair}

One of the deficiencies in the definition of a $\psi$-epistemic model
is that it only requires that a single pair of pure states is
ontologically indistinct.  Clearly, in order to explain the
indistinguishability of every pair of nonorthogonal states in terms of
the overlap of the corresponding probability measures, all such pairs
should to be ontologically indistinct.  This is what is meant by a
pairwise $\psi$-epistemic model.

\begin{definition}
Let $\mathfrak{F} = \langle \Hilb, \mathcal{P}, \mathcal{M} \rangle$
be a PM fragment and let $\Theta = (\Lambda, \Sigma, \Delta, \Xi)$ be
an ontological model of it.  $\Theta$ is \emph{pairwise
$\psi$-epistemic} if all pairs of nonorthogonal pure states in
$\mathcal{P}$ are ontologically indistinct in $\Theta$.
\end{definition}

Aaronson, Bouland, Chua and Lowther (ABCL) have shown that pairwise
$\psi$-epistemic models exist for the fragment consisting of the set
of all pure states and all measurements in orthonormal bases in any
finite dimensional Hilbert space \cite{Aaronson2013}.  They also prove
a theorem ruling out such models by imposing some additional
assumptions, which generalize a class of models considered by Rudolph
\cite{Rudolph2006} who previously provided numerical evidence that
they could not reproduce quantum theory.  One of their assumptions is
a quite reasonable symmetry requirement, but the other is that the
ontic state space describing a system with Hilbert space
$\mathbb{C}^d$ is either the projective Hilbert space, i.e.\ the set
of pure states on $\mathbb{C}^d$, or the set of unitary operators on
$\mathbb{C}^d$.  This is obviously a very restricted setting, so their
theorem should be regarded as a provisional step towards ruling out
pairwise $\psi$-epistemic models by symmetry requirements.  Here, I
give a rough outline of their construction of a pairwise
$\psi$-epistemic theory in finite dimensional Hilbert spaces and give
a brief account of the symmetry assumption behind their theorem.

\S\ref{NPIP} described ABCL's construction of an ontological model
wherein any fixed pair of nonorthogonal pure states could be made
ontologically indistinct.  The key to constructing a pairwise
$\psi$-epistemic model is to find a way of mixing ontological models
together such that any states that are ontologically indistinct in the
original models remain so in the mixture.  Then, if one mixes
sufficiently many ABCL models the result will be pairwise
$\psi$-epistemic.  A mixture of two ontological models is defined as
follows.

\begin{definition}
Let $\mathfrak{F} = \langle \Hilb, \mathcal{P}, \mathcal{M} \rangle$
be a PM fragment and let $\Theta_1 = (\Lambda_1, \Sigma_1, \Delta^1,
\Xi^1)$ and $\Theta_2 = (\Lambda_2, \Sigma_2, \Delta^2, \allowbreak
\Xi^2)$ be ontological models of it.  The model $\Theta_3 =
(\Lambda_3, \Sigma_3, \Delta^3, \Xi^3) = p\Theta_1 + (1-p)\Theta_2$,
where $0 \leq p \leq 1$, is defined as follows.
\begin{itemize}
\item $\Lambda_3 = \Lambda_1 \oplus \Lambda_2$ and $\Sigma_3$
consists of all sets of the form $\Omega_1 \oplus \Omega_2$ for
$\Omega_1 \in \Sigma_1$, $\Omega_2 \in \Sigma_2$.
\item For all $\rho \in \mathcal{P}$, $\Delta^3_{\rho}$ consists of
all measures of the form
\begin{equation}
p \mu_1 + (1-p) \mu_2,
\end{equation}
for $\mu_1 \in \Delta^1_{\rho}$ and $\mu_2 \in \Delta^2_{\rho}$.
\item For all $M \in \mathcal{M}$, $\Xi^3_M$ consists of all
conditional probability distributions of the form
\begin{equation}
\label{eq:Pair:CondProb}
\text{Pr}_3 \left ( E \middle | M, \lambda \right ) =
\text{Pr}_1 \left ( E \middle | M, \lambda \right ) +
\text{Pr}_2 \left ( E \middle | M, \lambda \right ),
\end{equation}
for $\text{Pr}_1 \in \Xi^1_M$ and $\text{Pr}_2 \in \Xi^2_M$.
\end{itemize}
\end{definition}

A few words of clarification are in order here.  First of all, the
direct sum of two spaces is essentially just the two spaces placed
side-by-side.  More formally, $\Lambda_1 \oplus \Lambda_2$ is the
union of the set of objects of the form $(1,\lambda_1)$ with the set
of objects of the form $(2,\lambda_2)$.  This is different from the
union of $\Lambda_1$ and $\Lambda_2$, since $\Lambda_1$ and
$\Lambda_2$ might contain some common elements.  The addition of the
integer label is to ensure that we have two copies of such common
elements, with the label indicating whether it is the copy that
belongs to $\Lambda_1$ or to $\Lambda_2$.  However, to simplify
notation, I generally omit the integer label.  When $\mu_1$ is a
probability measure on $(\Lambda_1,\Sigma_1)$, I also denote by
$\mu_1$ the probability measure on $(\Lambda_3,\Sigma_3)$ that
satisfies $\mu_1(\Lambda_2) = 0$ and agrees on $\Sigma_1$.  Similarly,
when $\text{Pr}_1$ is a probability distribution conditioned on
$\Lambda_1$ the same notation denotes the function on $\Lambda_3$ that
agrees on $\Lambda_1$ and is zero on $\Lambda_2$.  Note, this is not a
valid conditional probability distribution because $\sum_{E \in M}
\text{Pr}_1 \left ( E \middle | M, \lambda_2 \right ) = 0$ for
$\lambda_2 \in \Lambda_2$, but when added to a probability
distribution conditioned on $\Lambda_2$, as in
Eq.~\eqref{eq:Pair:CondProb}, the result is a valid conditional
probability distribution.  Similar remarks apply under the exchange of
$1$ and $2$.  It is then straightforward to check that if $\Theta_1$
and $\Theta_2$ reproduce the quantum predictions then so does
$\Theta_3$.

\begin{theorem}
\label{the:Pair:ind}
Let $\mathfrak{F} = \langle \Hilb, \mathcal{P}, \mathcal{M} \rangle$
be a PM fragment and and let $\Theta_1 = (\Lambda_1, \Sigma_1,
\Delta^1, \Xi^1)$ and $\Theta_2 = (\Lambda_2, \Sigma_2, \Delta^2,
\Xi^2)$ be ontological models of it.  Let $\Theta_3 = p \Theta_1 +
(1-p)\Theta_2$ for some $0 < p < 1$.  If $\Proj{\psi}$ and
$\Proj{\phi}$ are ontologically indistinct in either $\Theta_1$ or
$\Theta_2$ then they are ontologically in distinct in $\Theta_3$.
\end{theorem}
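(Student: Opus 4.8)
The plan is to unpack the definition of ontological indistinctness in terms of variational distance, and then show that mixing with a second model can only \emph{decrease} the variational distance between the measures representing two states. Recall that $\Proj{\psi}$ and $\Proj{\phi}$ are ontologically indistinct in a model iff there exist measures $\mu$ and $\nu$ representing them with $D(\mu,\nu) < 1$. Suppose, without loss of generality, that $\Proj{\psi}$ and $\Proj{\phi}$ are ontologically indistinct in $\Theta_1$, so there exist $\mu_1 \in \Delta^1_{\psi}$ and $\nu_1 \in \Delta^1_{\phi}$ with $D(\mu_1,\nu_1) < 1$. The goal is to produce measures in $\Delta^3_{\psi}$ and $\Delta^3_{\phi}$ whose variational distance is also strictly less than $1$.

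First I would pick \emph{any} $\mu_2 \in \Delta^2_{\psi}$ and $\nu_2 \in \Delta^2_{\phi}$ (these exist because $\Theta_2$ is an ontological model, hence $\Delta^2_{\psi}$ and $\Delta^2_{\phi}$ are nonempty). By the definition of the mixture $\Theta_3$, the measures $\mu_3 = p\mu_1 + (1-p)\mu_2$ and $\nu_3 = p\nu_1 + (1-p)\nu_2$ lie in $\Delta^3_{\psi}$ and $\Delta^3_{\phi}$ respectively. The key computation is then the convexity bound on variational distance: for any measurable $\Omega \in \Sigma_3$,
\begin{align}
\left| \mu_3(\Omega) - \nu_3(\Omega) \right| & = \left| p\bigl(\mu_1(\Omega) - \nu_1(\Omega)\bigr) + (1-p)\bigl(\mu_2(\Omega) - \nu_2(\Omega)\bigr) \right| \nonumber \\
& \leq p \left| \mu_1(\Omega) - \nu_1(\Omega) \right| + (1-p) \left| \mu_2(\Omega) - \nu_2(\Omega) \right| \nonumber \\
& \leq p \, D(\mu_1,\nu_1) + (1-p) \cdot 1,
\end{align}
using the triangle inequality and the fact that variational distance never exceeds $1$. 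Taking the supremum over $\Omega$ gives $D(\mu_3,\nu_3) \leq p\, D(\mu_1,\nu_1) + (1-p)$. Since $D(\mu_1,\nu_1) < 1$ and $0 < p < 1$, this is strictly less than $p + (1-p) = 1$. Hence $D(\mu_3,\nu_3) < 1$, so $\Proj{\psi}$ and $\Proj{\phi}$ are ontologically indistinct in $\Theta_3$. The case where the states are indistinct in $\Theta_2$ is identical with the roles of the subscripts $1$ and $2$ interchanged.

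There is no real obstacle here; the only point requiring a little care is the bookkeeping around the direct-sum ontic state space $\Lambda_3 = \Lambda_1 \oplus \Lambda_2$. One should note that a measurable set $\Omega \in \Sigma_3$ has the form $\Omega_1 \oplus \Omega_2$, and that $\mu_1$ (viewed as a measure on $\Lambda_3$ supported on $\Lambda_1$) satisfies $\mu_1(\Omega_1 \oplus \Omega_2) = \mu_1(\Omega_1)$, so the additivity used in the first line of the display is legitimate. With that observation the argument goes through verbatim. It is worth remarking that this is precisely why mixing is the right tool for building pairwise $\psi$-epistemic models in \S\ref{Pair}: indistinctness is preserved under convex combination, so mixing enough single-pair ABCL models accumulates all the required overlaps without ever destroying one.
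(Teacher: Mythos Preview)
Your proof is correct and follows essentially the same approach as the paper: both assume indistinctness in $\Theta_1$, form the mixed measures $\mu_3 = p\mu_1 + (1-p)\mu_2$ and $\nu_3 = p\nu_1 + (1-p)\nu_2$, and bound $|\mu_3(\Omega)-\nu_3(\Omega)|$ by $p\,D(\mu_1,\nu_1) + (1-p) < 1$ via the triangle inequality. The paper carries out the direct-sum decomposition $\Omega = \Omega_1 \cup \Omega_2$ explicitly in the main chain of inequalities, whereas you relegate that bookkeeping to a closing remark, but the argument is the same.
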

\begin{proof}
Without loss of generality, assume that $\Proj{\psi}$ and
$\Proj{\phi}$ are ontologically indistinct in $\Theta_1$, since the
proof for $\Theta_2$ follows the same logic.  Then, there exist $\mu_1
\in \Delta^1_{\psi}$ and $\nu_1 \in \Delta^1_{\phi}$ such that
$c = D(\mu_1, \nu_1) < 1$, where $D$ is the variational distance.  Let
$\mu_2 \in \Delta^2_{\psi}$ and $\nu_2 \in \Delta^2_{\phi}$.  Then,
$\mu_3 = p \mu_1 + (1-p)\mu_2 \in \Delta^3_{\psi}$ and $\nu_3 = p
\nu_1 + (1-p) \nu_2 \in \Delta^3_{\phi}$.  Let $\Omega \in \Sigma_3$
and define $\Omega_1 = \Omega \cap \Lambda_1$ and $\Omega_2 = \Omega
\cap \Lambda_2$.  Then, for any measure $\mu$ on $(\Lambda_3,
\Sigma_3)$, $\mu(\Omega) = \mu(\Omega_1) + \mu(\Omega_2)$, but we
also have $\mu_1(\Omega_2) = \nu_1(\Omega_2) = \mu_2(\Omega_1) =
\nu_2(\Omega_1) = 0$.  Hence,
\begin{align}
&\left | \mu_3(\Omega) - \nu_3(\Omega) \right | \nonumber\\
& = \left |
p\mu_1(\Omega_1) + p\nu_1(\Omega_1) +
(1-p)\mu_2(\Omega_2) + (1-p)\nu_2(\Omega_2) \right | \\
& \leq p \left | \mu_1(\Omega_1) - \nu_1(\Omega_1) \right | +
(1-p) \left | \mu_2(\Omega_2) - \nu_2(\Omega_2) \right | \\
& \leq p c + (1-p) \\
& = 1 - (1-c)p,
\end{align}
and this is $< 1$ because $0 \leq c < 1$ and $p > 0$.  Since this
holds for any $\Omega \in \Sigma_3$ it also holds for the supremum.
Hence, $D(\mu_3,\nu_3) < 1$ and so $\Proj{\psi}$ and $\Proj{\phi}$
are ontologically indistinct in $\Theta_3$.
\end{proof}

The rough idea of the ABCL construction is now quite simple to
describe.  Let $\mathfrak{F} = \langle \mathbb{C}^d, \mathcal{P},
\mathcal{M} \rangle$ be the PM fragment where $\mathcal{P}$ consists
of all pure states on $\mathbb{C}^d$ and $\mathcal{M}$ consists of all
measurements in orthonormal bases.  In \S\ref{NPIP}, we showed that,
for any pair of nonorthogonal states $\Proj{\psi}, \Proj{\phi} \in
\mathcal{P}$, there exists an ontological model $\Theta_{\psi,\phi}$ in
which $\Proj{\psi}$ and $\Proj{\phi}$ are ontologically indistinct.
Therefore, by Theorem~\ref{the:Pair:ind}, the model $p \Theta_{\psi,\phi}
+ (1-p) \Theta_{\psi^{\prime},\phi^{\prime}}$ has both $\left (
\Proj{\psi},\Proj{\phi} \right )$ and $\left ( \Proj{\psi^{\prime}},
\Proj{\phi^{\prime}}\right )$ as ontologically indistinct pairs.
This construction can then be iterated in order to obtain a model in
which any finite set of pairs of nonorthogonal pure states are
ontologically distinct.

The remaining difficulty is to extend this construction to an infinite
number of pairs, i.e.\ all pairs of pure states on $\mathbb{C}^d$.
The details of this are quite technical but straightforward once one
understands the basic idea.  The interested reader should consult
\cite{Aaronson2013}.

ABCL go on to prove a theorem ruling out pairwise $\psi$-epistemic
models under an additional symmetry requirement.  They also impose
that the ontic state space for a quantum system with Hilbert space
$\mathbb{C}^d$ is either the projective Hilbert space of
$\mathbb{C}^d$ or the set of unitary operators on $\mathbb{C}^d$.  In
the more general context of arbitrary models, their symmetry
assumption runs as follows.
\begin{definition}
Let $\mathfrak{F} = \langle \Hilb, \mathcal{P}, \mathcal{M},
\mathcal{T} \rangle$ be a PMT fragment and let $\Theta = (\Lambda,
\Sigma, \Delta, \Xi, \Gamma)$ be an ontological model of it.  $\Theta$
satisfies \emph{ABCL symmetry} if, for all pairs $\Proj{\psi} \in
\mathcal{P}$ and $U \in \mathcal{T}$ such that $U \Proj{\psi}
U^{\dagger} = \Proj{\psi}$, there exists a $\mu \in \Delta_{\psi}$
and a $\gamma \in \Gamma_{U}$ such that
\begin{equation}
\mu(\Omega) = \int_{\Lambda} \gamma_{\lambda}(\Omega) d \mu(\lambda).
\end{equation}
\end{definition}
This says that, if a unitary leaves a pure state invariant, then there
should be a probability measure representing that state and a Markov
kernel representing the unitary such that the measure is invariant
under the stochastic map representing the unitary.  ABCL then show
that, for the fragment consisting of all pure states, all measurements
in orthonormal bases, and all unitaries on $\mathbb{C}^d$, if the
ontic state space is either the projective Hilbert space of
$\mathbb{C}^d$ or the set of unitary operators on $\mathbb{C}^d$ and
the dynamics is represented by the usual action of the unitary group
on these spaces, then no ABCL symmetric and pairwise $\psi$-epistemic
model exists.

Imposing additional symmetry requirements of the type suggested by
ABCL is a promising direction for $\psi$-ontology theorems, but their
setup is so restricted that one cannot really draw any firm
conclusions about ontological models in general from it.  The proof
itself is quite technical and not obviously generalizable to other
ontic state spaces, so the details are omitted here.

\section{Continuity}

\label{Cont}

The idea of continuity is that if two pure quantum states are close to
one another, in the sense of having large inner product, then there
should be probability measures representing them that have large
overlap.  Imposing continuity is not a bad idea, since presumably
models in which a small change in experimental conditions leads to a
large change at the ontological level would be quite contrived.  The
most basic notion of continuity runs as follows.

\begin{definition}
\label{def:Cont:cont}
Let $\mathfrak{F} = \langle \Hilb, \mathcal{P}, \mathcal{M} \rangle$
be a PM fragment where $\mathcal{P}$ consists of pure states and let
$\Theta = (\Lambda, \Sigma, \Delta, \Xi)$ be an ontological model of
it.  $\Theta$ is \emph{continuous} if, for all $\epsilon > 0$, there
exists a $\delta > 0$ such that if $\Tr{\Proj{\phi}\Proj{\psi}} > 1
- \delta$ for some pure states $\Proj{\psi}, \Proj{\phi} \in
\mathcal{P}$, then there exists $\mu \in \Delta_{\psi}$ and $\nu \in
\Delta_{\phi}$ such that $L(\mu,\nu) > 1 - \epsilon$.
\end{definition}

This notion of continuity is quite permissive because it imposes no
constraints on how $\delta$ should be related to $\epsilon$.  It is
easy to see that any pairwise $\psi$-epistemic model is continuous in
this sense, since all it requires is that the overlap of measures
corresponding to nonorthogonal states should be bounded away from
zero.  Since ABCL have shown that pairwise $\psi$-epistemic models
exist, this notion of continuity cannot be used to derive a no-go
theorem.

Therefore, it is interesting to look at stronger notions of continuity
that might be reasonable for a $\psi$-epistemic model.  In the
remainder of this section, two such notions are discussed.
\S\ref{PPM} discusses a notion of continuity due to Patra, Pironio and
Massar \cite{Patra2013a} and discusses a theorem that they proved to
rule such models out.  I argue that their definition of continuity is
unreasonable for a $\psi$-epistemic model.  \S\ref{Lips} discusses a
different notion based on Lipschitz continuity, which has been used in
a number of recent works to derive measures of the extent to which a
model is $\psi$-epistemic.  The best known bound shows that the ratio
of the overlap to the inner product must be zero for Hilbert spaces of
dimension $\geq 4$.

\subsection{The Patra--Pironio--Massar Theorem}

\label{PPM}

PPM's continuity assumption runs as follows.

\begin{definition}
Let $\mathfrak{F} = \langle \Hilb, \mathcal{P}, \mathcal{M} \rangle$
be a PM fragment, where $\mathcal{P}$ consists of pure states, let
$\delta > 0$ and, for $\Proj{\psi} \in \mathcal{P}$, let
$B^{\delta}_{\psi} = \left \{ \Proj{\phi} \in \mathcal{P} \middle |
\sqrt{\Tr{\Proj{\phi}\Proj{\psi}}} \geq 1 - \delta \right \}$ be
the ball of radius $\delta$ centered at $\Proj{\psi}$.  An
ontological model $\Theta = (\Lambda, \Sigma, \Delta, \Xi)$ of
$\mathfrak{F}$ is \emph{PPM-$\delta$-continuous} if, for every pure
state $\Proj{\psi} \in \mathcal{P}$ and every set $\{\Proj{\phi_j}\}
\subseteq \mathcal{P}$ where $\Proj{\phi_j} \in B^{\delta}_{\psi}$,
there exist probability measures $\mu_j \in \Delta_{\phi_j}$ such
that $L(\{\mu_j\}) > 0$.
\end{definition}

For the case of a finite ontic state space, the
PPM-$\delta$-continuity assumption says that, for every set of quantum
states that are sufficiently close some common state $\Proj{\psi}$,
there should be at least one ontic state to which they all assign
nonzero probability.  This is stronger than the basic notion of
continuity given in Definition~\ref{def:Cont:cont}, which only
constrains pairwise overlaps.

Based on their assumption, PPM prove the following theorem.
\begin{theorem}
\label{thm:PPM:PPM}
Let $\mathfrak{F} = \langle \mathbb{C}^d, \mathcal{P}, \mathcal{M}
\rangle$ be a PM fragment with $\left \{ \Proj{k} \right
\}_{k=0}^{d-1} \in \mathcal{M}$ for some orthonormal basis $\left \{
\Ket{k} \right \}_{k=0}^{d-1}$.  If $\Proj{\psi} \in \mathcal{P}$
and $\left \{ \Proj{\phi_j} \right \}_{j=0}^{d-1} \subseteq
\mathcal{P}$, where
\begin{align}
\Ket{\psi} & = \frac{1}{\sqrt{d}}\sum_{k=0}^{d-1} \Ket{k} \\
\Ket{\phi_j} & = \frac{1}{\sqrt{d-1}} \sum_{k \neq j} \Ket{k},
\end{align}
then there is no PPM-$\delta$-continuous ontological model of
$\mathfrak{F}$ that reproduces the quantum preclusions for $\delta
\geq 1 - \sqrt{(d-1)/d}$.
\end{theorem}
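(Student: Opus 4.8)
The key observation is that the states $\{\Proj{\phi_j}\}_{j=0}^{d-1}$ in the statement are precisely antidistinguishable, and moreover each $\Proj{\phi_j}$ lies inside the ball $B^{\delta}_{\psi}$ once $\delta$ is large enough. So the proof will be a short combination of two facts already available: Theorem~\ref{thm:Anti:antiover} (antidistinguishability forces the $n$-way overlap of the corresponding measures to vanish) and the definition of PPM-$\delta$-continuity (which, applied to $\Proj{\psi}$ and the set $\{\Proj{\phi_j}\}$, forces that same overlap to be strictly positive). The contradiction gives the theorem. I would organize it in three steps.

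\emph{Step 1: antidistinguishability.} Verify that $\{\Proj{\phi_j}\}_{j=0}^{d-1}$ is antidistinguished by the measurement $\{\Proj{k}\}_{k=0}^{d-1}$. This is immediate from the definition of $\Ket{\phi_j}$: since $\Ket{\phi_j} = \frac{1}{\sqrt{d-1}}\sum_{k\neq j}\Ket{k}$ has no $\Ket{j}$ component, $\BraKet{j}{\phi_j} = 0$, hence $\Tr{\Proj{j}\Proj{\phi_j}} = 0$ for each $j$. Since $\{\Proj{k}\}_{k=0}^{d-1}\in\mathcal{M}$ by hypothesis, this is a legitimate antidistinguishing POVM in the fragment. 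By Theorem~\ref{thm:Anti:antiover}, any ontological model reproducing the quantum preclusions must satisfy $L(\{\mu_j\}) = 0$ for every choice of $\mu_j\in\Delta_{\phi_j}$.

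\emph{Step 2: the $\Proj{\phi_j}$ lie in the ball.} Compute the inner products $\Tr{\Proj{\phi_j}\Proj{\psi}} = |\BraKet{\phi_j}{\psi}|^2$. We have $\BraKet{\phi_j}{\psi} = \frac{1}{\sqrt{d(d-1)}}\sum_{k\neq j}1 = \frac{d-1}{\sqrt{d(d-1)}} = \sqrt{\frac{d-1}{d}}$, so $\sqrt{\Tr{\Proj{\phi_j}\Proj{\psi}}} = \left(\frac{d-1}{d}\right)^{1/4}$. Hence each $\Proj{\phi_j}\in B^{\delta}_{\psi}$ precisely when $\left(\frac{d-1}{d}\right)^{1/4}\geq 1-\delta$, i.e. when $\delta\geq 1 - \left(\frac{d-1}{d}\right)^{1/4}$. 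One should check this against the threshold stated in the theorem: the theorem writes $\delta\geq 1-\sqrt{(d-1)/d}$, and since $\sqrt{(d-1)/d}\leq (d-1)/d)^{1/4}$ for $d\geq 2$, the condition $\delta\geq 1-\sqrt{(d-1)/d}$ is the stronger (larger) one and certainly implies $\delta\geq 1-((d-1)/d)^{1/4}$, so all the $\Proj{\phi_j}$ are in $B^{\delta}_{\psi}$ under the stated hypothesis. (The apparent discrepancy between the two radii is harmless in the direction needed; if one wants the sharpest statement one takes $\delta\geq 1-((d-1)/d)^{1/4}$, but the weaker-looking hypothesis in the theorem suffices a fortiori.)

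\emph{Step 3: contradiction.} Suppose a PPM-$\delta$-continuous ontological model of $\mathfrak{F}$ reproducing the quantum preclusions existed for such a $\delta$. Apply PPM-$\delta$-continuity to the center $\Proj{\psi}$ and the set $\{\Proj{\phi_j}\}_{j=0}^{d-1}\subseteq B^{\delta}_{\psi}$ (established in Step~2): there exist $\mu_j\in\Delta_{\phi_j}$ with $L(\{\mu_j\})>0$. But Step~1 showed $L(\{\mu_j\})=0$ for \emph{every} such choice. This contradiction completes the proof. The main thing to be careful about is not a deep obstacle but the bookkeeping in Step~2 — making sure the fourth-root versus square-root radii are lined up in the correct inequality direction so that the hypothesis of the theorem genuinely places the $\Proj{\phi_j}$ inside the ball; everything else is a direct invocation of earlier results.
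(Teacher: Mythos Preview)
Your approach is correct and essentially identical to the paper's: antidistinguishability of $\{\Proj{\phi_j}\}$ by $\{\Proj{k}\}$ forces $L(\{\mu_j\})=0$ via Theorem~\ref{thm:Anti:antiover}, while PPM-$\delta$-continuity applied to the ball around $\Proj{\psi}$ forces $L(\{\mu_j\})>0$.

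There is, however, an arithmetic slip in Step~2. You correctly compute $\BraKet{\phi_j}{\psi}=\sqrt{(d-1)/d}$, but then write $\sqrt{\Tr{\Proj{\phi_j}\Proj{\psi}}}=((d-1)/d)^{1/4}$. In fact $\Tr{\Proj{\phi_j}\Proj{\psi}}=|\BraKet{\phi_j}{\psi}|^2=(d-1)/d$, so $\sqrt{\Tr{\Proj{\phi_j}\Proj{\psi}}}=\sqrt{(d-1)/d}$, not the fourth root. The ``apparent discrepancy'' you worry about is therefore illusory: the threshold $\delta\geq 1-\sqrt{(d-1)/d}$ in the theorem is exactly the condition for $\Proj{\phi_j}\in B^\delta_\psi$, with no slack. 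Your a fortiori argument happens to rescue the logic, but the confusion is unnecessary once the square root is computed correctly.
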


As with the Hardy and Colbeck--Renner Theorems, the constraint on
$\delta$ can be removed by assuming that adding ancillas preserves
ontological distinctness.

\begin{proof}[Proof of Theorem~\ref{thm:PPM:PPM}]
Let $\Theta = (\Lambda, \Sigma, \Delta, \Xi)$ be an ontological model
of $\mathfrak{F}$ and let $\mu_j \in \Delta_{\phi_j}$.  The states
$\Proj{\phi_j}$ all lie within $B^{\delta}_{\psi}$ for $\delta = 1 -
\sqrt{(d-1)/d}$.  However, the states $\Proj{\phi_j}$ are
antidistinguished by the measurement $\left \{ \Proj{k} \right
\}_{k=0}^{d-1}$ and hence, by Theorem~\ref{thm:Anti:antiover}, $L
\left ( \left \{ \mu_j \right \} \right ) = 0$.
\end{proof}

This proof is remarkably simple, but it does not mean much as
PPM-$\delta$-continuity is unreasonably strong.  In order to see the
problem, it is helpful to consider a simple example.  Therefore,
consider how the fragment $\mathfrak{F} = \langle \mathbb{C}^d,
\mathcal{P}, \mathcal{M} \rangle$ might be modeled, where
$\mathcal{P}$ just contains the states $\Proj{\psi}$, $\Proj{\phi_j}$
and $\mathcal{M}$ only contains the measurement $M =
\{\Proj{k}\}_{k=1}^d$.  A natural way of doing this is to use the
ontic state space $\Lambda = \{0,1,\ldots,d-1\}$, model $\Proj{\psi}$
by the uniform distribution, and model $\Proj{\phi_j}$ by the
distribution that is zero on $j$ and uniformly distributed over the
rest of the ontic state space.  Then, it is straightforward to see
that setting the conditional probabilities $\text{Pr} \left (\Proj{k}
\middle | M, j \right ) = \delta_{jk}$ yields a model that
reproduces the quantum predictions.  Furthermore, this model is
pairwise $\psi$-epistemic, and hence satisfies the more basic notion
of continuity, because $D(\mu_j,\mu_k) = 1/(d-1)$ for $j \neq k$.

From this, it is easy to see that the problem with
PPM-$\delta$-continuity comes from ontological states that are
assigned a small weight in the distribution.  If a distribution
corresponding to some quantum state assigns a small weight to all of
the ontic states in its support then one way of making a small change
to that distribution is to set the weight assigned to one of the ontic
states to zero and redistribute it over the other ontic states.  If
those distributions all happen to represent quantum states that are
close to the original one, and there is no reason why they should not,
then these quantum states will have no ontic state that is common to
all of them, and hence PPM-$\delta$-continuity will be violated.

In fact, the same criticism applies to classical models.  Suppose we
have a system consisting of a ball that can be in one of $d$ boxes,
labeled $0,1,\ldots,d-1$.  The system is prepared in one of the
following ways.  Either a box is chosen uniformly at random and the
ball is placed in that box, or first an integer $j$ from $0$ to $d-1$
is specified and box $j$ is removed, then the ball is placed uniformly
at random in one of the remaining boxes, before finally returning box
$j$ to its place.  Suppose you are given a table which specifies the
probabilities of finding the ball in each of the boxes for each of
these $d+1$ preparation procedures.  You are, however, not told
anything about where these probabilities come from, i.e.\ you are not
told about the ball and boxes but rather just given a table that tells
you that if preparation $j$ is made then the probability of getting
outcome $k$ is $0$ if $k=j$ and $1/(d-1)$ otherwise.  Your task is to
try and come up with a model of what is going on in reality in order
to generate these statistics; a task perfectly analogous to the
project of constructing an ontological model for some fragment of
quantum theory.  If you imposed upon yourself the analog of
PPM-$\delta$-continuity, that sets of preparations having close
operational predictions must have an ontic state in common, you would
never be able to come up with the ball and box model that is actually
generating the probabilities and, in fact, no model satisfying this
condition exists.  That, in a nutshell, is what is wrong with
PPM-$\delta$-continuity.  A condition that rules out a classical
probabilistic description of a classical probabilistic model is
clearly way to strong to prove anything meaningful about quantum
theory.

\subsection{Lipschitz continuity}

\label{Lips}

In order for the $\psi$-epistemic explanation of the
indistinguishability of quantum states to be viable, quantum states
that have large inner product should correspond to probability
measures with large overlap, but quantum states that are almost
orthogonal need have very little overlap at the ontic level.  In
between these extremes, an amount of overlap that scales with the
inner product in some way is needed.  This motivates the definition of
Lipschitz continuity.

\begin{definition}
Let $\mathfrak{F} = \langle \Hilb, \mathcal{P}, \mathcal{M} \rangle$
be a PM fragment where $\mathcal{P}$ consists of pure states and
where, for every $\Proj{\psi} \in \mathcal{P}$, there exists an $M
\in \mathcal{M}$ such that $\Proj{\psi} \in M$.  Let $\Theta =
(\Lambda, \Sigma, \Delta, \Xi)$ be an ontological model of
$\mathfrak{F}$.  $\Theta$ is Lipschitz continuous if there exists a $K
> 0$ such that, for all $\Proj{\psi}, \Proj{\phi} \in \mathcal{P}$,
\begin{equation}
L(\mu,\nu) \geq K \Tr{\Proj{\phi}\Proj{\psi}},
\end{equation}
for some $\mu \in \Delta_{\psi}$, $\nu \in \Delta_{\phi}$.
\end{definition}

The first bounds on $K$ were obtained by Maroney \cite{Maroney2012},
followed by Barrett et.\ al.\ \cite{Barrett2013} and myself
\cite{Leifer2014}.  However, Branciard \cite{Branciard2014} has
recently proved a theorem to the effect that, for Hilbert spaces of $d
\geq 4$, $K$ must be zero, or, in other words, Lipschitz continuous
models are not possible.  This could be regarded as a $\psi$-ontology
theorem in its own right, if we regard Lipschitz continuity as a
stronger definition of what it means for a model to be
$\psi$-epistemic.

However, Lipschitz continuity is quite a strong requirement.  It says
that there should be a fixed bound on the ratio of $L(\mu,\nu)$ to
$\Tr{\Proj{\phi}\Proj{\psi}}$, that is independent of the states
$\Proj{\psi}$ and $\Proj{\phi}$ under consideration.  Is this really
required to maintain the viability of $\psi$-epistemic explanations of
quantum phenomena, such as the indistinguishability of quantum states?

In any ontological model, regardless of whether it is $\psi$-epistemic
or $\psi$-ontic, there are two mechanisms that can account for the
indistinguishability of a pair of quantum states.  The first is
overlap of the probability measures corresponding to states, which I
have referred to as the $\psi$-epistemic explanation of
indistinguishability.  Even if you are told the exact ontic state
occupied by the system, your probability of correctly guessing whether
$\Proj{\psi}$ or $\Proj{\phi}$ was prepared is bounded by $\frac{1}{2}
\left ( 2 - L(\mu,\nu) \right )$, where $\mu \in \Delta_{\psi}$ and
$\nu \in \Delta_{\phi}$.  Hence, if $L(\mu,\nu) > 0$, the states are
necessarily indistinguishable.  However, there is also a second
mechanism.  We do not usually have complete knowledge about the ontic
state, but are instead only able to infer information about it via the
outcomes of quantum measurements.  The conditional probability
distributions $\text{Pr}(E|M,\lambda)$ corresponding to measurements
typically do not reveal the value of $\lambda$ exactly, but only give
coarse-grained information about it.  This will render quantum states
less distinguishable than they would be if we knew the exact ontic
state, so this mechanism can also explain part of their
indistinguishability.  In fact, in a $\psi$-ontic model, the
indistinguishability of quantum states must be explained entirely by
this second mechanism as there is no overlap of probability measures.
In general, there will be a tradeoff between the two effects, with
some portion of the indistinguishability of quantum states being
explained by overlap of probability measures, and some being explained
by the coarse-grained nature of measurements.  Clearly, some portion
of the indistinguishability must be explained by overlap in a
$\psi$-epistemic theory, as the theory would be $\psi$-ontic if there
were no overlap, but how much of it ought to be explained in this way?

It is easy to see that measurements must only reveal coarse-grained
information about $\lambda$ in certain types of ontological model,
regardless of whether they are $\psi$-epistemic or $\psi$-ontic.  For
example, in an outcome deterministic model, such as the Kochen--Specker
model, in which the probabilities $\text{Pr}(\phi|M,\lambda)$ for
measurements in orthonormal bases are all either $0$ or $1$, exact
knowledge of $\lambda$ would entail exact knowledge of the outcomes of
all measurements, which would violate the uncertainty principle and
hence cannot be compatible with reproducing the quantum predictions.
Further, recall that Spekkens' toy theory was derived from the
knowledge-balance principle, which states that at most half of the
information needed to specify the ontic state can be known at any
given time.  This also entails that measurements can only reveal
coarse-grained information about the ontic state.  Therefore, even in
an archetypal $\psi$-epistemic theory, which is derived entirely from
epistemic principles, measurements only reveal coarse-grained
information about $\lambda$.  For these reasons, it is to be expected
that both overlap of probability measures and the coarse-grained
nature of measurements will be present in a viable $\psi$-epistemic
theory.

Because of this, it is not obvious that there is some definite value
of the ratio $L(\mu,\nu) / \Tr{\Proj{\phi}\Proj{\psi}}$ below which
$\psi$-epistemic models ought to be discarded as implausible.
Further, it is also not obvious that this ratio should have a fixed
bound, independent of the choice of $\Proj{\psi}$ and $\Proj{\phi}$.
It is conceivable that the amount of indistinguishability explained by
overlap and the amount explained by the coarse-grained nature of
measurements could vary for different pairs of states, e.g.\ it might
vary with $\Tr{\Proj{\phi}\Proj{\psi}}$.  Therefore, it is interesting
to see what can be inferred from the results mentioned above
\cite{Maroney2012, Barrett2013, Leifer2014, Branciard2014} without
assuming Lipschitz continuity.

All of these papers consider some family of states $\mathcal{S}_n =
\{\Proj{\psi_j}\}_{j=1}^n$, i.e.\ there are sets of states with
increasing values of $n$.  Without Lipschitz continuity, the results
provide an upper bound on the ratio
\begin{equation}
R_n = \min_{\Proj{\psi}, \Proj{\phi} \in \mathcal{S}_n}
\frac{L(\mu_{\psi},\nu_{\phi})}{\Tr{\Proj{\phi}\Proj{\psi}}},
\end{equation}
for any choice of $\mu_{\psi} \in \Delta_{\psi}$, $\nu_{\phi} \in
\Delta_{\phi}$.  In other words, we can infer that there exists at
least one pair of states in the set for which the ratio is at least
this bad, but we can no longer infer that this holds for all states in
the Hilbert space as we could under Lipschitz continuity.  Given that
it is difficult to say just how much of the indistinguishability of
quantum states needs to be explained by overlap of probability
measures in a viable $\psi$-epistemic theory, this type of bound is
only really interesting if it shows that
$L(\mu_{\psi},\nu_{\phi})/\Tr{\Proj{\phi}\Proj{\psi}}$ must be close
to zero for a wide range of different pairs of states.  Since the
state families used in these constructions can be unitarily
transformed without changing the result, this means that
$L(\mu_{\psi},\nu_{\phi})/\Tr{\Proj{\phi}\Proj{\psi}}$ should be close
to zero for a wide range of values of $\Tr{\Proj{\phi}\Proj{\psi}}$.

The best bound so far obtained comes from Branciard's work
\cite{Branciard2014}, in which he shows that there exists a family of
states in any Hilbert space of dimension $\geq 4$, such that $R_n
\rightarrow 0$ as $n \rightarrow \infty$.  However, all of the
existing results share the feature that $\min_{\Proj{\psi},\Proj{\phi}
\in \mathcal{S}_n} \Tr{\Proj{\phi}\Proj{\psi}} \rightarrow 0$ as $n
\rightarrow \infty$ as well, so all we can really say without assuming
Lipschitz continuity is that the ratio is close to zero for pairs of
states that are almost distinguishable.  It is difficult to tell the
difference between states that are perfectly distinguishable and
states that are almost distinguishable in a practical experiment, so
it is arguable whether it matters that the $\psi$-epistemic
explanation of indistinguishability plays almost no role for such
states.  It is still an open possibility that the ratio might be much
higher for states with a larger inner product, so what is really
needed is to find families of states for which $R_n \rightarrow 0$ as
$n \rightarrow \infty$, but $\min_{\Proj{\psi},\Proj{\phi} \in
\mathcal{S}_n}\Tr{\Proj{\phi}\Proj{\psi}}$ takes on a wide range of
values.

\section{Never $\psi$-ontic models}

\label{Never}

To conclude, I would like to discuss one further strengthening of the
notion of a $\psi$-epistemic model, first introduced informally by
Montina \cite{Montina2012a}, about which very little is currently
known.  This is the notion of a never $\psi$-ontic model.

\begin{definition}
Let $\mathfrak{F} = \langle \Hilb, \mathcal{P}, \mathcal{M} \rangle$
be a PM fragment and let $\Theta = (\Lambda, \Sigma, \Delta, \Xi)$ be
an ontological model of it.  $\Theta$ is \emph{sometimes $\psi$-ontic}
if, for all pure states $\Proj{\psi} \in \mathcal{P}$, there exists
a $\Omega \in \Sigma$ such that $\mu(\Omega) > 0$ for some $\mu \in
\Delta_{\psi}$, but for all other pure states $\Proj{\phi} \in
\mathcal{P}$, $\Proj{\phi} \neq \Proj{\psi}$, every $\nu \in
\Delta_{\phi}$ has $\nu(\Omega) = 0$.  Otherwise the model is
\emph{never $\psi$-ontic}.
\end{definition}

Roughly speaking, in a sometimes $\psi$-ontic model, each pure state
has a special region of the ontic state space all to itself.  If you
know that the ontic state occupies this region then you can identify
the quantum state that was prepared with probability one.  However,
the ontic state need not always occupy such a region; hence the
terminology ``sometimes'' $\psi$-ontic.  In contrast, in a never
$\psi$-ontic model, every region of the ontic state space is shared
nontrivially by more than one quantum state.  Considerations about the
degree of overlap of pairs of states do not really bear on the
question of whether a never $\psi$-ontic model is possible, since
arbitrary overlaps may occur outside the special regions in a
sometimes $\psi$-ontic model.

The reason why the notion of a sometimes $\psi$-ontic model is
interesting is that all of the implications of $\psi$-ontology
discussed in \S\ref{Imp} can be derived from sometimes $\psi$-ontology
instead.  Whilst the fact that a maximally $\psi$-epistemic model is
impossible is enough to derive preparation contextuality and Bell's
Theorem, it does not imply excess baggage.  However, excess baggage
follows directly from sometimes $\psi$-ontology, since there must be
at least as many ontic states as there are pure quantum states if each
quantum state is to be assigned its own region of ontic state space.
Further, a sometimes $\psi$-ontic model cannot be maximally
$\psi$-epistemic, so we obtain all the implications of that as well.

\begin{theorem}
Let $\mathfrak{F} = \langle \Hilb, \mathcal{P}, \mathcal{M} \rangle$
be a PM fragment and let $\Theta = (\Lambda, \Sigma, \Delta, \Xi)$ be
an ontological model of it.  If $\Theta$ is sometimes $\psi$-ontic then
it is not maximally $\psi$-epistemic.
\end{theorem}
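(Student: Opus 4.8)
The plan is to argue by contradiction, closely paralleling the proof of Theorem~\ref{thm:ME:MEPE}. I work in the setting that is standard throughout this part and under which the statement is true: $\mathcal{P}$ is the set of all pure states on $\Hilb$ with $\dim\Hilb\ge 2$, $\mathcal{M}$ contains all measurements in orthonormal bases (so the standing assumption of \S\ref{ME} holds), and the model reproduces the quantum predictions. Some such ``richness'' hypothesis is genuinely needed — it is not a technicality. For the two-state fragment $\langle\mathbb{C}^2,\{\Proj{0},\Proj{+}\},\{\{\Proj{0},\Proj{1}\},\{\Proj{+},\Proj{-}\}\}\rangle$ one can build an ontological model on $\Lambda=\{a,o,b\}$ with $\Delta_{\Proj{0}}=\{\tfrac{1}{2}\delta_a+\tfrac{1}{2}\delta_o\}$, $\Delta_{\Proj{+}}=\{\tfrac{1}{2}\delta_o+\tfrac{1}{2}\delta_b\}$ and suitable deterministic response functions that reproduces the quantum predictions, is maximally $\psi$-epistemic, and is sometimes $\psi$-ontic (with $\{a\}$ and $\{b\}$ the private regions of $\Proj{0}$ and $\Proj{+}$); so without enough nearby states the theorem is false.

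First I would suppose $\Theta$ is both sometimes $\psi$-ontic and maximally $\psi$-epistemic and fix any pure state $\Proj{\psi}\in\mathcal{P}$. By sometimes $\psi$-onticity there are a set $\Omega_\psi\in\Sigma$ and a measure $\mu\in\Delta_\psi$ with $c:=\mu(\Omega_\psi)>0$, such that $\nu(\Omega_\psi)=0$ for every pure $\Proj{\phi}\neq\Proj{\psi}$ and every $\nu\in\Delta_\phi$. Next I would choose a pure state $\Proj{\phi}\in\mathcal{P}$, distinct from $\Proj{\psi}$, with $\Tr{\Proj{\phi}\Proj{\psi}}>1-c$ (possible since $c>0$), pick $M\in\mathcal{M}$ with $\Proj{\phi}\in M$, and pick any $\nu\in\Delta_\phi$ and $\text{Pr}\in\Xi_M$. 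Since $\Proj{\phi}\neq\Proj{\psi}$ we have $\nu(\Omega_\psi)=0$, so $\Omega:=\Lambda\setminus\Omega_\psi$ is a set of $\nu$-measure one. Applying the defining identity of maximal $\psi$-epistemicity to the pair $\Proj{\psi},\Proj{\phi}$ with this $\mu,\nu,M,\text{Pr}$ and this $\Omega$ gives
\begin{equation*}
\int_{\Lambda\setminus\Omega_\psi}\text{Pr}(\phi|M,\lambda)\,d\mu(\lambda)=\int_{\Lambda}\text{Pr}(\phi|M,\lambda)\,d\mu(\lambda)=\Tr{\Proj{\phi}\Proj{\psi}},
\end{equation*}
the last equality because the model reproduces the quantum predictions and $\mu\in\Delta_\psi$. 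But $\text{Pr}(\phi|M,\lambda)\le 1$ for all $\lambda$, so the left-hand side is at most $\mu(\Lambda\setminus\Omega_\psi)=1-c$. Hence $\Tr{\Proj{\phi}\Proj{\psi}}\le 1-c$, contradicting the choice $\Tr{\Proj{\phi}\Proj{\psi}}>1-c$.

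The computation is short once the framework is in place; the step that carries all the weight is the selection of the auxiliary state $\Proj{\phi}$, which must be genuinely different from $\Proj{\psi}$ yet have quantum overlap with it exceeding $1-c$. This is precisely where the richness of $\mathcal{P}$ is used, and, as the two-state example shows, that is the real content of the hypothesis rather than a formality. The underlying intuition is that sometimes $\psi$-onticity sequesters a fraction $c>0$ of the mass of $\mu$ in a region that no other state touches, while maximal $\psi$-epistemicity insists that the \emph{entire} $\Proj{\phi}$-response weight carried by $\mu$ — which equals $\Tr{\Proj{\phi}\Proj{\psi}}$ and can be pushed above $1-c$ by taking $\Proj{\phi}$ close to $\Proj{\psi}$ — be accounted for outside that sequestered region, which is impossible. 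A routine verification that the two-state construction above does satisfy all three properties can be appended to justify the necessity of the richness hypothesis, but I would omit it from the main line of the proof.
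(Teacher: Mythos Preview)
Your proof is correct, but it takes a genuinely different route from the paper's. The paper does not pick a single state close to $\Proj{\psi}$; instead it fixes the private region $\Omega'$ of $\Proj{\psi}$, takes \emph{any} measurement $M=\{\Proj{\phi_j}\}$ in an orthonormal basis with each $\Proj{\phi_j}\in\mathcal{P}$ and $\Proj{\phi_j}\neq\Proj{\psi}$, notes that the single set $\Omega=\Lambda\setminus\Omega'$ has $\nu_j(\Omega)=1$ for every $\nu_j\in\Delta_{\phi_j}$, applies the maximal-$\psi$-epistemic identity once for each outcome $\Proj{\phi_j}$, and \emph{sums over $j$}. Since $\sum_j\text{Pr}(\phi_j|M,\lambda)=1$, this gives $\mu(\Omega)=\mu(\Lambda)=1$, contradicting $\mu(\Omega')>0$ with $\Omega\cap\Omega'=\emptyset$.

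The trade-off is this: the paper's argument never invokes the Born rule---it works for any ontological model, not only one that reproduces the quantum predictions---and it makes no use of any particular value of $\Tr{\Proj{\phi}\Proj{\psi}}$; its richness requirement is the existence of a full orthonormal basis in $\mathcal{P}$ avoiding $\Proj{\psi}$. Your argument needs only a \emph{single} auxiliary state, but must choose it with $\Tr{\Proj{\phi}\Proj{\psi}}>1-c$ and must assume the model reproduces the quantum predictions to identify the right-hand integral with that inner product. Your observation that some richness hypothesis is genuinely needed, together with the two-state counterexample, is a valid and useful addition that the paper leaves implicit.
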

\begin{proof}
Assume that $\Theta$ is maximally $\psi$-epistemic.  Then, for all
$\Proj{\psi}, \Proj{\phi} \in \mathcal{P}$, all $\mu \in
\Delta_{\psi}$, all $\nu \in \Delta_{\phi}$, and all $\Omega \in
\Sigma$ such that $\nu(\Omega) = 1$,
\begin{equation}
\label{eq:Never}
\int_{\Omega} \text{Pr}(E|M,\lambda)d\mu(\lambda) = \int_{\Lambda}
\text{Pr}(E|M,\lambda)d\mu(\lambda),
\end{equation}
for all $M \in \mathcal{M}$, $\text{Pr} \in \Xi_M$.

Let $\Omega^{\prime}$ be the region uniquely assigned to
$\Proj{\psi}$ so that $\nu(\Omega^{\prime}) = 0$ and
$\mu(\Omega^{\prime}) > 0$.  Then $\Omega$ can be chosen such that
$\Omega \cap \Omega^{\prime} = \emptyset$.  Summing
Eq.~\eqref{eq:Never} over $E$ then gives $\mu(\Omega) = \mu(\Lambda)
= 1$.  However, $\mu(\Omega)$ cannot be equal to one because
$\mu(\Omega^{\prime}) > 0$ and $\Omega$ and $\Omega^{\prime}$ are
disjoint, so this would make $\mu(\Lambda) > 1$.  Hence, there is a
contradiction, so $\Theta$ cannot be maximally $\psi$-epistemic.
\end{proof}

The Kochen--Specker model is never $\psi$-ontic, but all the existing
$\psi$-epistemic models for Hilbert spaces of dimension $\geq 3$ are
sometimes $\psi$-ontic, so whether never $\psi$-ontic models exist in
all dimensions is an open question.  Since it is impossible to prove
$\psi$-ontology without auxiliary assumptions, sometimes
$\psi$-ontology is perhaps the strongest result about the reality of
the quantum state that we could hope to prove without such
assumptions.  For this reason, I consider this to be the most
important open question in this area.

\section{Discussion and conclusions}

\label{Conc}

\subsection{Summary}

This review article was divided into three parts.  In the first part,
I explained the distinction between ontic and epistemic
interpretations of the quantum state, outlined the pre-existing
arguments for epistemic and ontic interpretations, and explained how
many existing no-go theorems would follow from proving that the
quantum state is ontic.  The aim of this part was to convince you that
the $\psi$-ontic/epistemic distinction is interesting, that the
question is unresolved by qualitative arguments, and that it is the
sort of issue that should be addressed with the conceptual rigor that
Bell brought to nonlocality.

The second part discussed three existing $\psi$-ontology theorems: the
Pusey--Barrett--Rudolph Theorem, Hardy's Theorem, and the Colbeck--Renner Theorem.  Each of
these results involve auxiliary assumptions, of varying degrees of
reasonableness.  Hardy's ontic indifference assumption is not really
appropriate for a $\psi$-epistemic theory, but it does allow the flaw
in the argument for the reality of the wavefunction from interference
to be exposed more clearly.  Colbeck and Renner's assumption of
parameter independence is doubtful in light of Bell's Theorem.  Whilst
violations of Bell inequalities can arise from the failure of outcome
independence alone, many viable realist interpretations of quantum
theory, such as de Broglie--Bohm theory, violate parameter independence
instead, and one really wants the scope of a no-go theorem to include
as many viable realist interpretations as possible.  In my view, the
preparation independence postulate of the Pusey--Barrett--Rudolph Theorem is the best of
the bunch.  It is not completely unassailable, since weakening it
slightly to allow for genuinely nonlocal degrees of freedom does
potentially allow for a viable $\psi$-epistemic interpretation.
Nevertheless, the PIP is satisfied in theories like de Broglie--Bohm,
so, unlike the Colbeck--Renner Theorem, the Pusey--Barrett--Rudolph Theorem does explain
why the wavefunction must be real in theories of that type.

Without auxiliary assumptions, $\psi$-epistemic models do exist, so
the third part of this review discussed strengthenings of the notion
of a $\psi$-epistemic model that may be needed in order to make the
$\psi$-epistemic explanations of phenomena like the
indistinguishability of quantum states and the no cloning theorem go
through.  Whilst the case is not yet watertight, I expect that future
work on bounding the ratio of the overlap of probability measures to
the quantum probabilities of the states they represent will eventually
make $\psi$-epistemic explanations look implausible within the
ontological models framework.

\subsection{Open questions}

Part of the aim of this review was to provide the necessary background
for researchers who would like to begin working in this area.  In this
regard, several open questions were raised in the main text, and
solving some of them would help to put the non viability of
$\psi$-epistemic interpretations on a more secure footing.  For easy
reference, the following list collects all the open problems together
in one place, and adds a couple of new ones.  The section numbers
indicate where in the review you can find a more detailed description.

\begin{enumerate}
\item \S\ref{Crit:SF1}: Can a deterministic $\psi$-ontic model always
be converted into a $\psi$-epistemic one by finding regions of the
ontic state space associated with different pure states that make
the exact same predictions?  One of Schlosshauer and Fine's
criticisms of the Pusey--Barrett--Rudolph Theorem was based on the idea that this can
always be done.  It is not true for nondeterministic theories, but
it may be true for deterministic theories.
\item \S\ref{Pair}: Can reasonable symmetry requirements be used to
rule out pairwise $\psi$-epistemic models without restrictive
assumptions on the nature of the ontic state space?  The theorem of
Aaronson et.\ al.\ only applies if the ontic state space is
projective Hilbert space or the set of unitary operators, which is a
very restrictive class of models.
\item \S\ref{Lips}: Does the overlap of probability measures
representing nonorthogonal states have to tend to be close to zero
for pairs of states with a fixed inner product?  The existing
results only imply this for families of states that become
orthogonal in the $n \rightarrow \infty$ limit, so this only implies
that the $\psi$-epistemic explanation of indistinguishability is
implausible for states that are almost distinguishable.
\item \S\ref{Never}: Does a never $\psi$-ontic model exist for the
fragment consisting of all pure states and projective measurements
in $\mathbb{C}^d$ for $d \geq 3$?  In my view, this is the most
important open question, as proving that all models must be
sometimes $\psi$-ontic has the same implications as $\psi$-ontology.
\item Are better $\psi$-ontology results possible with POVMs?  The
known $\psi$-epistemic models, including the Kochen--Specker model
discussed in \S\ref{EOM} and the ABCL model discussed in
\S\ref{NPIP}, only reproduce quantum theory for measurements in
orthonormal bases.  Morris has shown that the Kochen--Specker model
cannot be extended to all POVMs \cite{Morris2009}.  Therefore, it
is still possible that there are no $\psi$-epistemic models that
reproduce the quantum predictions for all POVMs, and it may be
possible to prove this without auxiliary assumptions.  Further, most
of the proofs of $\psi$-ontology results presented here are based on
measurements in orthonormal bases, so it is possible that simpler
proofs and/or better overlap bounds could be derived using POVMs.
This is the case for the Pusey--Barrett--Rudolph theorem, as the simplified proof
presented here does involve POVMs.
\item Do $\psi$-ontology results have applications in quantum
information theory, beyond those of the implications discussed in
\S\ref{Imp}?  The ontological models framework is already regarded
as implausible by some physicists, but it is worth investigating
anyway because it provides a way of simulating quantum experiments
using classical resources.  Results derived in this framework often
go on to have applications in quantum information theory.  In this
regard, Montina proved an upper bound on the classical communication
complexity of simulating the identity channel for a qubit using the
Kochen--Specker model, which is maximally $\psi$-epistemic
\cite{Montina2012}.  It is possible that the nonexistence of such
models in higher dimensions could be used to prove lower bounds on
the communication complexity of the identity channel in higher
dimensions.  Conversely, Montina et.\ al.'s recent results on the
communication complexity of simulating the identity channel
\cite{Montina2011a, Montina2013, Montina2014, Montina2014a} might be
used to derive $\psi$-ontology results.  Some work has also been
done on deriving quantum advantages from antidistinguishability
\cite{Bandyopadhyay2013, Perry2014}.
\end{enumerate}

\subsection{Experiments}

\label{Conc:Exp}

Probably the most important issue not discussed so far in this review
is the question of whether the reality of the quantum state can be
established experimentally.  When it comes to experiments in the
foundations of quantum theory, tests of Bell's Theorem are somewhat of
a gold standard.  Violation of a Bell inequality rules out a wide
class of local realistic theories independently of the details of
quantum theory.  Consequently, the experimenter does not have to
assume much about how their experiment is represented within quantum
theory.  From their observed statistics they simply get a yes/no
answer as to whether local hidden variable theories are viable, modulo
the known loopholes.  In modern parlance, tests of Bell's Theorem are
\emph{device independent} \cite{Brunner2013}.

In contrast, a test of the reality of the quantum state would not be
device independent simply because the ``quantum state'' is the thing
we are testing the reality of, and that is a theory dependent notion.
Consequently, one has to assume that our quantum theoretical
description of the way that our preparation devices work is more or
less accurate, in the sense that they are approximately preparing the
quantum states the theory says they are, in order to test the existing
$\psi$-ontology results.  Therefore, it is desirable to have a more
theory independent notion of whether a given set of observed
statistics imply that the ``probabilistic state'', i.e.\ some
theory-independent generalization of the quantum state, must be real.
It is not obvious whether this can be done, but if it can then
experimental tests of $\psi$-ontology results would become much more
interesting.

Of course, one can still perform non device independent experimental
tests.  This amounts to trying to prepare the states, perform the
transformations, and make the measurements involved in a
$\psi$-ontology result and checking that the quantum predictions are
approximately upheld.  Due to experimental error, the agreement will
never be exact, but one can bound the overlap between probability
measures representing quantum states instead of showing that it must
be exactly zero.  For the special case of the Pusey--Barrett--Rudolph Theorem given in
Example~\ref{exa:Main:Pusey--Barrett--Rudolph}, this has been done using two ions in an
ion trap \cite{Nigg2012}.  However, the experimental result only shows
that the overlap in probability measures must be a little smaller than
the quantum probability, and not that it must be close to zero, and of
course, this only applies if one assumes the PIP.

A better approach to deriving overlap bounds is to test the results
described in \S\ref{Lips}.  For finite $n$, these can provide overlap
bounds for specific sets of states without auxiliary assumptions.
Based on Branciard's results \cite{Branciard2014}, a recent experiment
with single photons has given a bound on the ratio of overlap to the
quantum probability of $0.690 \pm 0.001$ for a set of $10$ states in a
Hilbert space of dimension $4$ \cite{Ringbauer2014}.  This is still
quite far from establishing the reality of the quantum state, for
which one would want to test many pairs of quantum states with a
variety of different inner products, and to show that the overlap
ratios are consistently close to zero.  Recall that, due to the fact
that part of the indistinguishability of quantum states may be
explained by the coarse-grained nature of measurements in an
ontological model, it is difficult to attach significance to overlap
ratios that are significantly larger than zero.  Further theoretical
work is required to derive feasible experiments that can achieve much
lower ratios.

Finally, the original version of the Colbeck--Renner Theorem, i.e.\ the
one that aimed to rule out ontological models with greater predictive
power than quantum theory, has been tested using a quantum optical
system \cite{Stuart2012} and there has also been an optical test of
the Patra--Pironio--Massar Theorem on PPM-continuous models
\cite{Patra2013}.

\subsection{Future directions}

Assuming that future results drive the final nails into the coffin of
$\psi$-epistemic explanations within the ontological models framework,
the final question I want to address is where to go next.  One option
is to embrace the reality of the wavefunction by adopting one of the
existing realist interpretations that fit into the ontological models
framework, e.g.\ de Broglie--Bohm theory, spontaneous collapse theories
or modal interpretations.  Another is to adopt a neo-Copenhagen
interpretation.

The first option is unappealing because adopting one of these realist
interpretations opens up an explanatory gap.  Namely, given that a
$\psi$-epistemic interpretation would provide compelling explanations
of a whole variety of quantum phenomena, it is puzzling that the
quantum state should nevertheless be ontic.  Of course, the existing
no-go results, such as Bell's Theorem, also imply explanatory gaps,
e.g.\ if a theory is explicitly nonlocal then why can this not be used
to send a superluminal signal?  For this reason, those inclined to the
$\psi$-epistemic view are likely to have rejected these
interpretations already on the basis of these other gaps.

Being neo-Copenhagen is always an option, but the merits of such an
move depend on the degree to which one believes that realism is
desirable.  This is not the place to get into the debate between
realism and antirealism and whether neo-Copenhagen views are
compatible with some weakened notion of realism.  Suffice to say that
the viability of these interpretations turns on issues that are far
deeper than the reality of the wavefunction.  For my part, I think
that if one denies the existence of an observer-independent reality
then it becomes very difficult to maintain a clear notion of
explanation at all.  Closing explanatory gaps by denying the need for
any explanation at all does not seem that appealing to me.

The only remaining option then is to adopt a realist interpretation
that does not fit in to the ontological models framework.  There are
several possibilities, most of them highly speculative.

\begin{itemize}
\item Many-worlds: The many-worlds interpretation \cite{Everett1957,
DeWitt1973, Wallace2012} is not contained within the ontological
models framework because the latter assumes that measurements yield
a single unique outcome.  Many-worlds is a commonplace retreat for
realists who want to avoid introducing nonlocality in the face of
Bell's Theorem.  However, the conventional many-worlds
interpretation is not a retreat for $\psi$-epistemicists because it
is based on the idea that the quantum state is a literal description
of reality.  Nevertheless, since $\psi$-ontology theorems do not
apply to theories that involve many-worlds, it is possible that
there is a viable $\psi$-epistemic interpretation that involves
them.  This is doubtless an unappealing option to both existing
many-worlds advocates and $\psi$-epistemicists.  For many-worlds
advocates, the reason to take multiple worlds seriously is because
they are encoded in the branching structure of the wavefunction.  If
you take away the reality of the wavefunction then you take away
their reason for believing in them.  Similarly, this is the only
real reason for taking many worlds seriously, so someone who does
not believe in the reality of the wavefunction is unlikely to have
found the existence of many-worlds plausible in the first place.
Nevertheless, it is logically possible that the universe could be
described by some structure that does not imply a unique
wavefunction, but does support the existence of many-worlds.
\item Histories approaches: Histories interpretations, such as
consistent/decoherent histories \cite{Griffiths2002} and Sorkin's
co-event formalism \cite{Sorkin2007a, Sorkin2007} are based on
taking the space-time picture provided by the path integral
seriously.  Consistent histories does not fit into the ontological
models framework as it decrees that not all sets of histories can be
assigned a probability and there may not be a unique description of
the universe in terms of a single history.  Griffiths argues that
the consistent histories interpretation is appropriately
$\psi$-epistemic \cite{Griffiths2013}.  In contrast, the co-event
formalism is based on a modified logic and probability theory.  I
explained my doubts about realist interpretations of exotic
probability theories in \S\ref{Crit:Copenhagen} and, in any case,
Wallden \cite{Wallden2013} provides evidence that a result analogous
to the Pusey--Barrett--Rudolph Theorem may hold in the coevent formalism.
\item Retrocausality: In the ontological models framework, it is
assumed that the probability measure representing a quantum state is
independent of the choice of future measurement setting.  If this
were not the case then the $\psi$-epistemic interpretation of
quantum phenomena could be maintained by having the measures
corresponding to antidistinguishable states have no overlap when the
antidistinguishing measurement is made, but nonzero overlap when
other measurements are made.  One way that dependence on the
measurement setting may occur is if there is a direct causal
influence, traveling backwards in time, from the measurement to the
preparation.  Several authors have argued that there are independent
reasons for adopting a retrocausal approach to quantum theory
\cite{CostadeBeauregard1976, Price2012, Price2013}, not least
because it might allow an appropriately local resolution to the
dilemma posed by Bell's Theorem.  The transactional interpretation
\cite{Cramer1986, Kastner2013} is explicitly retrocausal and the
two-state vector formalism \cite{Aharonov2002} can be read in a
retrocausal way.  However, both of these theories posit an ontic
wavefunction.  If we are to maintain $\psi$-epistemic explanations
then we instead need to look for retrocausal ontological models that
posit a deeper reality underlying quantum theory that does not
include the quantum state.
\item Relationalism: The ontological models framework assumes that
systems have their own intrinsic properties, encompassed by the
ontological state $\lambda$.  Relationalism posits that systems do
not have intrinsic properties, but only properties relative to other
systems.  The usual analogy is with the concept of position.  One
cannot talk about the position of a particle without setting down
some coordinate system, and this implicitly means that we are
measuring position with respect to some other physical system that
provides a reference frame.  There is an obvious commonality with
Everett's relative state approach \cite{Everett1957}, except that we
want a theory of this type that is $\psi$-epistemic.  Rovelli's
relational quantum mechanics \cite{Rovelli1996} is a theory of this
type in which the wavefunction is supposed to be epistemic, but he
defines relational properties in terms of the global wavefunction so
it is not clear how they are supposed to be determined if the
wavefunction is not real.
\end{itemize}

In conclusion, I think we should try to find a way of understanding
quantum theory that closes as many of the explanatory gaps opened by
no-go theorems as possible.  This is because an interpretation that
merely accommodates the known facts about quantum theory, rather than
explaining them, is unlikely to yield principles that can reliably
guide us towards future physical theories.  Since $\psi$-ontology
implies many of the existing no-go theorems, the gap opened by
$\psi$-ontology results should be taken at least as seriously as the
others.  This means that we should investigate the speculative roads
less travelled described above, in addition to others that we have not
thought of yet.  The chances that any one of them will bear fruit may
be slim, but the rewards if they do will be worth the effort.

\section*{Acknowledgements}

I would like to thank Jonathan Barrett, Sarah Croke, Joseph Emerson,
Lucien Hardy, Teiko Heinosaari, Ravi Kunjwal, Klaas Landsman, Peter
Lewis, Owen Maroney, Matt Pusey, Terry Rudolph, Max Schlosshauer and
Rob Spekkens for useful discussions, and Paul Merriam for a careful
proof reading.  This research was supported in part by the FQXi Large
Grant ``Time and the Structure of Quantum Theory'' and by Perimeter
Institute for Theoretical Physics.  Research at Perimeter Institute is
supported by the Government of Canada through Industry Canada and by
the Province of Ontario through the Ministry of Research and
Innovation.

\section*{Appendix}
\appendix

\section{The $\psi$-ontic/epistemic distinction and objective chance}

\label{App:Chance}

In this appendix, the question of whether a distinction equivalent to
the Bayesian distinction between ontic and epistemic states can be
made in theories that involve objective chance is addressed.

Firstly, note that whether or not this issue is relevant to the
reality of the wavefunction depends on exactly which probabilities are
deemed to be objective chances.  Most advocates of objective chance
support hybrid theories of probability in which there are epistemic
probabilities in addition to objective chances.  Objective chances are
invoked to explain lawlike probabilistic regularities, as they occur
in our theories of physics for example, but epistemic probabilities,
usually cashed out in Bayesian or frequentist terms, also exist in
order to make contact with statistics.  In the ontological models
framework for a PM fragment, probabilities enter into the description
in two places.  Firstly, quantum states correspond to probability
measures over the ontic state space and, secondly, measurements
correspond to conditional probability distributions over their
outcomes.  A fairly natural position would be to assert that the
probability measures corresponding to states are epistemic, whereas
the conditional probabilities associated to measurements are
objective.  The former represent our uncertainty about the true ontic
state, whereas the latter represent our uncertainty about the response
of the measurement device.  On a conventional view, it is the response
of the measurement device that represents a ``genuinely stochastic''
event in quantum theory, so it would make sense to only place the
objective chances here where they seem to be most needed.

Regardless of whether you agree with this account, the main point is
that the introduction of objective chances into quantum theory only
presents a problem for the ontic/epistemic distinction if the
probability measures representing states are deemed to be objective
chances.  This is because the definition of $\psi$-ontic and
$\psi$-epistemic models is given entirely in terms of the overlaps of
these probability measures and does not involve the measurement device
probabilities at all.  Therefore, if you are happy to put objective
chances only in the measuring device, then the introduction of such
chances does not present any new problem for the ontic/epistemic
distinction.  It is only if you think that the probability measures
representing states are objective chances that the distinction needs
to be reexamined.  Therefore, I assume that this is what is meant by
introducing objective chances into quantum theory in the remainder of
this appendix.

As in the frequency theory, we are interested in whether objective
chances can be viewed as intrinsic properties of individual systems,
or whether they must be defined with reference to the wider world
beyond the individual system, e.g.\ in terms of an ensemble of similar
systems or the conditions prevailing in the environment of the system.
An intrinsic property of a system is something like the charge of an
electron or the hidden variable state of a quantum system, if the
latter are presumed to exist.  In a realist approach to physics, these
correspond to ontological features of the system and they cannot be
changed just by changing our description of the world in which the
system is embedded.  Changing them requires an intervention in the
system itself.

Deciding whether or not objective chances are intrinsic properties is
difficult because there is no universally agreed upon theory of
objective chance.  Fortunately, the question only depends on a couple
of broad features of the theory.

Firstly, some authors posit that objective chance is compatible with
determinism so that, for example, the probabilities involved in
classical statistical mechanics can be viewed as objective chances.
Others think that objective chances only make sense if there is a
genuine stochasticity in nature, with quantum theory providing the
prime example of a theory that involves such genuine chance.  It
should be clear that objective chance cannot be intrinsic in any
theory that is compatible with determinism.  This is because, in a
deterministic theory, the intrinsic properties of an isolated system
determine its future uniquely, so they could only ever give rise to
objective chances of $0$ or $1$.  Therefore, to make such a theory
work, one has to refer either to an ensemble of systems, as in the
frequency theory, or to the conditions surrounding the system.  As an
example of the latter, one can imagine a specification of the way in
which a coin should be tossed such as ``a strong flip between thumb
and forefinger'' that is specific enough to license the assignment of
a fixed objective chance, but vague enough that it does not determine
the outcome of the toss uniquely.  Of course, it is questionable
whether such a notion makes sense, but the point is that, if one does
take this view, then probabilities are not intrinsic properties of
systems.  Either they refer to ensembles or they refer to the
conditions of interaction between a system and its environment, and
specifying these in different levels of detail would lead to
different probability assignments, just as in the frequency case.

However, many philosophers of objective chance, including Popper
\cite{Popper2011} and Lewis \cite{Lewis2011}, take the view that
objective chances only make sense in a genuinely stochastic universe.
In this case, one can require that the objective chances of an
experimental outcome are only the same if the intrinsic properties of
the system before the experiment are identical in all relevant
details.  The prime example of this would be to say that the objective
chances of obtaining a given outcome in a quantum measurement on two
different systems are the same iff the systems are described by the
same quantum state $\Ket{\psi}$ prior to the experiment.  This
presents more of a problem for the distinction we are trying to make,
since ostensibly the state $\Ket{\psi}$ only refers to the system
itself.

However, it should be noted that the quantum state example rests on
questionable assumptions about the interpretation of quantum theory.
It assumes that $\Ket{\psi}$ can be regarded as an intrinsic property
of the system and, further, that $\Ket{\psi}$ is a complete
description of the system.  In an operationalist approach, the
assertion that $\Ket{\psi}$ is an intrinsic property of a system is
denied.  Instead, it is a description of those facts about the device
that prepared the system that are relevant for predicting future
measurement outcomes.  In other words, it is a condensed description
of a set of knob settings, meter readings, etc.\ that refer to a piece
of experimental apparatus external to the system.  These are not
intrinsic properties of the system, so the relevant distinction is
maintained.

Of course, both Popper and Lewis intend a more realist interpretation
of quantum theory.  However, even if quantum states are intrinsic
properties of quantum systems, they need not be complete.  For
example, in de Broglie--Bohm theory one also needs to specify the
positions of particles.  The theory is deterministic when both the
quantum state and particle positions are specified, so we are back in
the position of having to define objective chances in a deterministic
theory, in which case they are not intrinsic.  In de Broglie--Bohm
theory, the statement that a system is described by a state
$\Ket{\psi}$ really means that it is part of an ensemble of systems
described by the quantum state $\Ket{\psi} \otimes \Ket{\psi} \otimes
\ldots$ and in which the particle positions are distributed according
to $\left | \psi(x) \right |^2$.  Since there is freedom to look at
subensembles where the positions are not distributed according to
$\left | \psi(x) \right |^2$, and these would allow prediction with
greater accuracy than quantum theory, systems described by the same
quantum state do not have the same intrinsic properties in all
relevant detail.

Nevertheless, although quantum theory is a prime motivation for
objective chances, theories of objective chance are usually
independent of the details of physics.  Thus, we can ignore the
quantum motivation and just look at the theories of chance actually
proposed to determine whether chances are intrinsic.  In this regard,
an important distinction is whether or not a theory of chance is
\emph{Humean}.  Roughly speaking, a Humean theory is one in which the
chances are defined in terms of the facts on the ground, i.e.\ facts
about the universe that could form part of our experience of it (see
\cite{Eagle2011} for a more precise definition).  This means that
objective chances cannot be defined in terms of things like
$\Ket{\psi}$, which do not form part of our experience.  Lewis
\cite{Lewis2011} was a proponent of Humean theories of chance and his
favoured \emph{best system} theory is really just a modification of
frequentism.  More specifically, he thought that chances were
specified via a tradeoff between accurately capturing the relative
frequencies and simplicity.  Thus, if a large number of coin flips is
performed several different times, and on each occasion the relative
frequency of heads obtained was close to, but not exactly, $1/2$,
varying in a seemingly random way, Lewis would say that this licenses
assigning an objective chance of $1/2$ to the coin flips.  This
differs from the frequency theory in that it does not demand that
probabilities are exactly the relative frequencies in some real or
hypothetical, finite or infinite, sequence of experiments.  Of course,
defining an objective tradeoff between predictability and simplicity
is difficult, but for present purposes all that matters is that again
objective chances are not intrinsic properties of systems, but are
instead defined with respect to an ensemble.  Likewise, I think that
any Humean theory of chance makes chance a non-intrinsic property
because chances would have to be defined in terms of observable facts,
and I do not see how this could be done without referring to ensembles
or to the surroundings of the system.

Therefore, the only chance theories that really pose a problem for the
distinction we are trying to make are those that are non-Humean, i.e.\
they posit that chances do not supervene on facts that could form part
of our experience.  Of these, the most prominent is Popper's
propensity theory \cite{Popper2011}.  Propensities are dispositional
properties, i.e.\ a system has a disposition to produce a certain
outcome in an experiment.  Propensity theories are broadly classified
as either long-run propensity theories or single-case propensity
theories.  In a long-run theory, a propensity is read as a disposition
to produce a certain relative frequency of outcomes in the long run.
Since this refers to an ensemble, again there is no problem
distinguishing this type of property from intrinsic properties of
individual systems.  On the other hand, single case propensities are
read as a disposition to produce a certain outcome in a single
experiment.  These are much more problematic for the distinction we
are trying to make, as they do not refer to entities other than the
individual system.  Thus, in a single case propensity theory, it may
not be possible to make a clean distinction that is analogous to the
Bayesian distinction between ontic and epistemic states.

Before concluding, note that many philosophers take a more laissez
faire approach to objective chances.  Whilst they believe that
objective chances exist, they do not commit to a specific theory and
are instead content to specify some rules that they must obey, such as
Lewis' principal principle \cite{Lewis2011}.  This is a perfectly
reasonable attitude to take, but it is unreasonable to expect that the
question of whether chances are intrinsic properties can be settled at
this level of generality.  Some further features of objective chance
would need to be specified, such as whether or not they ought to be
Humean.

In summary, most theories of objective chance seem to admit a
distinction equivalent to the Bayesian distinction between ontic and
epistemic states in that chances refer to non intrinsic facts about a
system.  The problematic theories are single-case, such as the
single-case propensity theory.  However, it seems a bit of a stretch
to adopt this theory in order to avoid investigating the question of
whether quantum states are ontic or epistemic, particularly since the
interpretation of quantum theory is a prime motivation for introducing
objective chances in the first place.

\section{The Kochen--Specker model}

\label{App:BMKS}

This appendix proves that the Kochen--Specker model of
Example~\ref{exa:EOM:KS} reproduces the quantum predictions and is
maximally $\psi$-epistemic.  Recall that, in the Kochen--Specker model,
the quantum state $\Proj{\psi}$ is represented by a unique probability
measure
\begin{equation}
\mu(\Omega) = \int_{\Omega} p(\vec{\lambda}) \sin \vartheta
d\vartheta d\varphi,
\end{equation}
over the Bloch sphere, where the density $p$ is given by
\begin{equation}
p(\vec{\lambda}) = \frac{1}{\pi} H \left ( \vec{\psi} \cdot
\vec{\lambda} \right ) \vec{\psi} \cdot \vec{\lambda},
\end{equation}
and $H$ is the Heaviside step function.  The conditional probability
distribution for a measurement $M = \left \{ \Proj{\phi},
\Proj{\phi^{\perp}} \right \}$ is given by
\begin{align}
\text{Pr}(\phi|M, \vec{\lambda}) & = H (\vec{\phi} \cdot
\vec{\lambda}) \\
\text{Pr}(\phi^{\perp}|M, \vec{\lambda}) & = 1 - \text{Pr}(\phi |M,
\vec{\lambda}).
\end{align}

To prove that this reproduces the quantum predictions, we need to show
that, for any pair of states $\Proj{\psi}$ and $\Proj{\phi}$,
\begin{equation}
\label{eq:BMKS:reprod}
\int_{\Lambda} \text{Pr}(\phi|M,\vec{\lambda}) d\mu(\vec{\lambda}) =
\Tr{\Proj{\phi}\Proj{\psi}}.
\end{equation}
The corresponding equation for $\Proj{\phi^{\perp}}$ will then be
automatically satisfied because
\begin{equation}
\int_{\Lambda} \text{Pr}(\phi^{\perp}|M,\vec{\lambda})
d\mu(\vec{\lambda}) = 1 -  \int_{\Lambda}
\text{Pr}(\phi|M,\vec{\lambda}) d\mu(\vec{\lambda}).
\end{equation}

To prove Eq.~\eqref{eq:BMKS:reprod}, it is convenient to choose a
parameterization of the Bloch sphere such that both $\vec{\psi}$ and
$\vec{\phi}$ lie on the equator.  We can further choose $\vec{\psi}$
to point along the $x$ axis so that $\vec{\psi} = (1,0,0)$ and
$\vec{\phi} = (\cos \varphi_{\phi}, \sin \varphi_{\phi}, 0)$ for some
angle $-\pi < \varphi_{\phi} \leq \pi$.  Using
Eq.~\eqref{eq:EOM:Bloch}, this means that the right hand side of
Eq.~\eqref{eq:BMKS:reprod} is
\begin{equation}
\Tr{\Proj{\phi}\Proj{\psi}} = \left | \BraKet{\phi}{\psi} \right
|^2 = \frac{1}{2}\left ( 1 + \cos(\varphi_{\phi}) \right ).
\end{equation}

Now, expanding the left hand side of Eq.~\eqref{eq:BMKS:reprod} gives
\begin{align}
\label{eq:BMKS:int}
&\int_{\Lambda} \text{Pr}(\phi|M,\vec{\lambda}) d\mu(\vec{\lambda}) \nonumber\\
&\quad\quad=
\frac{1}{\pi} \int_{\Lambda}  H (\vec{\phi} \cdot \vec{\lambda})
H \left ( \vec{\psi} \cdot \vec{\lambda} \right )
\vec{\psi} \cdot \vec{\lambda} \sin \vartheta d\vartheta d\varphi.
\end{align}
Since $\vec{\lambda} = (\sin \vartheta \cos \varphi, \sin \vartheta \sin
\varphi, \cos \vartheta)$, we have
\begin{align}
\vec{\psi} \cdot \vec{\lambda} & = \sin \vartheta \cos \varphi \\
\vec{\phi} \cdot \vec{\lambda} & = \sin \vartheta \cos \varphi \cos
\varphi_{\phi} + \sin \vartheta \sin \varphi \sin \varphi_{\phi} \\
& = \sin \vartheta \cos \left ( \varphi - \varphi_{\phi} \right ).
\end{align}
Due to he Heaviside step functions, we only need to integrate over the
region where both of these are positive.  This is the intersection of
$-\frac{\pi}{2} < \varphi < \frac{\pi}{2}$ and $-\frac{\pi}{2} +
\varphi_{\phi} < \varphi < \frac{\pi}{2} + \varphi_{\phi}$.  When
$\varphi_{\phi}$ is positive this is the interval $-\frac{\pi}{2} +
\varphi_{\phi} < \varphi < \frac{\pi}{2}$ and when $\varphi_{\phi}$ is
negative this is the interval $-\frac{\pi}{2} < \vartheta <
\frac{\pi}{2} + \varphi_{\phi}$.  Consider first the case where
$\varphi_{\phi}$ is positive.  Then Eq.~\eqref{eq:BMKS:int} reduces to
\begin{align}
\int_{\Lambda} \text{Pr}(\phi|M,\vec{\lambda}) d\mu(\vec{\lambda}) &
= \int_0^{\pi} \sin^2 \vartheta d \vartheta \int_{-\frac{\pi}{2} +
\varphi_{\phi}}^{\frac{\pi}{2}} \cos \varphi d \varphi  \\
& = \frac{1}{2} \left [ \sin \varphi \right ]_{-\frac{\pi}{2} +
\varphi_{\phi}}^{\frac{\pi}{2}} \\
& = \frac{1}{2} \left ( 1 + \sin \left ( \frac{\pi}{2} -
\varphi_{\phi}\right )\right ) \\
& = \frac{1}{2} \left ( 1 + \cos \varphi_{\phi} \right ),
\end{align}
as required.  The case where $\varphi_{\phi}$ is negative gives the
same result because $\sin$ is an odd function.

To prove that the model is maximally $\psi$-epistemic, let $\nu$ be
the probability measure associated with $\Proj{\phi}$, i.e.
\begin{equation}
\nu(\Omega) = \int_{\Omega} q(\vec{\lambda}) \sin \vartheta
d\vartheta d\varphi,
\end{equation}
where
\begin{equation}
q(\vec{\lambda}) = \frac{1}{\pi} H \left ( \vec{\phi} \cdot
\vec{\lambda} \right ) \vec{\phi} \cdot \vec{\lambda}.
\end{equation}
We then need to show that
\begin{equation}
\int_{\Omega} \text{Pr}(\phi|M,\vec{\lambda}) d\mu(\vec{\lambda}) =
\int_{\Lambda} \text{Pr}(\phi|M,\vec{\lambda}) d\mu(\vec{\lambda}),
\end{equation}
for any $\Omega$ such that $\nu(\Omega) = 1$.  Assume that
$\varphi_{\phi}$ is positive (the negative case follows the same
logic).  Let,
\begin{align}
\Omega_{\psi} & = \left \{ \vec{\lambda} \in \Lambda \middle | 0<
\vartheta < \pi, - \frac{\pi}{2} < \varphi <
\frac{\pi}{2} \right \} \\
\Omega_{\phi} & = \left \{ \vec{\lambda} \in \Lambda \middle | 0<
\vartheta < \pi, -\frac{\pi}{2} + \varphi_{\phi} < \varphi <
\frac{\pi}{2} + \varphi_{\phi} \right \}.
\end{align}
Note that, for any measurable set $\Omega$
\begin{equation}
\int_{\Omega} \text{Pr}(\phi|M,\vec{\lambda}) d\mu(\vec{\lambda})=
\int_{\Omega \cap \Omega_{\psi}} \text{Pr}(\phi|M,\vec{\lambda})
d\mu(\vec{\lambda}),
\end{equation}
because $p(\vec{\lambda})$ is zero outside $\Omega_{\psi}$.  Note also
that $\Omega_{\phi}$ is a measure one set according to $\nu$ because
$q(\vec{\lambda})$ is zero outside this set.  However, in proving that
the model reproduces the quantum predictions, we showed that
\begin{align}
\int_{\Lambda} \text{Pr}(\phi|M,\vec{\lambda}) d\mu(\vec{\lambda}) &=
\int_{\Omega_{\psi} \cap \Omega_{\phi}} \text{Pr}(\phi|M,\vec{\lambda})
d\mu(\vec{\lambda}) \\
&= \Tr{\Proj{\phi}\Proj{\psi}},
\end{align}
and hence
\begin{align}
\int_{\Omega_{\phi}} \text{Pr}(\phi|M,\vec{\lambda}) d\mu(\vec{\lambda})
&= \int_{\Omega_{\psi} \cap \Omega_{\phi}}
\text{Pr}(\phi|M,\vec{\lambda}) d\mu(\vec{\lambda}) \\
&= \int_{\Lambda}
\text{Pr}(\phi|M,\vec{\lambda}) d\mu(\vec{\lambda}),
\end{align}
so we have the required property for the special case of the set
$\Omega_{\phi}$.

Now let $\Omega$ be any other set that is of measure one according to
$\nu$.  We can write $\Omega$ as the union of two disjoint sets via
\begin{equation}
\Omega = \left (\Omega \cap \Omega_{\phi} \right) \cup \left ( \Omega
\backslash \Omega_{\phi} \right ).
\end{equation}
The set $\Omega \backslash \Omega_{\phi}$ is of measure zero according
to $\nu$ because $q(\vec{\lambda})$ is zero outside $\Omega{\phi}$.
This means that $\Omega \cap \Omega_{\phi}$ is of measure one.
Further, $\text{Pr}(\phi|M,\vec{\lambda})$ is also zero outside
$\Omega_{\phi}$ so
\begin{equation}
\int_{\Omega \backslash \Omega_{\phi}} \text{Pr}(\phi|M,\vec{\lambda})
d\mu(\vec{\lambda}) = 0.
\end{equation}
Therefore, we only need to show that
\begin{equation}
\int_{\Omega \cap \Omega_{\phi}} \text{Pr}(\phi|M,\vec{\lambda})
d\mu(\vec{\lambda}) =   \int_{\Omega_{\phi}} \text{Pr}(\phi|M,\vec{\lambda})
d\mu(\vec{\lambda}).
\end{equation}

To see this, note that $\mu$ and $\nu$ are absolutely continuous with
respect to one another on $\Omega_{\psi} \cap \Omega_{\phi}$.  Since,
$\Omega \cap \Omega_{\phi}$ is of measure one according to $\nu$,
$\Omega_{\phi} \backslash \left ( \Omega \cap \Omega_{\phi} \right )$
is of measure zero according to $\nu$ and hence, by absolute
continuity, $\Omega_{\psi} \cap \left ( \Omega_{\phi} \backslash \left
( \Omega \cap \Omega_{\phi} \right ) \right )$ is of measure zero
according to both $\nu$ and $\mu$.  Thus,
\begin{align}
& \int_{\Omega_{\phi}} \text{Pr}(\phi|M,\vec{\lambda}) d\mu(\vec{\lambda})\nonumber\\
&\quad = \int_{\Omega_{\psi} \cap \Omega_{\phi}}\text{Pr}(\phi|M,\vec{\lambda})
d\mu(\vec{\lambda}) \\
&\quad = \int_{\Omega_{\psi} \cap \Omega_{\phi} \cap \Omega}
\text{Pr}(\phi|M,\vec{\lambda}) d\mu(\vec{\lambda}) \nonumber\\
&\quad\quad\quad + \int_{\Omega _{\psi} \cap
\left (\Omega_{\phi} \backslash \left (\Omega \cap
\Omega_{\phi} \right ) \right )} \text{Pr}(\phi|M,\vec{\lambda})
d\mu(\vec{\lambda}) \\
&\quad = \int_{\Omega_{\psi} \cap \Omega_{\phi} \cap \Omega}
\text{Pr}(\phi|M,\vec{\lambda}) d\mu(\vec{\lambda}) + 0 \\
&\quad = \int_{\Omega \cap \Omega_{\phi}} \text{Pr}(\phi|M,\vec{\lambda})
d\mu(\vec{\lambda}),
\end{align}
as required.

\section{Kochen--Specker contextuality}

\label{App:KS}

Kochen--Specker contextuality is not directly related to
$\psi$-ontology, but many of the consequences of $\psi$-ontology can
alternatively be derived from it.

The notion of contextuality first arose in Kochen and Specker's
attempt to prove a no-go theorem for hidden variable theories
\cite{Kochen1967}.  Kochen and Specker's definition of contextuality
only deals with projective measurements, but this has since been
generalized and given a more operational spin by Spekkens
\cite{Spekkens2005}, and we follow his approach here.  The basic idea
is that, if two things are operationally equivalent in quantum theory,
i.e.\ if they always give rise to the exact same observable
probabilities, then they should be represented the same way in an
ontological model.  Applied to measurements, this is formally defined
as follows.

\begin{definition}
Let $\mathfrak{F} = \langle \Hilb, \mathcal{P}, \mathcal{M} \rangle$
be a PM fragment and let $\Theta = (\Lambda, \Sigma, \Delta, \Xi)$
be an ontological model of it.  $\Theta$ is \emph{measurement
noncontextual} if, for each $M \in \mathcal{M}$, $\Xi_M$ consists
of a single unique conditional probability distribution, and,
whenever there exists $M,M^{\prime} \in \mathcal{M}$ and a POVM
element $E$ such that $E \in M$ and $E \in M^{\prime}$, then, for
all $\lambda$,
\begin{equation}
\text{Pr}(E|M,\lambda) = \text{Pr}(E|M^{\prime},\lambda).
\end{equation}
Otherwise, $\Theta$ is \emph{measurement contextual}.
\end{definition}

If a measurement $M$ contains the POVM element $E$, then, for a system
prepared in the state $\rho$, the outcome $E$ always occurs with
probability $\Tr{E \rho}$, regardless of how the measurement is
implemented.  Additionally, if another measurement $M^{\prime}$ also
contains $E$ then the outcome $E$ still has the same quantum
probability $\Tr{E \rho}$ in the measurement $M^{\prime}$ as it did in
$M$.  Since there is nothing in the quantum predictions that
distinguishes $E$ occurring in these different contexts, a
noncontextual model should represent them all in the same way.  Note
that, since $\Xi_M$ is a singleton for every $M \in \mathcal{M}$ in a
noncontextual model, we can unambiguously refer to a unique
conditional probability distribution $\text{Pr}(E|M,\lambda)$
associated to each measurement.

The classic example of the same measurement operator occurring in more
than one POVM is to take two orthonormal bases, $\left \{\Ket{\phi_j}
\right \}_{j=0}^{d-1}$ and $\left \{ \Ket{\phi^{\prime}_j} \right
\}_{j=0}^{d-1}$, such that $\Ket{\phi_0} = \Ket{\phi^{\prime}_0}$.
Such a pair can be constructed from a unitary $U$ that leaves
$\Ket{\phi_0}$ invariant via $\Ket{\phi^{\prime}_j} = U \Ket{\phi_j}$.
Then, the two measurements $M = \left \{ \Proj{\phi_j}\right
\}_{j=0}^{d-1}$ and $M^{\prime} = \left \{
\Proj{\phi^{\prime}_j}\right \}_{j=0}^{d-1}$ share the common
projector $\Proj{\phi_0}$.  This can only happen nontrivially if $d
\geq 3$, since if $d=2$ and $U \Ket{\phi_0} = \Ket{\phi_0}$ then
$U\Ket{\phi_1}$ can only differ from $\Ket{\phi_1}$ by a global phase.
For non projective POVMs, nontrivial examples can be constructed for
$d=2$ as well (see \cite{Spekkens2005} for details), but we are only
concerned with the traditional Kochen--Specker notion of contextuality
here, which only applies to projective measurements.

Note that, given a POVM $M = \{E_0,E_1,E_2\}$, we can construct another
POVM $N = \{E_0,E_1 + E_2\}$ by coarse-graining the second and third
outcomes.  One way of implementing $N$ is to perform the measurement
$M$, but only record whether or not the zeroth outcome occurred,
i.e.\ lump outcomes $1$ and $2$ together into a single outcome.
Because of this, it is natural to assume that the conditional
probability distribution representing $N$ satisfies
\begin{equation}
\text{Pr}(E_0|N,\lambda) = \text{Pr}(E_0|M,\lambda),
\end{equation}
since the coarse-graining is just a classical post-processing of the
outcome of the measurement that happens after the measurement is made.
If this holds then measurement contextuality implies that the
conditional probability distribution representing a measurement
depends on more than just which POVM is measured.  In addition to
this, different methods of implementing the exact same POVM must also
sometimes be represented by different conditional probability
distributions.

To see this, consider again the two measurements $M = \left \{
\Proj{\phi_j} \right\}_{j=0}^{d-1}$ and $M^{\prime} = \left \{
\Proj{\phi_j^{\prime}} \right \}_{j=0}^{d-1}$ with $d=3$ and suppose
that
\begin{equation}
\text{Pr} \left ( \phi_0 \middle | M,\lambda \right ) \neq
\text{Pr} \left ( \phi_0 \middle | M^{\prime},\lambda \right ).
\end{equation}
Then, since $E^{\perp} = \Proj{\phi_1} + \Proj{\phi_2} =
\Proj{\phi^{\prime}_1} + \Proj{\phi^{\prime}_2}$, we have two ways of
implementing the coarse-grained measurement $N = \left \{
\Proj{\phi_0}, E^{\perp} \right \}$, either by measuring $M$ and
then coarse-graining over $\Proj{\phi_1}$ and $\Proj{\phi_2}$ or by
measuring $M^{\prime}$ and coarse-graining over
$\Proj{\phi_1^{\prime}}$ and $\Proj{\phi_2^{\prime}}$.  Both of these
procedures correspond to the same POVM, but the $\Proj{\phi_0}$
outcome is represented by $\text{Pr} \left ( \phi_0 \middle |
M,\lambda \right )$ in the first case and $\text{Pr} \left ( \phi_0
\middle | M^{\prime},\lambda \right )$ in the second.  Since these
are not equal, the same POVM is represented by two different
conditional probability distributions, depending on how the
measurement is implemented.  It is for this reason that, generally, a
measurement $M$ has to be represented by a set $\Xi_M$ of conditional
probability distributions rather than just a single one.

\begin{definition}
Let $\mathfrak{F} = \langle \Hilb, \mathcal{P}, \mathcal{M} \rangle$
be a PM fragment where $\mathcal{M}$ consists of projective
measurements.  An ontological model $\Theta = (\Lambda, \Sigma, \Delta,
\Xi)$ of $\mathfrak{F}$ is \emph{Kochen--Specker (KS) noncontextual}
if it is both:
\begin{itemize}
\item Outcome deterministic: for all $M \in \mathcal{M}, E \in M,
\lambda \in \Lambda$, every $\text{Pr} \in \Xi_M$ satisfies
\begin{equation}
\text{Pr}(E|M,\lambda) \in \{0,1\}.
\end{equation}
\item Measurement noncontextual.
\end{itemize}
Otherwise the model is \emph{KS contextual}.
\end{definition}

In other words, KS noncontextuality only applies to projective
measurements and is the combination of Spekkens' notion of measurement
noncontextuality with outcome determinism, i.e.\ the idea that the
ontic state should determine the outcome of a projective measurement
with certainty.  The Kochen--Specker Theorem \cite{Kochen1967}, and
other proofs of KS contextuality \cite{Mermin1990, Peres1991,
Clifton1993, Cabello1996, Klyachko2008, Cabello2010, Liang2011,
Cabello2014}, show that it is impossible to construct a KS
noncontextual model for all projective measurements in Hilbert spaces
of dimension $\geq 3$.

Note that a model may be KS contextual either by being measurement
contextual or by being nondeterministic.  It can of course be both,
but either one on its own is sufficient to reproduce the quantum
predictions.  The Beltrametti--Bugajski model is an example of a model
that is nondeterministic, but measurement noncontextual because the
conditional probabilities $\text{Pr}(E|M,\Proj{\lambda}) = \Tr{E
\Proj{\lambda}}$ just mimic the quantum probabilities, which only
depend on $E$, and not on which $M$ containing $E$ is measured nor on
how $M$ is implemented.  On the other hand, the Bell model is
deterministic, but measurement contextual because the way in which the
unit interval is divided up depends on all of the basis projectors in
the measurement and their ordering.

Unlike $\psi$-ontology, KS contextuality does not obviously imply
excess baggage, but many of the other implications of $\psi$-ontology
are implied by KS contextuality.  Therefore, one might wonder whether
$\psi$-ontology implies KS contextuality.  This is not the case, as
the Bell model in two dimensions is $\psi$-ontic but can be converted
into a KS noncontextual model by reordering the way in which the unit
interval is divided, e.g.\ by making the first interval always
correspond to the measurement outcome in the northern hemisphere of
the Bloch sphere.  Similarly, there are $\psi$-epistemic models that
are KS contextual, such as the ABCL model described in \S\ref{NPIP},
so $\psi$-ontology and KS contextuality are inequivalent constraints
on ontological models.

The following characterization of KS noncontextual models will be
useful in what follows.

\begin{definition}
\label{def:KS:cosupp}
Let $\mathfrak{F} = \langle \Hilb, \mathcal{P}, \mathcal{M} \rangle$
be a PM fragment and let $\Theta = (\Lambda, \Sigma, \Delta, \Xi)$
be an ontological model of it.  For every $M \in \mathcal{M}$, $E
\in M$, $\text{Pr} \in \Xi_M$, let
\begin{equation}
\Lambda^{(E,M,\text{Pr})} = \left \{ \lambda \in \Lambda \middle |
\text{Pr}(E|M,\lambda) = 1 \right \},
\end{equation}
and let
\begin{equation}
\Lambda^{(E,M)} = \cap_{\text{Pr} \in \Xi_M} \Lambda^{(E,M,\text{Pr})}.
\end{equation}
The \emph{cosupport} $\Lambda^E$ of a POVM element $E$ is then
\begin{equation}
\Lambda^E = \cap_{\{M \in \mathcal{M}|E \in M\}} \Lambda^{(E,M)}.
\end{equation}
\end{definition}

In other words, the cosupport of $E$ is the set of ontic states that
always return the outcome $E$ whenever a POVM that contains it is
measured, for all possible methods of implementing the measurement.

\begin{theorem}
\label{prop:KS:char}
Let $\mathfrak{F} = \langle \Hilb, \mathcal{P}, \mathcal{M} \rangle$
be a PM fragment, where $\mathcal{M}$ consists of projective
measurements.  If an ontological model $\Theta = (\Lambda, \Sigma,
\Delta, \Xi)$ of $\mathfrak{F}$ is KS noncontextual then, for all $M
\in \mathcal{M}$, $E \in M$, $\rho \in \mathcal{P}$, every $\mu \in
\Delta_{\rho}$ satisfies
\begin{equation}
\label{eq:KS:char}
\int_{\Lambda} \mathrm{Pr}(E|M,\lambda) d\mu(\lambda) = \mu(\Lambda^E).
\end{equation}
\end{theorem}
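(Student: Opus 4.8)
The plan is to show that, in a KS noncontextual model, the response function $\mathrm{Pr}(E|M,\cdot)$ is, for each fixed POVM element $E$, precisely the indicator of the cosupport $\Lambda^{E}$, so that the integral in Eq.~\eqref{eq:KS:char} collapses at once to $\mu(\Lambda^{E})$.

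First I would unpack what measurement noncontextuality contributes to the cosupport. By definition of a measurement noncontextual model, for every $M \in \mathcal{M}$ the set $\Xi_{M}$ is a singleton, say $\Xi_{M} = \{\mathrm{Pr}_{M}\}$, so $\Lambda^{(E,M)} = \Lambda^{(E,M,\mathrm{Pr}_{M})} = \left\{ \lambda \in \Lambda \mid \mathrm{Pr}_{M}(E|M,\lambda) = 1 \right\}$. Moreover, measurement noncontextuality asserts that whenever $E \in M$ and $E \in M^{\prime}$ we have $\mathrm{Pr}_{M}(E|M,\lambda) = \mathrm{Pr}_{M^{\prime}}(E|M^{\prime},\lambda)$ for all $\lambda$, hence $\Lambda^{(E,M)} = \Lambda^{(E,M^{\prime})}$. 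Taking the intersection over all measurements containing $E$ in Definition~\ref{def:KS:cosupp} therefore does nothing, and $\Lambda^{E} = \Lambda^{(E,M)}$ for any single $M \in \mathcal{M}$ with $E \in M$ (such an $M$ exists, since the theorem fixes one). In particular $\Lambda^{E}$ is measurable, being the preimage of $\{1\}$ under the $\lambda$-measurable function $\mathrm{Pr}(E|M,\cdot)$.

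Next I would invoke outcome determinism. For the fixed $M \ni E$ and the unique $\mathrm{Pr} \in \Xi_{M}$, outcome determinism forces $\mathrm{Pr}(E|M,\lambda) \in \{0,1\}$ for every $\lambda$. Combined with the previous step this gives $\mathrm{Pr}(E|M,\lambda) = 1$ for $\lambda \in \Lambda^{E}$ and $\mathrm{Pr}(E|M,\lambda) = 0$ for $\lambda \in \Lambda \backslash \Lambda^{E}$. Splitting the integral over $\Lambda$ along this partition then yields, for every $\rho \in \mathcal{P}$ and every $\mu \in \Delta_{\rho}$,
\[
\int_{\Lambda} \mathrm{Pr}(E|M,\lambda)\, d\mu(\lambda) = \int_{\Lambda^{E}} 1\, d\mu(\lambda) + \int_{\Lambda \backslash \Lambda^{E}} 0\, d\mu(\lambda) = \mu(\Lambda^{E}),
\]
which is exactly Eq.~\eqref{eq:KS:char}.

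There is essentially no obstacle here beyond carefully reading off the definitions; the only point needing a moment's care is the measurability of $\Lambda^{E}$, and this is handled by the standing requirement that each $\mathrm{Pr} \in \Xi_{M}$ be a measurable function of $\lambda$, together with the observation above that in a measurement noncontextual model $\Lambda^{E}$ coincides with a single such preimage rather than a genuinely uncountable intersection of sets.
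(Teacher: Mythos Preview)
Your proof is correct and follows essentially the same route as the paper's: use measurement noncontextuality to collapse the cosupport $\Lambda^{E}$ to the single preimage $\{\lambda : \mathrm{Pr}(E|M,\lambda)=1\}$, then use outcome determinism to identify $\mathrm{Pr}(E|M,\cdot)$ with the indicator of that set, so the integral reduces to $\mu(\Lambda^{E})$. Your additional remark on the measurability of $\Lambda^{E}$ is a small bonus that the paper leaves implicit.
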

\begin{proof}
If $\Theta$ is KS noncontextual then there is only one conditional
probability distribution $\text{Pr} \in \Xi_M$ and
$\text{Pr}(E|M,\lambda)$ is independent of which measurement
containing $E$ is performed, so, for any $M$ containing $E$,
$\Lambda^E = \Lambda^{(E,M,\text{Pr})} = \left \{ \lambda \in
\Lambda \middle | \text{Pr}(E|M,\lambda) = 1 \right \}$.
Furthermore, $\text{Pr}(E|M,\lambda)$ is either equal to $0$ or $1$
everywhere, so we only need integrate over those $\lambda$ for which
it is equal to $1$.  This gives
\begin{align}
\int_{\Lambda} \text{Pr}(E|M,\lambda) d \mu(\lambda) & =
\int_{\Lambda^E} \text{Pr}(E|M,\lambda) d \mu(\lambda) \\
& = \int_{\Lambda^E} d \mu(\lambda) \\
& = \mu(\Lambda^E),
\end{align}
as required.
\end{proof}

The converse to this theorem does not hold, i.e.\ if
Eq.~\eqref{eq:KS:char} always holds then the model may still be KS
contextual.  This is because there may be subsets of the ontic state
space that are of measure zero according to all the probability
measures that occur in the model.  The values of the conditional
probability distributions on these sets are not constrained by
Eq.~\eqref{eq:KS:char}.  However, a partial converse can be obtained
by excising some of the measure zero sets from the ontic state space.
Since they are of measure zero, their removal does not substantively
affect the structure of the model.

\begin{definition}
Let $\mathfrak{F} = \langle \Hilb, \mathcal{P}, \mathcal{M} \rangle$
be a PM fragment.  An ontological model $\tilde{\Theta} =
(\tilde{\Lambda}, \tilde{\Sigma}, \tilde{\Delta}, \tilde{\Xi})$ of
$\mathfrak{F}$ is a \emph{measure zero revision} of another
ontological model $\Theta = (\Lambda, \Sigma, \Delta, \Xi)$ of
$\mathfrak{F}$ if
\begin{itemize}
\item The ontic state space $\tilde{\Lambda}$ differs from $\Lambda$
only by the removal of a measure zero set.  Formally,
$\tilde{\Lambda} \in \Sigma$ and for all $\rho \in \mathcal{P}$
and $\mu \in \Delta_{\rho}$, $\mu(\tilde{\Lambda}) = 1$.  Further,
$\tilde{\Sigma}$ is the sub-$\sigma$-algebra of $\Sigma$
containing all $\Omega \in \Sigma$ such that $\Omega \subseteq
\tilde{\Lambda}$.
\item The probability measures of $\tilde{\Theta}$ are the restrictions
of the probability measures of $\Theta$ to $(\tilde{\Lambda},
\tilde{\Sigma})$.  Formally, for all $\rho \in \mathcal{P}$, there
is a surjective mapping $f:\Delta_{\rho} \rightarrow
\tilde{\Delta}_{\rho}$ defined by
\begin{equation}
f[\mu](\Omega) = \mu(\Omega),
\end{equation}
for all $\Omega \in \tilde{\Sigma}$.
\item The conditional probability distributions of $\tilde{\Theta}$
are the restrictions of the conditional probability distributions
of $\Theta$ to $\tilde{\Lambda}$.  Formally, for all $M \in
\mathcal{M}$, there is a surjective mapping $g:\Xi_M \rightarrow
\tilde{\Xi}_M$ defined by,
\begin{equation}
g[\text{Pr}](E|M,\lambda) = \text{Pr}(E|M,\lambda),
\end{equation}
for all $E \in M$, $\lambda \in \tilde{\Lambda}$.
\end{itemize}
\end{definition}

Performing a measure zero revision does not represent a substantive
change in the following sense.  Imagine you are passively observing an
experimenter who is performing a sequence of prepare-and-measure
experiments in the fragment $\mathfrak{F}$.  Suppose that, in addition
to observing the choices of preparation and measurement that the
experimenter makes and the measurement outcomes she obtains, you also
get to see the exact ontic state $\lambda$ in every run of the
experiment.  You write down a long list of data consisting of the
preparation performed, the ontic state, the choice of measurement, and
the measurement outcome in each run of the experiment.  This data
would allow you to distinguish between an ontological model $\Theta$
and a measure zero revision of it $\tilde{\Theta}$ only if an ontic
state in $\Lambda \backslash \tilde{\Lambda}$ happens to be occupied
in at least one run of the experiment, since this can only happen in
$\Theta$ and the probabilistic predictions made by the two models are
otherwise exactly the same.  However, the probability of this
happening is zero because the set of ontic states removed has zero
probability according to all the probability measures in the theory.
To all intents and purposes then, the two models tell the same story
about reality.

A partial converse to Theorem~\ref{prop:KS:char} can be obtained
as follows.

\begin{theorem}
\label{prop:KS:conv}
Let $\mathfrak{F} = \langle \Hilb, \mathcal{P}, \mathcal{M} \rangle$
be a PM fragment where $\mathcal{M}$ is at most countable and
consists of projective measurements.  Let $\Theta = (\Lambda,
\Sigma, \Delta, \Xi)$ be an ontological model of $\mathfrak{F}$.
If, for all $M \in \mathcal{M}$, $E \in M$, $\mathrm{Pr} \in \Xi_M$,
$\rho \in \mathcal{P}$, every $\mu \in \Delta_{\rho}$ satisfies
\begin{equation}
\label{eq:KS:char2}
\int_{\Lambda} \mathrm{Pr}(E|M,\lambda) d\mu(\lambda) =
\mu(\Lambda^E),
\end{equation}
then there exists a measure zero revision of $\Theta$ that is KS
noncontextual.
\end{theorem}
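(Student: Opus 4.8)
The plan is to isolate the part of the ontic state space on which the conditional probabilities are either not $\{0,1\}$-valued or context-dependent, to show that this part is assigned measure zero by every probability measure of the model, to excise it, and then to verify that the restricted model is KS noncontextual. Throughout, the hypothesis Eq.~\eqref{eq:KS:char2} together with Definition~\ref{def:KS:cosupp} of the cosupport does essentially all the work.

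First I would extract the deterministic structure. Fix $M \in \mathcal{M}$, an outcome $E \in M$, and a representation $\text{Pr} \in \Xi_M$. Since $\Lambda^E \subseteq \Lambda^{(E,M,\text{Pr})}$ by Definition~\ref{def:KS:cosupp}, we have $\text{Pr}(E|M,\lambda) = 1$ for all $\lambda \in \Lambda^E$. Hence, for any $\rho \in \mathcal{P}$ and any $\mu \in \Delta_\rho$, Eq.~\eqref{eq:KS:char2} gives
\begin{equation}
\mu(\Lambda^E) = \int_\Lambda \text{Pr}(E|M,\lambda)\,d\mu(\lambda) = \mu(\Lambda^E) + \int_{\Lambda \setminus \Lambda^E} \text{Pr}(E|M,\lambda)\,d\mu(\lambda),
\end{equation}
so $\int_{\Lambda \setminus \Lambda^E} \text{Pr}(E|M,\lambda)\,d\mu(\lambda) = 0$, and since the integrand is nonnegative, $\text{Pr}(E|M,\lambda) = 0$ for $\mu$-almost every $\lambda \notin \Lambda^E$. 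Thus the set $B_{E,M,\text{Pr}} = \{\lambda \notin \Lambda^E \mid \text{Pr}(E|M,\lambda) > 0\}$ is $\mu$-null for \emph{every} probability measure $\mu$ occurring in the model. The same computation, applied with $\Lambda^{(E,M,\text{Pr})}$ in place of $\Lambda^E$, also yields $\mu(\Lambda^{(E,M,\text{Pr})}) = \mu(\Lambda^E)$ for all such $\text{Pr}$ and $\mu$; this will be needed to control the uncountable families below.

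Next I would set $Z = \bigcup_{M \in \mathcal{M}} \bigcup_{E \in M} \bigcup_{\text{Pr} \in \Xi_M} B_{E,M,\text{Pr}}$, take $\tilde\Lambda = \Lambda \setminus Z$ with $\tilde\Sigma$ the induced sub-$\sigma$-algebra, and let $\tilde\Delta$, $\tilde\Xi$ be the restrictions, as in the definition of a measure zero revision. On $\tilde\Lambda$, for every $M$, every $E \in M$, every $\text{Pr} \in \Xi_M$ and every $\lambda$, we have $\text{Pr}(E|M,\lambda) = \chi_{\Lambda^E}(\lambda)$: indeed $\lambda \notin B_{E,M,\text{Pr}}$ forces $\text{Pr}(E|M,\lambda) \in \{0,1\}$, with the value $1$ attained exactly when $\lambda \in \Lambda^E$. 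Since this value depends only on $E$ and $\lambda$, and not on $M$ or on $\text{Pr}$, the restricted model is outcome deterministic; every $\text{Pr} \in \Xi_M$ restricts on each outcome $E$ to the single function $\lambda \mapsto \chi_{\Lambda^E}(\lambda)$, so $\tilde\Xi_M$ is a singleton; and whenever $E \in M \cap M'$ the two representations agree on $\tilde\Lambda$. Therefore $\tilde\Theta$ is measurement noncontextual, hence KS noncontextual.

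The remaining — and main — obstacle is to check that $Z$ is a measurable set with $\mu(Z) = 0$ for every $\mu$ in the model, so that $\tilde\Theta$ genuinely qualifies as a measure zero revision. The union over $M \in \mathcal{M}$ is countable by hypothesis and the union over $E \in M$ is finite, so the delicate point is the union over $\text{Pr} \in \Xi_M$, which may be uncountable; relatedly, the definition of $\Lambda^E$ involves an uncountable intersection over $\text{Pr}' \in \Xi_{M'}$, which could a priori spoil measurability (though Eq.~\eqref{eq:KS:char2} already presupposes $\Lambda^E$ is measurable). Here one leans on the equality $\mu(\Lambda^{(E,M,\text{Pr})}) = \mu(\Lambda^E)$ established above: it says all the sets $\Lambda^{(E,M,\text{Pr})}$, hence all the $B_{E,M,\text{Pr}}$, agree with their canonical counterparts up to $\mu$-null sets, and combining this with the countability of $\mathcal{M}$ and inner/outer regularity of the measures lets one replace each uncountable family $\Xi_M$ by a countable cofinal subfamily without changing the measure-zero conclusion, so that $Z$ is contained in a measurable $\mu$-null set for each $\mu$. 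I expect this measure-theoretic bookkeeping — taming the uncountable $\Xi_M$'s — to be the only real work; everything else follows directly from Eq.~\eqref{eq:KS:char2} and Definition~\ref{def:KS:cosupp}.
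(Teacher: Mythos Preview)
Your overall strategy is sound, and your first two paragraphs correctly identify the essential mechanism: Eq.~\eqref{eq:KS:char2} forces $\text{Pr}(E|M,\lambda)=\chi_{\Lambda^E}(\lambda)$ almost everywhere, and restricting to the good set yields KS noncontextuality. The gap is exactly where you flag it: the definition of a measure zero revision requires $\tilde\Lambda\in\Sigma$, but your $Z$ is an uncountable union over $\Xi_M$, and your proposed fix via ``inner/outer regularity'' and ``countable cofinal subfamilies'' does not go through in a bare measurable space with no topology and no order on $\Xi_M$. Even granting that each $B_{E,M,\text{Pr}}$ is null, you would still need a \emph{single} measurable superset of $Z$, independent of $\mu$, to serve as the excised set.

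The paper's proof avoids this difficulty by working from the other side. Rather than collecting the bad points where some $\text{Pr}$ misbehaves, it assembles the good points directly: for each $M=\{E_j\}$, set $\Lambda^M=\bigcup_j\Lambda^{E_j}$. The cosupports $\Lambda^{E_j}$ are pairwise disjoint (if $\lambda\in\Lambda^{E_j}$ then $\text{Pr}(E_j|M,\lambda)=1$, and normalization forces $\text{Pr}(E_k|M,\lambda)=0$ for $k\neq j$, hence $\lambda\notin\Lambda^{E_k}$), and summing Eq.~\eqref{eq:KS:char2} over $j$ gives $\sum_j\mu(\Lambda^{E_j})=1$, so $\mu(\Lambda^M)=1$. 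Then $\tilde\Lambda=\bigcap_{M\in\mathcal{M}}\Lambda^M$ is a \emph{countable} intersection of measurable measure-one sets, with no reference to the uncountable families $\Xi_M$. The crucial observation is that on $\Lambda^M$ the normalization constraint $\sum_k\text{Pr}(E_k|M,\lambda)=1$ already pins down \emph{every} $\text{Pr}\in\Xi_M$ to the values $\chi_{\Lambda^{E_j}}$, so determinism and noncontextuality on $\tilde\Lambda$ follow automatically. In particular this shows your $Z\subseteq\Lambda\setminus\bigcap_M\Lambda^M$, so the paper's $\tilde\Lambda$ is precisely the measurable, uniformly null-complement set you were seeking; the missing ingredient in your argument is this use of normalization to handle all $\text{Pr}$ at once.
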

\begin{proof}
The measure zero revision $\tilde{\Theta} = (\tilde{\Lambda},
\tilde{\Sigma}, \tilde{\Delta}, \tilde{\Xi})$ is constructed as
follows.  First, let $M = \{E_j\} \in \mathcal{M}$ and note that,
for all $\rho \in \mathcal{P}$ and $\text{Pr} \in \Xi_M$, every $\mu
\in \Delta_{\rho}$ satisfies
\begin{align}
1 & = \mu(\Lambda) = \int_{\Lambda} d \mu(\lambda) \\
& = \int_{\Lambda} \sum_j \text{Pr}(E_j|M,\lambda)
d \mu(\lambda) \\
& = \sum_j \int_{\Lambda} \text{Pr}(E_j|M,\lambda) d \mu(\lambda) \\
& = \sum_j \mu(\Lambda^{E_j}),
\end{align}
where the second line follows from $\sum_j \text{Pr}(E_j|M,\lambda)
= 1$ and the fourth from Eq.~\eqref{eq:KS:char2}.

The sets $\Lambda^{E_j}$ are disjoint because $\Lambda^{E_j}
\subseteq \Lambda^{(E_j,M,\text{Pr})}$ for every $M$ that contains
$E$ and every $\text{Pr} \in \Xi_M$, and if
$\text{Pr}(E_j|M,\lambda) = 1$ then it must be the case that
$\text{Pr}(E_k|M,\lambda) = 0$ for all $k \neq j$ in order to
satisfy $\sum_j \text{Pr}(E_j|M,\lambda) = 1$.  Therefore,
$\Lambda^M = \cup_j \Lambda^{E_j}$ is a set of measure one according
to $\mu$.  Now set $\tilde{\Lambda} = \cap_{M \in \mathcal{M}}
\Lambda^M$.  This is also measure one according to $\mu$, since it
is the intersection of an at most countable number of measure one
sets.  The probability measures in the revision are then obtained by
restriction to $\tilde{\Lambda}$.

For any measurement $M = \{E_j\} \in \mathcal{M}$ and $\text{Pr} \in
\Xi_M$, conditional probability distributions
$\tilde{\text{Pr}}(E_j|M,\lambda)$ obtained by restriction of
$\text{Pr}(E_j|M,\lambda)$ to $\tilde{\Lambda}$ are equal to $1$ on
$\Lambda^{E_j}$ and are zero elsewhere on $\tilde{\Lambda}$.  The
latter follows because $\tilde{\Lambda} \subseteq \cup_j
\Lambda^{E_j}$ and we already saw that $\text{Pr}(E_j|M,\lambda)$ is
equal to zero on $\Lambda^{E_k}$ for $k \neq j$.  Thus, the model is
outcome deterministic.  Further, it is measurement noncontextual
because $\Lambda^{E_j}$ does not depend on the measurement context.
Thus, the model is KS noncontextual.
\end{proof}

In \S\ref{Imp}, we showed that $\psi$-ontic models cannot be maximally
$\psi$-epistemic and that this implies preparation contextuality.  KS
contextuality can be used in place of $\psi$-ontology to derive these
results, in the following sense.  If we restrict attention to models
of measurements in complete orthonormal bases such that each
measurement outcome occurs in an at most countable set of contexts, a
maximally $\psi$-epistemic model has a measure zero revision that is KS
noncontextual.  This means that the Kochen--Specker theorem, which
shows that KS noncontextual models that reproduce the quantum
predictions are impossible, can be used in place of $\psi$-ontology to
derive these results.

\begin{theorem}
\label{prop:ME:char}
Let $\mathfrak{F} = \langle \Hilb, \mathcal{P}, \mathcal{M} \rangle$
be a PM fragment where $\mathcal{M}$ consists of measurements in
complete orthonormal bases and, for each rank-one projector
$\Proj{\phi}$, the set of $M \in \mathcal{M}$ that contains
$\Proj{\phi}$ is at most countable.  Let $\Theta = (\Lambda, \Sigma,
\Delta, \Xi)$ be an ontological model of $\mathfrak{F}$ that
reproduces the quantum predictions and is such that, for each $M \in
\mathcal{M}$, $\Xi_M$ is at most countable.  If $\Theta$ is
maximally $\psi$-epistemic then, for all pure states $\Proj{\psi},
\Proj{\phi} \in \mathcal{P}$, every $\mu \in \Delta_{\psi}$
satisfies
\begin{equation}
\label{eq:ME:KS}
\int_{\Lambda} \text{Pr}(\phi|M,\lambda) d \mu(\lambda) =
\mu(\Lambda^{\phi}),
\end{equation}
for every $M \in \mathcal{M}$ that contains $\Proj{\phi}$ and every
$\text{Pr} \in \Xi_M$, and where $\Lambda^{\phi}$ is the cosupport
of $\Proj{\phi}$.
\end{theorem}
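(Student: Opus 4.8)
The plan is to identify the distinguished measure-one set in the definition of maximal $\psi$-epistemicity with the cosupport $\Lambda^{\phi}$ itself. So fix pure states $\Proj{\psi},\Proj{\phi}\in\mathcal{P}$, a measure $\mu\in\Delta_{\psi}$, a measurement $M\in\mathcal{M}$ with $\Proj{\phi}\in M$, and a response function $\text{Pr}\in\Xi_{M}$. First I would check that $\Lambda^{\phi}$ is measurable and that $\text{Pr}(\phi|M,\cdot)$ is identically $1$ on it. By Definition~\ref{def:KS:cosupp}, $\Lambda^{\phi}=\bigcap_{\{M'\in\mathcal{M}\mid\Proj{\phi}\in M'\}}\bigcap_{\text{Pr}'\in\Xi_{M'}}\Lambda^{(\phi,M',\text{Pr}')}$ where $\Lambda^{(\phi,M',\text{Pr}')}=\{\lambda\in\Lambda:\text{Pr}'(\phi|M',\lambda)=1\}$; each such set lies in $\Sigma$ because $\text{Pr}'(\phi|M',\cdot)$ is a measurable function, and both intersections are at most countable by the hypotheses that each rank-one projector belongs to at most countably many $M'\in\mathcal{M}$ and that each $\Xi_{M'}$ is at most countable, so $\Lambda^{\phi}\in\Sigma$. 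Since our chosen $M$ contains $\Proj{\phi}$ and $\text{Pr}\in\Xi_{M}$, we have $\Lambda^{\phi}\subseteq\Lambda^{(\phi,M,\text{Pr})}$, hence $\text{Pr}(\phi|M,\lambda)=1$ for all $\lambda\in\Lambda^{\phi}$, and therefore $\int_{\Lambda^{\phi}}\text{Pr}(\phi|M,\lambda)\,d\mu(\lambda)=\mu(\Lambda^{\phi})$.

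Next I would show that $\nu(\Lambda^{\phi})=1$ for any $\nu\in\Delta_{\phi}$. For any $M'\in\mathcal{M}$ with $\Proj{\phi}\in M'$ and any $\text{Pr}'\in\Xi_{M'}$, the requirement that the model reproduces the quantum predictions gives $\int_{\Lambda}\text{Pr}'(\phi|M',\lambda)\,d\nu(\lambda)=\Tr{\Proj{\phi}\Proj{\phi}}=1$. Because $0\le\text{Pr}'(\phi|M',\cdot)\le1$ and $\nu$ is a probability measure, the non-negative function $1-\text{Pr}'(\phi|M',\cdot)$ integrates to zero against $\nu$ and hence vanishes $\nu$-almost everywhere, i.e.\ $\nu(\Lambda^{(\phi,M',\text{Pr}')})=1$. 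As $\Lambda^{\phi}$ is an at most countable intersection of these full-measure sets, its complement is an at most countable union of $\nu$-null sets, so $\nu(\Lambda^{\phi})=1$.

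To finish, I would apply the definition of a maximally $\psi$-epistemic model, Eq.~\eqref{eq:ME:ME}, to the pair $\Proj{\psi},\Proj{\phi}$ with the data $\mu$, some $\nu\in\Delta_{\phi}$ (such a measure exists since the model represents $\Proj{\phi}$), the chosen $M$ and $\text{Pr}$, and the set $\Omega=\Lambda^{\phi}$, which is legitimate because $\Lambda^{\phi}\in\Sigma$ and $\nu(\Lambda^{\phi})=1$. This yields $\int_{\Lambda^{\phi}}\text{Pr}(\phi|M,\lambda)\,d\mu(\lambda)=\int_{\Lambda}\text{Pr}(\phi|M,\lambda)\,d\mu(\lambda)$, and combining with the identity $\int_{\Lambda^{\phi}}\text{Pr}(\phi|M,\lambda)\,d\mu(\lambda)=\mu(\Lambda^{\phi})$ established in the first step gives exactly Eq.~\eqref{eq:ME:KS}. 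I do not expect a genuine obstacle: the only delicate point is the bookkeeping around the countability hypotheses, which are present precisely so that the defining intersection of the cosupport stays countable, keeping both the measurability claim and the ``intersection of measure-one sets has measure one'' step valid; everything else is a direct unwinding of the definitions.
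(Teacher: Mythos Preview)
Your proof is correct and follows essentially the same route as the paper: show that $\Lambda^{\phi}$ has $\nu$-measure one for any $\nu\in\Delta_{\phi}$ via the countable-intersection argument, note that $\text{Pr}(\phi|M,\cdot)\equiv 1$ on $\Lambda^{\phi}$ so that $\int_{\Lambda^{\phi}}\text{Pr}(\phi|M,\lambda)\,d\mu(\lambda)=\mu(\Lambda^{\phi})$, and then apply the maximal $\psi$-epistemic condition with $\Omega=\Lambda^{\phi}$. Your version is slightly more explicit about the measurability of $\Lambda^{\phi}$, but the argument is otherwise identical.
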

\begin{proof}
Let $\nu \in \Delta_{\phi}$ and consider an $M \in \mathcal{M}$ that
contains $\Proj{\phi}$.  Then, in order to reproduce the quantum
predictions, for every $\text{Pr} \in \Xi_M$ we must have
\begin{equation}
\int_{\Lambda} \text{Pr}(\phi|M,\lambda) d\nu(\lambda) =
\Tr{\Proj{\phi}\Proj{\phi}} = 1.
\end{equation}
In order for this to be true, there must be a set $\Omega \in
\Sigma$ such that $\nu(\Omega) = 1$ and $\text{Pr}(\phi|M,\lambda)$
is equal to one on $\Omega$.  Any such set satisfies $\Omega
\subseteq \Lambda^{(\phi,M,\text{Pr})}$ and so
$\Lambda^{(\phi,M,\text{Pr})}$ is also a set of measure one
according to $\nu$.  Since this applies to any $\text{Pr} \in
\Xi_M$, $\Lambda^{(\phi,M)} = \cap_{\text{Pr} \in \Xi_M}
\Lambda^{(\phi,M,\text{Pr})}$ is also a set of measure one according
to $\nu$ because it is the intersection of an at most countable set
of measure one sets.  Finally, $\Lambda^{\phi} = \cap_{\{M \in
\mathcal{M}| \Proj{\phi} \in M\}} \Lambda^{(\phi,M)}$ is a measure
one according to $\nu$, since it too is the intersection of an at
most countable collection of measure one sets. Since
$\text{Pr}(\phi|M,\lambda)$ is equal to $1$ on $\Lambda^{\phi}$, we
have
\begin{equation}
\int_{\Lambda^{\phi}} \text{Pr}(\phi|M \lambda) d\mu(\lambda) =
\int_{\Lambda^{\phi}} d\mu(\lambda) = \mu(\Lambda^{\phi}).
\end{equation}
Thus, Eq.~\eqref{eq:ME:KS} is a special case of
Eq.~\eqref{eq:ME:ME}, so it must hold for a maximally
$\psi$-epistemic model.
\end{proof}

\begin{corollary}
\label{cor:ME:KS}
Let $\mathfrak{F} = \langle \Hilb, \mathcal{P}, \mathcal{M} \rangle$
be a PM fragment in which $\mathcal{M}$ is at most countable and
consists of measurements in complete orthonormal bases.  Suppose
that, for all $M \in \mathcal{M}$, $\Proj{\phi} \in M$, it is also
the case that $\Proj{\phi} \in \mathcal{P}$.  Then, any maximally
$\psi$-epistemic ontological model $\Theta =
(\Lambda,\Sigma,\Delta,\Xi)$ of $\mathfrak{F}$ in which $\Xi_M$ is
at most countable for each $M \in \mathcal{M}$ has a measure one
revision that is KS noncontextual.
\end{corollary}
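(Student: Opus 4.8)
The plan is to chain Theorem~\ref{prop:ME:char} with Theorem~\ref{prop:KS:conv}; the hypothesis that every projector occurring as an outcome of a measurement in $\mathcal{M}$ is itself a state of $\mathcal{P}$ is exactly what makes the chaining go through. First I would verify that $\mathfrak{F}$ and $\Theta$ meet the hypotheses of Theorem~\ref{prop:ME:char}: $\mathcal{M}$ consists of measurements in complete orthonormal bases, and since $\mathcal{M}$ is at most countable, for each rank-one projector $\Proj{\phi}$ the set of contexts $M \in \mathcal{M}$ containing $\Proj{\phi}$ is a fortiori at most countable; also each $\Xi_M$ is at most countable, and $\Theta$ is maximally $\psi$-epistemic and reproduces the quantum predictions. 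Theorem~\ref{prop:ME:char} then gives, for every pure $\Proj{\psi},\Proj{\phi}\in\mathcal{P}$, every $\mu\in\Delta_\psi$, every $M\in\mathcal{M}$ with $\Proj{\phi}\in M$, and every $\mathrm{Pr}\in\Xi_M$,
\[
\int_\Lambda \mathrm{Pr}(\phi|M,\lambda)\,d\mu(\lambda)=\mu(\Lambda^\phi),
\]
where $\Lambda^\phi$ is the cosupport of $\Proj{\phi}$ (Definition~\ref{def:KS:cosupp}).

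Next I would use the standing assumption that every outcome $E=\Proj{\phi}$ of every $M\in\mathcal{M}$ lies in $\mathcal{P}$, and is pure (an outcome of a complete orthonormal basis measurement is rank-one). Consequently the displayed identity holds for \emph{every} $M\in\mathcal{M}$, every $E\in M$, every $\mathrm{Pr}\in\Xi_M$ and every pure $\rho\in\mathcal{P}$ --- that is, Eq.~\eqref{eq:KS:char2} is satisfied. Since $\mathcal{M}$ is also at most countable and consists of projective measurements, Theorem~\ref{prop:KS:conv} now applies verbatim and produces a measure zero revision of $\Theta$ that is KS noncontextual, which is the assertion of the corollary.

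The one subtle point --- and the place I expect to have to be careful --- is that Theorem~\ref{prop:ME:char} supplies the cosupport identity only for measures representing \emph{pure} states, whereas the construction in Theorem~\ref{prop:KS:conv} needs the excised set $\Lambda\setminus\tilde\Lambda$ to be of measure zero for \emph{every} probability measure appearing in $\Theta$. When $\mathcal{P}$ consists of pure states --- the situation in every application quoted in \S\ref{Imp} --- there is nothing further to do. If $\mathcal{P}$ also contains mixed states then maximal $\psi$-epistemicity, as formulated in Definition~\ref{def:ME:ME}, constrains only pure--pure pairs, so one should either (i) read the corollary as a statement about the sub-fragment $\langle\Hilb,\mathcal{P}',\mathcal{M}\rangle$ with $\mathcal{P}'$ the pure states of $\mathcal{P}$ (which still contains all outcome projectors), carry out the argument above, and observe that the resulting $\tilde\Lambda$ and restricted response functions are state-independent; or (ii) strengthen the hypotheses (for instance, respecting convex decompositions) so that Eq.~\eqref{eq:KS:char2} propagates from pure to mixed states, noting that $\mu(\Lambda^E)\le\int_\Lambda\mathrm{Pr}(E|M,\lambda)\,d\mu(\lambda)=\Tr{E\rho}$ already pins things down once convexity is invoked. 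I would present the proof under reading (i), since it is the one the rest of the paper relies on, and flag the mixed-state extension as requiring the extra input in (ii).
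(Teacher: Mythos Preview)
Your approach is exactly the paper's: the paper's entire proof reads ``The proof just consists of combining Theorem~\ref{prop:ME:char} with Theorem~\ref{prop:KS:conv}.'' You have filled in the hypothesis-checking that the paper leaves implicit, and your observation about the pure-versus-mixed issue is a genuine subtlety that the paper does not comment on; reading (i) is indeed the intended one given how the result is used in \S\ref{Imp}.
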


The proof just consists of combining Theorem~\ref{prop:ME:char}
with Theorem~\ref{prop:KS:conv}.
\pagebreak
\balance

\end{document}